\documentclass[aps,prx,superscriptaddress,amsfonts,amsmath,amssymb,showpacs,floatfix,reprint,longbibliography]{revtex4-2}
\usepackage{physics}
\usepackage{amsfonts}
\usepackage{amssymb}
\usepackage{amsthm}
\usepackage{amsmath}
\usepackage{hyperref}
\usepackage{cleveref}
\usepackage{mathtools}
\newcommand{\loopZ}[1]{Z_{{#1},\lambda}^A}

\usepackage{bbm}
\usepackage[dvipsnames]{xcolor}
\usepackage{bm}
\definecolor{DarkPastelGreen}{RGB}{72,115,82}

\usepackage[caption=false]{subfig}
\usepackage{graphicx}
\graphicspath{{figures/}}

\def\equationautorefname~#1\null{Eq. (#1)\null}

\newcommand{\C}{\mathbb{C}}

\newcommand{\N}{\mathbb{N}}

\newcommand{\NN}{\mathcal{N}}

\newcommand{\HH}{\mathcal{H}}

\newcommand{\LL}{\mathcal{L}}

\newcommand{\TT}{\mathcal{T}}
\newcommand{\ZZ}{\mathcal{Z}}

\newcommand{\BB}{\mathcal{B}}

\newcommand{\KK}{\mathcal{K}}

\newcommand{\supp}{\text{supp}}

\DeclarePairedDelimiterX{\inner}[2]{\langle}{\rangle}{#1|#2}
\DeclarePairedDelimiterX{\expect}[3]{\langle}{\rangle}{#1|#2|#3}
\DeclareMathOperator*{\pos}{Pos}

\DeclareMathOperator*{\poly}{poly}

\definecolor{myblue}{RGB}{20, 60, 180}

\definecolor{col}{RGB}{250, 64, 47}
\hypersetup{colorlinks=true, linkcolor=col, citecolor=col, urlcolor=col}

\newtheorem{theorem}{Theorem}[section]
\newtheorem{lemma}{Lemma}[section]
\newtheorem{corollary}{Corollary}[section]
\newtheorem{prop}{Proposition}[section]
\newtheorem{defn}{Definition}[section]

\usepackage{tikz}
\usetikzlibrary{decorations.pathreplacing,calligraphy,decorations.markings}
\usetikzlibrary {arrows.meta}

\definecolor{tensorcolor}{rgb}{0.65,0.77,0.95}
\definecolor{btensorcolor}{rgb}{0.65,0.50,0.69}
\definecolor{whitetensorcolor}{rgb}{0.93,0.93,0.93}
\definecolor{gtensorcolor}{rgb}{0.6,0.8,0.5}
\definecolor{operatorcolor}{rgb}{1.0,1.0,1.0}

\newcommand\singledx{1.8}

\newcommand{\GTensor}[5]{
	\begin{scope}[shift={(#1)}]
    \ifnum#5=0
		\draw[very thick, draw=red] (-#2,0) -- (#2,0);
            \draw[very thick] (0,#2) -- (0,0);
    \fi
    \ifnum#5=-1
		\draw[very thick] (0,0) -- (#2,0);
            \draw[very thick] (0,#2) -- (0,0);
    \fi
    \ifnum#5=1
		\draw[very thick] (-#2,0) -- (0,0);
            \draw[very thick] (0,#2) -- (0,0);
    \fi

    \ifnum#5=2
		\draw[very thick,draw=red] (-#2,0) -- (#2,0);
    \fi
    \ifnum#5=3
		\draw[very thick] (0,-#2) -- (0,#2);
    \fi
    \ifnum#5=4
		\draw[very thick] (-#2,0) -- (#2,0);
    \fi
    \ifnum#5=5
		\draw[very thick, draw=red] (-#2,0) -- (#2,0);
		\draw[very thick] (0,#2) -- (0,-#2);
    \fi
    \ifnum#5=6
		\draw[very thick] (-#2,0) -- (#2,0);
            \draw[very thick] (0,#2) -- (0,0);
    \fi
    \ifnum#5=7
		\draw[very thick, draw=red] (-#2,0) -- (#2,0);
            \draw[very thick] (0,-#2) -- (0,0);
    \fi
    \ifnum#5=8
		\draw[very thick] (-#2,0) -- (#2,0);
    \fi
    \ifnum#5=9
		\draw[very thick] (-#2,0) -- (#2,0);
            \draw[very thick] (0,-#2) -- (0,0);
    \fi
        \draw[ thick, fill=tensorcolor, rounded corners=2pt] (-#3,-#3) rectangle (#3,#3);
		\draw (0,0) node {\scriptsize #4};
	\end{scope}
}

\newcommand{\GFTensor}[5]{
	\begin{scope}[shift={(#1)}]
    \ifnum#5=0
		\draw[very thick, draw=red] (-#2,0) -- (\singledx+#2,0);
            \draw[very thick, draw=black] (0,#2) -- (0,0);
            \draw[very thick, draw=black] (\singledx,#2) -- (\singledx,0);
    \fi
    \ifnum#5=1
		\draw[very thick, draw=red] (-#2,0) -- (\singledx+#2,0);
            \draw[very thick, draw=black] (0,-#2) -- (0,0);
            \draw[very thick, draw=black] (\singledx,-#2) -- (\singledx,0);
    \fi
    \ifnum#5=2
		\draw[very thick, draw=red] (0,-#2) -- (0,#2);
            \draw[very thick, draw=red] (\singledx,-#2) -- (\singledx,#2);
    \fi
    \ifnum#5=3
		\draw[very thick] (0,-#2) -- (0,#2);
    \fi
    \ifnum#5=4
		\draw[very thick] (-#2,0) -- (#2,0);
    \fi
        \draw[ thick, fill=tensorcolor, rounded corners=2pt] (-#3,-#3) rectangle (#3+\singledx,#3);
		\draw (0+0.5*\singledx,0) node {\scriptsize #4};
	\end{scope}
}

\newcommand{\UFTensor}[5]{
	\begin{scope}[shift={(#1)}]
    \ifnum#5=0
            \draw[very thick] (0,#2) -- (0,-#2);
            \draw[very thick] (\singledx,#2) -- (\singledx,-#2);
    \fi
    \ifnum#5=1
		\draw[very thick, draw=red] (0,-#2) -- (0,0);
            \draw[very thick, draw=red] (\singledx,-#2) -- (\singledx,0);
            \draw[very thick] (0.5*\singledx,#2) -- (0.5*\singledx,0);
    \fi

    \ifnum#5=2
		\draw[very thick, draw=red] (0,#2) -- (0,0);
            \draw[very thick, draw=red] (\singledx,#2) -- (\singledx,0);
            \draw[very thick] (0.5*\singledx,-#2) -- (0.5*\singledx,0);
    \fi
    \ifnum#5=3
		\draw[very thick, draw=red] (0,#2) -- (0,-#2);
            \draw[very thick, draw=red] (\singledx,#2) -- (\singledx,-#2);
    \fi
    \ifnum#5=4
		\draw[very thick] (-#2,0) -- (#2,0);
    \fi
        \draw[ thick, fill=whitetensorcolor, rounded corners=2pt] (-#3,-#3) rectangle (#3+\singledx,#3);
		\draw (0+0.5*\singledx,0) node {\scriptsize #4};
	\end{scope}
}

\newcommand{\GPTensor}[5]{
	\begin{scope}[shift={(#1)}]
    \ifnum#5=0
		\draw[very thick, draw=red] (-#2,0) -- (#2,0);
            \draw[very thick, draw=red] (0,-#2) -- (0,#2);
    \fi
    \ifnum#5=1
		\draw[very thick, draw=red] (-#2,0.2) -- (#2,0.2);
            \draw[very thick, draw=red] (-#2,-0.2) -- (#2,-0.2);
            \draw[very thick, draw=red] (0.2,-#2) -- (0.2,#2);
            \draw[very thick, draw=red] (-0.2,-#2) -- (-0.2,#2);
    \fi

    \ifnum#5=2
		\draw[very thick, draw=red] (-#2,0) -- (#2,0);
            \draw[very thick, draw=red] (0,-#2) -- (0,#2);
    \fi

    \ifnum#5=3
		\draw[very thick] (0,-#2) -- (0,#2);
    \fi
        \draw[ thick, fill=tensorcolor, rounded corners=2pt] (-#3,-#3) rectangle (#3,#3);
	\draw (0,0) node {\scriptsize #4};
    \ifnum#5=0
		\draw[very thick] (#3/2,#3/2) -- (#2,#2);
    \fi
	\end{scope}
}

\newcommand{\UPTensor}[5]{
	\begin{scope}[shift={(#1)}]
    \ifnum#5=0
		\draw[very thick, draw=red] (-#2,0) -- (#2,0);
            \draw[very thick, draw=red] (0,-#2) -- (0,#2);
    \fi
    \ifnum#5=1
		\draw[very thick, draw=red] (-#2,0.2) -- (#2,0.2);
            \draw[very thick, draw=red] (-#2,-0.2) -- (#2,-0.2);
            \draw[very thick, draw=red] (0.2,-#2) -- (0.2,#2);
            \draw[very thick, draw=red] (-0.2,-#2) -- (-0.2,#2);
    \fi

    \ifnum#5=2
		\draw[very thick] (-#2,0) -- (#2,0);
            \draw[very thick] (0,-#2) -- (0,#2);
    \fi

    \ifnum#5=3
		\draw[very thick, draw=red] (-#2,0) -- (#2,0);
            \draw[very thick, draw=red] (0,-#2) -- (0,#2);
    \fi
        \draw[ thick, fill=whitetensorcolor, rounded corners=2pt] (-#3,-#3) rectangle (#3,#3);
	\draw (0,0) node {\scriptsize #4};
    \ifnum#5=0
		\draw[very thick] (#3/2,#3/2) -- (#2,#2);
    \fi
    \ifnum#5=2
		\draw[very thick] (#3/2,#3/2) -- (0.75*#2,0.75*#2);
    \fi
    \ifnum#5=3
		\draw[very thick] (#3/2,#3/2) -- (0.75*#2,0.75*#2);
    \fi
	\end{scope}
}

\newcommand{\EPTensor}[5]{
	\begin{scope}[shift={(#1)}]
    \ifnum#5=0
		\draw[very thick, draw=red] (-#2,0.2) -- (#2,0.2);
            \draw[very thick, draw=red] (-#2,-0.2) -- (#2,-0.2);
            \draw[very thick, draw=red] (0.2,-#2) -- (0.2,#2);
            \draw[very thick, draw=red] (-0.2,-#2) -- (-0.2,#2);
    \fi
    \ifnum#5=1
		\draw[very thick, draw=red] (-#2,0.2) -- (#2,0.2);
            \draw[very thick, draw=red] (-#2,-0.2) -- (#2,-0.2);
            \draw[very thick, draw=red] (0.2,-#2) -- (0.2,#2);
            \draw[very thick, draw=red] (-0.2,-#2) -- (-0.2,#2);
    \fi

    \ifnum#5=2
		\draw[very thick,draw=red] (-#2,0) -- (#2,0);
    \fi

    \ifnum#5=3
		\draw[very thick] (0,-#2) -- (0,#2);
    \fi
        \draw[ thick, fill=tensorcolor, rounded corners=2pt] (-#3,-#3) rectangle (#3,#3);
	\draw (0,0) node {\scriptsize #4};
	\end{scope}
}

\newcommand{\Unitary}[5]{
	\begin{scope}[shift={(#1)}]
    \ifnum#5=0
		\draw[very thick, draw=red] (-#2,0) -- (#2,0);
            \draw[very thick] (-#2,2*#3) -- (#2,2*#3);
    \fi
    \ifnum#5=-1
		\draw[very thick] (0,0) -- (#2,0);
            \draw[very thick] (0,#2) -- (0,0);
    \fi
    \ifnum#5=1
		\draw[very thick, draw=red] (-#2,0) -- (#2,0);
            \draw[very thick] (0,2*#3) -- (#2,2*#3);
    \fi

    \ifnum#5=2
		\draw[very thick, draw=red] (-#2,0) -- (#2,0);
            \draw[very thick, draw=red] (-#2,2*#3) -- (#2,2*#3);
    \fi

    \ifnum#5=3
		\draw[very thick] (0,-#2) -- (0,#2);
    \fi
        \draw[ thick, fill=tensorcolor, rounded corners=2pt] (-#3,-#3) rectangle (#3,3*#3);
		\draw (0,#3) node {\scriptsize #4};
	\end{scope}
}

\newcommand{\FUnitary}[5]{
	\begin{scope}[shift={(#1)}]
    \ifnum#5=0
		\draw[very thick, draw=red] (-#2,0) -- (#2,0);
            \draw[very thick, draw=red] (-#2,2*#3) -- (#2,2*#3);
            \draw[very thick, draw=red] (-#2,4*#3) -- (#2,4*#3);
    \fi
    \ifnum#5=1
		\draw[very thick, draw=red] (-#2,0) -- (#2,0);
            \draw[very thick, draw=red] (0,2*#3) -- (#2,2*#3);
            \draw[very thick, draw=red] (0,4*#3) -- (#2,4*#3);
    \fi
    \ifnum#5=2
		\draw[very thick,draw=red] (-#2,0) -- (#2,0);
    \fi

    \ifnum#5=3
		\draw[very thick] (0,-#2) -- (0,#2);
    \fi
        \draw[ thick, fill=tensorcolor, rounded corners=2pt] (-#3,-#3) rectangle (#3,5*#3);
		\draw (0,2*#3) node {\scriptsize #4};
	\end{scope}
}

\newcommand{\PTensor}[5]{
	\begin{scope}[shift={(#1)}]
    \ifnum#5=0
		\draw[very thick, draw = red] (0,-#2) -- (0,0);
            \draw[very thick, draw = red] (0,0) -- (0,#2);
    \fi
    \ifnum#5=1
		\draw[very thick, draw = red] (-#2,0) -- (0,0);
            \draw[very thick, draw = red] (0,0) -- (#2,0);
    \fi
    \ifnum#5=2
		\draw[very thick, draw = red] (0,-#2) -- (0,0);
            \draw[very thick, draw = red] (0,0) -- (0,#2);
    \fi
    \ifnum#5=3
		\draw[very thick, draw = red] (-#2,0) -- (0,0);
            \draw[very thick, draw = red] (0,0) -- (#2,0);
    \fi
        \draw[ thick, fill=tensorcolor, rounded corners=2pt] (-#3,-#3) rectangle (#3,#3);
		\draw (0,0) node {\scriptsize #4};
	\end{scope}
}

\newcommand{\BTensor}[5]{
	\begin{scope}[shift={(#1)}]
    \ifnum#5=0
		\draw[very thick, draw=red] (-#2,0) -- (#2,0);
    \fi
    \ifnum#5=1
		\draw[very thick,draw=red] (0,#2) -- (0,-#2);
    \fi
    \ifnum#5=2
		\draw[very thick] (-#2,0) -- (#2,0);
    \fi
    \ifnum#5=3
		\draw[very thick] (0,#2) -- (0,-#2);
    \fi

        \draw[ thick, fill=tensorcolor, rounded corners=2pt] (-#3,-#3) rectangle (#3,#3);
		\draw (0,0) node {\scriptsize #4};
	\end{scope}
}

\newcommand{\DTensor}[5]{
	\begin{scope}[shift={(#1)}]
    \ifnum#5=0
		\draw[very thick, draw=red] (-#2,0) -- (#2,0);
    \fi
    \ifnum#5=1
		\draw[very thick,draw=red] (0,#2) -- (0,-#2);
    \fi
    \ifnum#5=2
		\draw[very thick] (-#2,0) -- (#2,0);
    \fi
    \ifnum#5=3
		\draw[very thick] (0,#2) -- (0,-#2);
    \fi

        \draw[ thick, fill=whitetensorcolor, rounded corners=2pt] (-#3,-#3) rectangle (#3,#3);
		\draw (0,0) node {\scriptsize #4};
	\end{scope}
}

\newcommand{\UTensor}[5]{
	\begin{scope}[shift={(#1)}]
    \ifnum#5=0
		\draw[very thick, draw=red] (-#2,0) -- (#2,0);
    \fi
    \ifnum#5=1
		\draw[very thick,draw=red] (0,#2) -- (0,-#2);
    \fi
    \ifnum#5=2
		\draw[very thick] (-#2,0) -- (#2,0);
    \fi
    \ifnum#5=3
		\draw[very thick] (0,#2) -- (0,-#2);
    \fi
        \draw[ thick, fill=operatorcolor, rounded corners=2pt] (-#3,-#3) rectangle (#3,#3);
		\draw (0,0) node {\scriptsize #4};
	\end{scope}
}

\newcommand{\RTensor}[5]{
	\begin{scope}[shift={(#1)}]
    \ifnum#5=0
		\draw[very thick, draw=red] (-#2,0) -- (#2,0);
    \fi
    \ifnum#5=1
		\draw[very thick,draw=black] (0,#2) -- (0,-#2);
    \fi
    \ifnum#5=2
		\draw[very thick] (-#2,0) -- (#2,0);
    \fi
    \ifnum#5=3
		\draw[very thick,draw=red] (-#2,0) -- (#2,0);
    \fi
        \draw[ thick, fill=white, rounded corners=2pt] (-#3,-#3) rectangle (#3,#3);
		\draw (0,0) node {\scriptsize #4};
	\end{scope}
}

\newcommand{\GDTensor}[5]{
	\begin{scope}[shift={(#1)}]
    \ifnum#5=0
		\draw[very thick] (-#2,0) -- (#2,0);
		\draw[very thick] (0,#2) -- (0,-#2);
    \fi
    \ifnum#5=-1
		\draw[very thick] (0,0) -- (#2,0);
		\draw[very thick] (0,#2) -- (0,-#2);
    \fi
    \ifnum#5=1
		\draw[very thick] (-#2,0) -- (0,0);
		\draw[very thick] (0,#2) -- (0,-#2);
    \fi
        \draw[ thick, fill=tensorcolor, rounded corners=2pt] (-#3,-#3) rectangle (#3,#3);
    \def\dx{#3/3};
	\draw [thick]  (-#3+\dx, \dx) -- (- \dx,#3-\dx);
	\draw [thick] (-#3+1.5*\dx,-#3+1.5*\dx) -- (#3-1.5*\dx,#3-1.5*\dx);
	\draw [thick]  ( \dx, -#3 + \dx) -- (#3 - \dx,-\dx);
	\draw (0,0) node {\scriptsize #4};
	\end{scope}
}

\newcommand{\VecCirc}[5]{%
  \begin{scope}[shift={(#1)}]
    \ifnum#4=0 \draw[very thick] (#3,0) -- (#2,0); \fi
    \ifnum#4=1 \draw[very thick] (0,#3) -- (0,#2); \fi
    \ifnum#4=2 \draw[very thick] (-#3,0) -- (-#2,0); \fi
    \ifnum#4=3 \draw[very thick] (0,-#3) -- (0,-#2); \fi
    \draw[very thick, fill=white] (0,0) circle (#3);
    \node at (0,0) {\scriptsize #5};
  \end{scope}%
}

\newcommand\subsetsim{\mathrel{%
  \ooalign{\raise0.2ex\hbox{$\subset$}\cr\hidewidth\raise-0.8ex\hbox{\scalebox{0.9}{$\sim$}}\hidewidth\cr}}}

\newcommand{\EndVecCirc}[5]{%
  \begin{scope}[shift={(#1)}]
    \ifnum#4=0 \draw[very thick] (-#3,0) -- (-#2,0); \fi 
    \ifnum#4=1 \draw[very thick] (0,-#3) -- (0,-#2); \fi 
    \ifnum#4=2 \draw[very thick] (#3,0) -- (#2,0); \fi 
    \ifnum#4=3 \draw[very thick] (0,#3) -- (0,#2); \fi 
    \draw[very thick, fill=white] (0,0) circle (#3);
    \node at (0,0) {\scriptsize #5};
  \end{scope}%
}

\usetikzlibrary{calc,arrows.meta} 
\usepackage{pgffor} 

\newcommand{\BPEdge}[4]{%
  \pgfmathsetmacro{\rr}{#3}
  \pgfmathsetmacro{\gg}{#4}
  \pgfmathsetmacro{\dd}{0.5*\gg + \rr} 

  \coordinate (BPm)  at ($(#1)!0.5!(#2)$);
  \coordinate (BPcL) at ($(BPm)!\dd!(#1)$);
  \coordinate (BPcR) at ($(BPm)!\dd!(#2)$);

  \coordinate (BPpL) at ($(BPcL)!\rr!(#1)$);
  \coordinate (BPpR) at ($(BPcR)!\rr!(#2)$);

  \draw[very thick] (#1) -- (BPpL);
  \draw[very thick] (#2) -- (BPpR);
  \draw[very thick, fill=white] (BPcL) circle (\rr);
  \draw[very thick, fill=white] (BPcR) circle (\rr);
}

\definecolor{opmaroon}{RGB}{120,30,45}

\usepackage{algorithm}
\usepackage{algpseudocode}
\algnewcommand{\procname}[1]{\textnormal{\textsc{#1}}}

\newtheoremstyle{prlthm}
  {} 
  {} 
  {\normalfont} 
  {} 
  {\normalfont} 
  {} 
  {0.5em} 
  {\emph{#1~#2.---}\if\relax\detokenize{#3}\relax\else(#3)\fi}
  
  \theoremstyle{prlthm}
\newtheorem{maintextthm}{Theorem}
\newtheorem{maintextlem}{Lemma}

\hypersetup{
    colorlinks=true,
    linkcolor=magenta,
    citecolor=magenta,
    urlcolor=magenta
}

\makeatletter

\newcommand{\l@smsection}{\@dottedtocline{1}{1.5em}{2.3em}}
\newcommand{\l@smsubsection}{\@dottedtocline{2}{3.8em}{3.2em}}

\newcommand{\listofsmentries}{%
  \section*{Contents}%
  \@starttoc{smtoc}%
}

\newcommand{\smsection}[1]{%
  \section{#1}%
  \addcontentsline{smtoc}{smsection}{\protect\numberline{\thesection}#1}%
}

\newcommand{\smsubsection}[1]{%
  \subsection{#1}%
  \addcontentsline{smtoc}{smsubsection}{\protect\numberline{\thesubsection}#1}%
}

\makeatother

\begin{document}
\title{Algorithmic Locality via Provable Convergence in Quantum Tensor Networks}
\author{Siddhant Midha}
\email{siddhantm@princeton.edu}
\affiliation{Princeton Quantum Initiative, Princeton University, Princeton, NJ 08544}

\author{Yifan F. Zhang}
\affiliation{Department of ECE, Princeton University, Princeton, NJ 08544}

\author{Daniel Malz}
\affiliation{Department of Physics, University of Basel, Klingelbergstrasse 82, CH-4056 Basel, Switzerland}

\author{Dmitry A. Abanin}
\affiliation{Princeton Quantum Initiative, Princeton University, Princeton, NJ 08544}
\affiliation{Department of Physics, Princeton University, Princeton, NJ 08544}
\affiliation{\'{E}cole Polytechnique F\'{e}d\'{e}rale de Lausanne (EPFL), 1015 Lausanne, Switzerland}

\author{Sarang Gopalakrishnan}
\affiliation{Princeton Quantum Initiative, Princeton University, Princeton, NJ 08544}
\affiliation{Department of ECE, Princeton University, Princeton, NJ 08544}

\begin{abstract}
    Belief propagation has recently emerged as a powerful framework for {evaluating} tensor networks in higher dimensions, combining computational efficiency with provable analytical guarantees.
    In this work, we develop the first end-to-end theory of tensor network belief propagation for a class of projected entangled pair states satisfying \emph{strong injectivity}. We show that when the injectivity parameter exceeds a constant threshold, BP fixed points can be found efficiently, and a cluster-corrected BP algorithm computes physical quantities to $1/\mathrm{poly}(N)$ error in $\mathrm{poly}(N)$ time for an $N$ qubit system.
    We identify a striking phenomenon we term \emph{algorithmic locality}: local perturbations of the tensor network affect the BP fixed point with an influence decaying rapidly with distance. As a result, updates to the fixed point after a local perturbation can be carried out using only local recomputation. Moreover, through the cluster expansion, this locality extends to observables, implying that local expectation values can be approximated from local data with controlled accuracy.
    Our results provide the first rigorous guarantee for the effectiveness of tensor-network belief propagation on a wide class of many-body states, bridging a gap between widely used numerical practice and provable algorithmic performance.
\end{abstract}

\maketitle


\emph{Introduction.---}Tensor network states provide a powerful framework for encoding and manipulating quantum many-body wave functions~\cite{white1992density,white1993density,schollwock2011density,orus2014practical,cirac2021matrix,verstraete2004matrix,vidal2007entanglement,verstraete2006faithful,verstraete2004renormalization,PerezGarcia2007,vidal2003efficient,Vidal2004}. In particular, projected entangled pair states (PEPS)~\cite{verstraete2004renormalization} generalize matrix product states~\cite{white1992density,white1993density,schollwock2011density,verstraete2006faithful} to higher-dimensional lattices, offering a natural description of ground states of local Hamiltonians with area-law entanglement~\cite{orus2014practical,cirac2021matrix}. While PEPS have been extensively studied numerically~\cite{jordan2008simulation,verstraete2004renormalization,orus2012exploring,orus2009simulation,verstraete2004renormalization, lubasch2014unifying, murg2007variational,levin2007tensor,evenbly2015tensor,corboz2016,Xie_2012_HOTRG} and have yielded significant insights into the structure of quantum phases, establishing their rigorous properties on general graphs remains challenging. In particular, PEPS defined on graphs with loops lack the simple iterative structure enjoyed by matrix product states~\cite{white1992density,white1993density,schollwock2011density,verstraete2006faithful} on one-dimensional chains. As a result, many analytical and computational tools available in one dimension do not directly generalize. 

\begin{figure}[t!]\label{fig:epsilon_lppl}
  \centering

  \noindent\begin{minipage}{0.8\linewidth}
    \centering
    \llap{\textbf{(a)}\hspace{0.6em}}%
    \resizebox{0.8\linewidth}{!}{%
      \begin{tikzpicture}[
  baseline={([yshift=-0.65ex] current bounding box.center)},
  tensor/.style={
    draw, line width=0.3pt,
    fill=blue!20!tensorcolor!75!black,
    rounded corners=1pt, minimum size=2mm,
    inner sep=0pt, outer sep=0pt
  },
  tensorA/.style={
    draw, line width=0.35pt,
    fill=blue!30!tensorcolor!85!black,
    rounded corners=1pt, minimum size=2mm,
    inner sep=0pt, outer sep=0pt
  }
]

\def\dx{4mm}
\def\dy{4mm}
\def\ls{0.3}

\def\cx{2}   
\def\cy{1}   
\def\dmax{2.0616} 

\foreach \r in {0,...,2}{
  \foreach \c in {0,...,4}{
    \node[tensor] (T-\c-\r) at (\c*\dx, -\r*\dy) {};
  }
}

\node[tensorA] (T-\cx-\cy) at (\cx*\dx, -\cy*\dy) {};

\fill[black!90]
  ($(T-\cx-\cy)+({0.00*\ls},{0.35*\ls})$) --
  ($(T-\cx-\cy)+({0.18*\ls},{0.35*\ls})$) --
  ($(T-\cx-\cy)+({0.05*\ls},{0.05*\ls})$) --
  ($(T-\cx-\cy)+({0.22*\ls},{0.05*\ls})$) --
  ($(T-\cx-\cy)+({-0.05*\ls},{-0.35*\ls})$) --
  ($(T-\cx-\cy)+({0.02*\ls},{-0.05*\ls})$) --
  ($(T-\cx-\cy)+({-0.15*\ls},{-0.05*\ls})$) -- cycle;

\foreach \r in {0,...,2}{
  \foreach \c in {0,...,3}{
    \pgfmathtruncatemacro{\cp}{\c+1}

    \pgfmathsetmacro{\mx}{\c + 0.5}
    \pgfmathsetmacro{\my}{\r}

    \pgfmathsetmacro{\dist}{veclen(\mx-\cx,\my-\cy)}
    \pgfmathsetmacro{\t}{\dist/\dmax}

    \pgfmathtruncatemacro{\shade}{106 - 24*\t}
    \pgfmathsetmacro{\thick}{1.8 - 0.20*\dist}
    \pgfmathsetmacro{\opa}{1.0 - 0.55*\t}

    \draw[
      draw=black!\shade,
      line width=\thick pt,
      opacity=\opa
    ] (T-\c-\r) -- (T-\cp-\r);
  }
}

\foreach \c in {0,...,4}{
  \foreach \r in {0,...,1}{
    \pgfmathtruncatemacro{\rp}{\r+1}

    \pgfmathsetmacro{\mx}{\c}
    \pgfmathsetmacro{\my}{\r + 0.5}

    \pgfmathsetmacro{\dist}{veclen(\mx-\cx,\my-\cy)}
    \pgfmathsetmacro{\t}{\dist/\dmax}

    \pgfmathtruncatemacro{\shade}{106 - 24*\t}
    \pgfmathsetmacro{\thick}{1.8 - 0.40*\dist}
    \pgfmathsetmacro{\opa}{1.0 - 0.55*\t}

    \draw[
      draw=black!\shade,
      line width=\thick pt,
      opacity=\opa
    ] (T-\c-\r) -- (T-\c-\rp);
  }
}
\end{tikzpicture}%
    }
  \end{minipage}


  \noindent\begin{minipage}{0.8\linewidth}
    \centering
    \llap{\textbf{(b)}\hspace{0.6em}}%
    \includegraphics[width=\linewidth]{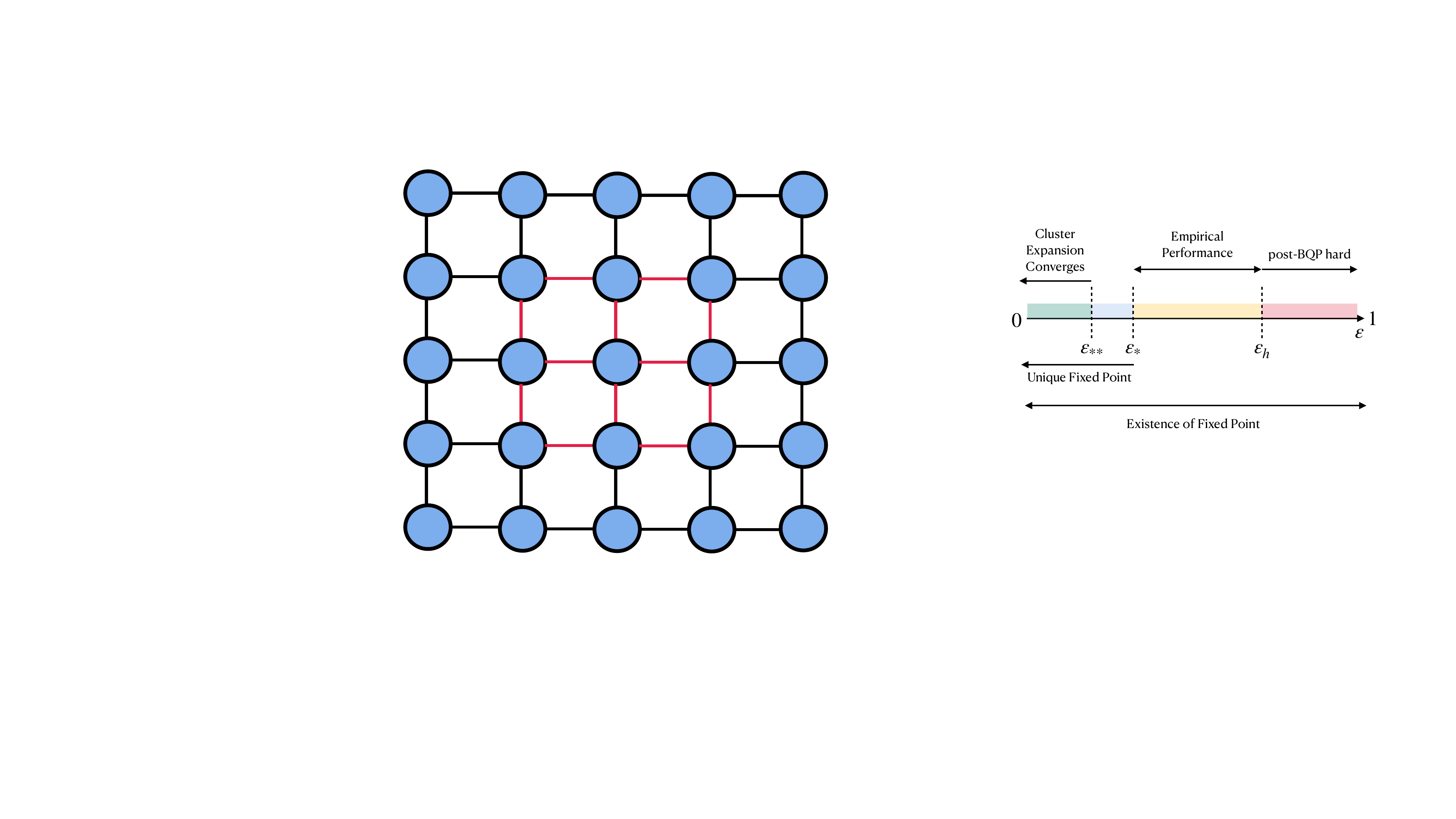}
  \end{minipage}

  \par
  \vspace{-4mm}
   \caption{(a) \emph{Algorithmic locality in tensor networks}: The effect of a perturbation at the center of the network on the fixed-point \emph{messages} living on edges of the graph decays exponentially with distance from perturbation. Loops (see \autoref{eq:loopexpmaintext}) and clusters (see \autoref{eq:clusterexpmaintext}) built out of the fixed-point messages inherit the locality subsequently. (b) \emph{Phase diagram of injective PEPS}: \autoref{thm:fixed-points} shows existence (for all $0 \leq \varepsilon < 1$) and uniqueness (for $\varepsilon < \varepsilon_{*} = \order{1/\Delta}$) of fixed points, where $\Delta$ is the degree of the graph. \autoref{thm:complexity_maintext} shows convergence of cluster expansion for $\varepsilon < \varepsilon_{**} = \order{\min\{1/D, (D/\Delta)^{\Delta/2}\}}$ leading to efficient classical simulation algorithms, with $D$ being the bond dimension of the PEPS. \autoref{thm:BPlocality_maintext} ties these results to prove algorithmic locality of tensor network belief propagation. The TN-BP algorithm performs well empirically in the intermediate regime. The hardness threshold for estimating local observables depicted $\varepsilon_h$ is from Ref.~\cite{harley2025computational}. } 
  \label{fig:fig_main}
\end{figure}

A framework for reasoning about PEPS on arbitrary graphs has recently emerged in the form of \emph{belief propagation} (BP)~\cite{robeva2016duality, BPsimpleupdate,sahu2022efficient, BP_gauging}, a message-passing algorithm originally developed for inference on graphical models~\cite{Gallager1962,pearl1988probabilistic,mceliece1998turbo,yedidia2003,Yedidia2005} deeply rooted in statistical physics~\cite{bethe1935statistical,Peierls1936,Thouless1987,Mezard1986cavity,Klein1979,Katsura1979,Nakanishi1981}. While classical belief propagation has long been placed on a rigorous footing in certain regimes~\cite{Weiss2000,Tatikonda2002,NIPS2002,Heskes2004,Mooij2005,chertkov2008exactness,Sudderth2008}, recent developments have extended this program to tensor networks through loop and cluster expansions~\cite{Evenbly2024,midha2025beyond,midha2026,gray2025,park2025}, which systematically account for the errors incurred by BP. In this picture, fixed points of the BP message-passing equations encode leading-order information about reduced density matrices of the PEPS on local regions~\cite{Evenbly2024,BP_gauging}, while the convergence properties of BP based expansions are closely tied to the decay of correlations in the underlying quantum state~\cite{midha2026}. A key advance enabling this perspective is the use of \emph{cluster expansions} from lattice statistical mechanics, which systematically decompose tensor network contractions into contributions from localized excitations or ``clusters'' built on top of the BP fixed points~\cite{midha2025beyond,midha2026}.

Despite its empirical success in quantum many-body applications~\cite{tindall2025dynamics,BP_gauging,rudolph2025simulatingsamplingquantumcircuits, park2025, liao2023simulation,sahu2022efficient}, the theoretical foundations of TN-BP remain incomplete. The cluster expansion technology only provides guarantees given the existence of a suitable BP fixed-point~\cite{midha2025beyond}. In practice, the BP message-passing equations used to compute fixed points of the tensor network are implemented as iterative updates, but their convergence is not guaranteed. Moreover, the cluster expansion framework yields controlled results when the contributions from \emph{loop tensors} decay sufficiently rapidly with loop size---a condition called ``decay of loops" that is expected to hold in physically relevant regimes (e.g., deep in a gapped phase)~\cite{midha2026}, but has not yet been established rigorously.

In this work we put tensor-network belief propagation (TN-BP) on fully rigorous ground by proving the following results. Working with PEPS on arbitrary graphs satisfying an injectivity condition, we prove that BP message-passing admits fixed points, and for a subclass of \emph{strongly injective} PEPS the fixed point is unique. We further show that these PEPS satisfy a ``decay of loops" property controlling corrections to BP. Using the cluster-expansion framework, we can then construct polynomial-time classical algorithms to compute properties such as local expectation values and connected correlation functions.

Our convergence results, as a function of the injectivity strength $\varepsilon$ (formally defined later), are summarized in \autoref{fig:fig_main}(b). $\varepsilon = 1$ corresponds to non-injective PEPS and $\varepsilon = 0$ correspond to the ``maximally'' injective PEPS where all singular values are equal. While we show that a BP fixed point exists for all $\varepsilon  < 1$, it is known that contracting PEPS is $\mathsf{postBQP}$-hard~\cite{malz2025computational, harley2025computational} when $\varepsilon$ is above a constant threshold $\varepsilon_h$. Therefore, we do not expect BP to form a good approximation in this regime. On the other hand, we show two upper-bound thresholds $\varepsilon_{*}$ and $\varepsilon_{**}$: below $\varepsilon_{*}$ the fixed point is unique and can be efficiently found, and below $\varepsilon_{**} \le \varepsilon_{*}$ cluster-corrected BP can contract PEPS in polynomial-time, to inverse polynomial multiplicative error.

Our proof methodology reveals a striking phenomenon, which we call \emph{algorithmic locality}. Specifically, local perturbations of tensors affect BP messages only within a neighborhood, with influence that decays rapidly with distance \autoref{fig:fig_main}(a). This has both theoretical and practical consequences. On the theoretical side, algorithmic locality yields a clustering theorem analogous to exponential decay of correlations~\cite{hastings2006spectral}. On the practical side, it enables significant computational speedups when contracting multiple tensor networks that differ only locally: it suffices to update messages within the neighborhood of the perturbation.

To the best of our knowledge, this work provides the first rigorous proof, in the setting of quantum tensor network states, showing all of the following simultaneously: the BP fixed point can be found efficiently, the ``decay of loops" condition is satisfied, and the cluster expansion thus converges. Combined with algorithmic locality, our results show that BP is efficient, admits controlled errors, can be systematically improved; locality can thereafter be exploited to mitigate computational costs. Related versions of algorithmic locality have been studied in the PEPS literature~\cite{harley2025computational,schwarz2017approximating,schuch2011classifying,anshu2016local}. In contrast to many existing works, our analysis relies only on \emph{graph locality} and does not assume any embedding in Euclidean space. This makes our results directly relevant to quantum information tasks that are not constrained by geometric locality, such as simulating systems with $k$-local interactions~\cite{lloyd1996universal,Berry_2006,Low_2019,Childs_2009,Berry_2015,Childs_2011,Childs_2017,clinton2021hamiltonian} and decoding quantum low-density parity-check (LDPC) codes~\cite{leverrier2022quantumtannercodes, panteleev2022asymptoticallygoodquantumlocally, Fawzi_2018, Leverrier_2015, Tillich_2014, Roffe_2020, poulin2008iterativedecodingsparsequantum}.

\emph{Projected entangled pair states.---}A PEPS on a graph $G=(V,E)$ represents a quantum state $|\psi\rangle$ through a collection of local tensors $\{T_v\}_{v\in V}$, one for each vertex. Each tensor $T_v$ has $|\NN(v)|$ virtual legs (where $\NN(v)$ denotes the neighbors of $v$) and one physical leg. The virtual legs, each of dimension $D$ (the bond dimension), are contracted with neighboring tensors along the edges of $G$, while the physical legs, each of dimension $d$, correspond to the local Hilbert spaces of the quantum state. We denote the maximum degree by $\Delta := \max_v |\NN(v)|$ and assume throughout that both $\Delta$ and $D$ are $\order{1}$.

A central notion in our analysis is \emph{injectivity} of the local tensors. Viewing $T_v$ as a linear map from the virtual space to the physical space, $T_v: \bigotimes_{n\in \NN(v)}\HH_{(n,v)} \to \HH_v$ with $\HH_{(n,v)} \cong \mathbb{C}^D$ and $\HH_v \cong \mathbb{C}^d$, we say $T_v$ is injective if this map has trivial kernel. Injectivity requires $d \ge D^{|\NN(v)|}$, which generically holds after coarse-graining when $D = \order{1}$. The PEPS $\psi = \{T_v\}_v$ is injective if every local tensor is injective. For any injective PEPS $\psi$, there is a local, frustration-free parent Hamiltonian for which $\psi$ is the unique ground state~\cite{cirac2021matrix}.

To quantify a degree of injectivity, we introduce the notion of $\delta-$injectivity. For each tensor $T_v$, let $\{\lambda_1 \ge \lambda_2 \ge \cdots \ge \lambda_{D^{|\NN(v)|}}\}$ be the singular values of $T_v$ viewed as a matrix from virtual to physical space. 
We normalize so that $\lambda_1 = 1$ and define the \emph{injectivity parameter} $\delta$ by $\lambda_{\min} \ge \delta$. The tensor is $\delta$-injective if all singular values lie in $[\delta, 1]$. A PEPS is $\delta$-injective if every local tensor $T_v$ is $\delta$-injective. Physically, strongly injective PEPS with $\delta$ close to one correspond to those with \emph{gapped} parent Hamiltonians~\cite{schuch2011classifying}. Purified thermal states represented as PEPS at sufficiently high temperature (above a constant) will also satisfy strong injectivity.  Our main results establish rigorous thresholds on the injectivity parameter $\delta$ ensuring locality properties. We will state the thresholds in terms of the perturbative parameter $\varepsilon := 1-\delta^2$. 

\emph{Belief Propagation.---}The procedure of belief-propagation involves decomposing the operator space on each edge in terms of a rank-one BP subspace, and the remaining excitation. 
The rank-one subspace is the outer product of fixed-points on that edge. Schematically, the BP approximation is, 
\[
\begin{tikzpicture}[
  baseline={([yshift=-0.65ex] current bounding box.center)},
  tensor/.style={
    draw, line width=0.6pt,
    fill=blue!20!tensorcolor!75!black,
    rounded corners=1.2pt,
    inner sep=0pt,
    outer sep=0pt,
    minimum size=4mm
  },
  tensorA/.style={
    draw, line width=0.7pt,
    fill=blue!30!tensorcolor!85!black,
    rounded corners=1.2pt,
    inner sep=0pt,
    outer sep=0pt,
    minimum size=4mm
  }
]
  \def\dx{1.8}
  \def\dy{1.6}
  \def\a{0.33}

  \newcommand{\SqTN}[3]{%
    \coordinate (#1) at (#2,#3);
    \draw[thick, fill=tensorcolor, rounded corners=2pt]
      ($(#1)+(-\a,-\a)$) rectangle ($(#1)+(\a,\a)$);
    \node at (#1) {\scriptsize };
  }

  \SqTN{A}{0}{0}
  \SqTN{B}{\dx}{0}
  \SqTN{C}{0}{-\dy}
  \SqTN{D}{\dx}{-\dy}

  \coordinate (AE) at ($(A)+(\a,0)$);   \coordinate (BW) at ($(B)+(-\a,0)$);
  \coordinate (CE) at ($(C)+(\a,0)$);   \coordinate (DW) at ($(D)+(-\a,0)$);

  \coordinate (AS) at ($(A)+(0,-\a)$);  \coordinate (CN) at ($(C)+(0,\a)$);
  \coordinate (BS) at ($(B)+(0,-\a)$);  \coordinate (DN) at ($(D)+(0,\a)$);

  \draw[very thick] (AE) -- (BW);
  \draw[very thick] (CE) -- (DW);

  \draw[very thick] (AS) -- (CN);
  \draw[very thick] (BS) -- (DN);
\end{tikzpicture}
\;\approx\;
\begin{tikzpicture}[scale=0.95, baseline={([yshift=-0.65ex] current bounding box.center)}]
  \def\dx{1.8}
  \def\dy{1.6}
  \def\a{0.33}
  \def\r{0.1}
  \def\gap{0.4}

  \newcommand{\SqTN}[3]{%
    \coordinate (#1) at (#2,#3);
    \draw[thick, fill=tensorcolor, rounded corners=2pt]
      ($(#1)+(-\a,-\a)$) rectangle ($(#1)+(\a,\a)$);
    \node at (#1) {\scriptsize };
  }

  \SqTN{A}{0}{0}
  \SqTN{B}{\dx}{0}
  \SqTN{C}{0}{-\dy}
  \SqTN{D}{\dx}{-\dy}

  \coordinate (AE) at ($(A)+(\a,0)$);   \coordinate (BW) at ($(B)+(-\a,0)$);
  \coordinate (CE) at ($(C)+(\a,0)$);   \coordinate (DW) at ($(D)+(-\a,0)$);

  \coordinate (AS) at ($(A)+(0,-\a)$);  \coordinate (CN) at ($(C)+(0,\a)$);
  \coordinate (BS) at ($(B)+(0,-\a)$);  \coordinate (DN) at ($(D)+(0,\a)$);

  \BPEdge{AE}{BW}{\r}{\gap}
  \BPEdge{CE}{DW}{\r}{\gap}

  \BPEdge{AS}{CN}{\r}{\gap}
  \BPEdge{BS}{DN}{\r}{\gap}
\end{tikzpicture}.
\]
The \emph{fixed-points} of a PEPS are defined in terms of its norm network, $\ZZ = \inner{\psi}{\psi}$. For each edge $\vec{e} = (v,n)$ of the graph, the tensor $T_v \star \bar{T}_v$ of the norm network defines a superoperator on the virtual space from any set of legs to its complement \autoref{eq:inline_stacking_bp}(a). For each $n\in\NN(v)$ will be concerned with the superoperator from $\NN(v)/\{n\}$ to $n$. To a graph $G$, we associate the Cartesian-product space $\mathcal{K}_G$ consisting of positive, trace-one matrices assigned to each directed edge. An element $\boldsymbol{\mu} = \{\mu_{(v,w)}\}_{(v,w)} \in \mathcal{K}_G$ is said to satisfy the fixed-point equation for $\psi$ if, for every vertex $v \in V$ and each neighbor $n \in \mathcal{N}(v)$, the following holds:
\begin{equation}
        \Phi_{(v,n)}\left(\bigotimes_{n_i \in \NN(v)\setminus\{n\}}\mu_{(n_i,v)} \right)  \propto \mu_{(v,n)}
\end{equation}
where $\Phi_{(v,n)}$ denotes the superoperator at vertex $v$, directed from $\NN(v)/\{n\}$ to $n$, see \autoref{eq:inline_stacking_bp}(b).
\begin{equation}\label{eq:inline_stacking_bp}
    \begin{tikzpicture}[
scale=0.88,
baseline={([yshift=-0.6ex]current bounding box.center)},
peps/.style={
    draw,
    line width=0.9pt,
    fill=tensorcolor,
    rounded corners=1.5pt,
    minimum size=5mm
},
superop/.style={
    draw,
    line width=0.9pt,
    fill=blue!20!tensorcolor!75!black,
    rounded corners=2pt,
    minimum size=8mm
},
thinleg/.style={line width=1.1pt, draw=black!70},
thickleg/.style={line width=1.35pt, draw=black!70},
phys/.style={line width=1.3pt, draw=black!80},
endcirc/.style={draw=black!70, fill=white, line width=1.1pt}
]



\def\xb{0.18}
\def\yb{-0.30}
\draw[thinleg] (\xb-0.92,\yb) -- (\xb-0.28,\yb);
\draw[thinleg] (\xb,\yb+0.92) -- (\xb,\yb+0.28);
\draw[thinleg] (\xb,\yb-0.92) -- (\xb,\yb-0.28);
\draw[thinleg] (\xb+0.28,\yb) -- (\xb+0.92,\yb);
\node[peps] at (\xb,\yb) {};

\def\xa{-0.18}
\def\ya{0.30}
\draw[phys] (\xa,\ya) -- (\xb,\yb);

\draw[thinleg] (\xa-0.92,\ya) -- (\xa-0.28,\ya);
\draw[thinleg] (\xa,\ya+0.92) -- (\xa,\ya+0.28);
\draw[thinleg] (\xa,\ya-0.92) -- (\xa,\ya-0.28);
\draw[thinleg] (\xa+0.28,\ya) -- (\xa+0.92,\ya);
\node[peps] at (\xa,\ya) {};

\node at (1.38,0) {\Large $=$};

\def\xd{2.62}
\draw[thickleg] (\xd-0.92,0) -- (\xd-0.38,0);
\draw[thickleg] (\xd,0.92) -- (\xd,0.38);
\draw[thickleg] (\xd,-0.92) -- (\xd,-0.38);
\draw[thickleg] (\xd+0.38,0) -- (\xd+0.92,0);
\node[superop] at (\xd,0) {};

\node[font=\bfseries] at (1.30,-1.28) {(a)};


\def\xs{5.05}
\def\L{0.88}
\def\a{0.33}
\def\r{0.10}

\draw[thickleg] (\xs-\L,0) -- (\xs-\a,0);
\draw[thickleg] (\xs,\L) -- (\xs,\a);
\draw[thickleg] (\xs,-\L) -- (\xs,-\a);
\draw[thickleg] (\xs+\a,0) -- (\xs+\L,0);
\node[superop] at (\xs,0) {};

\draw[endcirc] (\xs-\L,0) circle (\r);
\draw[endcirc] (\xs,\L) circle (\r);
\draw[endcirc] (\xs,-\L) circle (\r);

\node at (\xs+1.4,0) {\Large $\propto$};

\def\xout{7.30}
\draw[thickleg] (\xout,0) -- (\xout+0.78,0);
\draw[endcirc] (\xout,0) circle (\r);

\node[font=\bfseries] at (6.28,-1.28) {(b)};

\end{tikzpicture}
\end{equation}

The set of virtual superoperators are used to define the \emph{message-passing} map of the PEPS, given component-wise as, 
\begin{equation}
    [\bm{F}(\bm{\mu})]_{\vec{e}} := \frac{f_{\vec{e}}(\bm{\mu})}{\Tr f_{\vec{e}}(\bm{\mu})}.
\end{equation}
where
\begin{equation}
    [\bm{f}(\bm{\mu})]_{\vec{e}} := f_{\vec{e}}(\bm{\mu}) := \Phi_{(v,n)}\!\left(
\bigotimes_{m \in \NN(v)\setminus\{n\}} \mu_{(m,v)}
\right).
\end{equation}
The fixed-point messages thus satisfy $\bm{F}(\bm{\mu}_\star) = \bm{\mu}_\star$. In general, there is no reason \textit{a priori} that a fixed-point exists. In fact, it is possible that the fixed-point exists but finding it is computationally prohibitive. Practitioners apply the map $\bm{F}(\bm{\mu})$ iteratively as follows, hoping it converges to a fixed point.
\begin{equation}\label{eq:iterative_update}
    \bm{\mu}^{(t+1)} := \bm{F}[\bm{\mu}^{(t)} ]
\end{equation}
This is not guaranteed to converge in general.

Our first result shows that for injective PEPS, {at least one} fixed point exists, and for strongly injective PEPS, the fixed point is unique and the iterative application of $\bm{F}$ finds that fixed point.

\begin{maintextthm}[BP fixed points] 
    \label{thm:fixed-points}
  Given any PEPS with an injectivity parameter $\varepsilon$, the following holds:
    \begin{enumerate}
        \item[(i)] if $\varepsilon < 1$, that is, the PEPS is injective, then a BP fixed point exists.
        \item[(ii)] if $\varepsilon < \varepsilon_{*} := 1/{(2\Delta - 1)}$, then the BP fixed point is unique and can be efficiently found by applying $\bm{F}$ iteratively.
    \end{enumerate}
\end{maintextthm}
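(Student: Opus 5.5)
Part (i) will follow from a Brouwer fixed-point argument, and part (ii) from a contraction estimate in a suitable metric.

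\emph{Part (i).} The space $\mathcal{K}_G$ is a finite Cartesian product of sets of density matrices on $\C^D$, so it is compact and convex. The superoperator $\Phi_{(v,n)}$ has the form $X \mapsto \Tr_{\text{phys},\,\NN(v)\setminus\{n\}}[\,T_v^\dagger T_v \,(X\otimes \cdot)\,]$ with $T_v^\dagger T_v \ge \delta^2 \mathbbm{1}$. Hence for any positive trace-one input $X=\bigotimes \mu_{(m,v)}$ we get
\begin{equation*}
f_{\vec e}(\bm\mu) \succeq \delta^2\,\Tr_{\NN(v)\setminus\{n\}}(X\otimes \mathbbm{1}) = \delta^2 \mathbbm{1}_D .
\end{equation*}
Consequently $\Tr f_{\vec e}(\bm\mu) \ge \delta^2 D > 0$ whenever $\varepsilon<1$. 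This makes $\bm F$ well defined and continuous on all of $\mathcal{K}_G$: it is a polynomial map divided by a denominator bounded away from zero, and it maps $\mathcal{K}_G$ into itself. Brouwer's theorem then gives a fixed point. As a by-product, every fixed point satisfies $\mu \succeq \delta^2/D\cdot\mathbbm{1}$.

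\emph{Part (ii).} The idea is to split $T_v^\dagger T_v = \mathbbm{1} - E_v$, where $0\preceq E_v \preceq \varepsilon\mathbbm{1}$ since the singular values of $T_v$ lie in $[\delta,1]$. The identity part of $\Phi_{(v,n)}$ sends every input to $\mathbbm{1}_D$, independently of the incoming messages. All dependence of $f_{\vec e}$ on $\bm\mu$ therefore comes through $E_v$, and it is of size at most $\varepsilon$.

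To exploit this, take two message sets $\bm\mu,\bm\nu$ and telescope the difference of tensor products over the $\Delta-1$ incoming legs:
\begin{equation*}
\bigotimes\mu-\bigotimes\nu=\sum_{j}\mu\otimes\cdots\otimes(\mu_j-\nu_j)\otimes\nu\otimes\cdots .
\end{equation*}
On a traceless increment the identity part vanishes. The $E_v$ part is bounded in trace norm by $\varepsilon\norm{\mu_j-\nu_j}_1$, using $\norm{E_v}_\infty\le\varepsilon$ and the fact that partial traces against states are contractive. This gives
\begin{equation*}
\norm{f_{\vec e}(\bm\mu)-f_{\vec e}(\bm\nu)}_1 \le \varepsilon\sum_{m}\norm{\mu_{(m,v)}-\nu_{(m,v)}}_1 .
\end{equation*}

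Normalization must then be accounted for. Write $F=f/\Tr f$ with $\Tr f = D - \Tr(E\cdots)\in[D(1-\varepsilon),D]$, and compare $F(\bm\mu)-F(\bm\nu)$ against the reference $\mathbbm{1}/D$. The resulting per-edge Lipschitz constant is $\varepsilon/(1-\varepsilon)$ times the sum over the $\Delta-1$ incoming edges. In the sup-over-edges norm, $\bm F$ is therefore a contraction provided $(\Delta-1)\varepsilon/(1-\varepsilon)<1$. The stated threshold is cleaner than this, $\varepsilon<1/(2\Delta-1)$, i.e.\ $2(\Delta-1)\varepsilon<1-\varepsilon$. I expect the factor of $2$ to come from treating the normalization more carefully. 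The trace-norm difference picks up one contribution from the numerator and one from the fluctuation of $\Tr f$: $\norm{a/\alpha-b/\beta}_1\le(\norm{a-b}_1+|\alpha-\beta|)/\beta$. This yields a Lipschitz constant $2(\Delta-1)\varepsilon/(1-\varepsilon)$, which is less than $1$ exactly when $\varepsilon<1/(2\Delta-1)$.

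Tracking these constants precisely is the main obstacle. In particular, bounding the denominator below by $D(1-\varepsilon)$ rather than $D\delta^2$ requires the normalization convention $\lambda_1=1$ to enter correctly. If the naive bound is off, I would instead work with the messages' deviations from $\mathbbm{1}/D$ directly.

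Once the contraction is in hand, Banach's fixed-point theorem on the complete metric space $\mathcal{K}_G$ with the metric $\max_{\vec e}\norm{\cdot}_1$ gives uniqueness. It also gives geometric convergence of the iteration $\bm\mu^{(t+1)}=\bm F[\bm\mu^{(t)}]$ from any starting point, with error $\kappa^t\cdot 2$ where $\kappa<1$. Accuracy $\eta$ is therefore reached after $O(\log(1/\eta))$ sweeps, each costing $O(|E|\,\mathrm{poly}(D^{\Delta},d))$. This is the efficiency claim.
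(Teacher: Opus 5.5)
Your proposal is correct in outcome and follows the paper's proof essentially step for step. Existence comes from the lower bound $\Tr f_{\vec e}(\bm\mu)\ge D\delta^2$ plus Brouwer. Your operator bound $f_{\vec e}(\bm\mu)\succeq\delta^2\mathbbm{1}$ is a slightly sharper form of the same estimate; together with $f\preceq\mathbbm{1}$ it even gives full-rank fixed points for every $\varepsilon<1$. Uniqueness comes from the depolarizing-plus-remainder split, telescoping over incoming legs, the normalization inequality $\|A/a-B/b\|_1\le 2\|A-B\|_1/\min(a,b)$ with $\Tr f\ge D\delta^2$, and Banach. This yields exactly the paper's $q_\Delta=2(\Delta-1)\varepsilon/(1-\varepsilon)$.

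One bookkeeping slip needs fixing. The unnormalized estimate should read $\|f_{\vec e}(\bm\mu)-f_{\vec e}(\bm\nu)\|_1\le D\varepsilon\sum_m\|\mu_{(m,v)}-\nu_{(m,v)}\|_1$, as in the paper's lemma on $\Delta\Phi$, not $\varepsilon\sum_m(\cdots)$. Partial traces against states are contractive on the traced legs, but the output leg stays open with $\mathbbm{1}_n$ of trace $D$. For example, with $\Delta=2$ and $T_v^\dagger T_v=\mathbbm{1}-\varepsilon\,|0\rangle\langle 0|\otimes\mathbbm{1}_n$, one gets $\|f(|0\rangle\langle0|)-f(|1\rangle\langle1|)\|_1=D\varepsilon$. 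This exceeds your claimed $2\varepsilon$ once $D\ge 3$.

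Your subsequent division by $\Tr f\ge D(1-\varepsilon)$ then silently drops the compensating $1/D$. The two slips cancel, so your final constant is right. Finally, $D(1-\varepsilon)$ and $D\delta^2$ are the same number, so the concern you raise about which lower bound to use is moot.
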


The proofs are detailed in the SM, and we present the main arguments here. For an $\varepsilon-$injective PEPS with $\varepsilon=0$, the virtual superoperator of each tensor is the fully depolarizing channel $\Phi_0$. We perturb away from the fully-depolarizing limit, and write $\Phi = \Phi_0 + \Delta\Phi$. One notes that $\|\Delta\Phi(X)\|_1 = \order{\varepsilon}$ for any $X$ with $\|X\|_1 \leq 1$~\footnote{Here, $\|\cdot\|_p$ denotes the Schatten $p-$norm.}. This ensures that $\Tr\Phi(X) \geq (1-\varepsilon) D$ for any unit-trace positive $X$. Hence, we have that $\Tr f_{\vec{e}}(\bm\mu) > 0$ for any $\bm\mu \in \KK_G$, establishing that the normalized message-passing map $\bm{F}$ is component-wise well defined, and maps $\KK_G$ to itself. Since $\KK_G$ is compact and convex, by Brouwer's fixed-point theorem~\cite{deimling2013nonlinear} $\bm{F}$ admits a fixed-point. This proves \autoref{thm:fixed-points}(i). 

To show uniqueness, we show that $\bm{F}$ is a Banach contraction on the complete space $(\KK_G, \|\cdot\|_{\text{max}})$ where $\|\bm{\mu}\|_{\text{max}} := \max_{\vec{e} \in \vec{E}}\|\mu_{\vec{e}}\|_1$. Using the fact that $\Phi_0$ is trivially contracting and that $\|\Delta\Phi(X)\|_1 = \order{\varepsilon}$, we show in the SM that $\|\Phi(X) - \Phi(Y)\|_1 = \|\Delta\Phi(X) - \Delta\Phi(Y)\|_1 \leq q_\Delta(\varepsilon)\|X-Y\|_1$. The condition $\varepsilon < \varepsilon_*$ suffices to ensure $q_\Delta(\varepsilon) < 1$, and therefore the global message passing map is a contraction. This proves \autoref{thm:fixed-points}(ii). \hfill \qed 

The Banach contraction implies an exponential convergence to the fixed point when applying $\bm{F}$ iteratively. Starting from an initial condition $\bm{\mu}^{(0)}$, \autoref{eq:iterative_update} converges to the fixed point up to some error $\epsilon$ in time $\order{{\log(1/\epsilon)}/{\log\left(\varepsilon_*/\varepsilon\right)}}$. The error-floor of BP-based algorithms is crucially dependent on the accuracy of the approximate fixed-points found. For inverse-polynomial accuracy, time $\order{\log N}$ suffices.


\emph{Loop and cluster expansion}.---We next study the error of BP contractions by systematically constructing corrections to BP and controlling their magnitude. Given a collection of fixed-points, the partition function of the norm-network can be expressed as the following \emph{loop expansion} \cite{midha2025beyond,Evenbly2024}, which operates over set of \emph{excitations} $\LL$ appearing as connected degree-two subgraphs of $G$, 
\begin{equation} \label{eq:loopexpmaintext}
  \ZZ = \ZZ_{\text{BP}}\left(1 + \sum_{\substack{\Gamma\subset\mathcal{L}\\\Gamma\text{ finite, disconnected}}}
  \prod_{\ell\in\Gamma} Z_\ell\right).
\end{equation}
Herein, $\ZZ_{\text{BP}} := \prod_{v\in V} Z^{(v)}$ is the BP approximation to the norm network, where $Z^{(v)} := \Tr\left[\mu_{(n,v)}  \Phi_{(v,n)}\!\left(
\bigotimes_{m \in \NN(v)\setminus\{n\}} \mu_{(m,v)}
\right)\right]$. The \emph{excitations} on top of the BP value are the \emph{loop tensors} $Z_\ell$ is defined as a tensor contraction on the subgraph $\ell$ [see SM]. It is known that the loop expansion does not converge quickly in general. Ref. \cite{midha2025beyond} shows that the loop expansion can be reorganized into a cluster expansion, which has provable convergence properties.
\begin{equation} \label{eq:clusterexpmaintext}
    \log{\ZZ} = \log{\ZZ_{\text{BP}}} + \sum_{\text{connected} \, \mathbf{W}} \phi(\mathbf{W}) Z_{\mathbf{W}},
\end{equation} 
where a \emph{cluster} is collection of loops with multiplicities, $\mathbf{W} = \{(\ell_1, \alpha_1), (\ell_2, \alpha_2), \ldots \}$, where each $\ell_i\in \LL$ is a loop and $\alpha_i \geq 1$ is its multiplicity in the cluster. The corresponding cluster evaluation is, $Z_{\mathbf{W}} = \prod_{i} Z_{\ell_i}^{\alpha_i}$ and the weight of a cluster is $|\mathbf W| := \sum_i \alpha_i |\ell_i|$ where $|\ell|$ for a loop $\ell\in\LL$ denotes the number of edges in $\ell$. The coefficient $\phi$ is the Ursell function, depending solely on the cluster geometry.

\begin{color}{black}

\emph{Convergence of cluster expansion}.---The main result of \cite{midha2025beyond} establishes that the cluster expansion converges exponentially fast in cluster order $m$ if the magnitude of each loop $\ell$ decays with the weight of the loop, 
\begin{equation}
    |Z_\ell| \leq e^{-c|\ell|}
\end{equation}
with a decay-rate above a constant threshold $c > c_0 = \order{\log\Delta}$. The \emph{decay of loops} condition not only gives an error bound on BP by the cluster contributions, but also allows for the systematic corrections of BP by correcting it with low-weight cluster contributions. Our next result establishes the decay of loop condition for PEPS when the injectivity parameter $\varepsilon$ is smaller than a second, smaller threshold $\varepsilon_{**} < \varepsilon_*$, leading to polynomial-time classical simulation.

\begin{maintextthm}[Polynomial-time classical simulation] \label{thm:complexity_maintext}
Given any PEPS with an injectivity parameter $\varepsilon < \varepsilon_{**}$ where  $\varepsilon_{**} = \order{\min\{1/D, (D/\Delta)^{\Delta/2}\}} < \varepsilon_{*}$, then decay of loops is satisfied and cluster expansion converges exponentially fast. This gives a polynomial-time algorithm to compute the following quantities.
\begin{enumerate}
    \item[(i)]  the norm $Z = \inner{\psi}{\psi}$ to $1/\poly(N)$ multiplicative error
    \item[(ii)]  local observables $\expval{O_A} = \expect{\psi}{O_A}{\psi}/\inner{\psi}{\psi}$ to $1/\poly(N)$ multiplicative error, given $\expval{O_A} \neq 0$
    \item[(iii)] correlation functions $\expval{O_AO_B} - \expval{O_A} \expval{O_B}$ to $\tilde{\mathcal{O}}(1/\poly(N))$ additive error
\end{enumerate}
where $A,B \subset V$ are disjoint local regions.
\end{maintextthm}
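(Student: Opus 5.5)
The plan is to reduce everything to the decay-of-loops criterion $|Z_\ell|\le e^{-c|\ell|}$ with $c>c_0=\order{\log\Delta}$. Once that holds, the cluster-expansion machinery of \cite{midha2025beyond} gives exponential convergence in cluster order $m$, and items (i)--(iii) follow from it.

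\emph{Step 1: fixed point near the depolarizing one.} Since $\varepsilon_{**}<\varepsilon_*$, \autoref{thm:fixed-points}(ii) provides a unique fixed point $\bm\mu_\star$, computable to $1/\poly(N)$ accuracy in $\order{\log N}$ iterations. Write $\Phi=\Phi_0+\Delta\Phi$ with $\|\Delta\Phi(X)\|_1=\order{\varepsilon}$. The fully depolarizing map sends every unit-trace input to $\mathbb{1}/D$. Applying the fixed-point equation then gives $\|\mu_{\vec e}-\mathbb{1}/D\|_1=\order{\varepsilon}$ uniformly over edges, and the trace lower bound $\Tr\Phi(X)\ge(1-\varepsilon)D$ controls the normalizations.

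\emph{Step 2: the bound on loop tensors (the main obstacle).} On each edge, decompose the identity superoperator into the BP projector $P_{e}=|\mu_{(v,n)}\rangle\langle\mu_{(n,v)}|$ (suitably normalized) and its complement $Q_e=1-P_e$. The loop tensor $Z_\ell$ is the contraction with $Q_e$ on the edges of $\ell$ and $P_e$ elsewhere, divided by the local BP normalizations $Z^{(v)}$.

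At each vertex of $\ell$ we get a local tensor that maps the excitation coming in on one loop edge to the excitation going out on the other, with all remaining legs capped by fixed-point messages. At $\varepsilon=0$ the map $\Phi_0$ annihilates traceless inputs, so this local factor vanishes. Hence every vertex factor has norm $\order{\varepsilon}$, or $\order{\varepsilon D}$ once we convert between trace and operator norms on a $D$-dimensional leg, relative to $Z^{(v)}\gtrsim(1-\varepsilon)D$. This yields $|Z_\ell|\le (C\varepsilon D)^{|\ell|}$ or similar.

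The delicate part is doing the norm bookkeeping so that the constants produce exactly the threshold $\order{\min\{1/D,(D/\Delta)^{\Delta/2}\}}$. I would first track each vertex as a map between Hilbert--Schmidt spaces. Expanding $\Delta\Phi$ in an orthonormal operator basis over the $\Delta$ legs gives a combinatorial factor of roughly $(\Delta/D)^{\Delta/2}$ from the multi-leg norm conversion; requiring $\varepsilon$ to beat this factor gives the second term in the minimum. Requiring $-\log(C\varepsilon D)>c_0$ then gives decay of loops with $c=\log(1/(C\varepsilon D))$.

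\emph{Step 3: algorithms.} For (i), truncate \eqref{eq:clusterexpmaintext} at order $m=\order{\log N}$. Convergence of the cluster expansion bounds the tail by $N e^{-(c-c_0)m}=1/\poly(N)$. The number of connected clusters of weight at most $m$ containing a given edge is $e^{\order{m}}=\poly(N)$, and each $Z_{\mathbf W}$ is a contraction supported on $\order{m}$ edges. An additive error in $\log\ZZ$ is a multiplicative error in $\ZZ$.

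For (ii), apply the same argument to $\langle\psi|O_A|\psi\rangle$. Replacing the tensors on a local region $A$ perturbs only $\order{1}$ vertices, so the loop bounds survive with modified constants near $A$. The ratio is then computed as the difference of two logarithms, and the condition $\expval{O_A}\ne0$ is needed so that the log is defined.

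For (iii), use the analogous expansion for $O_AO_B$. Loops and clusters not touching both $A$ and $B$ cancel in the connected correlator, leaving clusters connecting $A$ to $B$. These are evaluated to order $\order{\log N}$ with a $1/\poly(N)$ additive tail; the $\tilde{\mathcal O}$ accounts for logarithmic factors from exponentiating the truncated series.
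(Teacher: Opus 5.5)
Your proposal has a genuine gap in Step 2, at the passage from a per-vertex estimate to $|Z_\ell|\le (C\varepsilon D)^{|\ell|}$, where $|\ell|$ is the number of edges.

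\emph{Why the per-edge bound fails.} Loops here are arbitrary connected subgraphs of minimum degree two. A loop vertex can therefore carry up to $\Delta$ excited legs, and a loop can have as few as $2|\ell|/\Delta$ vertices. Your local argument gives one factor of $O(\varepsilon)$ per \emph{vertex}, and this cannot be upgraded to one per edge. At $\varepsilon=0$ the double-layer tensor is $\bigotimes_j|\mathbbm{1}\rangle\rangle_j$, and the deviation is the single operator $\mathbbm{1}-T^\dagger T$, of norm at most $\varepsilon$. So a vertex with any number of traceless legs is generically first order in $\varepsilon$.

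A concrete counterexample: on $K_4$ with $D=2$, take $T_v^\dagger T_v=\mathbbm{1}-\tfrac{\varepsilon}{2}\bigl(\mathbbm{1}+(\sigma^z)^{\otimes 3}\bigr)$. The BP fixed point is exactly $\mathbbm{1}/2$. The loop formed by all six edges has $Z_\ell=\varepsilon^4/\bigl(16(1-\varepsilon/2)^4\bigr)$, which contradicts $(C\varepsilon D)^6$ for small $\varepsilon$.

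\emph{What the correct count gives.} The correct count is $|Z_\ell|\lesssim\eta^{n}\le\eta^{2|\ell|/\Delta}$, so the decay rate is $c=\tfrac{2}{\Delta}\log(1/\eta)$. Requiring $c>c_0=\log(2e\Delta)+\tfrac12$ forces $\eta<e^{-c_0\Delta/2}=e^{-3\Delta/4}(2\Delta)^{-\Delta/2}$. Combine this with $\eta\approx 2D^{2-\Delta/2}(D+2)\varepsilon$, whose factor $D^{-\Delta/2}$ comes from dividing each vertex by $Z^{(v)}\approx D^{\Delta/2}$. That is the actual origin of the $(D/\Delta)^{\Delta/2}$ term. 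The ``orthonormal operator basis'' factor you invoke is not a derivation of it, and your condition $-\log(C\varepsilon D)>c_0$ rests on the false per-edge bound.

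\emph{Other omissions in Step 2.}
First, the excitation map $\Pi^\perp_{\mu_{\vec e},\mu_{\overleftarrow e}}$ is not an orthogonal projector. To bound vertices with several excited legs, you must show it is $O(\varepsilon)$-close to the traceless projector $\Pi^\perp_{I_1,I_1}$, using $\mathrm{Tr}(\mu_{\vec e}\mu_{\overleftarrow e})=1/D+O(\varepsilon^2)$. This is what keeps such blocks at $\eta(1+O(\varepsilon))$.
Second, your picture of an excitation entering on one edge and leaving on another only handles simple cycles. For general subgraphs the paper removes vertices one at a time using Cauchy--Schwarz and Frobenius submultiplicativity, obtaining $|Z_\ell|\le\prod_i\|T_{w_i}\|_2$.
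Third, the $1/D$ term in $\varepsilon_{**}$ comes from positivity of the BP normalization. The numerator of $Z^{(v)}$ is $\mathrm{Tr}[\mu_{(n,v)}\Phi(\cdot)]=1+O(D\varepsilon)$, because it pairs $\Phi(\cdot)\approx\mathbbm{1}$ with $\mu\approx\mathbbm{1}/D$. Your ``$Z^{(v)}\gtrsim(1-\varepsilon)D$'' conflates this with $\mathrm{Tr}\,\Phi(X)$.

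\emph{Steps 1 and 3.} These are essentially the paper's argument. Computing (ii) as a difference of two global logarithms, or truncating (iii) uniformly at order $O(\log N)$ instead of splitting on $d(A,B)$, are harmless variants. You do need loop decay also for the strings terminating in the observable regions; the vertex-block bound covers this.
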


Here $\tilde{\mathcal{O}}$ suppresses polylogarithmic factors. Efficient classical simulation is following the line of results in Refs.~\cite{harley2025computational,schwarz2017approximating,schuch2011classifying}, and provides an alternate proof of efficient contraction algorithms for strongly injective PEPS. In particular, this result relies only on \emph{graph-locality}, and does not assume embedding in Euclidean space.

The core technical ingredient in the proof of \autoref{thm:complexity_maintext} is the following \autoref{lem:decayofloops}, which establishes loop decay for strongly injective PEPS. Once this decay is in place, the remainder of the theorem follows by direct application of existing results~\cite{midha2025beyond,midha2026}; for completeness, we provide the full argument in the SM.

\begin{maintextlem}\label{lem:decayofloops}
    Strongly injective PEPS with $\varepsilon < \varepsilon_{**}= \order{\min\{1/D, (D/\Delta)^{\Delta/2}\}}$ satisfy decay of loops.
\end{maintextlem}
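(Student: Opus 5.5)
We will prove the loop bound $|Z_\ell|\le e^{-c|\ell|}$ by a perturbative analysis around the maximally injective point $\varepsilon=0$, where $\Phi_0$ is fully depolarizing. The BP fixed point is then $\mu_{\vec e}=\mathbbm{1}/D$ on every edge, and every loop tensor vanishes identically.

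The first step is a stability estimate for the fixed point. Since $\varepsilon<\varepsilon_{**}<\varepsilon_*$, \autoref{thm:fixed-points}(ii) gives a unique fixed point $\bm\mu_\star$. The contraction argument, applied with $\bm\mu^{(0)}=\{\mathbbm 1/D\}$, gives
\[
\|\mu_{\star,\vec e}-\mathbbm 1/D\|_1\le \frac{\|\bm F(\bm\mu^{(0)})-\bm\mu^{(0)}\|_{\max}}{1-q_\Delta(\varepsilon)}=\order{\varepsilon}.
\]
This follows from $\|\Delta\Phi(X)\|_1=\order{\varepsilon}$ together with $\Tr\Phi(X)\ge(1-\varepsilon)D$. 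We will also need the smallest eigenvalue of $\mu_\star$, which is at least $(1-\order{\varepsilon D})/D$. This is where the condition $\varepsilon\lesssim 1/D$ enters: it keeps $\mu_\star^{-1/2}$ bounded.

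The second step is to put the network in the BP (Vidal-type) gauge. On each edge we decompose the identity as $P_{\vec e}+Q_{\vec e}$. Here $P_{\vec e}$ is the rank-one projector onto the fixed-point pair $\mu_{(v,n)}\otimes\mu_{(n,v)}$, normalized by $Z^{(v)}$, and $Q_{\vec e}$ is the excitation projector; this is the setting underlying \autoref{eq:loopexpmaintext}. The loop tensor $Z_\ell$ is the contraction in which every edge of $\ell$ carries $Q$ and every other edge carries $P$. Edges carrying $P$ contract into messages, so $Z_\ell$ reduces to a trace over a cycle of local excitation maps $E_v:\mathrm{Ran}\,Q_{\mathrm{in}}\to\mathrm{Ran}\,Q_{\mathrm{out}}$. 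Each $E_v$ is the vertex superoperator, with its non-loop legs closed by messages, sandwiched between $Q$'s and normalized by $Z^{(v)}$. The key local estimate is then
\[
\|E_v\|\le C\,\varepsilon\,\mathrm{poly}(D,\Delta).
\]
To prove it, write $\Phi=\Phi_0+\Delta\Phi$. The term $\Phi_0$ has range in $\mathrm{span}(\mathbbm 1)$ and annihilates traceless inputs. Up to the $\order{\varepsilon}$ shift of $\mu_\star$ from $\mathbbm 1/D$, it therefore contributes nothing between $Q$ projectors. Both $\Delta\Phi$ and the fixed-point shift are $\order{\varepsilon}$. The norm $\|\cdot\|$ here should be the operator norm induced by a $\mu_\star$-weighted Hilbert–Schmidt inner product, so that traces around the cycle are controlled.

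The third step bounds the cycle. We use $|Z_\ell|\le \dim(\mathrm{Ran}\,Q)\prod_{v\in\ell}\|E_v\|$, where the dimension factor is at most $D^2$. This gives $|Z_\ell|\le D^2 (C'\varepsilon)^{|\ell|}$, i.e. decay rate $c=\log(1/(C'\varepsilon))-\order{1}$. Requiring $c>c_0=\order{\log\Delta}$ yields a threshold $\varepsilon\lesssim 1/(C'\Delta^{O(1)})$.

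The main obstacle is getting the sharp dependence $(D/\Delta)^{\Delta/2}$ rather than a crude polynomial. Converting the $1\to1$ bound on $\Delta\Phi$ into an operator bound on the excitation space over $\Delta$ legs costs dimension factors like $D^{\Delta/2}$, which come from norm equivalences on $(\mathbb C^{D})^{\otimes\Delta}$. Combinatorial factors in $\Delta$ also appear, from the number of ways the $\varepsilon$-perturbation can distribute over legs. We will try to bound $\Delta\Phi$ directly in $2\to2$ norm using the singular values of $T_v$ lying in $[\delta,1]$. Since $T_v^\dagger T_v=\mathbbm 1-\order{\varepsilon}$ in operator norm, the $2\to2$ bound should cost only $\sqrt{D^{\Delta}}$-type factors. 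We then track these factors to recover the stated form of $\varepsilon_{**}$.
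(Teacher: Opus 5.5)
\textbf{There is a genuine gap: your reduction of $Z_\ell$ to a trace around a cycle only covers simple cycles.} Your fixed-point proximity step is fine; the paper gets the same $\order{\varepsilon}$ bound directly from the fixed-point equation instead of from the contraction estimate. Your local $\order{\varepsilon}$ estimate is also fine, and for a simple cycle the bound $|Z_\ell|\le D^2\prod_v\|E_v\|$ is valid.

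The problem is that the loops in $\mathcal L$ are not only simple cycles. They are all connected subgraphs in which every vertex has degree at least two; the main text's ``degree-two'' means minimum degree two, as in the SM definition. In the loop expansion, a vertex with exactly one excited leg is killed by the fixed-point equation, but vertices with three or more excited legs survive. Theta graphs, $K_4$'s and unions of adjacent plaquettes therefore all appear. At such a branching vertex there is no map $\mathrm{Ran}\,Q_{\rm in}\to\mathrm{Ran}\,Q_{\rm out}$, so $Z_\ell$ is not a trace of a product of such maps.

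Worse, your claimed bound $|Z_\ell|\le D^2(C'\varepsilon)^{|\ell|}$ is false in general. A vertex contributes only one factor of $\varepsilon$, however many of its legs are excited. Take $D=2$, $G=K_4$ (so $\Delta=3$), and $T_v^\dagger T_v=\mathbbm 1-\tfrac{\varepsilon}{2}(\mathbbm 1+\sigma^z\otimes\sigma^z\otimes\sigma^z)$ at every vertex. Then:
\begin{itemize}
\item $\mathbbm 1/2$ is the fixed point on every edge;
\item $Q$ is exactly the projector onto traceless matrices;
\item each vertex of $\ell=K_4$ reduces, up to normalization, to $-\tfrac{\varepsilon}{2}\,\sigma^z\otimes\sigma^z\otimes\sigma^z$;
\item hence $Z_\ell=\Theta(\varepsilon^4)$, although $|\ell|=6$.
\end{itemize}
Generically, $|Z_\ell|$ scales like $\varepsilon^{|W|}$, and $|W|$ can be as small as $2|\ell|/\Delta$.

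\textbf{What is missing} is a contraction bound that works for arbitrary loop topologies, together with counting vertices rather than edges. The paper does this in three steps:
\begin{itemize}
\item It cuts the network one vertex at a time by Cauchy--Schwarz and submultiplicativity, giving $|Z_\ell|\le\prod_{w\in W}\|T_w\|_2$ with Frobenius norms of the decorated, BP-normalized vertex tensors.
\item It bounds each factor by $\eta=\order{D^{2-\Delta/2}\varepsilon}$, whether the vertex has one or several excited legs (\autoref{prop:excitation_decay}, \autoref{lem:excitationprojectorbound}).
\item It uses $|W|\ge 2|\ell|/\Delta$ (\autoref{lem:vertex_edge_bound}) to conclude $|Z_\ell|=\order{\eta^{2|\ell|/\Delta}}$.
\end{itemize}
Requiring $\eta^{2/\Delta}<e^{-c_0}$ with $c_0=\log(2e\Delta)+\tfrac12$ is what produces the exponent $\Delta/2$ in $\varepsilon_{**}$. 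The factor $D^{\Delta/2}$ comes from the BP normalization $Z^{(v)}\approx D^{\Delta/2}$ inside $\eta$, not from norm-equivalence losses in bounding $\Delta\Phi$.

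Your last paragraph has the comparison backwards. For $D<\Delta$, $(D/\Delta)^{\Delta/2}$ is a \emph{smaller} threshold than $1/\mathrm{poly}(\Delta)$. Your cycle-based conclusion would therefore have been stronger than the lemma, which is itself a sign that a class of loops was missing. You can keep your operator-norm trace bound for simple cycles, but you need a topology-independent bound, such as the Frobenius cutting argument, for loops with branching vertices.
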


We can show decay of loops by bounding the ``building-block" of a loop excitation, which appears as the following tensor:
\begin{center}
  $\Big\|$
\begin{tikzpicture}[
  scale=0.6,
  baseline={([yshift=-0.65ex] current bounding box.center)},
  tensor/.style={draw, line width=0.9pt, fill=tensorcolor, rounded corners=2pt, minimum size=8mm},
  redbond/.style={line width=1.6pt, draw=red},
  blackbond/.style={line width=1.6pt, draw=black}
]
  \def\dx{2.5}
  \def\dy{2.5}

  \node[tensor, fill=blue!20!tensorcolor!75!black] (D) at (0,-\dy) {};
  \draw[blackbond] (D.east) -- ++(0.6,0);
  \draw[redbond] (D.west) -- ++(-0.6,0);
  \draw[blackbond] (D.south) -- ++(0,-0.6);
  \draw[blackbond] (D.north) -- ++(0,0.6);
\end{tikzpicture} $\Big\|_2  \leq  \eta(\varepsilon, D, \Delta) = \order{\varepsilon/\sqrt{D^\Delta}}$
\end{center}
where the colored-edge denotes a projection onto the excited subspace. This is shown by arguing that for each edge, $\|\mu_{\vec{e}} - \mathbbm{1}/D\|_1 = \order{\varepsilon}$, which shows that contracting with the excitation cancels out the depolarizing part $\Phi_0$ of $\Phi$, allowing us to use the perturbative bound on $\|\Delta\Phi(X)\|_1 = \order{\varepsilon}$. This condition is also sufficient to show the decay of an vertex with multiple excitations. Furthermore, each loop with $|\ell|$-edges is constructed from at least $2|\ell|/\Delta$ vertices. By `cutting' the loop through a series of Cauchy-Schwarz inequalities, in the SM we show that,
\begin{equation}
    |Z_\ell| = \order{\eta^{2|\ell|/\Delta}}
\end{equation}
Thus, to ensure that $|Z_\ell| \leq e^{-c|\ell|} < e^{-c_0|\ell|}$ with $c_0 = \order{\log\Delta}$ it is sufficient to ensure that $\varepsilon < \varepsilon_{**} = \order{\min\{1/D, (D/\Delta)^{\Delta/2}\}}$. Here, the $1/D$ factor ensures meaningful normalization of the tensors with the BP approximation (see SM for details). \qed 



Decay of loops has previously been used as a sufficient condition to establish efficient and controlled tensor network contraction. \autoref{lem:decayofloops} allows us to directly invoke the main results of~\cite{midha2025beyond}: the cluster expansion converges exponentially fast, and $\log \mathcal{Z}$ can be computed to $\order{1/\poly(N)}$ accuracy by summing clusters of size at most $\order{\log N}$, yielding a $\poly(N)$-time algorithm and establishing part (i) of \autoref{thm:complexity_maintext}. Parts (ii–iii) then follow by applying the results of~\cite{midha2026} to local observables and correlation functions.

\end{color}

\emph{Algorithmic Locality.---}We now proceed to our main result showing the locality of the TN-BP algorithm. The starting point is showing stability of the message-passing procedure to local perturbations of the tensor network: the influence of a perturbation at a local region on the fixed-point messages decays exponentially fast away from the region. Combined with the convergence of the cluster expansion, which lets us correct for the errors made by BP, we establish locality of the entire algorithm.

\begin{maintextthm}[Algorithmic locality] \label{thm:BPlocality_maintext}
Consider perturbing a strongly injective PEPS (satisfying $\varepsilon < \varepsilon_{**}$) in a local region $A$: $T_A \to T_A'$ such that $\|T_A - T'_A\|_\infty \; =  \order{\varepsilon_* - \varepsilon}$. Let $\bm{\mu}_\star$ and $\bm{\mu}'_\star$ be the fixed points of the original and perturbed PEPS. Then, for any $\vec{e}=(v,n)$ with $d(v,A) = r$, we have
\begin{equation}
    \|\mu'_{\star,\vec{e}} - \mu_{\star,\vec{e}}\|_1 = \order{e^{-r/\xi_*}}
\end{equation}
Where $1/\xi_* = \order{\log{\varepsilon_*/\varepsilon}}$. Further, for any observable $O_B$ (with $\|O_B\|_\infty = 1$) supported in region $B$ that is separated from $A$ by graph distance $R$, we have
\begin{equation}
    \left|\expval{O_B}' - \expval{O_B} \right| = \order{ e^{-R/\xi_{**}}}
\end{equation}
 where $1/\xi_{**} = \order{\min\{1/\xi_*, c-c_0\}}$; $\expval{O_B}$ and $\expval{O_B}'$ are expectation values of $O_B$ on the original and the perturbed PEPS, respectively.

\end{maintextthm}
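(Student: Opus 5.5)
The argument has two parts: locality of the fixed-point messages, then locality of observables through the cluster expansion.

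\emph{Messages.} We use the Banach contraction from \autoref{thm:fixed-points}(ii), which is local in the following sense. The proof there shows a single-vertex Lipschitz estimate
\[
\|F_{(v,n)}(\bm\mu)-F_{(v,n)}(\bm\nu)\|_1 \le q\,\max_{m\in\NN(v)\setminus\{n\}}\|\mu_{(m,v)}-\nu_{(m,v)}\|_1 ,
\]
with $q=q_\Delta(\varepsilon)<1$. Here $q\sim\varepsilon/\varepsilon_*$, after accounting for the $\Tr$-normalization, which is bounded below by $(1-\varepsilon)D$. This estimate holds for both the original map $\bm F$ and the perturbed map $\bm F'$.

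The hypothesis $\|T_A-T'_A\|_\infty=\order{\varepsilon_*-\varepsilon}$ keeps the perturbed tensors injective with some parameter $\varepsilon'<\varepsilon_*$. So $\bm F'$ is also a contraction, with $q'<1$, and $\bm\mu'_\star$ exists and is unique.

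Define the error profile $\delta_{\vec e}:=\|\mu'_{\star,\vec e}-\mu_{\star,\vec e}\|_1\le 2$. For an edge $(v,n)$ with $v\notin A$, the maps $F'_{(v,n)}$ and $F_{(v,n)}$ coincide, so
\[
\delta_{(v,n)}=\|F_{(v,n)}(\bm\mu'_\star)-F_{(v,n)}(\bm\mu_\star)\|_1\le q\max_{m}\delta_{(m,v)} .
\]
Let $M_r:=\sup\{\delta_{(v,n)}: d(v,A)\ge r\}$. Every incoming edge $(m,v)$ has $d(m,A)\ge r-1$, which gives $M_r\le q\,M_{r-1}$ and hence $M_r\le 2q^{r}$. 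This is the first claim with $1/\xi_*=\log(1/q)=\order{\log(\varepsilon_*/\varepsilon)}$.

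The delicate point is that the recursion runs over the directed-edge graph rather than the vertex graph. Messages leaving $A$ can feed back through cycles, so I would run the induction on $M_r$ as a supremum over all edges at distance $\ge r$; it closes because messages only flow inward from distance $r-1$.

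\emph{Observables.} By \autoref{thm:complexity_maintext}, both PEPS satisfy decay of loops, and
\[
\langle O_B\rangle=\frac{\ZZ[O_B]}{\ZZ}
\]
is obtained from cluster expansions of $\log\ZZ[O_B]$ and $\log\ZZ$. I would work in the form of~\cite{midha2026}: $\langle O_B\rangle$ equals a BP term, computed from the fixed-point messages on edges entering $B$, plus the sum of $\phi(\mathbf W)\,\bigl(Z_{\mathbf W}[O_B]-Z_{\mathbf W}\bigr)$ over connected clusters $\mathbf W$ touching $B$. Clusters not touching $B$ cancel between numerator and denominator.

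Split the remaining clusters by weight at a threshold $R/2$.
\begin{itemize}
\item \textbf{Heavy clusters} ($|\mathbf W|\ge R/2$). These are bounded separately in each PEPS by the convergent tail $\sum e^{-(c-c_0)|\mathbf W|}=\order{e^{-(c-c_0)R/2}}$.
\item \textbf{Light clusters} ($|\mathbf W|<R/2$). Such a cluster lies within distance $R/2$ of $B$, hence at distance $\ge R/2$ from $A$. Its loop tensors are built only from unperturbed $T_v$ and from messages within that region. A telescoping argument, multilinear in the messages and using the bound $\eta$ on each building block, gives $|Z'_{\mathbf W}-Z_{\mathbf W}|\le |\mathbf W|\,\order{e^{-R/(2\xi_*)}}\,e^{-c|\mathbf W|}$. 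Summing against the Ursell weights converges by the same Koteck\'y--Preiss bound.
\end{itemize}
The BP term likewise differs by $\order{e^{-R/\xi_*}}$, by the message bound. Combining the three contributions gives $1/\xi_{**}=\order{\min\{1/\xi_*,\,c-c_0\}}$.

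\emph{Main obstacle.} The hardest step is the light-cluster perturbation bound. The excited-subspace projectors depend on the messages themselves, so changing $\bm\mu$ changes the splitting into BP and excitation parts. I would handle this by writing each projector as $\mathbbm 1-\mu\otimes\cdot$ and expanding the difference linearly. The resulting extra terms should still carry the $\eta$-suppression per vertex, because $\|\mu-\mathbbm 1/D\|_1=\order{\varepsilon}$.
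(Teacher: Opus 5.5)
Your observable step is essentially the paper's: the BP-term difference is controlled by message locality, the $B$-touching clusters are split at a weight cutoff $R_{\mathrm{th}}=\Theta(R)$, light clusters are handled by telescoping over message and projector insertions, and heavy ones by \autoref{lem:cluster_tail}. Your message-locality argument, however, takes a different route. The paper argues dynamically. It warm-starts the perturbed iteration $\bm F'$ at $\bm\mu_\star$ and splits the error through the iterate $\mu^{(t)}_{\vec e}$. One term vanishes by the light cone (\autoref{lem:finite_propagation}) for $t<r$, the other is bounded by Banach convergence of $\bm F'$, and then $t=r-1$. You instead compare the two fixed-point equations directly. Outside $A$ one has $F'_{(v,n)}=F_{(v,n)}$, so the single-vertex Lipschitz bound of the \emph{unperturbed} map gives $M_r\le qM_{r-1}$, hence $M_r\le 2q^r$. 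This induction is sound: $M_{r-1}\le 2$ is just a number, so there is no circularity through cycles. Your version never uses contractivity of $\bm F'$. It therefore controls every fixed point of the perturbed network, for any perturbation of $A$ that keeps those tensors injective (existence then follows from Brouwer). For the message bound, this is exactly the extension to generic injectivity-preserving perturbations that the paper says it expects but does not prove. It also sidesteps the paper's tacit use of the unperturbed constant $q$ as the contraction constant of $\bm F'$. The paper's version buys something else: the constructive statement that re-running BP from the old fixed point changes only messages inside the light cone after $t$ sweeps while converging geometrically. That is what justifies local recomputation in practice.

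Two small points. First, the sum $\sum\phi(\mathbf W)\bigl(Z_{\mathbf W}[O_B]-Z_{\mathbf W}\bigr)$ is the exponent in \autoref{prop:localobsexpansionformal}, i.e.\ $\log\langle O_B\rangle-\log\langle O_B\rangle_{\mathrm{BP}}$, not an additive correction to $\langle O_B\rangle$. You therefore need the paper's extra step $|e^x-e^y|\le e^{\max(|x|,|y|)}|x-y|$ with uniformly bounded cluster sums. Second, like the paper, you invoke decay of loops (and \autoref{lem:cluster_tail}) for the \emph{perturbed} network. That needs $\varepsilon'<\varepsilon_{**}$ on $A$, whereas $\|T_A-T'_A\|_\infty=\order{\varepsilon_*-\varepsilon}$ together with \autoref{lem:injectivity_stability} only guarantees $\varepsilon'<\varepsilon_*$. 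Requiring the perturbation to be $\order{\varepsilon_{**}-\varepsilon}$ closes this.
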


The proof can be found in the SM. The main arguments are as follows. We first argue locality of fixed-points, which is then coupled with the decay of loops condition to show locality of cluster corrections, establishing the result. As a result, this also shows that accounting for the changes caused by weak local perturbations requires only local recomputation. Crucially, the proof is constructive, following the steps taken in practical computations within the TN-BP algorithm.

By Weyl's perturbation theory of singular values~\cite{weyl1912asymptotische}, the perturbed tensor network also satisfies strong injectivity with $\varepsilon < \varepsilon_*$, and admits a unique BP fixed point by \autoref{thm:fixed-points}. Now, we start from the fixed-points of the unperturbed networks $\bm{\mu}_{\star}$ and run message-passing on the perturbed state with the map $\bm{F}'$ resulting in $\bm{\mu}^{(t)} = (\bm{F}')^t[\bm{\mu}_{\star}]$. By \autoref{thm:fixed-points}, this iterative process will converge to the perturbed fixed-points $\bm{\mu}^{(t)} \to \bm{\mu}_\star'$ as $t\to\infty$. Combining this convergence with the light-cone inherent in the message passing dynamics, we have, 
\begin{equation}  
\lVert \mu_{\star,\vec{e}} - \mu'_{\star,\vec{e}} \rVert_1
\le
\underbrace{\lVert \mu_{\vec{e}}^{(t)} - \mu'_{\star,\vec{e}} \rVert_1}_{\text{Convergence}}
+
\underbrace{\lVert \mu_{\star,\vec{e}} - \mu_{\vec{e}}^{(t)} \rVert_1}_{\text{Lightcone}}\end{equation}
The convergence error decreases exponentially with time as $\sim e^{-r/\xi_{*}}$, and the lightcone term is zero for $t < r$ and starts increasing thereafter, owing to the strict lightcone in the message-passing dynamics [see SM for a proof]. The optimum bound is thus obtained at $t=r-1$, leading to 
\begin{equation}
    \|\mu_{\star,\vec{e}} - \mu'_{\star,e}\|_1 = \order{e^{-r/\xi_{*}}}
\end{equation}
establishing locality of the BP fixed-points.

We now show the locality of entire the cluster-corrected TN-BP algorithm. We will use the cluster expansion for local observables from Ref.~\cite{midha2026},
\begin{equation} \label{eq:clusterexpobs}
\expval{O_B} = \expval{O_B}_{\text{BP}} \cdot \exp\left(\sum_{\substack{\text{connected} \, \mathbf{W} \\ \text{supp}(\mathbf{W}) \cap B \neq\emptyset }} \phi(\mathbf{W}) Z^{O_B}_\mathbf{W}\right),
\end{equation}
which expresses local expectation values as the BP approximation dressed by intersecting cluster corrections. The exact form of the cluster correction $Z_\mathbf{W}^{O_B}$ can be found in the SM.
The decay of loops condition, assured from \autoref{lem:decayofloops} given $\varepsilon < \varepsilon_{**}$, ensures that the cluster expansion converges. The cluster expansion depends on both the BP expectation value $\expval{O_B}_{\text{BP}}$ and clusters intersecting the region. By locality of fixed-points, we first have that, 
\begin{equation}\label{eq:O_BPlocality}
    \left|\langle O_B \rangle_{\text{BP}}' - \langle O_B \rangle_{\text{BP}}\right| = \order{   e^{-R/\xi_*}}.
\end{equation} 
This follows by noting that the BP approximation depends solely on the messages at the boundary $\partial B$ and the local tensors in $B$. Next, the contribution from the clusters intersecting with $B$ in \autoref{eq:clusterexpobs} is dealt with as follows. We divide the clusters of concern into near and far clusters, $\mathcal{W} = \mathcal{W}_{\text{near}} \cup \mathcal{W}_{\text{far}}$, where clusters in $\mathcal{W}_{\text{near}}$ are distance at most a cutoff $R_{\text{th}}$ away from $B$. Since clusters are also a function of the message tensors involved in the clusters, by carefully accounting for the effect of locality of BP in the cluster contributions, we establish locality of the near clusters, 
\begin{equation}\label{eq:near_cluster_locality}
    \left|\sum_{\mathbf{W} \in \mathcal{W}_{\text{near}}} \phi(\mathbf{W}) (Z^{O_B'}_{\mathbf{W}} - Z^{O_B}_{\mathbf{W}})\right|  = \order{\Delta e^{-R/\xi_*}}
\end{equation}
Finally, the far clusters in $\mathcal{W}_{\text{far}}$ can be accounted for by noting that by convergence of the cluster expansion, contribution of large clusters decays with size, leading to 
\begin{equation}\label{eq:far_cluster_decay}
    \left|\sum_{\mathbf{W} \in \mathcal{W}_{\text{far}}} \phi(\mathbf{W}) (Z^{O_B'}_{\mathbf{W}} - Z^{O_B}_{\mathbf{W}})\right| = \order{|B| \cdot e^{-d\cdot R_{\text{th}}}}
\end{equation}
for $d = c-c_0 = \order{1}$. Combining the bounds from \autoref{eq:O_BPlocality}, \autoref{eq:near_cluster_locality} and \autoref{eq:far_cluster_decay}, we establish locality of the local expectation value, that is,
\begin{equation}
    \Big| \expval{O_B} - \expval{O_B}'\Big| = \order{e^{-R/\xi_{**}}},
\end{equation}
for a correlation length $1/\xi_{**} = \min\{1/\xi_*, c-c_0\}$, provided the cutoff scale satisfies $R_{\text{th}}/R = \Theta(1)$. In principle, this ratio can be optimized to obtain the tightest bound. In practice, within the TN-BP algorithm, it can be tuned to achieve a desired target accuracy. This shows that the effect of the perturbation can be captured through local cluster recomputation, with the required locality scale determined by the desired accuracy. In practice, when evaluating quantities across tensor networks that differ only locally, this leads to significant computational speedups.

\emph{Discussion.---}This work establishes a fully rigorous, end-to-end treatment of tensor network belief propagation, placing it on firm theoretical footing for quantum many-body applications. Given a quantum state expressed as a PEPS, we showed that injectivity is sufficient to guarantee the existence of BP fixed-points, and \emph{strong} injectivity further guarantees uniqueness. Building on recent developments in cluster-expansion techniques correcting excitations on top of BP, we further showed that the decay of loops condition is satisfied for strongly injective PEPS. As a result, we obtain polynomial-time classical simulation algorithms for computing properties of PEPS, continuing the line of work in Refs.~\cite{harley2025computational,schwarz2017approximating,schuch2011classifying}. Importantly, our results rely only on \emph{graph locality}, without assuming any embedding in Euclidean space. This makes our results directly relevant to quantum information tasks unconstrained by geometric locality, e.g., simulation of many-body physics on sparse graphs, or decoding quantum LDPC codes.

As our main result, we establish \emph{algorithmic locality} of tensor network belief propagation: local perturbations of tensors induce only local changes in the BP fixed point, allowing recomputation using strictly local data. This locality directly translates into efficient evaluation of local expectation values through the cluster expansion technique. Importantly, the proof is constructive and closely follows the actual execution of the algorithm---rather than solely relying on abstract arguments, it explicitly tracks how information propagates under message passing. In this sense, the analysis not only certifies correctness but also mirrors the way a practitioner would implement BP. Theoretically, this yields a clustering result similar to decay of correlations~\cite{hastings2006spectral}. This enables significant computational speedups when contracting multiple tensor networks which differ only locally, a situation frequently encountered in tensor network simulations in many-body physics. It is important to emphasize that while the rigorous arguments been shown for a sub-class of PEPS satisfying strong injectivity, we expect the principle of locality of the TN-BP algorithm to be relevant in a wider range of tensor network states.

\emph{Acknowledgements.---}We thank Grace Sommers for helpful comments on the manuscript and collaboration on related projects.
S.M. acknowledges helpful discussions at the International Centre for Theoretical Sciences (ICTS) Bangalore during the program ``Generalised Symmetries and Anomalies in Quantum Phases of Matter 2026" (code: ICTS/GSYQM2026/01). This work was partially supported by a Brown Investigator Award (S.M. and D.A.). Y.F.Z. and S.G. acknowledge support from NSF QuSEC-TAQS OSI 2326767.

\newpage 
\bibliography{refs}


\clearpage
\onecolumngrid
\makeatletter
\let\@outputdblcol\@outputpage
\makeatother
\setcounter{section}{0}
\setcounter{subsection}{0}
\setcounter{subsubsection}{0}
\setcounter{equation}{0}
\setcounter{figure}{0}
\setcounter{table}{0}
\setcounter{footnote}{0}
\renewcommand{\thesection}{S\arabic{section}}
\renewcommand{\theHsection}{S\arabic{section}}
\renewcommand{\thesubsection}{\thesection.\arabic{subsection}}
\renewcommand{\theequation}{S\arabic{equation}}
\renewcommand{\thefigure}{S\arabic{figure}}
\renewcommand{\thetable}{S\arabic{table}}
\renewcommand{\theHequation}{S\arabic{equation}}

\section*{Supplementary Material}
\listofsmentries

\smsection{Preliminaries}
\smsubsection{Notations}
\textbf{Spaces} We will denote by $\HH$ finite dimensional Hilbert spaces. The space of bounded linear operators on $\HH$ will be denoted as $\BB(\HH)$. The set of positive operators on $\HH$ is denoted $\pos(\HH) \subset \BB(\HH)$. Further, we define the set of positive, trace normalized operators on $\HH$,
\begin{equation}
  \mathcal K(\HH):=\{X \in \BB(\HH): X\succeq 0, \Tr X=1\}.
\end{equation}
Then, $\mathcal K(\HH)$ is compact and convex.

\textbf{Norms} For a vector $x\in\ \C^m$ and any $p\in[1,\infty)$ we define the Schatten norms,
\begin{equation}
    \|x\|_p  := \left(\sum_{i=1}^m |x_i|^p\right)^{1/p}
\end{equation}
For $p=\infty$ we have, 
\begin{equation}
    \|x\|_\infty := \max_{1 \leq i \leq m} |x_i|
\end{equation}
For an operator $A \in \BB(\HH)$, we define the operator Schatten norm in terms of its singular values $\sigma(A)$ as, 
\begin{equation}
    \|A\|_p := \|\sigma(A)\|_p
\end{equation}
for any $p \in [1,\infty]$. We will frequently use the following facts:
\begin{enumerate}
    \item For any $1 \leq p \leq q \leq \infty$ and any operator $A$, we have, 
    \begin{equation}
        \|A\|_p \geq \|A\|_q
    \end{equation}
    In particular, 
    \begin{equation}
        \|A\|_1 \geq \|A\|_2 \geq \|A\|_\infty
    \end{equation}
    \item For any $1 \leq p \leq q \leq \infty$ and any non-zero operator $A$, 
    \begin{equation}
        \|A\|_p \leq \text{rank}(A)^{\frac{1}{p} - \frac{1}{q}} \|A\|_q
    \end{equation}
    In particular, 
    \begin{equation}
        \|A\|_2 \leq \sqrt{\text{rank}(A)} \|A\|_\infty
    \end{equation}
\end{enumerate} 
We also note that, the $\infty-$norm obeys,
\begin{equation}
    \|A\|_\infty = \sup_{x \neq 0}\frac{\|Ax\|_2}{\|x\|_2} = \sigma_{\text{max}}(A)
\end{equation}
we will denote $\|\cdot\|_\infty$ by just $\|\cdot\|$ for convenience. Let $T_{i_1,\dots,i_q}$ be an order-$q$ tensor with Frobenius norm
\begin{equation}
\|T\|_F^2 := \sum_{i_1,\dots,i_q} |T_{i_1,\dots,i_q}|^2 .
\end{equation}
For any bipartition of the tensor legs $A \subseteq \{1,\dots,q\}$ and
$B = A^c$, reshape the tensor into a linear map
\[
T : \mathcal H_A \to \mathcal H_B
\]
by grouping the indices in $A$ and $B$ into multi-indices
$a=(i_{a_1},\dots,i_{a_{|A|}})$ and $b=(i_{b_1},\dots,i_{b_{|B|}})$,
yielding matrix elements $T_{a,b}$. Then the Schatten-$2$ norm of this
operator coincides with the Frobenius norm of the tensor,
\begin{equation}
\|T\|_{2} = \|T\|_F .
\end{equation}

We will also use the H\"{o}lder's inequality for Schatten norms, where for each $(p,q)$ satisfying $1/p + 1/q = 1$ we have,
\begin{equation}
    |\tr(A^\dagger B)| \leq \|A\|_p \|B\|_q
\end{equation}
\textbf{Graphs} For a graph $G=(V,E)$, we define some useful notation. For any $v\in V$, let $\NN(v)$ denote its neigbors in $G$. For any edge $e = \{v,w\} \in E$, we will refer to it's directed versions $\vec{e} = (v,w)$ and $\overleftarrow{e} = (w,v)$. For vertices $v,w \in V$, let $d(v,w)$ denote the graph distance (shortest path length) between $v$ and $w$. For subsets $A,B \subseteq V$, define
\[
d(A,B) := \min_{v\in A, w\in B} d(v,w).
\]
For an edge $e=\{v,w\}$, we define, for any $u \in V$,
$$
d(e,u) := \min\{d(v,u), d(w,u)\}
$$
and similarly, for any $A \subset V$, 
$$
d(e,A) := \min_{u\in A}d(e,u).
$$

\textbf{Common symbols} We will use the following symbols throughout this work:
\begin{enumerate}
    \item $D$. Bond dimension 
    \item $\Delta$. Graph Degree 
    \item $\delta$. Injectivity parameter 
    \item $\varepsilon:=1-\delta^2$. Deviation from maximal injectivity.
\end{enumerate}

\smsubsection{(Strongly) injective tensors}
We work throughout in finite dimension.
Let $\{\HH_i\}_{i=1}^l$ be finite-dimensional Hilbert spaces (the $l$ virtual legs),
and let $\HH_{\mathrm{phys}}$ be the physical Hilbert space.
A single-site PEPS tensor with $l$ virtual legs is a linear map
\begin{equation}
   T : \bigotimes_{i=1}^l \HH_i \longrightarrow \HH_{\mathrm{phys}}.
\end{equation}

\begin{defn}[Injective tensor]
  The tensor $T$ is called \emph{injective} if the induced map
  $T: \bigotimes_{i=1}^l \HH_i \longrightarrow \HH_{\mathrm{phys}}$
  from virtual boundary to physical bulk is injective. 
\end{defn}

We perform a singular value decomposition
\begin{equation}
  T = V \Sigma U^\dagger \in \C^{n_p \times n_v},
\end{equation}
where $n_p \geq n_v$ by injectivity, where $U \in \C^{n_v \times n_v}$ and $V:\in \C^{n_p \times n_v}$  satisfy $U^\dagger U = \mathbbm{1}_{n_v}$ (unitarity) and $V^\dagger V = \mathbbm{1}_{n_v}$ (isometry), and $\Sigma=\mathrm{diag}(\lambda_e) \in \C^{n_v \times n_v}$ collects the singular values.

Normalize tensor such that the singular values satisfy,
\begin{equation}
  \lambda_{\max}:=\max_e \lambda_e = 1,
\qquad
\lambda_{\min}:=\min_e \lambda_e = \delta \in (0,1].
\end{equation}
We call $\delta$ the \emph{injectivity parameter}. Equivalently, this constraints the the \emph{condition number}, defined as, 
\begin{equation}
  \kappa(T) := \frac{\max_e \lambda_e}{\min_e \lambda_e}
\end{equation} 
to be $\kappa(T) = \delta^{-1}$.

\begin{defn}[$\delta-$injective tensor]
   We say that $T$ is $\delta-$injective for $\delta\in (0,1]$ if the condition number of the tensor satisfies $\kappa(T) \leq \delta^{-1}$.
\end{defn}

By definition of injectivity, we have $\delta > 0$. When $\delta=1$ all singular values are unity and $T = VU^\dagger$ itself is an isometry.

\smsubsection{Induced virtual superoperators}

The associated \emph{double-layer} (norm) tensor is obtained by contracting the
physical leg of $T$ with its adjoint. Equivalently, for any bipartition of the virtual legs, 
$L \sqcup L^c$, the norm tensor defines a superoperator,
$$
  \Phi_{L\to L^c} : \mathcal B\!\left(\bigotimes_{i\in L}\HH_i\right)
  \longrightarrow
  \mathcal B\!\left(\bigotimes_{j\in L^c}\HH_j\right),
  \qquad
  \Phi_{L\to L^c}(X)=\sum_a K_{a,L\to L^c}\,X\,K_{a,L\to L^c}^\dagger ,
$$
where $\{K_{a,L\to L^c}\}$ are Kraus operators determined by $T$.

$$
\begin{tikzpicture}[
  scale=0.9,
  baseline={([yshift=-0.5ex] current bounding box.center)},
  peps/.style={
    draw,
    line width=0.8pt,
    fill=tensorcolor,
    rounded corners=1.5pt,
    minimum size=6mm,
    font=\large
  },
  conj/.style={
    draw,
    line width=0.8pt,
    fill=tensorcolor, 
    rounded corners=1.5pt,
    minimum size=6mm,
    font=\large
  },
  superop/.style={
    draw,
    line width=0.9pt,
    fill=blue!20!tensorcolor!75!black, 
    rounded corners=2pt,
    minimum size=9mm,
    font=\Large\bfseries
  },
  thinleg/.style={line width=0.7pt, draw=black!70, ->, >=stealth},
  thickleg/.style={line width=1.8pt, draw=black!70, ->, >=stealth},
  phys/.style={line width=1.5pt, draw=black!80}
]

\def\xb{0.25} \def\yb{-0.35}
\draw[thinleg] (\xb-1.2, \yb) -- (\xb-0.3, \yb);
\draw[thinleg] (\xb, \yb+1.2) -- (\xb, \yb+0.3);
\draw[thinleg] (\xb, \yb-1.2) -- (\xb, \yb-0.3);
\draw[thinleg] (\xb+0.3, \yb) -- (\xb+1.2, \yb);
\node[conj] at (\xb, \yb) {};

\def\xa{-0.25} \def\ya{0.35}
\draw[phys] (\xa, \ya) -- (\xb, \yb);

\draw[thinleg] (\xa-1.2, \ya) -- (\xa-0.3, \ya);
\draw[thinleg] (\xa, \ya+1.2) -- (\xa, \ya+0.3);
\draw[thinleg] (\xa, \ya-1.2) -- (\xa, \ya-0.3);
\draw[thinleg] (\xa+0.3, \ya) -- (\xa+1.2, \ya);
\node[peps] at (\xa, \ya) {};

\node[font=\bfseries] at (0, -1.8) {(a)};

\node at (2.2, 0) {\Large $=$};

\def\xc{4.4}
\draw[thinleg] (\xc-1.2, 0.1) -- (\xc-0.45, 0.1);
\draw[thinleg] (\xc-1.2, -0.1) -- (\xc-0.45, -0.1);
\draw[thinleg] (\xc-0.1, 1.2) -- (\xc-0.1, 0.45);
\draw[thinleg] (\xc+0.1, 1.2) -- (\xc+0.1, 0.45);
\draw[thinleg] (\xc-0.1, -1.2) -- (\xc-0.1, -0.45);
\draw[thinleg] (\xc+0.1, -1.2) -- (\xc+0.1, -0.45);
\draw[thinleg] (\xc+0.45, 0.1) -- (\xc+1.2, 0.1);
\draw[thinleg] (\xc+0.45, -0.1) -- (\xc+1.2, -0.1);

\node[superop] at (\xc, 0) {$\Phi$};
\node[font=\bfseries] at (\xc, -1.8) {(b)};

\node at (6.6, 0) {\Large $\equiv$};

\def\xd{8.8}
\draw[thickleg] (\xd-1.2, 0) -- (\xd-0.45, 0);
\draw[thickleg] (\xd, 1.2) -- (\xd, 0.45);
\draw[thickleg] (\xd, -1.2) -- (\xd, -0.45);
\draw[thickleg] (\xd+0.45, 0) -- (\xd+1.2, 0);

\node[superop] at (\xd, 0) {$\Phi$};
\node[font=\bfseries] at (\xd, -1.8) {(c)};

\end{tikzpicture}
$$

If $T$ is injective, the virtual superoperator can now be cast in the following form, for some disjoint pairing of legs $L \sqcup L^c$, 
\begin{equation}
\label{eq:Phi_def}
\Phi_{L\to L^c}(X)
:=\sum_a \lambda_a^2\,K_{a,L\to L^c}\,X\,K^\dagger_{a,L\to L^c},
\end{equation}
for $X\in\mathcal B(\HH_L)$, where $\{\lambda_a\}$ are the singular values of $T$. We consider $T$ normalized such that $\max_a\lambda_a = 1$. Here $\{K_{a,L\to L^c}\}_a$ are Kraus operators (determined by $U$ and the bipartition) satisfying the \emph{bistochastic} identities
\begin{equation}
\label{eq:Kraus_bistochastic}
\sum_a K^\dagger_{a,L\to L^c}K_{a,L\to L^c} = \mathbbm \dim(\HH_{L^c}) 1_L,
\qquad
\sum_a K_{a,L\to L^c}K^\dagger_{a,L\to L^c} = \mathbbm\dim(\HH_{L}) 1_{L^c},
\end{equation}
which follow from the fact that $U$ is a unitary. To see this, note that,
$$\sum_a U_{i_1\dots i_\ell}^a \bar{U}_{i_1'\dots i_\ell'}^a = \prod_{j} \delta_{i_j i_j'}$$
Now, the Kraus operators are defined as, 
$$K_{a, L\to L^c} := [U^a]_{(i_L)\to (i_{L^c})}$$
which satisfy completeness up to normalization. For instance, 
$$\sum_a K^\dagger_{a,L\to L^c}K_{a,L\to L^c}  = \sum_{a, i_{L^c}}  [\bar{U}^a]_{(i_L'), (i_{L^c})}  [U^a]_{(i_L), (i_{L^c})} = \sum_{i_{L^c}} \delta_{i_L, i'_L} = \dim(\HH_{L^c}) \mathbbm{1}_L$$

Moreover, when $T$ is $\delta-$injective with $\delta = 1$ (i.e., an isometry), the virtual superoperator is just a fully-depolarizing map (up to normalization).

\smsection{Projected entangled pair states}
\smsubsection{Definitions}
Consider a PEPS defined on a graph $G = (V,E)$ with vertex set $V$ and edge set $E$. Each vertex $v\in V$ is associated with a local tensor $T_v$. 

\begin{defn}[PEPS]
A projected entangled pair state (PEPS) on a graph $G = (V,E)$ with uniform virtual dimension $D$ and physical dimension $d_p$ is specified for each $v\in V$ by a tensor 
$$
T_v : \bigotimes_{n\in\NN(v)}\HH_{(n,v)} \to \HH_{\emph{phys}} \cong \C^{d_p},
$$
where $\NN(v)$ denotes the set of neighbors of $v$, and each virtual space $\HH_{(n,v)} \cong \C^D$.
\end{defn}

We will consider graphs of uniform degree for notational simplicity, with $\NN(v) = \Delta$ for all $v \in V$.  

\begin{defn}[Injective PEPS]
We say a PEPS is $\delta$-injective if for every vertex $v\in V$ the corresponding tensor $T_v$ is $\delta-$injective.
\end{defn}

\smsubsection{Belief Propagation}
For each directed edge $\vec{e}=(v,w)$, we define a \emph{bond-space} $\HH_{\vec{e}} \cong \C^D$. A message along edge $\vec{e}=(v,w)$ is a positive operator $\mu_{\vec{e}} \in \pos(\HH_{\vec{e}})$ representing the message from $v$ to $w$. Since messages in opposite directions may differ, we distinguish $\mu_{(v,w)}$ from $\mu_{(w,v)}$.

Let $\vec{E}$ denote the set of all directed edges. The collection of messages on all edges is represented as,

$$
\bm{\mu} = (\mu_{\vec{e}})_{\vec{e}\in\vec{E}} \in \pos(G).
$$

where we define $\pos(G)$ to be the Cartesian-product space of positive matrices on each directed edge. Similarly, we denote with $\KK_G$ the product space of trace-one normalized positive matrices at each edge.

For each vertex $v\in V$ and each neighbor $n\in\NN(v)$, the tensor $T_v$ induces a virtual superoperator
$$
\Phi_{(v,n)}: \bigotimes_{m\in\NN(v)\setminus\{n\}} \pos(\HH_{(m,v)}) \to \pos(\HH_{(v,n)})
$$
that takes the incoming messages from all neighbors of $v$ except $n$ and produces the outgoing message from $v$ to $n$. In Kraus form,
$$
\Phi_{(v,n)}\!\left(
\bigotimes_{m \in \NN(v)\setminus\{n\}} \mu_{(m,v)}
\right)
= \sum_\alpha \lambda_\alpha^{(v,n)}\,
K^{(v,n)}_\alpha\,
\left(\bigotimes_{m \in \NN(v)\setminus\{n\}} \mu_{(m,v)}\right)
(K_\alpha^{(v,n)})^\dagger,
$$
where $\{K^{(v,n)}_\alpha\}_\alpha$ are the Kraus operators corresponding to the bipartition that isolates leg $n$ at vertex $v$, and $\lambda_\alpha^{(v,n)} >0$ are the squared singular values of $T_v$ in this bipartition. This is shown schematically as follows,

$$
\begin{tikzpicture}[
  scale=0.9,
  baseline={([yshift=-0.65ex] current bounding box.center)},
  tensor/.style={
    draw,
    line width=0.9pt,
    fill=blue!20!tensorcolor!75!black, 
    rounded corners=2pt,
    minimum size=9mm,
    font=\Large\bfseries,
    text=black  
  },
  message/.style={
    draw,
    line width=0.9pt,
    fill=white, 
    rounded corners=4pt,
    inner sep=4pt,
    font=\large
  },
  msg/.style={line width=1.5pt, draw=black!70, ->, >=stealth}
]

\def\dx{1.8}
\def\dy{1.6}

\node[tensor] (v) at (0, 0) {$\Phi_{(v,n_4)}$};

\node[message] (m1) at (0, \dy) {$\mu_{(n_1,v)}$};
\node[message] (m2) at (0, -\dy) {$\mu_{(n_2,v)}$};
\node[message] (m3) at (-2.4, 0) {$\mu_{(n_3,v)}$};

\coordinate (out_v) at (\dx-0.2, 0);

\draw[msg] (m1.south) -- (v.north);
\draw[msg] (m2.north) -- (v.south);
\draw[msg] (m3.east) -- (v.west);
\draw[msg] (v.east) -- (out_v);

\node at (\dx + 0.6, 0) {\Large $\propto$};

\node[message] (m4) at (\dx + 2.0, 0) {$\mu_{(v,n_4)}$};
\coordinate (out_final) at (\dx + 3.8, 0);
\draw[msg] (m4.east) -- (out_final);

\end{tikzpicture}
$$
With the virtual superoperator, we define the message-passing maps of the PEPS.

\begin{defn}[Message-passing maps of PEPS]
The \emph{unnormalized message-passing map} $\bm{f}: \pos(G) \to \pos(G)$ is defined component-wise: for each directed edge $\vec{e}=(v,n)\in\vec{E}$,
$$
[\bm{f}(\bm{\mu})]_{\vec{e}} := f_{\vec{e}}(\bm{\mu}) := \Phi_{(v,n)}\!\left(
\bigotimes_{m \in \NN(v)\setminus\{n\}} \mu_{(m,v)}
\right).
$$
The \emph{normalized message-passing map} $\bm{F}: \mathcal K_G \to \mathcal K_G$ is defined by
$$
[\bm{F}(\bm{\mu})]_{\vec{e}} := \frac{f_{\vec{e}}(\bm{\mu})}{\Tr f_{\vec{e}}(\bm{\mu})}.
$$
\end{defn}
The \emph{fixed-points} of a PEPS will be fixed-points of the message passing equations, as defined in the following.
\begin{defn}[Fixed point of PEPS]
A message configuration $\bm{\mu}_\star = (\mu_{\vec{e}})_{\vec{e}\in\vec{E}} \in \pos(G)$ is a \emph{fixed point} of the PEPS if for each directed edge $\vec{e} \in \vec{E}$, there exists $\lambda_{\vec{e}} > 0$ such that
$$
f_{\vec{e}}(\bm{\mu}_\star) = \lambda_{\vec{e}} \mu_{\vec{e}}.
$$
When all messages are normalized to unit trace (i.e., $\bm{\mu}_\star \in \mathcal K_G$), a fixed point satisfies $\bm{F}(\bm{\mu}_\star) = \bm{\mu}_\star$.
\end{defn}

\smsubsection{Loop expansion}
The contraction of the tensor network state, 
\[\ZZ = \inner{\psi}{\psi}\]
can be formally described as a Taylor series in terms of `loops' on the network as described below. Herein, the zeroth order term is the BP approximation determined solely by the tensors and the fixed points. 
\begin{defn}[BP Approximation] \label{def:BPnormalized}
    For each $v\in V$, the local contribution to the partition function is then given as,
\begin{equation}\label{eq:tensor_z}
        Z^{(v)} := \left[\underset{n \in \NN(v)}{\bigotimes}\frac{\mu_{(n,v)}}{\sqrt{I_{vn}}}\right] \star T_v,
\end{equation}
where $I_{vn} := \Tr(\mu_{(v,n)}\mu_{(n,v)})$.
\end{defn}
Now, we begin to define the excitations. The essential ingredient is the projector onto the non-BP subspace. Since the message tensors can be different in each direction, this is generally a non-orthogonal (w.r.t the Hilbert-Schmidt product) projector.

\begin{defn}[Excitation projector] \label{def:excitation_projector}
  For unit-trace operators $X, X' \in \KK(\HH)$ with $\Tr(X)=\Tr(X')=1$, define the excitation projector
\begin{equation}
    \Pi_{X,X'}^\perp(Y):=Y-\frac{\Tr(XY)}{\Tr(XX')} X'.
\end{equation}
\end{defn}
The excitations appear in \emph{loops}, which are subgraphs with the following properties. 
\begin{defn}[Loop subgraph]
A \emph{loop} is a connected subgraph $\ell = (W,F)$ where $W$ is the vertex set and $F$ is the edge set, and every vertex has degree at least $2$. We call $|W|$ the \emph{weight} of the loop.
\end{defn}
We denote the set of connected loops on the graph as $\LL$. Two loops can be either compatible or incompatible, as defined below. 
\begin{defn}[Compatible loops]\label{def:incompatibility}
Two loops \(\ell, \ell' \in \LL \) are said to be \emph{compatible}, written \(\ell \sim \ell'\), if they do not overlap; that is, they share no vertex or edge in the underlying graph. A family \(\Gamma \subset \LL\) of loops is called \emph{compatible} if every pair of distinct loops in \(\Gamma\) is compatible.
\end{defn}

Given a loop $\ell$, the correction value $Z_\ell$ is defined by the following tensor contraction. 

\begin{defn}[Loop activity]
Given a loop $\ell = (W,F)$, the \emph{loop activity} $Z_\ell$ is defined as the contraction of the PEPS tensor network where:
\begin{enumerate}
\item On each edge $e \in F$ (edges inside the loop), insert the antiprojector $\Pi_{\mu_{\star,\vec{e}},\mu_{\star,\overleftarrow{e}}}^\perp$ corresponding to the fixed point on that edge;
\item On each edge $e \notin F$ (edges outside the loop), insert the fixed point projector $\mu_{\star,\vec{e}}\otimes \mu_{\star,\overleftarrow{e}}$.
\end{enumerate}
\end{defn}

With this, we have the following series expansion for the tensor network contraction, from \cite{midha2025beyond, Evenbly2024}. The tensor network is first normalized with respect to the BP approximation value, such that the BP approximation of the normalized object is unity. We will later show that given strong injectivity, this normalization is always meaningful. This leads to the following expansion.

\begin{prop}[Loop expansion] \label{prop:loopseriesformal}
The tensor network contraction admits the expansion
\begin{equation}\label{eq:loop_expansion_reorganized}
  \ZZ = Z_{BP}\left(1 + \sum_{\substack{\Gamma\subset\mathcal{L}\\\Gamma\emph{ finite, compatible}}}
  \prod_{l\in\Gamma} Z_l\right)
\end{equation}
where the sum runs over all finite sets \(\Gamma\) of mutually compatible loops.
\end{prop}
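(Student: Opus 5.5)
We prove Proposition~\ref{prop:loopseriesformal} by inserting a resolution of the identity on every edge and expanding multilinearly.

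\textbf{Edge decomposition.} Fix an edge $e=\{v,w\}$ with fixed-point messages $\mu_{\vec e}=\mu_{(v,w)}$ and $\mu_{\overleftarrow e}=\mu_{(w,v)}$. Contracting the two ket--bra leg pairs of $e$ in the norm network $\ZZ$ amounts to inserting the identity superoperator on $\BB(\HH_e)$. For any $Y$ we write
\begin{equation}
Y=\frac{\Tr(\mu_{\vec e}Y)}{\Tr(\mu_{\vec e}\mu_{\overleftarrow e})}\,\mu_{\overleftarrow e}+\Pi^\perp_{\mu_{\vec e},\mu_{\overleftarrow e}}(Y),
\end{equation}
which is exact by Definition~\ref{def:excitation_projector}. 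The first term is a rank-one (oblique) projector $P_e$ with normalization $I_{vw}=\Tr(\mu_{(v,w)}\mu_{(w,v)})$, and the second is its complement $\Pi^\perp_e=\mathrm{id}-P_e$. One checks that $P_e^2=P_e$, so that $P_e\Pi^\perp_e=0$.

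The one real prerequisite is $I_{vw}\neq 0$. For strongly injective PEPS this follows from $\|\mu_{\vec e}-\mathbbm 1/D\|_1=\order{\varepsilon}$, which gives $I_{vw}\ge (1-\order{\varepsilon})/D>0$. Here we would cite the bound $\Tr\Phi(X)\ge(1-\varepsilon)D$ and the perturbative estimate on $\Delta\Phi$ from the proof of \autoref{thm:fixed-points}. For a general fixed point we simply assume $I_{vw}\neq0$, which is also needed for Definition~\ref{def:BPnormalized} to make sense.

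\textbf{Multilinear expansion.} Insert $\mathrm{id}=P_e+\Pi^\perp_e$ on every edge and expand. This gives $\ZZ=\sum_{F\subseteq E}\ZZ_F$, where $\ZZ_F$ carries $\Pi^\perp$ on the edges of $F$ and $P$ on the edges of $E\setminus F$.

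On each $P$-edge the network factorizes: the projector splits into $\mu_{\overleftarrow e}$ fed into one tensor and $\mu_{\vec e}$ fed into the other, with a factor $1/I_{vw}$. We split this factor symmetrically as $1/\sqrt{I}$ at each endpoint. Consequently:
\begin{itemize}
\item For $F=\emptyset$, each vertex is contracted against incoming messages only, giving exactly $Z^{(v)}$, so $\ZZ_\emptyset=\prod_v Z^{(v)}=Z_{BP}$.
\item For a vertex $v$ with exactly one $\Pi^\perp$ leg, say on edge $(v,n)$, the contraction vanishes. Contracting the other legs of $v$ with the messages yields $f_{(v,n)}(\bm\mu)=\Phi_{(v,n)}(\bigotimes_m\mu_{(m,v)})$. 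The fixed-point property makes this proportional to $\mu_{(v,n)}$, which lies in the range of $P_e$ and is annihilated by $\Pi^\perp_e$ in the corresponding slot.
\end{itemize}
This is the key step, and it requires matching the orientation conventions carefully. One must check that the BP tensor on the $v$ side is $\propto\mu_{(v,n)}$, and that the $\Pi^\perp$ orientation (which message sits in the trace and which is output) makes the contraction equal $\Tr(\cdot)-\Tr(\cdot)=0$. We would verify this directly by noting $\Pi^{\perp}$ kills $\mu_{\overleftarrow e}$ on one side and its adjoint kills $\mu_{\vec e}$ on the other.

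\textbf{Organizing into loops.} It follows that $\ZZ_F\neq0$ only when every vertex incident to $F$ has at least two $F$-edges, i.e.\ the subgraph induced by $F$ has minimum degree $\ge2$. Decompose such an $F$ into its connected components $\ell_1,\dots,\ell_k$. These are loops, and they are pairwise vertex- and edge-disjoint, hence compatible. Conversely, every finite compatible family of loops arises from exactly one $F$, namely the union of their edges, so this is a bijection.

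Vertices not touching $F$ contribute $Z^{(v)}$, exactly as in the $F=\emptyset$ term. The rest of the network factorizes over components, because distinct components are joined only through $P$-edges, which are rank-one and split the network. Each component contributes its loop activity. We normalize the activity relative to the BP weights of its own vertices, $Z_\ell=(\text{raw contraction})/\prod_{v\in W}Z^{(v)}$; this is the normalization implicit in "normalizing by the BP value" in the paper's definition. Then $\ZZ_F=Z_{BP}\prod_i Z_{\ell_i}$, and summing over $F$ gives Eq.~\eqref{eq:loop_expansion_reorganized}. The sum is finite because $G$ is finite.

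The main obstacle is the vanishing lemma for degree-one vertices, specifically getting the orientation and normalization of the oblique projector to match the message directions. Everything else is bookkeeping.
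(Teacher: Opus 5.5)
Your proof is correct and follows the standard derivation of this result. You insert $P_e+\Pi^\perp_e$ on every edge, use the fixed-point equation to annihilate any vertex carrying exactly one excited leg, and group the surviving edge sets into vertex-disjoint connected components. The paper gives no proof of its own and imports the statement from \cite{midha2025beyond,Evenbly2024}, which obtain it by the same edge-by-edge expansion. The one prerequisite you leave implicit is $Z^{(v)}\neq 0$, needed for the division defining $Z_\ell$. It already follows from your assumption $I_{vw}\neq 0$, because the fixed-point equation gives $Z^{(v)}=\lambda_{(v,n)}\sqrt{I_{vn}}\prod_{m\neq n}I_{vm}^{-1/2}$ with $\lambda_{(v,n)}>0$.
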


Owing to combinatorial growth of disconnected loops present in this expansion, this does not converge in general. We use techniques from cluster expansions to mitigate this problem, which is now reviewed.

Moreover, the loop tensors are always defined on the normalized tensor network, whose BP expectation value is unity. This ensures that the loop tensors are constructed from strictly local data. This locality extends to clusters, which we define now.

\smsubsection{Cluster expansion techniques}

\begin{defn}[Clusters]
Given the set of allowed excitations $\LL$, a \emph{cluster} is a collection of tuples 
\[
\mathbf{W} = \{(\ell_1, \alpha_1), (\ell_2, \alpha_2), \ldots, (\ell_k, \alpha_k)\}
\]
where each $\ell_i\in \LL$ is a loop and $\alpha_i \in \N$ is the multiplicity of loop $\ell_i$ in the cluster. The total number of loops in the cluster is $n_{\mathbf{W}} = \sum_{i=1}^k \alpha_i$.
\end{defn}

We define the cluster weight $|\mathbf{W}| = \sum_{i} \alpha_i |l_i|$. We also denote $\mathbf{W}! = \prod_i \alpha_i !$. The support of a cluster is $\supp(Z_\mathbf{W}) = \cup_{l\in\mathbf{W}} \supp(l)$. We denote the correction of the cluster $Z_{\mathbf{W}}$ as the product of the loop corrections raised to their respective multiplicities:

\begin{defn}[Cluster correction]
For a cluster $\mathbf{W} = \{(\ell_1, \alpha_1), \ldots, (\ell_k, \alpha_k)\}$, the cluster correction is defined as
\[
Z_{\mathbf{W}} = \prod_{i=1}^k Z_{\ell_i}^{\alpha_i}.
\]
\end{defn}
We call a cluster \(\mathbf{W}\) \emph{connected} if the interaction graph \(G_\mathbf{W}\) is connected, meaning there is a path between any two vertices in the interaction graph. 

\begin{defn}[Interaction graph]
Given a cluster $\mathbf{W} = \{(\ell_1, \alpha_1), \ldots, (\ell_k, \alpha_k)\}$, the \emph{interaction graph} $G_\mathbf{W} = (V_\mathbf{W}, E_\mathbf{W})$ has $|V_\mathbf{W}| = \sum_{i=1}^k \alpha_i$ vertices, with loop $\ell_i$ corresponding to $\alpha_i$ vertices. There is an edge $(\ell,\ell') \in E_\mathbf{W}$ if the loops $\ell$ and $\ell'$ are incompatible ($\ell \not\sim \ell'$, i.e., they share edges), or if they are identical ($\ell=\ell'$).
\end{defn}

We define the following properties of a cluster.

\begin{defn}[Properties of a cluster]
    For a cluster $\mathbf{W}= \{(l_1, \alpha_1), (l_2, \alpha_2), \ldots\}$, we define the following:
    \begin{enumerate}
        \item[(i)] The \emph{cluster weight} (or ``\emph{order}") is defined as 
        \begin{equation}
            |\mathbf{W}| = \sum_{i} \alpha_i |l_i|,
        \end{equation}
        where the number of edges in excitation $l$ be denoted as $|l|$.
        \item[(ii)] The \emph{support} of a cluster is,
        \begin{equation}
        \emph{supp}(Z_\mathbf{W}) = \cup_{l\in\mathbf{W}}\emph{supp}(l)
        \end{equation}
        \item[(iii)] The \emph{correction} of the cluster $Z_{\mathbf{W}}$ as the product of the loop corrections raised to their respective multiplicities:
        \begin{equation}
  Z_{\mathbf{W}} = \prod_iZ_{l_i}^{\alpha_i}.
\end{equation}
\end{enumerate}
\end{defn}

The following two results are from Ref.~\cite{midha2025beyond}, modifying the loop expansion into a cluster expansion in terms of the connected clusters on the tensor network.

\begin{prop}[Connected clusters only]\label{prop:clusterexpformal}
The free energy can be expressed as
\begin{equation}
  \log \ZZ = \log Z_{BP} +  \sum_{\emph{connected} \, \mathbf{W}} \phi(\mathbf{W}) Z_{\mathbf{W}},
\end{equation}
where the sum runs over all connected clusters \(\mathbf{W}\). The coefficient \(\phi(\mathbf{W})\) is given by
\begin{equation}
  \phi(\mathbf{W}) = \frac{1}{\mathbf{W}!} \sum_{\substack{C \in G_\mathbf{W} \\ C \emph{ connected}}} \sum_{(i,j) \in C} (-1)^{|E(C)|}
\end{equation}
\end{prop}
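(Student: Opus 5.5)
The plan is to read the loop expansion of Proposition~\ref{prop:loopseriesformal} as the partition function of an abstract polymer gas, and then apply the standard Mayer/exponential-formula argument. The polymers are the loops $\ell\in\LL$, with activities $Z_\ell$ and a hard-core interaction given by compatibility (Definition~\ref{def:incompatibility}). Define $\Xi:=\ZZ/Z_{BP}=\sum_{\Gamma}\prod_{\ell\in\Gamma}Z_\ell$. Introduce the incompatibility indicator $f(\ell,\ell'):=-1$ if $\ell\not\sim\ell'$ or $\ell=\ell'$, and $f(\ell,\ell'):=0$ otherwise. Compatible unordered families are then encoded by ordered tuples:
\begin{equation}
\Xi=\sum_{n\ge0}\frac{1}{n!}\sum_{(\ell_1,\dots,\ell_n)\in\LL^n}\prod_{i=1}^n Z_{\ell_i}\prod_{1\le i<j\le n}\bigl(1+f(\ell_i,\ell_j)\bigr).
\end{equation}
The factor $1+f$ kills tuples containing repeated or overlapping loops. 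The $1/n!$ compensates for orderings, because only tuples with distinct loops survive.

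Next, expand $\prod_{i<j}(1+f)=\sum_{H}\prod_{(i,j)\in E(H)}f(\ell_i,\ell_j)$ over all graphs $H$ on $\{1,\dots,n\}$. Decompose each $H$ into its connected components. The exponential formula for labelled structures then gives
\begin{equation}
\log\Xi=\sum_{n\ge1}\frac{1}{n!}\sum_{(\ell_1,\dots,\ell_n)}\prod_i Z_{\ell_i}\sum_{\substack{C\ \text{connected}\\ \text{on }[n]}}\prod_{(i,j)\in E(C)}f(\ell_i,\ell_j).
\end{equation}
Since $f\in\{0,-1\}$, the inner product equals $(-1)^{|E(C)|}$ if $C$ is a spanning subgraph of the interaction graph $G_{\mathbf W}$ of the tuple, and $0$ otherwise. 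In particular, the inner sum vanishes unless $G_{\mathbf W}$ is connected.

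Then regroup ordered tuples by the multiset they define, namely the cluster $\mathbf W=\{(\ell_i,\alpha_i)\}$. Each cluster arises from $n!/\mathbf W!$ tuples, so the $1/n!$ becomes $1/\mathbf W!$. The product of activities becomes $Z_{\mathbf W}=\prod_i Z_{\ell_i}^{\alpha_i}$, and the remaining combinatorial coefficient is exactly the Ursell function $\phi(\mathbf W)=\frac{1}{\mathbf W!}\sum_{C\subseteq G_{\mathbf W}\ \text{connected, spanning}}(-1)^{|E(C)|}$. Finally, take logarithms of $\ZZ=Z_{BP}\,\Xi$ to obtain the stated identity.

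The main obstacle is justifying the logarithm and the rearrangement of sums. The step from $\Xi$ to $\log\Xi$ is a priori an identity of formal power series, and the number of terms grows factorially. I would handle this in two stages. First, scale all activities $Z_\ell\to tZ_\ell$. On a finite graph $\LL$ is finite, so $\Xi(t)$ is a polynomial with $\Xi(0)=1$, and the identity holds as formal power series in $t$. It is therefore an honest analytic identity for $|t|$ small. Second, to reach $t=1$, I would invoke absolute convergence of the cluster series via the Kotecký--Preiss criterion. This is exactly what the decay-of-loops condition $|Z_\ell|\le e^{-c|\ell|}$ with $c>c_0$ supplies, since the number of loops of size $k$ through a fixed edge grows at most like $e^{\order{k\log\Delta}}$. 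Under this criterion the series is analytic on a disk containing $t=1$ and $\Xi$ has no zeros there, so the identity continues to $t=1$. Without decay of loops, the proposition should be read as the formal (or small-$t$) identity, which is how it is used together with Lemma~\ref{lem:decayofloops}.
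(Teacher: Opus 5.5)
Your proposal is correct. It is essentially the standard Mayer/polymer-gas derivation: loops as hard-core polymers, the exponential formula, and regrouping ordered tuples into clusters to produce the $1/\mathbf{W}!$ factor. The paper does not reprove this and instead imports it from Ref.~\cite{midha2025beyond}. You also correctly read the stray $\sum_{(i,j)\in C}$ in the stated $\phi(\mathbf{W})$ as a typo for the usual Ursell sum over connected spanning subgraphs of $G_{\mathbf{W}}$. Finally, your caveat matches how the paper uses the result together with \autoref{thm:clusterconvformal}: the identity is formal, or holds only for small scaled activities, unless it is supplemented by the decay-of-loops (Koteck\'y--Preiss) bound.
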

The cluster expansion renders into an efficient algorithm given decay of loops.
\begin{theorem}[Convergence of cluster expansion] \label{thm:clusterconvformal}
    Assume there exists a constant \(c> c_0 :=\log(2e\Delta) + \frac{1}{2} \) such that
\begin{equation}
  |Z_l|\le e^{-c |l|}
\end{equation} 
then, the series for \(\log Z\) converges absolutely. Moreover, the error in truncating the series at order \(m\), denoted $F_m$, 

\begin{equation}
  F_m = \log Z_{BP} +  \sum_{\substack{\emph{connected} \, \mathbf{W} \\ |\mathbf W| \leq m}} \phi(\mathbf{W}) Z_{\mathbf{W}},
\end{equation}
is bounded by
\begin{equation}
  \left| \log Z - {F}_m \right| \le N e^{-d(m+1)}
\end{equation}
where $d = c - c_0$, $\Delta$ is the degree of the graph, and $N$ is the number of vertices.
\end{theorem}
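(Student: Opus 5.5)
The plan is to treat the loop expansion of \autoref{prop:loopseriesformal} as an abstract polymer gas. The polymers are the loops $\ell\in\LL$, with activities $Z_\ell$ and incompatibility relation $\ell\not\sim\ell'$. I would then invoke the Kotecký--Preiss (KP) criterion, which is the standard tool for the connected-cluster series of \autoref{prop:clusterexpformal}.

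Recall the weighted form of KP. Suppose there are functions $a,b\ge 0$ on polymers such that
\[
\sum_{\ell'\not\sim \ell}|Z_{\ell'}|\,e^{a(\ell')+b(\ell')}\le a(\ell)
\]
for every $\ell$. Then $\sum_{\mathbf W\not\sim \ell}|\phi(\mathbf W)Z_{\mathbf W}|e^{b(\mathbf W)}\le a(\ell)$, where $b(\mathbf W)=\sum_i\alpha_i b(\ell_i)$. The same bound holds with a single vertex $v$ (viewed as a trivial polymer) in place of $\ell$, where $\mathbf W\not\sim v$ means that $\mathrm{supp}(\mathbf W)$ contains $v$. I would take $a(\ell)=\tfrac12|\ell|$ and $b(\ell)=d|\ell|$ with $d=c-c_0$. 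Absolute convergence and the truncation bound then both come out of one estimate.

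\textbf{Step 1: counting.} The number of connected subgraphs with $k$ edges containing a fixed vertex in a graph of maximum degree $\Delta$ is at most $(e\Delta)^k$. This is the standard lattice-animal bound via rooted spanning trees, or equivalently via counting subtrees of the infinite $\Delta$-regular tree. A loop $\ell'$ incompatible with $\ell$ must share a vertex with $\ell$. Since every vertex of $\ell$ has degree at least $2$, $\ell$ has at most $|\ell|$ vertices, so there are at most $|\ell|$ choices of root.

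\textbf{Step 2: verifying KP.} Using the hypothesis $|Z_{\ell'}|\le e^{-c|\ell'|}$,
\[
\sum_{\ell'\not\sim\ell}|Z_{\ell'}|e^{(\frac12+d)|\ell'|}\le |\ell|\sum_{k\ge 1}\bigl(e\Delta\, e^{-c+\frac12+d}\bigr)^k=|\ell|\sum_{k\ge1}r^k .
\]
Since $c-d=c_0=\log(2e\Delta)+\tfrac12$, the ratio is $r=e\Delta\,e^{\frac12-c_0}=\tfrac12$, so $\sum_{k\ge1} r^k=1$. This gives $|\ell|$ rather than $\tfrac12|\ell|$, so the constants need a small adjustment. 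I would either use $a(\ell)=|\ell|$, or exploit the slack from $c>c_0$ strictly, or use the sharper tree-counting bound $\frac{1}{k}\binom{\Delta k}{k-1}$ in place of $(e\Delta)^k$. In each case the aim is that the displayed sum is at most $a(\ell)$ with $a(\ell)\propto|\ell|$. I expect this bookkeeping of constants, so that exactly $c_0=\log(2e\Delta)+\tfrac12$ suffices, to be the only delicate point.

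\textbf{Step 3: the tail.} Apply the vertex-rooted version of KP at each $v\in V$. Every connected cluster has nonempty support, so
\[
\sum_{\mathbf W\ \mathrm{connected}}|\phi(\mathbf W)Z_{\mathbf W}|\le \sum_{v\in V}\sum_{\mathbf W\ni v}|\phi(\mathbf W)Z_{\mathbf W}|\le \sum_{v\in V}\sum_{\mathbf W\ni v}|\phi(\mathbf W)Z_{\mathbf W}|\,e^{d|\mathbf W|}=O(N),
\]
which gives absolute convergence. For the truncation error, every omitted cluster has $|\mathbf W|\ge m+1$, so
\[
|\log\ZZ-F_m|\le\sum_{v\in V}\ \sum_{\substack{\mathbf W\ni v\\ |\mathbf W|\ge m+1}}|\phi(\mathbf W)Z_{\mathbf W}|\le e^{-d(m+1)}\sum_{v\in V}\sum_{\mathbf W\ni v}|\phi(\mathbf W)Z_{\mathbf W}|e^{d|\mathbf W|}.
\]
The inner sum is bounded by $a(v)=O(1)$, and normalizing $a(v)\le 1$ by the choice of constants in Step 2 yields the claimed $N e^{-d(m+1)}$.
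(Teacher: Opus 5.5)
The paper does not prove this theorem itself. It imports it from \cite{midha2025beyond} and only points to the tail bound of \autoref{lem:cluster_tail} as the key ingredient. Your Kotecký--Preiss route, with $b(\ell)=d|\ell|$ and zero-activity vertex polymers, is the standard way to get that tail bound, and Step~3 is sound.

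The gap is Step~2. This is exactly where the constant $c_0=\log(2e\Delta)+\tfrac12$ has to be earned, and you leave it open: as written you obtain $|\ell|$, not $a(\ell)=\tfrac12|\ell|$. Your proposed repairs do not resolve this as stated.
\begin{itemize}
\item Taking $a(\ell)=|\ell|$ makes things worse, because the factor $e^{a(\ell')}$ grows. The ratio becomes $e\Delta\,e^{1-c_0}=\sqrt e/2\approx 0.82$, and the sum is several times $|\ell|$.
\item There is no slack from $c>c_0$. With $b(\ell)=d|\ell|$ and $d=c-c_0$, the ratio is exactly $e\Delta\,e^{\frac12-c_0}=\tfrac12$ for every $c$. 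Using some $d'<d$ would only give the weaker rate $e^{-d'(m+1)}$, not the claimed $e^{-d(m+1)}$.
\item The formula $\frac1k\binom{\Delta k}{k-1}$ counts rooted subtrees of the $\Delta$-ary tree with $k$ \emph{vertices}. It does not count $k$-edge subgraphs containing cycles, so it does not apply here as stated.
\end{itemize}

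The missing observation is that every loop has at least two edges, since each vertex has degree at least $2$ in it (in a simple graph, at least three). The geometric series therefore starts at $k=2$:
\[
\sum_{\ell'\not\sim\ell}|Z_{\ell'}|\,e^{(\frac12+d)|\ell'|}\le |\ell|\sum_{k\ge 2}(e\Delta)^k e^{-(c_0-\frac12)k}=|\ell|\sum_{k\ge 2}2^{-k}=\tfrac12|\ell|=a(\ell).
\]
The vertex condition likewise holds with $a(v)=\tfrac12$. Step~3 then gives $|\log\ZZ-F_m|\le\tfrac12 N e^{-d(m+1)}$, which is the claim with room to spare.

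For Step~1, the spanning-tree argument you cite counts trees, not edge sets with cycles. A cleaner justification of the $(e\Delta)^k$ count for loops is an exploration encoding. Start at the root and record accept or reject for each of the at most $\Delta$ edges at each visited vertex. A loop with $k$ edges has at most $k$ vertices, so this encoding gives at most $\binom{\Delta k}{k}\le(e\Delta)^k$ loops through a fixed vertex.
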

The main technical argument in the proof of Theorem~\ref{thm:clusterconvformal} is the following tail-bound on the decay of large clusters.
\begin{lemma}[Tail bound for large clusters]
\label{lem:cluster_tail}
Consider a PEPS satisfying the loop decay condition $|Z_\ell| \le e^{-c|\ell|}$ for $c > c_0 = \log(2e\Delta) + 1/2$. For any region $A \subset V$ and cutoff $m \in \N$, the contribution from all connected clusters with weight higher than $m$ supported on $A$ satisfies:
\begin{equation}
\sum_{\substack{\emph{connected} \, \mathbf{W} \\ \emph{supp}(\mathbf{W}) \cap A \neq\emptyset \\ |\mathbf{W}| > m}} |\phi(\mathbf{W}) Z_{\mathbf{W}}| \leq \order{|A| e^{-(c-c_0)(m+1)}}.
\end{equation}
\end{lemma}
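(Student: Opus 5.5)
The tail bound of Lemma~\ref{lem:cluster_tail} will be proved with the standard Kotecký--Preiss-type counting argument. I will first reduce the sum over connected clusters touching $A$ to a sum over single vertices $v\in A$ (a union bound), which produces the factor $|A|$. It then suffices to show
\begin{equation}
  S_v(m) := \sum_{\substack{\text{connected}\,\mathbf{W}\\ v\in\supp(\mathbf{W}),\ |\mathbf{W}|>m}} |\phi(\mathbf{W})|\,|Z_{\mathbf{W}}| = \order{e^{-(c-c_0)(m+1)}}.
\end{equation}
The plan is to introduce a tilted activity and extract the exponential factor in cluster weight. Write $|Z_\ell|\le e^{-c|\ell|} = e^{-c_0'|\ell|}\,e^{-(c-c_0')|\ell|}$ with $c_0'$ slightly above $c_0$, or exactly $c_0$ by absorbing constants. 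For $|\mathbf{W}|>m$ this gives $|Z_{\mathbf{W}}|\le e^{-(c-c_0)(m+1)}\prod_i e^{-c_0\alpha_i|\ell_i|}$. So it remains to show that $\sum_{\mathbf{W}\ni v}|\phi(\mathbf{W})|\prod_i \tilde z(\ell_i)^{\alpha_i}=\order{1}$ with the reduced activity $\tilde z(\ell)=e^{-c_0|\ell|}$.

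That uniform bound is the convergence criterion of the cluster expansion at activity $\tilde z$. I will get it from the Kotecký--Preiss criterion. Specifically, choose $a(\ell)=|\ell|$ and verify
\begin{equation}
  \sum_{\ell'\not\sim\ell}\tilde z(\ell')e^{a(\ell')}\le a(\ell).
\end{equation}
This gives $\sum_{\mathbf{W}\not\sim\ell}|\phi(\mathbf{W})|\prod\tilde z\le |\ell|$. Applying it with $\ell$ replaced by the "vertex polymer" $\{v\}$ (or a minimal loop through $v$) bounds the sum of clusters containing $v$ by an $\order{1}$ constant.

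The KP check needs a count of connected loops. Loops are connected subgraphs with $k$ edges containing a fixed vertex, and their number is at most $(e\Delta)^k$, by the standard connected-subgraph/lattice-animal bound via a spanning-tree encoding. Then the left side of the KP condition is at most
\begin{equation}
  \sum_{u\in\ell}\sum_{k\ge 1}(e\Delta)^k e^{-c_0 k}e^{k}.
\end{equation}
With $c_0=\log(2e\Delta)+\tfrac12$ each term is at most $(e^{1/2}/2)^k$, whose geometric sum is an $\order{1}$ constant smaller than $1$. This yields $\le|\ell|$, up to the number of vertices versus edges of $\ell$; that discrepancy is absorbed since $|W|\le|F|$ for a loop.

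The main obstacle is bookkeeping the constants so that the exact threshold $c_0=\log(2e\Delta)+\tfrac12$ comes out. In particular, the factor $e^{a(\ell')}=e^{|\ell'|}$ costs one unit of exponent, and the tilting splits $c$. If the constants do not close exactly, I will fall back on the weaker $a(\ell)=\tfrac12|\ell|$ combined with the $\tfrac12$ slack built into $c_0$. This is presumably what that slack is for: it leaves room for the $e^{1/2}$ per edge. Since the statement is only up to $\order{\cdot}$, absorbing such constants into the prefactor is acceptable.
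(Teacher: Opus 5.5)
The paper does not prove this lemma. It imports it from Ref.~\cite{midha2025beyond} as the main input to Theorem~\ref{thm:clusterconvformal}. Your Kotecký--Preiss route would therefore be a self-contained replacement, and its architecture is sound: the union bound over $v\in A$, the tilt $|Z_{\mathbf W}|\le e^{-(c-c_0)|\mathbf W|}\,e^{-c_0|\mathbf W|}$ with $|\mathbf W|\ge m+1$, and a uniform bound on the tilted cluster sum through $v$ via a zero-activity vertex polymer.

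There is, however, a real gap in the one step that carries the content, the verification of the KP hypothesis.
\begin{itemize}
\item That hypothesis is an exact inequality, not an $\order{\cdot}$ statement. A failure by a constant factor therefore cannot be absorbed into the prefactor. Taking $c_0'>c_0$ does not help either, since it weakens the exponent below the claimed $c-c_0$.
\item With $a(\ell)=|\ell|$ and $N_k(u)\le(e\Delta)^k$, the per-vertex series is $\sum_k (e^{1/2}/2)^k$. Because $e^{1/2}/2\approx 0.82$, this is about $4.7$, and still about $3.2$ if you start at $k=3$. It is not less than $1$, so $\sum_{\ell'\not\sim\ell}\tilde z(\ell')e^{|\ell'|}\le|\ell|$ fails.
\item Your fallback $a(\ell)=\tfrac12|\ell|$ is the right choice, but as written it also does not close. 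The terms become exactly $2^{-k}$, and $\sum_{k\ge1}2^{-k}=1$ gives $|V(\ell)|$. You need $\tfrac12|\ell|$, and $|V(\ell)|=|\ell|$ for a cycle.
\end{itemize}

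The missing observation is that every loop has at least two edges (three in a simple graph), so the series starts at $k=2$. This gives
\begin{equation*}
\sum_{\ell'\not\sim\ell}\tilde z(\ell')\,e^{|\ell'|/2}\;\le\;|V(\ell)|\sum_{k\ge2}2^{-k}\;=\;\tfrac12|V(\ell)|\;\le\;\tfrac12|\ell| .
\end{equation*}
The same computation verifies the condition for the vertex polymer with $a(\{v\})=\tfrac12$. The lemma then follows with constant $\tfrac12|A|$ and exactly the exponent $(c-c_0)(m+1)$.

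A smaller point concerns your counting of loops. A spanning-tree encoding does not by itself count them, because a loop's edge set is not determined by a spanning tree. Use an edge-exploration encoding instead. Run BFS from $u$ and record one include/exclude bit per edge incident to the current vertex set. A $k$-edge loop has at most $k$ vertices, so this uses at most $(\Delta-1)k$ bits with exactly $k$ ones. Hence $N_k(u)\le\binom{(\Delta-1)k}{k}\le(e\Delta)^k$, which is the bound you assumed.
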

We will use this in the later proofs that follow.
\subsection{Cluster expansion for observables and correlation functions}
Before introducing these expansions, we define a key notation. As before, $\LL$ denotes the set of connected loops. Now, for any subset of vertices $A\subset V$, we denote $\LL_A$ to be the set of subgraphs which are min-degree two in $V/A$ but can be degree one in $A$. With abuse of notation, we still refer to `loops' in $\LL_A$. That is, $\LL_A$ is the set of strings that loop in $V/A$ but can terminate in $A$. For our purposes, $A$ will always be some local region satisfying $|A|=\order{1}$.

The following are from Ref.~\cite{midha2026}, providing a cluster expansion for local observables and correlation functions, with guarantees of convergence and clustering of correlations.  We first review the multiplicative versions of the local observable expansions. 

\begin{prop}[Cluster expansion for observables]
\label{prop:localobsexpansionformal}
Let $|\psi\rangle$ be a PEPS state on $G=(V,E)$. For a local observable $O_A$ with support on region $A\subset V$ and $\expval{O_A} \neq 0$, the expectation value admits a cluster expansion
\begin{equation} 
\expval{O_A} = \expval{O_A}_{\text{BP}} \cdot \exp\left(\sum_{\substack{\emph{connected} \, \mathbf{W} \\ (\mathbf{W}) \cap A \neq\emptyset }} \phi(\mathbf{W}) Z^{O_A}_\mathbf{W}\right),
\end{equation}
where $\expval{O_A}_{\text{BP}}$ is the BP approximation (computed using fixed point messages $\bm{\mu}_\star$), the sum is over connected clusters whose support overlaps $A$, and $\phi(\mathbf{W})$ are combinatorial coefficients depending on the interaction graph structure. We have $Z^{O_A}_\mathbf{W} := (Z^{A}_{\mathbf{W}} - Z_{\mathbf{W}})$ where $Z_\mathbf{W}$ and $Z^{A}_{\mathbf{W}}$ denote the evaluation of the clusters on tensor networks $\TT = \inner{\psi}{\psi}$ and $\TT^A = \expect{\psi}{O_A}{\psi}$ respectively after suitable BP normalization.
\end{prop}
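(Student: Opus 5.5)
The statement is the multiplicative cluster expansion for local observables (Proposition~\ref{prop:localobsexpansionformal}). I would derive it from the connected-cluster expansion of Proposition~\ref{prop:clusterexpformal}, applied twice and then subtracted.

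\emph{Step 1: expand both networks.} I apply Proposition~\ref{prop:clusterexpformal} to $\TT=\inner{\psi}{\psi}$. I also apply it to the network $\TT^A=\expect{\psi}{O_A}{\psi}$, where the tensors on $A$ are replaced by the operator-inserted double-layer tensors $T_v\star O_A\star\bar T_v$. For $\TT^A$ I keep the \emph{same} fixed-point messages $\bm{\mu}_\star$ on every edge. The BP decomposition (insert $\mu_{\vec e}\otimes\mu_{\overleftarrow e}$ plus the excitation projector $\Pi^\perp$ on each edge) is an exact resolution of the identity for any choice of messages with $\Tr(\mu_{\vec e}\mu_{\overleftarrow e})\neq0$. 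So the loop expansion of Proposition~\ref{prop:loopseriesformal} still holds for $\TT^A$, with BP value $\TT^A_{\rm BP}$ and activities $Z^A_\ell$. The difference is that at vertices of $A$ the BP equations are no longer satisfied, so single-excitation (degree-one) terms at $A$ no longer vanish. This is why the excitation set must be enlarged to $\LL_A$, the strings that may terminate in $A$. Both expansions can then be indexed by the same polymer set $\LL_A$. For $\TT$, the activities of polymers with a dangling end in $A$ vanish because $\bm\mu_\star$ is a genuine fixed point of the unperturbed network.

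\emph{Step 2: log-difference.} I take logs and subtract:
\[
\log\langle O_A\rangle=\log\TT^A-\log\TT=\log\frac{\TT^A_{\rm BP}}{\TT_{\rm BP}}+\sum_{\text{connected }\mathbf W}\phi(\mathbf W)\bigl(Z^A_{\mathbf W}-Z_{\mathbf W}\bigr).
\]
Here $\TT^A_{\rm BP}/\TT_{\rm BP}=\langle O_A\rangle_{\rm BP}$, because the BP value factorizes over vertices and only the factors $Z^{(v)}$ with $v\in A$ differ. For the same reason, after the BP normalization of Definition~\ref{def:BPnormalized} each loop activity is computed from local data. So $Z^A_\ell=Z_\ell$ whenever $\supp(\ell)\cap A=\emptyset$, and every cluster avoiding $A$ satisfies $Z^A_{\mathbf W}=Z_{\mathbf W}$ and cancels. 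Exponentiating gives the claimed formula with $Z^{O_A}_{\mathbf W}=Z^A_{\mathbf W}-Z_{\mathbf W}$.

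\emph{Step 3: justify the manipulations.} The formal identities require the logarithms to be well defined and the cluster series to converge absolutely for both networks. For $\TT$ this is Theorem~\ref{thm:clusterconvformal} under decay of loops. For $\TT^A$ this is the main obstacle. The inserted operator with $\|O_A\|\le1$ changes only $\order{|A|}=\order{1}$ vertex tensors, so each activity $Z^A_\ell$ differs from a decaying bound by at most a constant prefactor $C^{|A|}$. I would absorb this prefactor into a modified Kotecký–Preiss criterion: polymers touching $A$ get an extra weight, and since there are finitely many entry points, the convergence condition still holds with $c$ reduced by an $\order{1}$ amount. The assumption $\langle O_A\rangle\neq0$, equivalently $\langle O_A\rangle_{\rm BP}\neq0$ in the perturbative regime, makes the logarithm of $\TT^A$ meaningful. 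Where $\TT^A$ could have complex values I would use a branch defined by continuity in $O_A$, interpolating from $\mathbbm{1}$, for which both sides agree trivially.
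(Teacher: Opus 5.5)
Your Steps 1--2 are the right argument for the formal identity. The paper gives no proof of this proposition; it imports it from Ref.~\cite{midha2026}. Your device, however, is exactly the one the paper uses in the alternative proof of Theorem~\ref{thm:observable_locality}: keep the unperturbed messages $\bm{\mu}_\star$ on $\TT^A$, enlarge the polymer set to $\LL_A$, and cancel every cluster whose support misses $A$.

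Two small repairs are needed there.
\begin{itemize}
\item If $O_A$ does not factor over the vertices of $A$, then vertex-disjoint polymers meeting different vertices of $A$ do not factorize. Treat $A$ as one block, with all polymers meeting $A$ mutually incompatible. Then use a block factor instead of the vertex-wise factorization of $\TT^A_{\mathrm{BP}}$ in Step~2.
\item You never need $\log\langle O_A\rangle$ or a branch choice. Under absolute convergence, $1+\sum_\Gamma\prod_{\ell\in\Gamma}Z_\ell=\exp\bigl(\sum_{\mathbf W}\phi(\mathbf W)Z_{\mathbf W}\bigr)$ holds as an identity of complex numbers, so you can simply divide the two exponentiated expansions. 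This matters because your interpolation from $\mathbbm{1}$ is unreliable: $1-t+t\langle O_A\rangle_{\mathrm{BP}}$ vanishes at some $t\in(0,1)$ whenever $\langle O_A\rangle_{\mathrm{BP}}<0$.
\end{itemize}

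The genuine gap is in Step~3, and it has two parts.

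First, $\langle O_A\rangle\neq0$ is \emph{not} equivalent to $\langle O_A\rangle_{\mathrm{BP}}\neq0$, even perturbatively. Take $O_A=X-\langle X\rangle_{\mathrm{BP}}\mathbbm{1}$ for any $X$ with $\langle X\rangle\neq\langle X\rangle_{\mathrm{BP}}$, which is generic once loops contribute. Then $\langle O_A\rangle_{\mathrm{BP}}=0$ while $\langle O_A\rangle\neq0$, and the multiplicative formula cannot hold: its right-hand side is undefined or zero. So nonvanishing of $\langle O_A\rangle_{\mathrm{BP}}$ is an extra hypothesis. You need it already in Step~1 to write $\TT^A=\TT^A_{\mathrm{BP}}(1+\cdots)$; the statement's own hypothesis is imprecise in the same way.

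Second, even when $\langle O_A\rangle_{\mathrm{BP}}\neq0$, your prefactor is not a constant. Setting the BP value of $\TT^A$ to one divides the $A$-block by its $O_A$-dressed BP value. Hence every activity $Z^A_\ell$ with $\ell$ meeting $A$ carries a factor of order $\|O_A\|/|\langle O_A\rangle_{\mathrm{BP}}|$ relative to the norm-network normalization. Your prefactor $C$ is therefore not controlled by $\|O_A\|\le1$ and $|A|=\order{1}$ alone. Absorbing it into the decay rate requires $\log C<c-c_0$, which fails when $|\langle O_A\rangle_{\mathrm{BP}}|$ is small. So convergence on $\ZZ^A$ is not inherited from convergence on $\ZZ$.

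The paper sidesteps this by treating the proposition as a formal identity. It then assumes decay of loops in $\LL_A$ on both $\ZZ$ and $\ZZ^A$ as a hypothesis of Theorem~\ref{thm:localobsalgorithmformal}. You should do the same. Alternatively, add an explicit lower bound $|\langle O_A\rangle_{\mathrm{BP}}|\ge\kappa\|O_A\|$, with $\kappa$ chosen so that the resulting $\log C=\log\order{1/\kappa}$ stays below the margin $c-c_0$.
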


Given decay of loops (strings in $\LL_A$), truncating this series at a finite cluster order $m$ enjoys the following convergence guarantees.

\begin{theorem}\label{thm:localobsalgorithmformal}
    Given a local region $A \subset V$, assume decay of loops in $\LL_A$ on both $\ZZ=\inner{\psi}{\psi}$ and $\ZZ^A = \expect{\psi}{O_A}{\psi}$ for a local observable with $\expval{O_A}\neq 0$. Then, truncating the cluster expansion for local expectation values  
    \begin{equation}\label{eq:OA-cc} 
    \expval{O_A}_m = \expval{O_A}_{\emph{BP}} \cdot \exp{\sum_{\substack{\emph{conn} \, \mathbf{W} \leftarrow \LL_A \\ 
    \emph{supp}(\mathbf{W}) \cap A \neq\emptyset \\ |\mathbf W| \leq m}} \phi_{\mathbf{W}} (Z^A_{\mathbf{W}} - Z_{\mathbf{W}})}
\end{equation}
    leads to a relative error $\delta_m = |\expval{O_A} - \expval{O_A}_m|/|\expval{O_A}|$ bounded by 
    \begin{equation}
        \delta_m \leq \order{|A| e^{-(c-c_0)(m+1)} }
    \end{equation}
    where $d=c-c_0 = \order{1}$.
\end{theorem}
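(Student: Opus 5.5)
We prove the statement by mirroring the proof of Theorem~\ref{thm:clusterconvformal}, now applied to the observable-modified networks. We start from the exact expansion of Proposition~\ref{prop:localobsexpansionformal}:
\begin{equation*}
\expval{O_A}=\expval{O_A}_{\text{BP}}\exp(S),\qquad
S:=\sum_{\substack{\text{conn}\,\mathbf W\leftarrow\LL_A\\ \text{supp}(\mathbf W)\cap A\neq\emptyset}}\phi_{\mathbf W}\bigl(Z^A_{\mathbf W}-Z_{\mathbf W}\bigr).
\end{equation*}
Let $S_m$ denote the same sum restricted to $|\mathbf W|\le m$, so that $\expval{O_A}_m=\expval{O_A}_{\text{BP}}e^{S_m}$ exactly. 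The relative error is then
\begin{equation*}
\delta_m=\bigl|1-e^{S_m-S}\bigr|.
\end{equation*}
The proof therefore reduces to bounding the tail $|S-S_m|$. Once $|S-S_m|\le 1$, the elementary inequality $|1-e^{x}|\le e|x|$ for $|x|\le1$ converts that tail bound into the claimed bound on $\delta_m$.

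To bound the tail, we use the triangle inequality:
\begin{equation*}
|S-S_m|\le\sum_{|\mathbf W|>m,\ \text{supp}\cap A\neq\emptyset}|\phi_{\mathbf W}|\bigl(|Z^A_{\mathbf W}|+|Z_{\mathbf W}|\bigr).
\end{equation*}
Each of the two resulting sums has exactly the form controlled by Lemma~\ref{lem:cluster_tail}. The only change is that the polymer set is $\LL_A$ rather than $\LL$, and the activities are those of $\ZZ$ or $\ZZ^A$ respectively. The hypothesis supplies decay of loops $|Z_\ell|,|Z^A_\ell|\le e^{-c|\ell|}$ on $\LL_A$ for both networks.

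We then rerun the argument of Lemma~\ref{lem:cluster_tail} and check that it still goes through for $\LL_A$. That argument has two ingredients:
\begin{enumerate}
\item[(a)] a Kotecký--Preiss-type criterion bounding $\sum|\phi_{\mathbf W}|\prod|Z_{\ell_i}|e^{a|\ell_i|}$ over connected clusters containing a fixed polymer;
\item[(b)] a counting bound: the number of polymers of weight $k$ incident to a fixed vertex is at most $(e\Delta)^k$, say, via counting connected subgraphs of a degree-$\Delta$ graph.
\end{enumerate}
Strings in $\LL_A$ that terminate in $A$ are still connected edge sets, so the same connected-subgraph count in (b) applies verbatim. Because we require $\text{supp}(\mathbf W)\cap A\neq\emptyset$, we root each cluster at one of the $|A|$ vertices, which produces the factor $|A|$.

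Next we extract the tail decay. We split $e^{-c|\ell|}=e^{-(c-c_0)|\ell|}e^{-c_0|\ell|}$. The factor $e^{-c_0|\ell|}$ is used to verify the KP criterion, with $c_0=\log(2e\Delta)+\tfrac12$ absorbing the entropy count and the KP weight. The leftover factor $\prod_i e^{-(c-c_0)\alpha_i|\ell_i|}=e^{-(c-c_0)|\mathbf W|}\le e^{-(c-c_0)(m+1)}$ gives the decay in $m$. Combining these yields
\begin{equation*}
|S-S_m|\le 2C|A|e^{-(c-c_0)(m+1)}.
\end{equation*}
For $m\gtrsim\log|A|/(c-c_0)$ this is at most $1$, and the step in the first paragraph then gives $\delta_m=\order{|A|e^{-(c-c_0)(m+1)}}$. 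For smaller $m$ the bound is vacuous up to constants.

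The main obstacle is justifying that the observable network $\ZZ^A$, after BP normalization, has a well-defined expansion whose activities are genuinely controlled by the assumed decay on $\LL_A$. Terminating strings appear because, inside $A$, $O_A$ breaks the fixed-point property of the messages, so degree-one excitations at $A$ no longer vanish. We take this as part of Proposition~\ref{prop:localobsexpansionformal} and the hypothesis. We then only need to check that the KP/counting argument is insensitive to these boundary-terminating polymers, which holds because the counting bound uses connectivity alone, not the min-degree-two property.
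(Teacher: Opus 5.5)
Your proposal is correct and follows the same route the paper relies on. The paper imports this theorem from Ref.~\cite{midha2026} without reproving it, and its own analogous estimates (e.g.\ the alternative proof of Theorem~\ref{thm:observable_locality}) likewise reduce the error to the cluster tail, bound that tail for both $\ZZ$ and $\ZZ^A$ via Lemma~\ref{lem:cluster_tail} on $\LL_A$, and exponentiate. The one step to tighten is the small-$m$ case: handle all $m$ at once with $|e^{x}-1|\le |x|e^{|x|}$ and the uniform bound $|S-S_m|\le 2C|A|e^{-(c-c_0)}$, which is $\order{1}$ because the paper assumes $|A|=\order{1}$. Calling that regime "vacuous" is not enough, since a relative error can exceed $1$.
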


We also have the following additive version of the expansion for local observables. 

\begin{prop}\label{prop:localobsfreeformal}
Let $\psi$ be a state represented as a PEPS on $G=(V,E)$. For $A\subset V$, let $\ZZ_{\lambda}^A$ denote the TN for $\expect{\psi}{\lambda O_A}{\psi}$ with $\emph{supp}(O_A) \subseteq A$, while $\ZZ^A$, $\ZZ$ denote the TNs for $\expect{\psi}{O_A}{\psi}$ and $\inner{\psi}{\psi}$, respectively. Given a cluster $\mathbf{W}=\{(l_1,\alpha_1),(l_2,\alpha_2),...\}$, denote the cluster correction on $\ZZ_\lambda^A$, $\ZZ^A$, and $\ZZ$, by $\loopZ{\mathbf{W}}, Z_{\mathbf{W}}^A, Z_{\mathbf{W}}$, respectively. Also denote $\mathbf{W}_A$ the subset of loops within $\mathbf{W}$ which intersect $A$. Then
\begin{equation}
    \expval{O}_A = \partial_\lambda \log\ZZ_{\lambda}^A\mid_{\lambda = 0} \, = \expval{O_A}_{BP} + \sum_{\substack{\mathrm{conn} \, \mathbf{W} \leftarrow \LL_A \\ 
    \mathrm{supp}(\mathbf{W}) \cap A \neq\emptyset}} \phi_{\mathbf{W}} Z_{\mathbf{W}} \sum_{l \in \mathbf{W}_A} \alpha_l \left[\frac{Z_l^A}{Z_l} - \expval{O_A}_{BP} \right].
\end{equation}
\end{prop}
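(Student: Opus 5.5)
The plan is to treat $\log \ZZ^A_\lambda$ as a cluster expansion whose activities depend analytically on $\lambda$, and to differentiate term by term at $\lambda=0$. I read $\ZZ^A_\lambda$ as the network in which the vertex tensors in $A$ are deformed linearly in $\lambda$, i.e. the double-layer tensors on $A$ become $T_A\star\bar T_A + \lambda\,(T_A\star O_A\star\bar T_A)$. This is equivalent to inserting $\mathbbm{1}+\lambda O_A$, so that $\ZZ^A_\lambda = \ZZ + \lambda \ZZ^A$ and $\partial_\lambda \log \ZZ^A_\lambda|_{\lambda=0} = \ZZ^A/\ZZ = \expval{O_A}$. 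This identity is the first equality of the statement and is immediate.

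\textbf{Step 1 (expansion at fixed messages).} I will expand $\ZZ^A_\lambda$ using the \emph{unperturbed} fixed points $\bm\mu_\star$ (the $\lambda=0$ BP fixed point) on every edge, rather than re-solving BP for each $\lambda$. The edge decomposition $\mathbbm 1 = (\text{BP projector}) + \Pi^\perp$ is exact for any messages. The only consequence of using messages that are not fixed points of the deformed network is the following. At vertices outside $A$ a single excitation still vanishes, so the surviving excitations are min-degree two in $V\setminus A$. They may, however, have degree one at vertices of $A$. This is exactly the family $\LL_A$, and repeating the argument of Prop.~\ref{prop:loopseriesformal} and Prop.~\ref{prop:clusterexpformal} with $\LL$ replaced by $\LL_A$ gives
\begin{equation*}
\log \ZZ^A_\lambda = \log Z_{BP}(\lambda) + \sum_{\mathrm{conn}\,\mathbf W\leftarrow\LL_A} \phi(\mathbf W)\, Z_{\mathbf W}(\lambda).
\end{equation*}
Here $\phi$ depends only on the incompatibility structure, hence not on $\lambda$.

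\textbf{Step 2 (differentiating each piece).} The BP value factorizes over vertices, and only the vertices in $A$ depend on $\lambda$. Hence $Z_{BP}(\lambda) = Z_{BP}(1+\lambda\expval{O_A}_{BP})$ and $\partial_\lambda\log Z_{BP}|_0 = \expval{O_A}_{BP}$. Because the loop activities are defined on the BP-normalized network, a loop $l$ missing $A$ has a $\lambda$-independent activity. A loop meeting $A$ has
\begin{equation*}
Z_l(\lambda) = \frac{Z_l + \lambda Z_l^A}{1+\lambda\expval{O_A}_{BP}}, \qquad \partial_\lambda Z_l(\lambda)\big|_0 = Z_l\left[\frac{Z_l^A}{Z_l} - \expval{O_A}_{BP}\right].
\end{equation*}
Applying the product rule to $Z_{\mathbf W}(\lambda)=\prod_i Z_{l_i}(\lambda)^{\alpha_i}$ then gives $\partial_\lambda Z_{\mathbf W}|_0 = Z_{\mathbf W}\sum_{l\in\mathbf W_A}\alpha_l[Z^A_l/Z_l - \expval{O_A}_{BP}]$. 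Clusters not touching $A$ contribute zero, which produces the restriction $\mathrm{supp}(\mathbf W)\cap A\neq\emptyset$. For strings that genuinely terminate in $A$ we have $Z_l=0$ at $\lambda=0$. In that case the expression $Z_{\mathbf W}\,Z^A_l/Z_l$ is to be read as the product with the factor $Z_l$ replaced by $Z^A_l$, which is what the product rule literally yields.

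\textbf{Step 3 (justifying termwise differentiation) --- main obstacle.} I expect this step to carry the real difficulty. I will show that the decay of loops in $\LL_A$, assumed on both $\ZZ$ and $\ZZ^A$, persists uniformly for complex $|\lambda|\le\lambda_0$ with some small constant $\lambda_0$. This holds because $Z_l(\lambda)$ is a ratio of affine functions of $\lambda$, with $|1+\lambda\expval{O_A}_{BP}|$ bounded below for such $\lambda$. Consequently the cluster series converges absolutely and uniformly on a disc, by the tail bound of Lemma~\ref{lem:cluster_tail} applied with $\LL_A$. By Weierstrass' theorem the sum is then analytic, and it may be differentiated term by term, giving the stated formula.
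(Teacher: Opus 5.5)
The paper gives no proof of this proposition. It is quoted from Ref.~\cite{midha2026} together with the neighbouring expansions, so there is no in-paper argument to compare with. Your derivation is a correct, self-contained route. It rests on the same device the paper uses in the alternative proof of Theorem~\ref{thm:observable_locality}: keep the undeformed fixed-point messages on the deformed network, so that single excitations survive only at $A$ and the polymers are exactly $\LL_A$. Compared with the bare citation, your route makes explicit three things the statement leaves implicit:
the product-rule meaning of $Z_{\mathbf W}Z^A_l/Z_l$ for strings ending in $A$, where $Z_l=0$ (the paper itself remarks that such activities vanish on $\ZZ$);
the fact that the identity needs a convergence hypothesis (decay of loops in $\LL_A$) to mean anything;
and the analyticity argument on a small complex disc that licenses term-by-term differentiation of the cluster series.

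Three conventions should be stated rather than left implicit.

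\emph{Normalization of $Z^A_l$.} Your formula $Z_l(\lambda)=(Z_l+\lambda Z^A_l)/(1+\lambda\langle O_A\rangle_{\mathrm{BP}})$ holds only if $Z^A_l$ is the activity on $\ZZ^A$ normalized by the BP vertex values of $\ZZ$. With the self-normalization natural for the multiplicative expansion of Prop.~\ref{prop:localobsexpansionformal}, the bracket would instead be $\langle O_A\rangle_{\mathrm{BP}}(Z^A_l/Z_l-1)$. So the stated identity forces the former convention, which is also the only one that makes sense when $\langle O_A\rangle_{\mathrm{BP}}=0$.

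\emph{The deformation.} Replacing $\langle\psi|\lambda O_A|\psi\rangle$ by insertion of $\mathbbm{1}+\lambda O_A$ is necessary, not cosmetic: the literal reading gives $\partial_\lambda\log\ZZ^A_\lambda=1/\lambda$. Your choice agrees to first order with the $e^{\lambda O_A}$ deformation of Prop.~\ref{prop:correlatorexpansionformalderivative}. It is also the better choice for Step~3, where $e^{\lambda O_A}$ would require loop decay with arbitrary powers of $O_A$ inserted.

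\emph{Multi-site $A$.} When $|A|>1$, the deformation $\mathbbm{1}+\lambda O_A$ couples the sites of $A$ even for product $O_A$. The factorization over vertex-disjoint polymers in Step~1 therefore requires treating $A$ as a single vertex for connectivity and compatibility, so that any two polymers touching $A$ are incompatible. With that convention your Steps~2 and~3 go through unchanged.
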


Given decay of loops, this also enjoys similar guarantees of convergence when $A$ is a local region.

\begin{theorem}\label{thm:localobsalgorithmformal-derivative}
    Given a local region $A \subset V$, assume decay of loops in $\LL_A$ on both $\ZZ=\inner{\psi}{\psi}$ and $\ZZ^A = \expect{\psi}{O_A}{\psi}$. Then, truncating the cluster expansion (derivative version) for local expectation values (\autoref{prop:localobsfreeformal}) of an observable with with $\|O_A\|=1$ in a region $A\subset V$ leads to an additive error bounded by 
    \begin{equation}
        |\expval{O_A} - \expval{O_A}_m| \leq \order{m |A|e^{-d(m+1)}},
    \end{equation}
    where $d=c-c_0$.
\end{theorem}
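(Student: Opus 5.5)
The plan is to bound the discarded tail of the series in \autoref{prop:localobsfreeformal} directly. It has the same structure as the tail controlled in \autoref{lem:cluster_tail}, except for one extra factor linear in the cluster size, and that factor produces the $m$ in the final bound. The truncated expression $\expval{O_A}_m$ keeps $\expval{O_A}_{BP}$ and all connected clusters with $\mathrm{supp}(\mathbf W)\cap A\neq\emptyset$ and $|\mathbf W|\le m$. The error is therefore
\begin{equation}
|\expval{O_A}-\expval{O_A}_m| \le \sum_{\substack{\mathrm{conn}\,\mathbf W\leftarrow\LL_A\\ \mathrm{supp}(\mathbf W)\cap A\neq\emptyset\\ |\mathbf W|>m}} |\phi_{\mathbf W}|\,\Big|\sum_{l\in\mathbf W_A}\alpha_l\big(Z_{\mathbf W}\tfrac{Z^A_l}{Z_l}-\expval{O_A}_{BP}Z_{\mathbf W}\big)\Big|.
\end{equation}
Absolute convergence, which justifies this rearrangement, will follow from the same estimates.

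First, I will rewrite each summand without dividing by $Z_l$. The product $Z_{\mathbf W}Z^A_l/Z_l$ is the cluster evaluation $Z_{\mathbf W^{(l\to A)}}$, in which one copy of $l$ is evaluated on $\ZZ^A$ and the other loops on $\ZZ$. Decay of loops on both networks, $|Z_l|,|Z^A_l|\le e^{-c|l|}$ for $l\in\LL_A$, then gives $|Z_{\mathbf W^{(l\to A)}}|\le e^{-c|\mathbf W|}$ and $|Z_{\mathbf W}|\le e^{-c|\mathbf W|}$. Next, I will bound $|\expval{O_A}_{BP}|\le\|O_A\|=1$. This holds because the BP approximation is the expectation of $O_A$ in a normalized positive operator built from the positive fixed-point messages on $\partial A$ and the tensors in $A$, and that operator has trace one after the BP normalization. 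Since $\sum_l\alpha_l\le n_{\mathbf W}\le|\mathbf W|$ (each loop has at least one edge), each summand is at most $2|\mathbf W|\,|\phi_{\mathbf W}|\,e^{-c|\mathbf W|}$.

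The remaining step is the combinatorial sum $\sum_{k>m} k\sum_{|\mathbf W|=k,\ \mathbf W\cap A\neq\emptyset}|\phi_{\mathbf W}|e^{-ck}$. I will reuse the Kotecký--Preiss/tree-graph counting behind \autoref{lem:cluster_tail} and \autoref{thm:clusterconvformal}, applied to strings in $\LL_A$. That argument in fact bounds the contribution at each fixed weight $k$ by $\order{|A|e^{-(c-c_0)k}}$, so I will extract this per-shell statement rather than only the summed tail. Multiplying by $k$ and summing gives $|A|\sum_{k>m}k\,e^{-dk}=\order{(m+1)|A|e^{-d(m+1)}}$ for $d=c-c_0>0$ a constant, which is the claim.

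I expect the main obstacle to be this last point. If the lemma is only available in its summed form, the extra factor $|\mathbf W|$ must be absorbed. The first thing I would try is to run the counting with $e^{-c|\mathbf W|}$ replaced by $|\mathbf W|e^{-c|\mathbf W|}$, which the tree-graph bound handles because its input is just a per-cluster activity. As a fallback, I would write $k e^{-ck}\le e^{-(c-\eta)k}\,\sup_k k e^{-\eta k}$ and apply \autoref{lem:cluster_tail} at rate $c-\eta$, though this costs a slightly weaker exponent.
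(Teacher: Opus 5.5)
The paper does not prove this statement. It is quoted from Ref.~\cite{midha2026} without proof, so there is no in-paper argument to compare against, and your proposal stands as a self-contained derivation. As far as I can check, it is correct: tail sum, per-cluster bound $2|\mathbf W|\,|\phi_{\mathbf W}|\,e^{-c|\mathbf W|}$, then shell-by-shell counting over $\LL_A$.

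Two details are handled well. First, reading $Z_{\mathbf W}Z^A_l/Z_l$ as the evaluation with one copy of $l$ moved to $\ZZ^A$ is more than cosmetic. For strings $l\in\LL_A\setminus\LL$ the activity $Z_l$ on $\ZZ$ vanishes, as the paper itself remarks, so the literal ratio is $0/0$. Second, $|\expval{O_A}_{BP}|\le\|O_A\|$ follows from positivity of the boundary messages, and the paper uses the same fact in the proof of \autoref{thm:observable_locality}.

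On the step you flag as the main obstacle, the paper already assumes the per-weight-shell form of the counting bound. The near-cluster estimate in the proof of \autoref{thm:observable_locality} uses
$\sum_{|\mathbf W|=n,\ \mathrm{supp}(\mathbf W)\cap B\neq\emptyset}|\phi(\mathbf W)|e^{-cn}\le\order{e^{-(c-c_0)n}}$
shell by shell, inside an $n$-weighted sum of exactly the kind you need. So your main route matches how the paper deploys that machinery.

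Your fallback also costs nothing in the exponent if you choose $\eta=1/m$. Then $k e^{-k/m}\le m/e$. Applying \autoref{lem:cluster_tail} at rate $c-1/m$ gives
$\order{m|A|e^{-(d-1/m)(m+1)}}=\order{m|A|e^{-d(m+1)}}$,
since $e^{(m+1)/m}\le e^2$. The lemma's constant stays uniform once $m\ge 2/d$, because then $d-1/m\ge d/2$. The finitely many smaller $m$ are absorbed into the constant, because the full series is already $\order{|A|}$. This recovers the stated bound exactly, not just up to a slightly weaker rate.
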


The additive version of the expansion can straightforwardly be generalized to one for connected correlation functions, as summarized in the following.

\begin{prop}\label{prop:correlatorexpansionformalderivative}
Let $\psi$ be a state represented as a PEPS on $G=(V,E)$. Given local disjoint regions $A_i \subset V$, and observables $O_{A_i}$ satisfying $\emph{supp}(O_{A_i}) \subseteq A_i$, we have the following cluster expansion for the $p-$point connected correlation functions,

\begin{equation}\label{eq:conn-corr-formal}
   \expval{O_{A_1}\dots O_{A_p}}_c = \sum_{\substack{\emph{conn.}\, \mathbf{W} \leftarrow \LL_{\mathbf{A}} \\
   \emph{supp}(\mathbf{W}) \cap A_i \neq \emptyset}} \phi_{\mathbf{W}} \partial_{\bm \lambda} Z^{\mathbf{A}}_{\mathbf{W},\bm{\lambda}}\Big|_{\bm{\lambda}=0}
\end{equation}
where we denote $\bm{\lambda} := (\lambda_1,\dots,\lambda_p)$, $\partial_{\bm{\lambda}} := \prod_i \partial_{\lambda_i}$, $\mathbf{A} = (A_1,\dots,A_p)$ and the cluster evaluation $Z^{\mathbf{A}}_{\mathbf{W},\bm{\lambda}}$ is done over the network $\ZZ_{\bm{\lambda}}^\mathbf{A} := \expect{\psi}{\exp(\sum_i O_{A_i}\lambda_i)}{\psi}$. The clusters $\mathbf{W}$ are constructed over $\LL_\mathbf{A}$, the set of connected sub-graphs required to be degree $\geq 2$ everywhere except in all regions $A_i$.

\end{prop}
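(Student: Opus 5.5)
The plan is to realize the connected correlator as a mixed derivative of a cumulant generating function, and then expand that generating function with the cluster expansion of \autoref{prop:clusterexpformal}, adapted to the modified network. Since the regions $A_i$ are disjoint, $\exp(\sum_i \lambda_i O_{A_i}) = \prod_i e^{\lambda_i O_{A_i}}$. Hence $\ZZ^{\mathbf{A}}_{\bm\lambda}$ is the norm network $\ZZ$ with each tensor pair in $A_i$ replaced by a local insertion depending only on $\lambda_i$, and $\ZZ^{\mathbf{A}}_{\bm 0} = \ZZ$. The first step is the standard cumulant identity
\begin{equation*}
\expval{O_{A_1}\cdots O_{A_p}}_c = \partial_{\bm\lambda} \log \ZZ^{\mathbf{A}}_{\bm\lambda}\Big|_{\bm\lambda=0}.
\end{equation*}
This holds because normalizing by $\ZZ$ only subtracts the constant $\log\ZZ$, which carries no $\bm\lambda$-dependence.

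Next I would expand $\log \ZZ^{\mathbf{A}}_{\bm\lambda}$, keeping the fixed points $\bm\mu_\star$ of the unperturbed network on every edge, so that all messages are independent of $\bm\lambda$. On edges away from $\mathbf{A}$ the BP projector still annihilates single excitations, because the fixed-point equation holds there. At vertices in some $A_i$, however, the inserted operator spoils the fixed-point relation, so an excitation may enter such a vertex along a single edge. Repeating the loop-expansion derivation of \autoref{prop:loopseriesformal} therefore yields a polymer expansion over compatible families drawn from $\LL_{\mathbf{A}}$, the subgraphs of minimum degree two outside $\bigcup_i A_i$. The BP prefactor is $\prod_v Z^{(v)}_{\bm\lambda}$, where each $Z^{(v)}_{\bm\lambda}$ depends on at most one $\lambda_i$. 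The combinatorial step of \autoref{prop:clusterexpformal} (Möbius inversion on the interaction graph) is purely algebraic, so it carries over verbatim and gives
\begin{equation*}
\log\ZZ^{\mathbf{A}}_{\bm\lambda} = \sum_v \log Z^{(v)}_{\bm\lambda} + \sum_{\text{conn. } \mathbf{W}\leftarrow\LL_{\mathbf{A}}} \phi_{\mathbf{W}} Z^{\mathbf{A}}_{\mathbf{W},\bm\lambda}.
\end{equation*}

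Now take $\partial_{\bm\lambda}$. For $p\ge 2$, each BP term depends on at most one $\lambda_i$, so its mixed derivative vanishes. A cluster correction $Z^{\mathbf{A}}_{\mathbf{W},\bm\lambda}$ is a product of loop contractions, and it depends on $\lambda_i$ only if $\supp(\mathbf{W})\cap A_i\neq\emptyset$. Thus only clusters meeting every region survive, which is exactly the constraint in \eqref{eq:conn-corr-formal}. The case $p=1$ retains the BP term and reduces to \autoref{prop:localobsfreeformal}, which fixes the convention there.

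The main obstacle is justifying termwise differentiation, since the identity is only formal unless the series converges. I would argue as follows. Each $Z^{\mathbf{A}}_{\mathbf{W},\bm\lambda}$ is entire in $\bm\lambda$. Under the assumed decay of loops in $\LL_{\mathbf{A}}$, the bound $|Z_\ell|\le e^{-c|\ell|}$ persists, with a slightly worse constant, for complex $\bm\lambda$ in a small polydisc around $0$, because the insertions $e^{\lambda_i O_{A_i}}$ have norm at most $e^{|\lambda_i|}$ and touch only $\order{1}$ vertices. The tail estimate of \autoref{lem:cluster_tail} then gives uniform absolute convergence on that polydisc. By Weierstrass' theorem the sum is analytic there and may be differentiated term by term, which completes the argument.
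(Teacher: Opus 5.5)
Your argument is correct and follows essentially the route the paper intends: the paper gives no separate proof but imports the statement from \cite{midha2026} as the direct generalization of the derivative expansion of Proposition~\ref{prop:localobsfreeformal}. That route writes the connected correlator as $\partial_{\bm\lambda}\log\ZZ^{\mathbf A}_{\bm\lambda}|_{\bm\lambda=0}$, expands with the unperturbed messages (which is precisely why strings may terminate in the $A_i$ and $\LL_{\mathbf A}$ appears), and notes that only clusters meeting every $A_i$ survive the mixed derivative; your remark that the BP term survives for $p=1$ is also accurate. Your analytic add-on goes beyond what this formal statement needs, and its key claim should rest not on the insertions $e^{\lambda_i O_{A_i}}$ being bounded and local (loop activities are global contractions, and those in $\LL_{\mathbf A}\setminus\LL$ vanish identically at $\bm\lambda=0$) but on the vertex-by-vertex bound of Proposition~\ref{prop:loop_decay}, in which the $\order{1}$ vertices carrying the insertions contribute only bounded factors.
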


This shows that the cluster expansion for $p-$point connected correlation functions must involve clusters intersecting all the local regions of concern.

Finally, we recall from \cite{midha2026} decay of loops in $\LL_{AB}$ for two local disjoint regions $A$ and $B$ suffices to show clustering of connected correlators $\expval{O_AO_B}_c$ for bounded local operators $O_{A(B)}$.

\begin{theorem}\label{thm:correlatorboundformal}
Decay of loops in $\LL_{AB}$ on the networks $\ZZ^{X}$ for $X\in \{\emptyset,A,B,AB\}$ ensures that bipartite connected correlators $\expval{O_AO_B}_c := \expval{O_AO_B} - \expval{O_A} \expval{O_B}$ for bounded local observables $O_{A(B)}$ with $\emph{supp}(O_{A(B)})\subseteq A(B)$ and $\|O_{A(B)}\|=1$ cluster, 
\[
\Bigl|\expval{O_AO_B}_c\Bigr| \leq \order{e^{-d(A,B) / \xi} }
\]
for a finite correlation length $\xi \le \order{(1/(c-c_0))}$ and $d(A,B)$ being the graph distance.
\end{theorem}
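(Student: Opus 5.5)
The plan is to reduce the connected correlator to cluster contributions that must simultaneously touch $A$ and $B$, and then bound these by the cluster tail estimate. I would start from \autoref{prop:correlatorexpansionformalderivative} with $p=2$. It writes
\[
\expval{O_AO_B}_c=\sum_{\substack{\mathrm{conn}\,\mathbf{W}\leftarrow\LL_{AB}\\ \mathrm{supp}(\mathbf{W})\cap A\neq\emptyset,\ \mathrm{supp}(\mathbf{W})\cap B\neq\emptyset}}\phi_{\mathbf{W}}\,\partial_{\lambda_A}\partial_{\lambda_B}Z^{AB}_{\mathbf{W},\bm\lambda}\Big|_{\bm\lambda=0}.
\]
The key geometric observation is that a connected cluster whose support meets both $A$ and $B$ contains a path of edges joining them. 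Its interaction graph is connected, consecutive loops share edges or vertices, and each loop is itself connected. Hence $|\mathbf{W}|\ge d(A,B)$. Every term in the sum therefore has weight at least $R:=d(A,B)$, and the whole correlator is a tail sum over clusters of weight $\ge R$.

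The second step is to control the $\lambda$-derivatives. Each cluster correction is a product of loop corrections on $\ZZ^{AB}_{\bm\lambda}$, so by the product rule the derivative lands on at most two loops: one touching $A$ and one touching $B$, or a single loop touching both. For a loop $\ell$ touching $A$, $\partial_{\lambda_A}Z_{\ell,\bm\lambda}|_{0}$ is the loop tensor with $O_A$ inserted minus the BP-normalization derivative, as in the bracket $Z^A_\ell/Z_\ell-\expval{O_A}_{BP}$ of \autoref{prop:localobsfreeformal}.

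The assumed decay of loops on $\ZZ^X$ for $X\in\{\emptyset,A,B,AB\}$, together with $\|O_A\|=\|O_B\|=1$, bounds each such derivative by $\order{1}\cdot e^{-c|\ell|}$. The constant absorbs $\expval{O}_{BP}$, which is bounded by $1$ in magnitude. Thus $|\partial_{\lambda_A}\partial_{\lambda_B}Z_{\mathbf{W},\bm\lambda}|\le C\,n_{\mathbf{W}}^2\prod_i e^{-c\alpha_i|\ell_i|}$, and the polynomial factor $n_{\mathbf{W}}^2\le|\mathbf{W}|^2$ can be absorbed by shrinking $c$ slightly.

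The third step sums over clusters. Applying \autoref{lem:cluster_tail}, or more precisely its proof with the extra polynomial weight, to the set of connected clusters meeting $A$ with $|\mathbf{W}|\ge R$ gives
\[
|\expval{O_AO_B}_c|\le \order{|A|\,R^2 e^{-(c-c_0)R}}=\order{e^{-R/\xi}},
\]
with $1/\xi$ any constant below $c-c_0$. Since $|A|=\order{1}$, this gives $\xi\le\order{1/(c-c_0)}$.

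I expect the main obstacle to be the second step. Justifying that the $\lambda$-derivatives of loops in $\LL_{AB}$, which may terminate in $A$ or $B$, obey the same exponential bound requires the BP normalization to be well defined and bounded away from zero. Only then are the inserted operators and the subtracted BP expectation values uniformly $\order{1}$. I would handle this by writing the derivative as a difference of two loop evaluations on the networks $\ZZ^A$ and $\ZZ$ (respectively $\ZZ^B$, $\ZZ^{AB}$), each of which decays by hypothesis. The Ursell-function bounds already used in \autoref{thm:clusterconvformal} then carry over unchanged.
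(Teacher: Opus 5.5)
Your argument is correct and is essentially the route the paper relies on. The paper does not prove this theorem itself but imports it from Ref.~\cite{midha2026}; where it uses the same mechanism (the proof of \autoref{thm:efficiencyviadecay}(iii) and the alternative proof of \autoref{thm:observable_locality}), it argues as you do: only connected clusters meeting both $A$ and $B$ survive, such clusters have weight at least $d(A,B)$, and \autoref{lem:cluster_tail} with the polynomial $m^2$ prefactor from the two $\lambda$-derivatives gives the exponential decay, with your treatment of the loop derivatives as combinations of activities on $\ZZ^X$, $X\in\{\emptyset,A,B,AB\}$, weighted by BP expectation values of modulus at most $1$, explaining why all four networks appear in the hypothesis.
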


We combine these results to show that decay of loops renders the cluster expansion into an efficient algorithm for computing properties of PEPS. Before we proceed, it is good to note the following: 

\begin{enumerate}
    \item[(i)] Decay of loops is a bulk property. That is, for a loop $\ell$ if $Z_\ell$ (evaluated on $\ZZ$) decays, then so does $Z^A_\ell$ (evaluated on $\ZZ^A$) for some local region $A$, since changing the contracting by some local amount can not alter the asymptotic decay.
    \item[(ii)] However, one does not straightforwardly imply the other (for instance, the loop activity of $\ell \in \LL_A / \LL$ vanishes on $\ZZ$ but could be non-zero on $\ZZ^A$). Nonetheless, the proof technique we use to show decay of loops [in \autoref{prop:loop_decay}] proceeds via showing a decay of the building block of the loop excitations [in \autoref{prop:excitation_decay} and \autoref{lem:excitationprojectorbound}] takes care of this subtle detail.
\end{enumerate}

\begin{theorem}\label{thm:efficiencyviadecay}
For a PEPS satisfying the conditions of \autoref{thm:localobsalgorithmformal} and \autoref{thm:correlatorboundformal}, that is, satisfying decay of loops, we have that,
\begin{enumerate}
    \item[(i)] the norm $Z = \inner{\psi}{\psi}$ can be computed to $1/\poly(N)$ multiplicative error in $\poly(N)$ time
    \item[(ii)] local observables $\expval{O_A} = \expect{\psi}{O_A}{\psi}/\inner{\psi}{\psi}$ with $\expval{O_A} \neq 0$ can be computed to multiplicative error $\epsilon$ in $\poly(1/\epsilon)$ time
    \item[(iii)] $2$-point correlation functions  can be computed to additive error $\tilde{\mathcal{O}}(1/\poly(N))$ in $\poly(N)$ time
\end{enumerate}
\end{theorem}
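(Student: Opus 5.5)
The proof reduces each item to the convergence guarantees already stated, with decay of loops as the single input. Throughout, we take decay of loops with rate $c>c_0$ and write $d=c-c_0=\order{1}$. We also use that loops, and hence clusters of weight at most $m$, can be enumerated and evaluated in time $N\cdot e^{\order{m}}$. Enumeration is possible because a connected cluster of weight $m$ containing a fixed vertex can be encoded by a walk of length $\order{m}$ on a graph of degree $\Delta$. Evaluation is possible because each loop activity is a contraction of at most $m$ tensors of bond dimension $D=\order{1}$, with BP messages on the boundary. The BP messages themselves are obtained to $1/\poly(N)$ accuracy in $\order{\log N}$ iterations by \autoref{thm:fixed-points}(ii). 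The error these approximate messages introduce into each activity is inverse polynomial and enters only polynomially many terms.

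\textbf{Part (i).} Apply \autoref{thm:clusterconvformal} with truncation order $m=\lceil \log(N^{k+1})/d\rceil=\order{\log N}$. This gives $|\log\ZZ-F_m|\le Ne^{-d(m+1)}\le N^{-k}$, which is a multiplicative error $e^{\pm N^{-k}}=1\pm\order{N^{-k}}$ on $\ZZ$. The number of connected clusters of weight at most $m$ is $N e^{\order{m}}=\poly(N)$. Their Ursell coefficients $\phi(\mathbf{W})$ depend only on interaction graphs with at most $m$ vertices. Using the Penrose tree-graph representation, these can be computed in $e^{\order{m}}$ time, so the total cost is $\poly(N)$.

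\textbf{Part (ii).} Use the multiplicative expansion, \autoref{thm:localobsalgorithmformal}. By remark (i) preceding the theorem, decay of loops on $\ZZ$ transfers to $\ZZ^A$ and to strings in $\LL_A$, since only $\order{|A|}$ tensors are altered. The relative error is then $\order{|A|e^{-d(m+1)}}$, so choosing $m=\order{\log(1/\epsilon)}$ gives error $\epsilon$. Only clusters meeting $A$ contribute, and there are $|A|e^{\order{m}}=\poly(1/\epsilon)$ of them.

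\textbf{Part (iii).} Write $\expval{O_AO_B}_c$ via \autoref{prop:correlatorexpansionformalderivative}, and truncate at order $m=\order{\log N}$. The tail is bounded as in \autoref{lem:cluster_tail}, now with an extra factor of $m$ from the $\lambda$-derivatives, as in \autoref{thm:localobsalgorithmformal-derivative}. This gives additive error $\order{m|A|e^{-d(m+1)}}=\tilde{\mathcal O}(1/\poly(N))$.

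There is a complementary regime when $d(A,B)>m$. In that case no cluster of weight at most $m$ meets both regions, so the truncated sum is zero. \autoref{thm:correlatorboundformal} then guarantees the true value is $\order{e^{-d(A,B)/\xi}}$, which is also inverse polynomial. So outputting zero is consistent with the stated error.

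\textbf{Main obstacle.} I expect the main difficulty to be the bookkeeping that the algorithmic cost is genuinely polynomial, not the error bounds. Two points need care. First, the $\phi(\mathbf{W})$ must be computable within $e^{\order{m}}$ time; I would argue this via the standard bound $|\phi(\mathbf{W})|\le$ (number of spanning trees) and a recursive or inclusion–exclusion evaluation over at most $m$ loop-vertices. Second, the propagation of $1/\poly(N)$ message errors into the activities must be controlled. Here I would use that each activity is multilinear in the messages and bounded by norms, so a message error $\epsilon'$ perturbs $Z_\ell$ by at most $\order{|\ell|\epsilon'}$. Summing over polynomially many clusters then keeps the total error inverse polynomial.
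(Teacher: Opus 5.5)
Your proposal is correct and follows the paper's proof. Parts (i) and (ii) are the same truncations, at $m=\order{\log N}$ and $m=\order{\log(1/\epsilon)}$, via \autoref{thm:clusterconvformal} and \autoref{thm:localobsalgorithmformal}; in (iii) the paper likewise outputs zero when $d(A,B)=\Omega(\log N)$ by \autoref{thm:correlatorboundformal} and otherwise truncates at $m=d(A,B)+\order{\log(1/\epsilon)}$, whereas your single cutoff $m=\order{\log N}$ handles both regimes at once.

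Your extra cost bookkeeping goes beyond what the paper checks and is fine modulo a few small points:
\begin{itemize}
\item Decay on the strings in $\LL_A$ is part of the hypotheses rather than a consequence of remark (i); the paper's remark (ii) explicitly warns that it does not follow directly.
\item $\phi(\mathbf{W})$ should be computed by the subset (inclusion--exclusion) recursion in $3^{n_{\mathbf{W}}}$ time, rather than by enumerating Penrose trees, which can take up to $e^{\order{m\log m}}$.
\item The activities are Lipschitz rather than multilinear in the messages, because the excitation projectors and BP normalizations divide by message-dependent traces.
\item Computing the messages via \autoref{thm:fixed-points}(ii) uses $\varepsilon<\varepsilon_*$. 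This holds in the strongly injective setting but is not implied by decay of loops alone.
\end{itemize}
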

\begin{proof}
    By \autoref{thm:clusterconvformal}, we have that the $m-$order approximation to the free-energy satisfies
    \begin{equation}
  \left| \log \ZZ - \tilde{F}_m \right| \le N e^{-d(m+1)}
\end{equation}
Hence, to ensure $\left| \log Z - \tilde{F}_m \right| < \order{1/\poly(N)}$ we require $m = \Omega(\log{N})$. The complexity of computing a cluster of order $m$ is $e^{\order{m}}$, which is $\order{\poly(N)}$ for $m = \Omega(\log{N})$. Finally, note that additive error guarantee for $\log\ZZ$ translates to a multiplicative error guarantee for $\ZZ$. This proves (i). 

Now, to bound the error in computing local expectation values, we use \autoref{thm:localobsalgorithmformal}. We have for a bounded $O_B$ with $\expval{O_B} \neq 0$,
\begin{equation}
    \Big| \frac{\expval{O_B} - \expval{O_B}_m}{\expval{O_B}}\Big| \leq \order{e^{-d(m+1)}} \leq \epsilon
\end{equation}
which can be ensured given $m= \order{\log{1/\epsilon}}$, again computable in $\poly{1/\epsilon}$ time. This shows (ii).

Lastly, we compute correlation functions. Let $A$ and $B$ be the disjoint local regions of concern, with local operators $O_{A(B)}$ with $\|O_{A(B)}\| =1$. If $d(A,B) = \Omega(\log{N})$ then we have that $|\expval{O_AO_B}_c| \leq 1/\poly(N)$ by \autoref{thm:correlatorboundformal} and we trivially output $0$. For $d(A,B) = \order{\log{N}}$, we first note by \autoref{prop:correlatorexpansionformalderivative} that only clusters intersecting all the local regions appear in the expansion. 
Second, we consider truncating the additive version of the correlator expansion of \autoref{prop:correlatorexpansionformalderivative} at cluster weight $m = d(A,B) + \order{\log{1/\epsilon}}$, leading to an error $\order{m^2 (|A|+|B|) e^{-dm}}$, which is (note that $(|A|+|B|) = \order{1}$,
\begin{equation}
    \order{d(A,B)^2 \left(\log{\frac{1}{\epsilon}}\right)^2 \cdot e^{-d[d(A,B) + \log{1/\epsilon}]}} = \order{\log{N}\cdot \frac{1}{\poly(N)}} = \tilde{\mathcal{O}}(1/\poly(N)) 
\end{equation}

since we already have $d(A,B) = \order{\log N}$ and we choose $\epsilon = 1/\poly(N)$. This can be computed at a cost of $e^{d(A,B)} \poly{1/\epsilon}$, which is $\poly(N,1/\epsilon) = \poly(N)$.
\end{proof}

\newpage 
 
\smsection{BP Fixed points of PEPS}

\smsubsection{Existence of fixed points}

We first show that, the virtual operator of an injective tensor does not map density matrices to zero.



\begin{prop}
\label{prop:injective_nonvanishing}
Assume the tensor $T$ is $\delta-$injective with $\delta > 0$. For any leg $i\in[\Delta]$, the Kraus
form for the map $\Phi_{[\Delta]_{\neq i}\to i}$ can be chosen as
\begin{equation}
  \Phi_{L_{\neq i}\to i}(Z)=\sum_a \lambda_a^2\,K_a Z K_a^\dagger,
\qquad
\sum_a K_a^\dagger K_a = D \mathbbm 1_{L_{\neq i}}.
\end{equation}
Then for all $X\in\mathcal  \KK(\otimes_{j\neq i}\HH_j)$,
\begin{equation}
\label{eq:trace_lower_injective}
\Tr f(X) \ \ge\ D \delta^2 \ >\ 0.
\end{equation}
In particular $f(X)\neq 0$ for all 
$X\in\mathcal K(\HH)$, and the normalized map
\begin{equation}
\label{eq:F_def_injective}
F:\mathcal \KK(\otimes_{j\neq i}\HH_j)\to\mathcal K(\HH_i),
\qquad
F(X):=\frac{f(X)}{\Tr f(X)},
\end{equation}
is well-defined and continuous.
\end{prop}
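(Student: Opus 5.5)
The plan is to compute the trace of $f(X)=\Phi_{L_{\neq i}\to i}(X)$ directly from the Kraus form and use the completeness relation in \autoref{eq:Kraus_bistochastic}. The weights are handled by the pointwise bound $\lambda_a^2\ge\delta^2$, and positivity of $X$ lets us drop them termwise.

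First, I would justify the stated Kraus form. Write $T=V\Sigma U^\dagger$ and contract the physical leg with $\bar T$. The isometry $V$ cancels ($V^\dagger V=\mathbbm 1$), which leaves $\sum_a \lambda_a^2\, U^a\otimes\bar U^a$. Reshaping the unitary's columns $U^a$ along the bipartition $L_{\neq i}\sqcup\{i\}$ gives the Kraus operators $K_a$. The computation already displayed after \autoref{eq:Kraus_bistochastic} then yields $\sum_a K_a^\dagger K_a=\dim(\HH_i)\mathbbm 1_{L_{\neq i}}=D\,\mathbbm 1_{L_{\neq i}}$, since the output is a single leg of dimension $D$.

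Next, for $X\in\KK(\otimes_{j\neq i}\HH_j)$, cyclicity of the trace gives
\begin{equation*}
\Tr f(X)=\sum_a\lambda_a^2\Tr\!\left(K_a^\dagger K_a X\right).
\end{equation*}
Each $\Tr(K_a^\dagger K_a X)\ge 0$ because both operators are positive. Combined with $\lambda_a\ge\delta$, this gives
\begin{equation*}
\Tr f(X)\ge\delta^2\Tr\!\left(\textstyle\sum_a K_a^\dagger K_a X\right)=\delta^2 D\Tr X=D\delta^2>0.
\end{equation*}
The nonvanishing of $f(X)$ follows immediately.

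Finally, for well-definedness and continuity of $F$: $f$ is linear and completely positive, so $f(X)\succeq 0$ and $f$ is continuous. The map $X\mapsto\Tr f(X)$ is continuous and bounded below by $D\delta^2$ on $\KK$, so the quotient $F$ is continuous, positive and unit-trace, i.e. it lands in $\KK(\HH_i)$.

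The only delicate point is bookkeeping in the first step: making sure that the Kraus operators obtained from $U$ in this particular bipartition have completeness constant exactly $D$. The weights must be attached as $\lambda_a^2$, the squared singular values, coming from $\Sigma^\dagger\Sigma$; the main text's $\lambda_\alpha$ notation should be read this way.
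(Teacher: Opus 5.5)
Your proof is correct and matches the paper's argument: cyclicity of the trace, the completeness relation $\sum_a K_a^\dagger K_a = D\,\mathbbm 1$, the bound $\lambda_a^2\ge\delta^2$, and continuity of $F$ from the uniform lower bound $D\delta^2$ on the denominator. The only cosmetic differences are that you drop the weights term by term using $\Tr(K_a^\dagger K_a X)\ge 0$ rather than through the operator inequality $\sum_a\lambda_a^2K_a^\dagger K_a\succeq\delta^2 D\,\mathbbm 1$, and that you re-derive the Kraus form from the SVD, which the paper takes from its preliminaries.
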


\begin{proof}
Using the Kraus form and cyclicity of the trace,
\begin{equation}
  \Tr f(X)
=
\Tr\!\left[\sum_a \lambda_a^2 K_a XK_a^\dagger\right]
=
\Tr\!\left[\Big(\sum_a \lambda_a^2 K_a^\dagger K_a\Big)\,X\right].
\end{equation}
Since $\lambda_a^2\ge \delta^2 > 0$ and $\sum_a K_a^\dagger K_a= D\mathbbm 1$,
$$
\sum_a \lambda_a^2 K_a^\dagger K_a \ \succeq\ \delta^2 \sum_a K_a^\dagger K_a
= \delta^2\,D\,\mathbbm 1,
$$
hence
\begin{equation}
  \Tr f(X)\ge \delta^2 D \Tr(X)=\delta^2D.
\end{equation}
This proves \eqref{eq:trace_lower_injective}. Continuity of $F$ is immediate from
continuity of $f$ and the uniform lower bound on the denominator.
\end{proof}

Using \autoref{prop:injective_nonvanishing}, we can now show that the normalized message-passing map is well-defined and continuous.

\begin{prop}
\label{prop:nonTI_nonvanishing}
If the PEPS is $\delta$-injective with $\delta > 0$, then for all $\bm{\mu} \in \mathcal K_G$ and all $\vec{e}\in\vec{E}$,
$$
\Tr f_{\vec{e}}(\bm{\mu}) \ge D\,\delta^2 > 0.
$$
In particular, $\bm{F}$ is well-defined and continuous on $\mathcal K_G$.
\end{prop}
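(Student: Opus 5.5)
This follows from \autoref{prop:injective_nonvanishing} applied one edge at a time, plus a short argument that the product map inherits well-definedness and continuity. Fix a directed edge $\vec{e}=(v,n)$ and $\bm{\mu}\in\KK_G$. The input to $f_{\vec e}$ is $X:=\bigotimes_{m\in\NN(v)\setminus\{n\}}\mu_{(m,v)}$.

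The first step is to check that $X$ is an admissible argument. A tensor product of positive operators is positive, and its trace is the product of the traces, so $\Tr X=\prod_m \Tr\mu_{(m,v)}=1$. Hence $X\in\KK(\bigotimes_{m\neq n}\HH_{(m,v)})$.

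The second step is to check that the map at $v$ has the required form. Since $T_v$ is $\delta$-injective, $\Phi_{(v,n)}$ can be written in the Kraus form of \autoref{prop:injective_nonvanishing}. It has weights $\lambda_a^2\ge\delta^2$, and its Kraus operators satisfy the bistochastic identity \eqref{eq:Kraus_bistochastic}. For the bipartition isolating leg $n$, the target space is $\HH_{(v,n)}\cong\C^D$, so this identity reads $\sum_a K_a^\dagger K_a = D\,\mathbbm{1}$. \autoref{prop:injective_nonvanishing} then gives
\[
\Tr f_{\vec e}(\bm\mu)=\Tr\Big[\big(\textstyle\sum_a\lambda_a^2K_a^\dagger K_a\big)X\Big]\ge D\delta^2\Tr X = D\delta^2>0 .
\]

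The third step is continuity. The map $\bm\mu\mapsto X$ is multilinear in finitely many coordinates, hence continuous. The map $f_{\vec e}$ is linear in $X$, so it is continuous too. The denominator $\Tr f_{\vec e}$ is bounded below by $D\delta^2$ uniformly on $\KK_G$, so each component $F_{\vec e}=f_{\vec e}/\Tr f_{\vec e}$ is continuous on $\KK_G$. It also takes values in $\KK(\HH_{(v,n)})$, because $f_{\vec e}(\bm\mu)$ is positive as a completely positive image of a positive operator. There are finitely many directed edges, so $\bm F$ is continuous as a map $\KK_G\to\KK_G$, with the product topology equal to the topology of $\|\cdot\|_{\max}$.

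No step is a real obstacle. The only care needed is to confirm that the normalization $D$ in \eqref{eq:Kraus_bistochastic} is the dimension of the single output leg $n$, and not $D^{\Delta-1}$.
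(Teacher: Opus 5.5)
Your proof is correct and follows essentially the same route as the paper. Both apply the Kraus-form trace bound of \autoref{prop:injective_nonvanishing} to $X=\bigotimes_{m\in\NN(v)\setminus\{n\}}\mu_{(m,v)}$, use $\Tr X=\prod_m \Tr\mu_{(m,v)}=1$ and $\sum_a K_a^\dagger K_a = D\,\mathbbm{1}$, and obtain well-definedness and continuity of $\bm F$ from the uniform lower bound $D\delta^2$ on the denominator; your extra checks (positivity of $f_{\vec e}(\bm\mu)$, so that $\bm F$ indeed lands in $\KK_G$, and that the normalization is $D$ rather than $D^{\Delta-1}$) are details the paper leaves implicit.
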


\begin{proof}
Fix a directed edge $\vec{e}=(v,n)\in\vec{E}$. Using the Kraus representation and the trace lower bound from injectivity (as in \autoref{prop:injective_nonvanishing}),
$$
\Tr f_{\vec{e}}(\bm{\mu}) \ge \delta^2 \sum_\alpha \Tr\!\left[(K_\alpha^{(v,n)})^\dagger K_\alpha^{(v,n)} \bigotimes_{m\in\NN(v)\setminus\{n\}} \mu_{(m,v)}\right]
= \delta^2 D \prod_{m\in\NN(v)\setminus\{n\}} \Tr \mu_{(m,v)}.
$$
Since $\mu_{(m,v)} \in \mathcal K(\HH_{(m,v)})$, each $\Tr \mu_{(m,v)} = 1$, yielding $\Tr f_{\vec{e}}(\bm{\mu}) \ge D\,\delta^2$. The trace-normalized global map $\bm{F}: \mathcal K_G \to \mathcal K_G$ is defined by
$$
[\bm{F}(\bm{\mu})]_{\vec{e}} := \frac{f_{\vec{e}}(\bm{\mu})}{\Tr f_{\vec{e}}(\bm{\mu})}.
$$
Since for each edge $\Tr f_{\vec{e}}(\bm{\mu}) \ge D\,\delta^2$, this is well-defined and continuous.

\end{proof}

Now, we recall Brouwer's fixed-point theorem.
\begin{theorem}[Brouwer \cite{deimling2013nonlinear}]
  A continuous function on a nonempty finite-dimensional compact, convex set has a fixed point.
\end{theorem}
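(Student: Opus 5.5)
The statement is Brouwer's fixed-point theorem, which the paper invokes as a standard result from \cite{deimling2013nonlinear}. If a proof were to be supplied, I would reduce it to a single model set and then settle that case combinatorially.

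\emph{Reduction to a simplex.} Let $K\subset\mathbb{R}^n$ be nonempty, compact and convex; a finite-dimensional complex space is handled as real $\mathbb{R}^{2n}$. Let $f:K\to K$ be continuous. I would pick a closed $n$-simplex $S\supseteq K$ and let $r:S\to K$ be the nearest-point projection onto $K$. Because $K$ is closed and convex, $r$ is well defined and $1$-Lipschitz, and it fixes $K$ pointwise. Then $g:=f\circ r$ is a continuous map $S\to K\subseteq S$. If $g(x)=x$, then $x\in K$, so $r(x)=x$ and $f(x)=x$. It therefore suffices to prove the theorem for the standard simplex $S=\{x\in\mathbb{R}^{n+1}_{\ge 0}:\sum_i x_i=1\}$.

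\emph{Simplex case via Sperner.} Suppose $g:S\to S$ is continuous. For each $k$, take the $k$-th barycentric subdivision of $S$, whose mesh tends to $0$. Label each vertex $x$ of the subdivision by an index $i$ with $x_i>0$ and $g(x)_i\le x_i$. Such an $i$ exists because both coordinate vectors sum to $1$. Moreover $x_i>0$ forces the label to lie in the support of $x$, so the labelling is a Sperner labelling. Sperner's lemma then gives a fully labelled small simplex $\sigma_k$: for each $i$ there is a vertex $x^{(k,i)}\in\sigma_k$ with $g(x^{(k,i)})_i\le x^{(k,i)}_i$.

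\emph{Passing to the limit.} By compactness, a subsequence of the $\sigma_k$ shrinks to a point $x^*$. Continuity of $g$ then gives $g(x^*)_i\le x^*_i$ for every $i$. Both vectors sum to $1$, so equality holds in every coordinate and $g(x^*)=x^*$.

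\emph{Main obstacle.} The only substantial step is Sperner's lemma, which I would prove by induction on dimension with a parity (door-counting) argument. The number of fully labelled cells is odd, because each $(n-1)$-face labelled $\{0,\dots,n-1\}$ is shared by two cells in the interior. On the boundary, such faces can only lie on the face $x_n=0$, and by induction there is an odd number of them there.
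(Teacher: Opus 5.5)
Your proof is correct, and it takes a different route from the paper, which gives no proof of this statement. The paper imports Brouwer's theorem as a black box from Deimling's monograph, where it is obtained as a consequence of the Brouwer degree. It uses the theorem only to get existence of BP fixed points on $\KK_G$ in the merely injective regime $\varepsilon_*\le\varepsilon<1$, where the Banach-contraction argument is unavailable.

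Your argument is the elementary combinatorial proof: a nearest-point retraction of an enclosing simplex onto $K$, the Sperner labelling $i\in\{j: x_j>0,\ g(x)_j\le x_j\}$, and a compactness limit. It is self-contained and needs no degree theory. The degree-theoretic proof front-loads more machinery, but it is the one that extends to Leray--Schauder degree and its infinite-dimensional fixed-point theorems.

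When you write out the parity step, state explicitly that a fully labelled cell has exactly one $(n-1)$-face labelled $\{0,\dots,n-1\}$. A cell whose label set is exactly $\{0,\dots,n-1\}$ has exactly two such faces, and every other cell has none. Also state that the barycentric subdivision restricts to a triangulation of the facet $x_n=0$, on which the labelling is again Sperner. Only with these facts does the door count reduce to the parity of the number of fully labelled cells.

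Your route also fits the paper's narrative. Sperner-type arguments underlie path-following methods for approximate fixed points, but these carry no polynomial-time guarantee; approximate Brouwer fixed points are $\mathsf{PPAD}$-complete to find in general. This is consistent with the paper's remark that a fixed point may exist yet be computationally prohibitive to find.
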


Since $\KK_G$ is a compact and convex finite-dimensional set, and $\bm{F}$ is a continuous function mapping $\KK_G$ to itself, we thus get existence of fixed points. 

\begin{theorem}[Existence for injective PEPS]
\label{thm:nonTI_existence}
If the PEPS is $\delta$-injective with $\delta > 0$, then there exists a fixed point $\bm{\mu}_\star \in \mathcal K_G$ such that $\bm{F}(\bm{\mu}_\star) = \bm{\mu}_\star$.
\end{theorem}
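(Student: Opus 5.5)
The plan is a direct application of Brouwer's theorem to the normalized message-passing map $\bm{F}$ on the product space $\KK_G$. The two inputs are (a) that $\KK_G$ is a nonempty, compact, convex subset of a finite-dimensional real vector space, and (b) that $\bm{F}$ is a well-defined continuous self-map of $\KK_G$. Given these, Brouwer yields $\bm{\mu}_\star\in\KK_G$ with $\bm{F}(\bm{\mu}_\star)=\bm{\mu}_\star$.

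For (a), I will note that each factor $\KK(\HH_{\vec{e}})$ is the set of $D\times D$ density matrices. This set lies in the finite-dimensional real space of Hermitian matrices. It is nonempty, since it contains $\mathbbm{1}/D$. It is convex, because both positivity and unit trace are preserved under convex combinations. It is closed, as the intersection of the closed PSD cone with an affine hyperplane. It is bounded, since $\|X\|_1=\Tr X=1$ on this set. Hence each factor is compact and convex. The finite Cartesian product $\KK_G=\prod_{\vec{e}\in\vec{E}}\KK(\HH_{\vec{e}})$ over the finitely many directed edges is then again compact (by Tychonoff, or simply Heine--Borel in finite dimension) and convex.

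For (b), I will invoke \autoref{prop:nonTI_nonvanishing}. For $\delta>0$ it gives $\Tr f_{\vec{e}}(\bm{\mu})\ge D\delta^2>0$ uniformly on $\KK_G$, so each component $F_{\vec{e}}$ is well defined. Continuity follows because each $f_{\vec{e}}$ is a polynomial in the entries of the incoming messages, being a linear completely positive map applied to a tensor product. The quotient by a trace bounded away from zero is therefore continuous. It remains to check that $F_{\vec{e}}(\bm{\mu})\in\KK(\HH_{\vec{e}})$. Positivity holds because the tensor product of positive operators is positive, and the Kraus form $\sum_\alpha \lambda_\alpha K_\alpha(\cdot)K_\alpha^\dagger$ with $\lambda_\alpha>0$ preserves positivity. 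Unit trace holds by the normalization. Thus $\bm{F}:\KK_G\to\KK_G$ is continuous.

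Applying Brouwer's theorem then completes the argument, and the resulting fixed point satisfies the definition of a PEPS fixed point with $\lambda_{\vec{e}}=\Tr f_{\vec{e}}(\bm{\mu}_\star)>0$. I expect no real obstacle. The only substantive point, the nonvanishing of the denominator, is already settled by \autoref{prop:nonTI_nonvanishing}, and this is exactly where injectivity ($\delta>0$) enters.
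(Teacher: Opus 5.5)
Your proposal is correct and follows the same route as the paper: the paper's proof cites \autoref{prop:nonTI_nonvanishing} for well-definedness and continuity of $\bm{F}$ on the compact convex set $\KK_G$, then invokes Brouwer's theorem. You also spell out details the paper leaves implicit, namely that $\KK_G$ is compact and convex and that $\bm{F}$ maps $\KK_G$ into itself because each $\Phi_{(v,n)}$ preserves positivity.
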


\begin{proof}
By \autoref{prop:nonTI_nonvanishing}, $\bm{F}$ is a continuous map from the compact convex set $\mathcal K_G$ to itself. Brouwer's fixed-point theorem guarantees the existence of a fixed point.
\end{proof}


\smsubsection{Uniqueness of fixed points}

Let $\Phi$ be the virtual superoperator of a $\delta-$injective tensor. We will aim to perturb away from the $\delta=1$ limit by expressing $\Phi = \Phi_0 + \Delta\Phi$. Where, $\Phi_0$ denotes the virtual superoperator in the maximally-injective case ($\delta=1$, all singular values $\lambda_e \equiv 1$). Then
\begin{equation}
\label{eq:Phi0_depolarizing}
\Phi_0\!\left(\bigotimes_{j \neq l} \mu_j\right) = \prod_{j \neq l} \Tr(\mu_j)\,\mathbbm 1 = \mathbbm 1,
\qquad\text{for all }\mu_j \in \KK(\HH).
\end{equation}
For general $\delta \in (0,1]$, decompose $\Phi = \Phi_0 + \Delta\Phi$ where
$$
\Delta\Phi := \Phi - \Phi_0 = \sum_e (\lambda_a^2 - 1) K_a \,(\cdot)\, K_a^\dagger
$$
captures the deviation from isometricity. We will use this form to show that the message-passing has a unique fixed point when this deviation is controlled.

\begin{lemma}\label{lem:depolperturbation}
  For the non-depolarizing perturbation $\Delta\Phi$ in the virtual superoperator of a $\delta-$injective tensor for the leg bipartition $[\Delta] = L\cup L^c$, we have that 
 \begin{equation}
  \|\Delta\Phi(A)\|_1 \leq \dim(\HH_{L^c} )(1-\delta^2)\|A\|_1
 \end{equation}
 for any $A \in \KK(\HH_L)$.
\end{lemma}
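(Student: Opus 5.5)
We bound the perturbation map by splitting it into its positive-semidefinite parts and controlling each through the bistochastic identity \eqref{eq:Kraus_bistochastic}. Write $\Delta\Phi(A)=-\sum_a(1-\lambda_a^2)K_aAK_a^\dagger$. Every coefficient $c_a:=1-\lambda_a^2$ lies in $[0,1-\delta^2]=[0,\varepsilon]$, because $\delta\le\lambda_a\le 1$. So $\Psi(\cdot):=\sum_a c_a K_a(\cdot)K_a^\dagger$ is a completely positive map, and $\Delta\Phi=-\Psi$. It therefore suffices to show $\|\Psi(A)\|_1\le \dim(\HH_{L^c})\,\varepsilon\,\|A\|_1$.

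The first step treats positive inputs. For $A\succeq 0$, $\Psi(A)$ is positive, so its trace norm equals its trace. By cyclicity,
\begin{equation}
\|\Psi(A)\|_1=\Tr\Big[\Big(\sum_a c_aK_a^\dagger K_a\Big)A\Big]\le \varepsilon\,\Tr\Big[\Big(\sum_aK_a^\dagger K_a\Big)A\Big]=\varepsilon\dim(\HH_{L^c})\Tr A .
\end{equation}
The inequality uses $c_a\le\varepsilon$ together with positivity of each $K_a^\dagger K_a$ and of $A$. The last equality is the first identity in \eqref{eq:Kraus_bistochastic}. For $A\in\KK(\HH_L)$, which is the case in the statement, this already gives the bound, since $\|A\|_1=\Tr A=1$.

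To extend to a general operator $A$ (used later for differences $X-Y$), we take its singular value decomposition $A=\sum_k s_k\ket{x_k}\bra{y_k}$, so that $\|A\|_1=\sum_k s_k$. The main step is then to bound $\|\Psi(\ket{x}\bra{y})\|_1\le \varepsilon\dim(\HH_{L^c})$ for unit vectors $x,y$. We write $\Psi(\ket{x}\bra{y})=\sum_a (\sqrt{c_a}K_a\ket{x})(\sqrt{c_a}K_a\ket{y})^\dagger$, which has the form $BC^\dagger$ with $B=\sum_a \sqrt{c_a}K_a\ket{x}\bra{a}$ and $C=\sum_a \sqrt{c_a}K_a\ket{y}\bra{a}$. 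Hölder's inequality gives $\|BC^\dagger\|_1\le\|B\|_2\|C\|_2$. Each factor satisfies
\begin{equation}
\|B\|_2^2=\sum_a c_a\bra{x}K_a^\dagger K_a\ket{x}\le\varepsilon\dim(\HH_{L^c}),
\end{equation}
by the same estimate as in the positive case, and likewise for $C$. This yields the rank-one bound, and summing over $k$ by the triangle inequality gives the lemma for arbitrary $A$.

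No step is a serious obstacle. The only delicate point is the non-Hermitian case, which the Cauchy–Schwarz/Hölder factorization above handles; the positive case is a one-line trace computation.
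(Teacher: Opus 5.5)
Your proof is correct, and for the stated case $A\in\KK(\HH_L)$ it matches the paper's. The paper bounds $\|E\|_\infty$ for $E=\sum_a(\lambda_a^2-1)K_a^\dagger K_a$ via the bistochastic identity and then applies H\"older; for positive $A$ this is exactly your observation that the semidefinite output has trace norm $|\Tr(EA)|\le \dim(\HH_{L^c})(1-\delta^2)\Tr A$.

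Your SVD and $BC^\dagger$ factorization, which extends the bound to arbitrary (even non-Hermitian) $A$, is sound and worth keeping. The paper only asserts the H\"older step ``for Hermitian $A$'', which strictly needs at least a positive/negative-part splitting. Yet it later applies the lemma to indefinite telescoping differences in Prop.~\ref{prop:banach_contr} and to arbitrary $\bm{Y}$ in Prop.~\ref{prop:excitation_decay}, which your extension covers.
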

\begin{proof}
  We have, 
  \begin{equation}
    \Delta\Phi := \Phi - \Phi_0 = \sum_a (\lambda_a^2 - 1) K_a \,(\cdot)\, K_a^\dagger
  \end{equation}
  Let $E:=\sum_a (\lambda_a^2-1)\,K_a^\dagger K_a$ and recall that $\sum_a K^\dagger_a K_a = \dim(\HH_{L^c} )\mathbbm{1}_{L}$. Since $\lambda_a^2\in[\delta^2,1]$ we get,
\begin{equation}
\label{eq:E_bound_uniqueness}
\|E\|_\infty \le  \dim(\HH_{L^c} )(1-\delta^2).
\end{equation}
For Hermitian $A \in \KK(\HH_L)$, by H\"{o}lder's inequality,
$$
\|\Delta\Phi(A)\|_1 \le \|E\|_\infty\,\|A\|_1
\le \dim(\HH_{L^c} )(1-\delta^2)\,\|A\|_1.
$$

\end{proof}

For each edge $\vec{e}=(v,n)$, we decompose $\Phi_{(v,n)} = \Phi_0^{(v,n)} + \Delta\Phi^{(v,n)}$, where $\Phi_0^{(v,n)}$ corresponds to the isometric case ($\delta=1$) and $\Delta\Phi^{(v,n)}$ captures the deviation from isometricity. We finally combine these estimates to establish uniqueness of the fixed-point given $\varepsilon < \varepsilon_* = 1/(2\Delta -1)$ (recall the definition of $\varepsilon := 1-\delta^2$).

\begin{prop}[Banach Contraction]
\label{prop:banach_contr}
Let $\Delta \ge 2$ and suppose the PEPS is $\delta$-injective with $\delta\in(0,1]$.
If
$$
\delta^2>\frac{2(\Delta-1)}{2(\Delta-1)+1}\Leftrightarrow \quad \boxed{\varepsilon < \varepsilon_* := \frac{1}{2\Delta - 1}},
$$
then the normalized map $\bm{F}: \KK_G \to \KK_G$ is a Banach contraction on $(\KK_G, \|\cdot\|_{\max})$ with contraction constant
$$
q_\Delta(\delta) := \frac{2(\Delta-1)(1-\delta^2)}{\delta^2} < 1.
$$
Consequently, $\bm{F}$ has a unique fixed point $\bm{\mu}_\star \in \KK_G$.
\end{prop}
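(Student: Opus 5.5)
The plan is to bound the Lipschitz constant of $\bm{F}$ edge by edge in trace norm and then apply the Banach fixed-point theorem. Fix $\vec{e}=(v,n)$ and two configurations $\bm\mu,\bm\nu\in\KK_G$. Write $X:=\bigotimes_{m\in\NN(v)\setminus\{n\}}\mu_{(m,v)}$ and $Y:=\bigotimes_{m\in\NN(v)\setminus\{n\}}\nu_{(m,v)}$; both are unit-trace positive operators. Set $a:=f_{\vec e}(\bm\mu)=\Phi_{(v,n)}(X)$ and $b:=\Phi_{(v,n)}(Y)$.

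\textbf{Step 1: reduce to the perturbation $\Delta\Phi$.} Since $\Phi_0$ is fully depolarizing, $\Phi_0(X)$ depends only on $\Tr X$. As $\Tr X=\Tr Y=1$, we get $\Phi_0(X)=\Phi_0(Y)$, hence
\[
a-b=\Delta\Phi(X-Y).
\]
\autoref{lem:depolperturbation} is stated for states, so I will extend it to Hermitian $A=X-Y$. Decompose $A=A_+-A_-$ with $A_\pm\succeq 0$ and $\Tr A_++\Tr A_-=\|A\|_1$. For positive $P$, the operator $\Delta\Phi(P)=\sum_a(\lambda_a^2-1)K_aPK_a^\dagger$ is negative semidefinite. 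Therefore
\[
\|\Delta\Phi(P)\|_1=-\Tr\Delta\Phi(P)=\Tr\big[(-E)P\big]\le\|E\|_\infty\Tr P,
\]
where $E$ is as in the proof of \autoref{lem:depolperturbation}. The output leg is a single bond, so $\dim\HH_{L^c}=D$ and $\|E\|_\infty\le D\varepsilon$. The triangle inequality then gives
\[
\|a-b\|_1\le D\varepsilon\|X-Y\|_1 .
\]

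\textbf{Step 2: control the normalization.} For positive $a,b$, use the elementary bound
\[
\Big\|\tfrac{a}{\Tr a}-\tfrac{b}{\Tr b}\Big\|_1\le\frac{\|a-b\|_1}{\Tr a}+\|b\|_1\Big|\tfrac{1}{\Tr a}-\tfrac{1}{\Tr b}\Big|=\frac{\|a-b\|_1+|\Tr a-\Tr b|}{\Tr a}\le\frac{2\|a-b\|_1}{\Tr a}.
\]
By \autoref{prop:nonTI_nonvanishing}, $\Tr a\ge D\delta^2$. Combining with Step 1,
\[
\|[\bm F(\bm\mu)]_{\vec e}-[\bm F(\bm\nu)]_{\vec e}\|_1\le \frac{2\varepsilon}{\delta^2}\|X-Y\|_1 .
\]

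\textbf{Step 3: telescope the tensor product.} Telescoping over the $\Delta-1$ factors, and using $\|\rho\otimes\sigma\|_1=\|\rho\|_1\|\sigma\|_1$ with unit-trace positive neighbours, gives
\[
\|X-Y\|_1\le\sum_{m}\|\mu_{(m,v)}-\nu_{(m,v)}\|_1\le(\Delta-1)\|\bm\mu-\bm\nu\|_{\max}.
\]
Taking the maximum over $\vec e$ yields $\|\bm F(\bm\mu)-\bm F(\bm\nu)\|_{\max}\le q_\Delta(\delta)\|\bm\mu-\bm\nu\|_{\max}$.

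\textbf{Step 4: conclude.} We have $q_\Delta<1$ iff $2(\Delta-1)\varepsilon<1-\varepsilon$, i.e.\ $\varepsilon<1/(2\Delta-1)$. The set $\KK_G$ is a closed subset of a finite-dimensional normed space, hence complete. Banach's theorem then gives a unique fixed point and geometric convergence of the iterates $\bm\mu^{(t)}$.

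The main obstacle is Step 1: \autoref{lem:depolperturbation} is stated only for density matrices, whereas here it must be applied to the traceless difference $X-Y$. The sign-definiteness of $\Delta\Phi$ on positive inputs resolves this. A secondary point to watch is that the dimension factor $D$ produced by $\|E\|_\infty$ cancels exactly against the trace lower bound $D\delta^2$ from \autoref{prop:nonTI_nonvanishing}.
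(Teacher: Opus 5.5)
Your proposal is correct and follows essentially the same route as the paper. Like the paper, you cancel $\Phi_0$ on unit-trace inputs, bound $\Delta\Phi$ in trace norm via \autoref{lem:depolperturbation}, telescope over the $\Delta-1$ incoming messages, control the normalization with $\bigl\|A/\Tr A-B/\Tr B\bigr\|_1\le 2\|A-B\|_1/\Tr A$ together with $\Tr f_{\vec e}\ge D\delta^2$, and conclude by Banach's theorem. Your one addition is an explicit justification for applying \autoref{lem:depolperturbation}, which is stated for density matrices, to Hermitian non-positive differences: you use the Jordan decomposition and the negative semidefiniteness of $\Delta\Phi$ on positive operators, whereas the paper applies the lemma to the telescoped differences without this justification.
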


\begin{proof}
Fix two message configurations $\bm{\mu}, \bm{\nu} \in \KK_G$. We wish to estimate
$$
\|\bm{F}(\bm{\mu}) - \bm{F}(\bm{\nu})\|_{\max}
= \max_{\vec{e}\in\vec{E}} \left\|\frac{f_{\vec{e}}(\bm{\mu})}{\Tr f_{\vec{e}}(\bm{\mu})} - \frac{f_{\vec{e}}(\bm{\nu})}{\Tr f_{\vec{e}}(\bm{\nu})}\right\|_1.
$$
For each fixed edge $\vec{e}=(v,n)$, consider the contraction on $\Phi_{(v,n)}$, we aim to establish,
$$
\left\|\frac{f_{\vec{e}}(\bm{\mu})}{\Tr f_{\vec{e}}(\bm{\mu})} - \frac{f_{\vec{e}}(\bm{\nu})}{\Tr f_{\vec{e}}(\bm{\nu})}\right\|_1
\le q_{\Delta}(\delta) \max_{m\in\NN(v)\setminus\{n\}} \|\mu_{(m,v)} - \nu_{(m,v)}\|_1,
$$
where
$$
q_\Delta(\delta) := \frac{2(\Delta-1)\,(1-\delta^2)}{\delta^2}.
$$
Since the incoming messages for edge $e=(v,n)$ are $\{\mu_{(m,v)}: m\in\NN(v)\setminus\{n\}\}$, and each of these is a component of $\bm{\mu}$, we have
$$
\max_{m\in\NN(v)\setminus\{n\}} \|\mu_{(m,v)} - \nu_{(m,v)}\|_1 \le \|\bm{\mu} - \bm{\nu}\|_{\max}.
$$

The difference of the unnormalized component maps is
$$
f_{\vec{e}}(\bm{\mu}) - f_{\vec{e}}(\bm{\nu})
= \Phi_{(v,n)}\!\left(\bigotimes_{m\in\NN(v)\setminus\{n\}} \mu_{(m,v)} - \bigotimes_{m\in\NN(v)\setminus\{n\}} \nu_{(m,v)}\right).
$$
Since $\Phi_0$ is the depolarizing channel and all messages have unit trace, we have
$$
\Phi_0\!\left(\bigotimes_{m\in\NN(v)\setminus\{n\}} \mu_{(m,v)} - \bigotimes_{m\in\NN(v)\setminus\{n\}} \nu_{(m,v)}\right) = \mathbbm{1} - \mathbbm{1} = 0.
$$
Therefore,
$$
f_{\vec{e}}(\bm{\mu}) - f_{\vec{e}}(\bm{\nu})
= \Delta\Phi\!\left(\bigotimes_{m\in\NN(v)\setminus\{n\}} \mu_{(m,v)} - \bigotimes_{m\in\NN(v)\setminus\{n\}} \nu_{(m,v)}\right).
$$

Using the telescoping sum,
$$
\bigotimes_{m\in \NN(v)\setminus\{n\}} \mu_{(m,v)}
-
\bigotimes_{m\in \NN(v)\setminus\{n\}} \nu_{(m,v)}
=
\sum_{k=1}^{\Delta-1}
\left(
  \bigotimes_{i<k} \nu_{(m_i,v)}
\right)
\otimes
\bigl(\mu_{(m_k,v)}-\nu_{(m_k,v)}\bigr)
\otimes
\left(
  \bigotimes_{i>k} \mu_{(m_i,v)}
\right).
$$

and applying \autoref{lem:depolperturbation} with an application of triangle inequality, we obtain
\begin{align}
\|f_{\vec{e}}(\bm{\mu}) - f_{\vec{e}}(\bm{\nu})\|_1
&\le D(1-\delta^2)
   \sum_{m\in \NN(v)\setminus\{n\}}
   \|\mu_{(m,v)} - \nu_{(m,v)}\|_1 \\
&\le (\Delta-1)D(1-\delta^2)
   \|\bm{\mu} - \bm{\nu}\|_{\max}.
\end{align}
Now, for positive operators $A, B$ with traces $a := \Tr A$ and $b := \Tr B$, we have the normalization inequality
$$
\Big\|\frac{A}{a}-\frac{B}{b}\Big\|_1
\le \frac{2}{\min(a,b)}\,\|A-B\|_1.
$$
(Proof: By triangle inequality and reverse triangle inequality, $\big\|\frac{A}{a}-\frac{B}{b}\big\|_1 = \frac{1}{ab}\|bA - aB\|_1 \le \frac{1}{a}\|A-B\|_1 + \frac{|a-b|}{ab}\|B\|_1 \le \frac{2}{a}\|A-B\|_1$.)

By \autoref{prop:injective_nonvanishing}, $\Tr f_{\vec{e}}(\bm{\mu}) \ge D\delta^2$ for all $\bm{\mu} \in \KK_G$. Therefore,
$$
\left\|[\bm{F}(\bm{\mu})]_{\vec{e}} - [\bm{F}(\bm{\nu})]_{\vec{e}}\right\|_1
= \left\|\frac{f_{\vec{e}}(\bm{\mu})}{\Tr f_{\vec{e}}(\bm{\mu})} - \frac{f_{\vec{e}}(\bm{\nu})}{\Tr f_{\vec{e}}(\bm{\nu})}\right\|_1
\le \frac{2}{D\delta^2} \|f_{\vec{e}}(\bm{\mu}) - f_{\vec{e}}(\bm{\nu})\|_1
\le \frac{2(\Delta-1)(1-\delta^2)}{\delta^2} \|\bm{\mu} - \bm{\nu}\|_{\max}.
$$

Taking the maximum over all legs $l \in [\Delta]$,
$$
\|\bm{F}(\bm{\mu}) - \bm{F}(\bm{\nu})\|_{\max}
\le q_\Delta(\delta) \|\bm{\mu} - \bm{\nu}\|_{\max},
\qquad
q_\Delta(\delta) := \frac{2(\Delta-1)(1-\delta^2)}{\delta^2}.
$$

When $\delta^2 > \frac{2(\Delta-1)}{2(\Delta-1)+1}$, we have $q_\Delta(\delta) < 1$. 
\end{proof}

\begin{theorem}[Uniqueness and convergence to the fixed point]
\label{thm:nonTI_uniqueness}
Under the conditions of \autoref{prop:banach_contr}, an injective PEPS with $\varepsilon < \varepsilon_*$ has a unique fixed point $\bm \mu_\star \in \mathcal K_G$. Moreover, to achieve accuracy $\|\bm \mu^{(t)} - \bm \mu_\star\|_{\max} \le \epsilon$, it suffices to take
\begin{equation}
t = \mathcal O\!\left(\frac{\log(1/\epsilon)}{\log(\varepsilon_*/\varepsilon)}\right).
\end{equation}
\end{theorem}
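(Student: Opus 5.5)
The plan is to combine the Banach contraction of \autoref{prop:banach_contr} with the standard a priori estimate for Picard iteration. Since $\varepsilon<\varepsilon_*$, $\bm F$ is a contraction on $(\KK_G,\|\cdot\|_{\max})$ with constant $q:=q_\Delta(\delta)<1$. The space $\KK_G$ is a closed subset of a finite-dimensional normed space, hence complete. The Banach fixed-point theorem then gives a unique fixed point $\bm\mu_\star$, and it agrees with the one whose existence was established in \autoref{thm:nonTI_existence}. Uniqueness also follows directly: if $\bm\mu_\star,\bm\nu_\star$ are both fixed, then $\|\bm\mu_\star-\bm\nu_\star\|_{\max}\le q\|\bm\mu_\star-\bm\nu_\star\|_{\max}$, which forces them to be equal.

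For the convergence rate, I iterate $\bm\mu^{(t)}=\bm F^t[\bm\mu^{(0)}]$ and use $\bm F(\bm\mu_\star)=\bm\mu_\star$ to get
\begin{equation*}
\|\bm\mu^{(t)}-\bm\mu_\star\|_{\max}\le q^t\|\bm\mu^{(0)}-\bm\mu_\star\|_{\max}\le 2q^t .
\end{equation*}
The factor $2$ comes from the trace-norm diameter of unit-trace positive operators. It follows that $t\ge \log(2/\epsilon)/\log(1/q)$ iterations suffice.

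It remains to express $\log(1/q)$ in terms of $\varepsilon_*/\varepsilon$. Writing $\delta^2=1-\varepsilon$, we have
\begin{equation*}
q=\frac{2(\Delta-1)\varepsilon}{1-\varepsilon},
\end{equation*}
and $\varepsilon_*=1/(2\Delta-1)$ is exactly the point where $q=1$. I expect this conversion to be the only delicate step. One clean route is the identity
\begin{equation*}
\frac{1}{q}=\frac{1-\varepsilon}{2(\Delta-1)\varepsilon}=\frac{\varepsilon_*}{\varepsilon}\cdot\frac{1-\varepsilon}{1-\varepsilon_*},
\end{equation*}
which uses $2(\Delta-1)=(1-\varepsilon_*)/\varepsilon_*$. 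Since $\varepsilon<\varepsilon_*$, the second factor exceeds one, so $1/q\ge\varepsilon_*/\varepsilon$ and hence $\log(1/q)\ge\log(\varepsilon_*/\varepsilon)$.

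Substituting this into the iteration count gives $t=\order{\log(1/\epsilon)/\log(\varepsilon_*/\varepsilon)}$. This is the stated bound, with the constant $\log 2$ absorbed into the $\order{\cdot}$.
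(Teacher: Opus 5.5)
Your proposal is correct and follows the same route as the paper: the Banach fixed-point theorem on the complete space $(\KK_G,\|\cdot\|_{\max})$, followed by the estimate $\|\bm\mu^{(t)}-\bm\mu_\star\|_{\max}\le q^t\|\bm\mu^{(0)}-\bm\mu_\star\|_{\max}$. Your identity $1/q=(\varepsilon_*/\varepsilon)(1-\varepsilon)/(1-\varepsilon_*)$, which gives $q\le\varepsilon/\varepsilon_*$ exactly, together with the diameter bound of $2$, makes the paper's looser step ``$q_\Delta(\varepsilon)=\mathcal O(\varepsilon/\varepsilon_*)$'' precise, and no constant can spoil the logarithm as $\varepsilon\to\varepsilon_*$.
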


\begin{proof}
Consider the map $\bm F : \KK_G \to \KK_G$ acting on the complete metric space $(\KK_G, \|\cdot\|_{\max})$. For $\varepsilon < \varepsilon_*$, the map $\bm F$ is a contraction with contraction factor $q_\Delta(\varepsilon) < 1$, i.e.,
\begin{equation}
\|\bm F(\bm \mu) - \bm F(\bm \nu)\|_{\max}
\le q_\Delta(\varepsilon)\, \|\bm \mu - \bm \nu\|_{\max}
\quad \forall \bm \mu, \bm \nu \in \KK_G,
\end{equation}
by \autoref{prop:banach_contr}. Since $(\KK_G, \|\cdot\|_{\max})$ is complete, the Banach fixed-point theorem implies the existence and uniqueness of a fixed point $\bm \mu_\star \in \KK_G$.

Applying the contraction bound with $\bm \nu = \bm \mu_\star$ and iterating, we obtain
\begin{equation}
\|\bm \mu^{(t)} - \bm \mu_\star\|_{\max}
\le q_\Delta(\varepsilon)^t \, \|\bm \mu^{(0)} - \bm \mu_\star\|_{\max}.
\end{equation}
This shows exponential convergence to the fixed point. Since
\[
q_\Delta(\varepsilon)=\mathcal O\!\left(\frac{\varepsilon}{\varepsilon_*}\right),
\]
the error after $t$ iterations behaves, up to constants, as
\begin{equation}
\|\bm \mu^{(t)} - \bm \mu_\star\|_{\max}
\lesssim
\left(\frac{\varepsilon}{\varepsilon_*}\right)^t.
\end{equation}
Therefore, to achieve fixed-point error at most $\epsilon$, it suffices to choose $t$ such that
\[
\left(\frac{\varepsilon}{\varepsilon_*}\right)^t \lesssim \epsilon.
\]
Taking logarithms yields
\begin{equation}
t = \mathcal O\!\left(\frac{\log(1/\epsilon)}{\log(\varepsilon_*/\varepsilon)}\right).
\end{equation}
\end{proof}


\newpage 
\smsection{Loop Corrections}

\smsubsection{Decay of single excitation}

We now establish bounds on the decay of excitations from the fixed point found in \autoref{thm:nonTI_uniqueness}. The key idea is to show that any fixed point must lie close to the normalized identity, and consequently the orthogonal projection away from the fixed point is well-approximated by the corresponding projection away from the identity. This ensures that the `depolarizing' part of the virtual superoperator does not contribute significantly to the loop tensors. This is indeed a loose bound, and in practice one expects that loops will decay faster than the bounds proven hereon.

Starting from this section, we will assume that $\Delta$ and $D$ are all $\order{1}$ constants. From there, we will not keep track of the coefficients of $\order{\varepsilon^2}$ terms, which may be complicated functions of $(D,\Delta)$ but the constant is always $\order{1}$. Later when we prove the convergence of any Taylor expansions, we can always choose a sufficiently small $\varepsilon$ to account for the constant.

We first establish that the fixed-points of an injective tensor are $\order{\varepsilon}$ close to the maximally-mixed state in trace-distance.

\begin{prop}[Fixed point proximity to identity]
\label{prop:fp_near_identity}
Let $\bm{\mu}_\star$ be the fixed point from \autoref{thm:nonTI_uniqueness}. Then for each edge $\vec{e} \in \vec{E}$,
$$
\|\mu_{\star,\vec{e}} - I_1\|_1 \le \frac{2(1-\delta^2)}{\delta^2} = 2\varepsilon + \order{\varepsilon^2}.
$$
where $I_1 = \mathbbm{1}/D$.
\end{prop}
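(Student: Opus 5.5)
The plan is to use the fixed-point equation directly. Fix a directed edge $\vec{e}=(v,n)$ and set
\[
X := \bigotimes_{m\in\NN(v)\setminus\{n\}} \mu_{\star,(m,v)} \in \KK\Big(\bigotimes_{m\neq n}\HH_{(m,v)}\Big),
\]
so that $\mu_{\star,\vec{e}} = f_{\vec{e}}(\bm\mu_\star)/\Tr f_{\vec{e}}(\bm\mu_\star)$ with $f_{\vec{e}}(\bm\mu_\star) = \Phi_{(v,n)}(X)$. Splitting $\Phi_{(v,n)} = \Phi_0^{(v,n)} + \Delta\Phi^{(v,n)}$ and using \eqref{eq:Phi0_depolarizing} gives $\Phi_0^{(v,n)}(X) = \mathbbm{1}$. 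Therefore
\[
f_{\vec{e}}(\bm\mu_\star) = \mathbbm{1} + \Delta\Phi(X).
\]
Note that $\mathbbm{1}/\Tr\mathbbm{1} = \mathbbm{1}/D = I_1$. The claim thus reduces to comparing the normalizations of two positive operators, $A := f_{\vec{e}}(\bm\mu_\star)$ and $B := \mathbbm{1}$.

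Apply the normalization inequality established in the proof of \autoref{prop:banach_contr}:
\[
\Big\|\frac{A}{\Tr A}-\frac{B}{\Tr B}\Big\|_1 \le \frac{2}{\min(\Tr A,\Tr B)}\,\|A-B\|_1 .
\]
For the numerator, \autoref{lem:depolperturbation}, with $\dim\HH_{L^c} = D$ for a single output leg and $\|X\|_1 = 1$, gives $\|A-B\|_1 = \|\Delta\Phi(X)\|_1 \le D(1-\delta^2)$. For the denominator, \autoref{prop:nonTI_nonvanishing} gives $\Tr A \ge D\delta^2$, while $\Tr B = D \ge D\delta^2$. Combining these,
\[
\|\mu_{\star,\vec{e}} - I_1\|_1 \le \frac{2D(1-\delta^2)}{D\delta^2} = \frac{2\varepsilon}{1-\varepsilon} = 2\varepsilon + \order{\varepsilon^2},
\]
which is the stated bound.

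There is no real obstacle. The argument needs only that $\bm\mu_\star$ is some fixed point in $\KK_G$, so it applies to any fixed point from \autoref{thm:nonTI_existence}, not just the unique one of \autoref{thm:nonTI_uniqueness}. The one point to verify is that \autoref{lem:depolperturbation} applies to the product input $X$: it is positive and has unit trace, so it lies in $\KK$ as that lemma requires. Everything else is bookkeeping.
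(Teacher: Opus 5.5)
Your proof is correct and is essentially the paper's argument. The paper also writes $f_{\vec{e}}(\bm{\mu}_\star)=\mathbbm{1}+R$, takes $\|R\|_1\le D(1-\delta^2)$ from \autoref{lem:depolperturbation} and $\Tr f_{\vec{e}}(\bm{\mu}_\star)\ge D\delta^2$, and then carries out by hand the same triangle-inequality estimate that you import as the normalization inequality from \autoref{prop:banach_contr} (specialized to $B=\mathbbm{1}$). Your side remark is also right: uniqueness is never used, so the bound holds for any fixed point in $\KK_G$.
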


\begin{proof}
Fix any vertex $v\in V$. Fix an outgoing directed edge from $v$, call it $\vec{e}$. Since $\mu_{\star,\vec{e}}$ is a component of the fixed point, we have
$$
\mu_{\star,\vec{e}} = \frac{f_{\vec{e}}(\bm{\mu}_\star)}{\Tr f_{\vec{e}}(\bm{\mu}_\star)}.
$$

Write
$$
f_{\vec{e}}(\bm{\mu}_\star)=\mathbbm 1 + R,
\qquad
R:=f_{\vec{e}}(\bm{\mu}_\star)-\mathbbm 1.
$$
We know that (i) $\|R\|_1 \le D(1-\delta^2)$ [by \autoref{lem:depolperturbation}.] and (ii) $\Tr f_{\vec{e}}(\bm{\mu}_\star) \ge \delta^2 D$ [by \autoref{prop:injective_nonvanishing}]

Now write
$$
\mu_{\star,\vec{e}} - I_1
=
\frac{f_{\vec{e}}(\bm{\mu}_\star)}{\Tr f_{\vec{e}}(\bm{\mu}_\star)} - \frac{\mathbbm 1}{D}
=
\frac{\mathbbm 1 + R}{\Tr f_{\vec{e}}(\bm{\mu}_\star)} - \frac{\mathbbm 1}{D}.
$$
Applying the triangle inequality,
\begin{align*}
\|\mu_{\star,\vec{e}} - I_1\|_1
&\le
\Big\|\frac{R}{\Tr f_{\vec{e}}(\bm{\mu}_\star)}\Big\|_1
+
\Big\|\mathbbm 1\Big(\frac{1}{\Tr f_{\vec{e}}(\bm{\mu}_\star)}-\frac{1}{D}\Big)\Big\|_1 \\
&=
\frac{\|R\|_1}{\Tr f_{\vec{e}}(\bm{\mu}_\star)}
+
\|\mathbbm 1\|_1\,
\frac{\big|\Tr f_{\vec{e}}(\bm{\mu}_\star)-D\big|}{\Tr f_{\vec{e}}(\bm{\mu}_\star)\cdot D}.
\end{align*}
Using $\big|\Tr f_{\vec{e}}(\bm{\mu}_\star)-D\big|\le \big|\Tr(R)\big| \leq \|R\|_1$ and $\|\mathbbm 1\|_1=D$, we obtain
$$
\|\mu_{\star,\vec{e}} - I_1\|_1
\le
\frac{2\|R\|_1}{\Tr f_{\vec{e}}(\bm{\mu}_\star)}
\le
\frac{2\,D(1-\delta^2)}{\delta^2 D}
=
\frac{2(1-\delta^2)}{\delta^2}.
$$
where we used $\|R\|_1 \leq D(1-\delta^2)$ again. Thus, 
$$
\|\mu_{\star,\vec{e}} - I_1\|_1 \leq 2\varepsilon + \order{\varepsilon^2}
$$
\end{proof}

Next up, we show that given strong injectivity, the messages are full-rank, which is a technical step we need for ensuring that the BP normalization is well-conditioned.

\begin{lemma}[Full rank messages]\label{lem:fullrank}
Let \(\mu \succeq 0\) be an operator on a \(D\)-dimensional Hilbert space with $\Tr(\mu)=1$
and let $I_1 := \frac{I}{D}$
denote the trace-normalized identity. If
$$\|\mu - I_1\|_1 < \frac{2}{D},
$$
then $\mu$ is full rank.
\end{lemma}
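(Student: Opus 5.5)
We argue by contrapositive through the smallest eigenvalue. Since $\mu \succeq 0$ is Hermitian with unit trace, write its eigenvalues as $p_1 \ge p_2 \ge \dots \ge p_D \ge 0$ with $\sum_i p_i = 1$. Because $\mu$ and $I_1 = \mathbbm{1}/D$ commute, they are simultaneously diagonalizable. Hence the trace norm of the difference is exactly the $\ell_1$ distance of the spectrum from the uniform vector:
\begin{equation}
\|\mu - I_1\|_1 = \sum_{i=1}^D \left|p_i - \tfrac{1}{D}\right|.
\end{equation}
This reduces the claim to a statement about probability vectors.

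Suppose $\mu$ is not full rank, so that $p_D = 0$. That single coordinate contributes $1/D$ to the sum. The remaining coordinates satisfy $\sum_{i<D} (p_i - 1/D) = 1 - (D-1)/D = 1/D$, so by the triangle inequality they contribute at least $\sum_{i<D} |p_i - 1/D| \ge 1/D$. Altogether $\|\mu - I_1\|_1 \ge 2/D$, which contradicts the hypothesis. Hence $p_D > 0$ and $\mu$ is full rank.

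An equivalent direct route avoids the contrapositive. Since the deviations $p_i - 1/D$ sum to zero, the positive and negative parts of $\sum_i |p_i - 1/D|$ are equal. Therefore $\|\mu - I_1\|_1 = 2\sum_i (1/D - p_i)_+ \ge 2(1/D - p_D)$, which rearranges to the quantitative bound
\begin{equation}
p_D \ge \tfrac{1}{D} - \tfrac{1}{2}\|\mu - I_1\|_1 > 0.
\end{equation}
I would present this version, since a lower bound on $\lambda_{\min}(\mu_{\star,\vec e})$ is exactly what is needed later. Combined with \autoref{prop:fp_near_identity}, which gives $\|\mu_{\star,\vec e} - I_1\|_1 \le 2\varepsilon/\delta^2$, it shows that the fixed-point messages are full rank with smallest eigenvalue at least $1/D - \varepsilon/\delta^2$ whenever $\varepsilon \lesssim 1/D$. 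This is the source of the $1/D$ factor in $\varepsilon_{**}$.

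There is no real obstacle in this argument. The only point requiring care is the first step: the trace norm of $\mu - I_1$ reduces to the spectral $\ell_1$ distance only because $I_1$ is proportional to the identity, and this would fail for a general reference state.
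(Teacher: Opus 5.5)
Your proof is correct and follows essentially the same route as the paper: commutation of $\mu$ with $I_1$ reduces the trace norm to the spectral $\ell_1$ distance, and since the deviations $p_i - 1/D$ sum to zero, the negative part (which contains the $-1/D$ from a vanishing eigenvalue) accounts for half of $\|\mu - I_1\|_1$. Your direct version $p_D \ge \tfrac{1}{D} - \tfrac{1}{2}\|\mu - I_1\|_1$ is the quantitative form of the paper's contrapositive argument and is a useful strengthening.
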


\begin{proof}
Let \(\lambda_1,\dots,\lambda_D\) be the eigenvalues of \(\mu\). Since \(\mu \succeq 0\) and \(\Tr(\mu)=1\), we have
\[
\lambda_i \ge 0,
\qquad
\sum_{i=1}^D \lambda_i = 1.
\]
Since \(I_1 = I/D\), its eigenvalues are all equal to \(1/D\). Since $[\mu, I_1] = 0$, we have,
\[
\|\mu-I_1\|_1
=
\sum_{i=1}^D \left|\lambda_i-\frac1D\right|.
\]

We prove the contrapositive. Suppose that \(\mu\) is not full rank. Then at least one eigenvalue vanishes, say \(\lambda_j=0\). Define
\[
x_i := \lambda_i-\frac1D.
\]
Then
\[
\sum_{i=1}^D x_i
=
\sum_{i=1}^D \lambda_i - 1
=
0.
\]
Hence the total positive part of the \(x_i\)'s equals the total absolute value of the negative part:
\[
\sum_{i:x_i>0} x_i
=
\sum_{i:x_i<0} (-x_i).
\]
Therefore,
\[
\sum_{i=1}^D |x_i|
=
2\sum_{i:x_i<0} (-x_i).
\]
Since \(x_j = -1/D\), we obtain
\[
\sum_{i:x_i<0} (-x_i) \ge \frac1D,
\]
and thus
\[
\|\mu-I_1\|_1
=
\sum_{i=1}^D \left|\lambda_i-\frac1D\right|
=
\sum_{i=1}^D |x_i|
\ge
\frac{2}{D}.
\]
This proves the contrapositive: if \(\mu\) is not full rank, then
\[
\|\mu-I_1\|_1 \ge \frac{2}{D}.
\]
Equivalently, if
\[
\|\mu-I_1\|_1 < \frac{2}{D},
\]
then \(\mu\) is full rank.
\end{proof}

Now, we argue that the BP message normalization $I_{vw}$ is perturbed away from the $\varepsilon=0$ case by $O(\varepsilon^2)$, given strong-injectivity. As a reminder, we defined the excitation projector given matrices $X, X'\in\KK(\HH)$ as 
$$
\Pi_{X,X'}^\perp(Y):=Y-\frac{\Tr(XY)}{\Tr(XX')} X'.
$$

\begin{lemma}
 \label{lem:Ivwconditioned}
    For any edge $e = \{v,w\} \in E$, we have the normalization of the excitation projector (as in \autoref{def:excitation_projector}) $I_{vw} := \Tr(\mu_{(v,w),\star}\mu_{(w,v),\star})$ obeys,
    \begin{equation}
      \big|I_{vw} - \frac{1}{D}\big| \leq 4\frac{(1-\delta^2)^2}{\delta^4} = \order{\varepsilon^2}
    \end{equation}
    In addition, $I_{vw}$ remains positive at sufficiently small $\varepsilon$.
    \begin{equation}
      \varepsilon < \frac{1}{1+2\sqrt{D}} \implies I_{vw} > 0
    \end{equation}
\end{lemma}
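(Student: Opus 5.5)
The plan is to write both fixed-point messages on the edge as small perturbations of the maximally mixed state and expand the overlap. By \autoref{prop:fp_near_identity}, set $\mu_{(v,w),\star} = I_1 + A$ and $\mu_{(w,v),\star} = I_1 + B$ with $I_1 = \mathbbm{1}/D$. Both messages have unit trace, so $A$ and $B$ are Hermitian and traceless. They also satisfy $\|A\|_1, \|B\|_1 \le 2(1-\delta^2)/\delta^2$.

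Expanding the product gives
\begin{equation}
I_{vw} = \Tr(I_1 I_1) + \Tr(I_1 B) + \Tr(A I_1) + \Tr(AB) = \frac{1}{D} + \frac{\Tr B + \Tr A}{D} + \Tr(AB).
\end{equation}
The linear terms vanish because $\Tr A = \Tr B = 0$. This cancellation is the key point, and it is why the deviation is second order rather than first order.

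For the quadratic term we use H\"older's inequality together with $\|B\|_\infty \le \|B\|_1$:
\begin{equation}
|\Tr(AB)| \le \|A\|_1\|B\|_\infty \le \|A\|_1 \|B\|_1 \le 4\frac{(1-\delta^2)^2}{\delta^4}.
\end{equation}
This is exactly the claimed bound, and it is $\order{\varepsilon^2}$ since $\delta^2 = 1-\varepsilon$. A slightly sharper route would bound $|\Tr(AB)| \le \|A\|_2\|B\|_2$, but the stated constant only needs the crude estimate.

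For positivity, the bound above gives $I_{vw} \ge 1/D - 4\varepsilon^2/(1-\varepsilon)^2$. So it suffices that $2\varepsilon/(1-\varepsilon) < 1/\sqrt{D}$, which rearranges to $\varepsilon(2\sqrt{D}+1) < 1$, i.e.\ $\varepsilon < 1/(1+2\sqrt{D})$, exactly the stated threshold. As an independent check, $I_{vw} = \Tr(\mu^{1/2}\nu\mu^{1/2}) \ge 0$ automatically. Strict positivity would also follow from \autoref{lem:fullrank} whenever one of the messages is full rank, since the other is nonzero. In the writeup I will nonetheless use the quantitative argument, because it yields the stated threshold.

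The only delicate point is confirming that $A$ and $B$ are traceless, which holds because the messages in $\KK_G$ are trace-normalized; the rest is routine. Note that \autoref{prop:fp_near_identity} was derived from the fixed-point equation alone, so it applies to both directed messages on the edge and does not require uniqueness.
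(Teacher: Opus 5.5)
Your proposal is correct and follows the paper's proof essentially step for step. It uses the same expansion of $\Tr(\mu_{(v,w),\star}\mu_{(w,v),\star})$ around $I_1$, with the linear terms vanishing by tracelessness, and the same H\"older estimate $|\Tr(AB)|\le\|A\|_1\|B\|_\infty\le\|A\|_1\|B\|_1$ combined with \autoref{prop:fp_near_identity}. The positivity condition is obtained by the same rearrangement $1-\varepsilon>2\varepsilon\sqrt{D}$.
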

\begin{proof}
  Since $\Tr(\mu_{\star,\vec{e}})=\Tr(\mu_{\star,\overleftarrow{e}})=\Tr(I_1)=1$, we have 
\begin{align}
  \Tr(\mu_{\star,\vec{e}}\mu_{\star,\overleftarrow{e}})  
  &= \Tr\!\Big([\mu_{\star,\vec{e}}-I_1 + I_1]\,[\mu_{\star,\overleftarrow{e}}-I_1 + I_1]\Big) \\ 
  &=  \Tr(I_1^2) + \Tr\!\Big(I_1[\mu_{\star,\vec{e}}-I_1]\Big) + \Tr\!\Big(I_1[\mu_{\star,\overleftarrow{e}} - I_1]\Big) 
      + \Tr\!\Big([\mu_{\star,\vec{e}} - I_1][\mu_{\star,\overleftarrow{e}} - I_1]\Big) \\ 
  &= \Tr(I_1^2)+ \Tr\!\Big([\mu_{\star,\vec{e}} - I_1][\mu_{\star,\overleftarrow{e}} - I_1]\Big) \end{align}
Now, we have $\Tr(I_1^2)= 1/D$ and,
$$
\Big| \Tr\!\Big([\mu_{\star,\vec{e}} - I_1][\mu_{\star,\overleftarrow{e}} - I_1]\Big) \Big| \leq \|\mu_{\star,\vec{e}} - I_1\|_1 \|\mu_{\star,\overleftarrow{e}} - I_1\|_\infty\leq \|\mu_{\star,\vec{e}} - I_1\|_1\|\mu_{\star,\overleftarrow{e}} - I_1\|_1
$$
where, we use Hölder's inequality $|\Tr(AB)|\le \|A\|_1\|B\|_\infty$ and $\|B\|_\infty \le \|B\|_1$. Further using Proposition~\ref{prop:fp_near_identity}, we get, 

\begin{equation}
  \Big| \Tr(\mu_{\star,\vec{e}}\mu_{\star,\overleftarrow{e}})  - \frac{1}{D} \Big| \leq 4\frac{(1-\delta^2)^2}{\delta^4}
\end{equation}
 For small enough $1-\delta^2$, we are guaranteed that $\Tr(\mu_{\star,\vec{e}}\mu_{\star,\overleftarrow{e}}) > 0$, and the normalization in the excitation projector is meaningful. Substituting $\varepsilon = 1 - \delta^2$, the right-hand side becomes $4\frac{\varepsilon^2}{(1-\varepsilon)^2}$. To ensure that the trace is strictly positive, we require the lower bound of this absolute value inequality to be greater than zero:

\begin{equation}
    \frac{1}{D} - 4\frac{\varepsilon^2}{(1-\varepsilon)^2} > 0
\end{equation}

Rearranging the terms and taking the square root of both sides, which is valid since $1-\varepsilon > 0$, we obtain:

\begin{equation}
    \frac{1}{\sqrt{D}} > \frac{2\varepsilon}{1-\varepsilon} \implies 1 - \varepsilon > 2\varepsilon\sqrt{D}
\end{equation}

Isolating $\varepsilon$, we find the strict positivity condition:

\begin{equation}
    \varepsilon < \frac{1}{1 + 2\sqrt{D}}
\end{equation}

Thus, as long as $\varepsilon$ satisfies this bound, we are guaranteed that $\Tr(\mu_{\star,\vec{e}}\mu_{\star,\overleftarrow{e}}) > 0$, and the normalization in the excitation projector is physically meaningful.

\end{proof}
Now, we show that the BP fixed point approximation is well conditioned (away from zero). This is important, because the cluster expansion requires normalizing the tensor network with the fixed point value. We obtain that for sufficiently small $\varepsilon < \order{1/D}$ ensures well conditioned normalization.

\begin{prop}[Bound on BP approximation]\label{prop:bpapproximationbound}
    The local BP approximation at vertex $v$ with degree $\Delta$,
    \begin{equation}
        Z^{(v)} := \left[\bigotimes_{n \in \NN(v)}\frac{\mu_{(n,v)}}{\sqrt{I_{nv}}}\right] \star T_v,
    \end{equation}
    for a $\delta$-injective PEPS obeys, 
    \begin{equation}
        \left| Z^{(v)} - D^{\Delta/2} \right| \leq D^{(\Delta/2)+1}\varepsilon + \order{\varepsilon^2}
    \end{equation}
    where $\varepsilon := 1-\delta^2$. In particular, we have that, 
    \begin{equation}
     \varepsilon < \min\{\frac{1}{D}, \frac{1}{1+2\sqrt{D}}\} \implies Z^{(v)} > 0
    \end{equation} 
\end{prop}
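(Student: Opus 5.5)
The plan is to split $Z^{(v)}$ into a numerator, the contraction of the double-layer tensor $T_v\star\bar T_v$ with the unnormalized incoming fixed-point messages, and a denominator $\prod_{n\in\NN(v)}\sqrt{I_{nv}}$. We bound the numerator with the same depolarizing-plus-perturbation splitting $\Phi=\Phi_0+\Delta\Phi$ used in \autoref{lem:depolperturbation}, and the denominator with \autoref{lem:Ivwconditioned}.

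\emph{Numerator.} Contracting all $\Delta$ virtual legs of $T_v\star\bar T_v$ with $X:=\bigotimes_{n}\mu_{\star,(n,v)}$ gives
\begin{equation*}
N_v:=\Tr\big[T_v^\dagger T_v\, X\big]=\sum_a\lambda_a^2\,\langle u_a|X|u_a\rangle .
\end{equation*}
This is $\Phi_{L\to L^c}$ with $L=[\Delta]$ and $L^c=\emptyset$. Its depolarizing part is $\Phi_0(X)=\Tr X=1$, up to the normalization convention for the Kraus operators in \eqref{eq:Kraus_bistochastic}. \autoref{lem:depolperturbation} (equivalently, $T_v^\dagger T_v$ having spectrum in $[\delta^2,1]$) then gives $|N_v-1|\le \dim(\HH_{L^c})\,\varepsilon\,\|X\|_1$. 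Tracking the convention-dependent factor $\dim(\HH_{L^c})$, which is at most $D$, gives $|N_v-1|\le D\varepsilon$. In any case $N_v\ge\delta^2>0$. No use of the fixed-point property is needed for this step; only positivity and unit trace of the messages enter.

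\emph{Denominator.} By \autoref{lem:Ivwconditioned}, each $I_{nv}=\frac1D\big(1+\order{\varepsilon^2}\big)$, and for $\varepsilon<1/(1+2\sqrt D)$ it is strictly positive, so the square roots are real. Expanding,
\begin{equation*}
\prod_{n\in\NN(v)} I_{nv}^{-1/2}=D^{\Delta/2}\big(1+\order{\varepsilon^2}\big).
\end{equation*}
Since $\Delta,D=\order{1}$, the constants in the $\order{\varepsilon^2}$ term are absorbed, as the paper's conventions allow.

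\emph{Combining.} Multiplying the two estimates gives
\begin{equation*}
Z^{(v)}=D^{\Delta/2}(1+\order{D\varepsilon})\big(1+\order{\varepsilon^2}\big),
\end{equation*}
so $|Z^{(v)}-D^{\Delta/2}|\le D^{\Delta/2+1}\varepsilon+\order{\varepsilon^2}$. For positivity, the leading-order lower bound $Z^{(v)}\ge D^{\Delta/2}(1-D\varepsilon)-\order{\varepsilon^2}$ is positive once $\varepsilon<1/D$, with the $\order{\varepsilon^2}$ term handled by taking $\varepsilon$ small. Alternatively, positivity follows directly from $N_v\ge\delta^2>0$ together with $I_{nv}>0$, which holds for $\varepsilon<1/(1+2\sqrt D)$. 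Together these give the stated condition $\varepsilon<\min\{1/D,\,1/(1+2\sqrt D)\}$.

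The only delicate step is the numerator: pinning down the correct dimensional prefactor in the $L^c=\emptyset$ bipartition under the paper's Kraus normalization. That prefactor is what produces the $D^{\Delta/2+1}$ rather than $D^{\Delta/2}$ in the error term, so I would check it against \eqref{eq:Kraus_bistochastic} first. Everything else is a direct substitution of \autoref{lem:Ivwconditioned} and \autoref{prop:fp_near_identity}.
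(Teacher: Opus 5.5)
Your proof is correct and follows the paper's structure: the numerator/denominator split, $\Phi=\Phi_0+\Delta\Phi$ for the numerator, and \autoref{lem:Ivwconditioned} for the denominator. The one difference is in the numerator, and it works in your favour.

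\begin{itemize}
\item The paper isolates a leg $n_*$ and writes $N^{(v)}=1+\Tr[\mu_{(n_*,v)}\Delta\Phi_{(v,n_*)}(\cdot)]$. It then applies H\"older with the crude bound $\|\mu_{(n_*,v)}\|_\infty\le 1$, together with \autoref{lem:depolperturbation} for $\dim\HH_{L^c}=D$. That step is the only source of the extra factor $D$.
\item Your full contraction has $L^c=\emptyset$, so $\dim\HH_{L^c}=1$ and $\Phi_0(X)=\Tr X$ exactly under the paper's Kraus convention. Equivalently, $T_v^\dagger T_v$ has spectrum in $[\delta^2,1]$. This gives directly $\delta^2\le N_v\le 1$, hence $|N_v-1|\le\varepsilon$.
\item Consequently your route yields the sharper bound $|Z^{(v)}-D^{\Delta/2}|\le D^{\Delta/2}\varepsilon+\order{\varepsilon^2}$. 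Loosening $1$ to $D$ to match the stated bound is harmless.
\item The concern in your last paragraph is unfounded: no dimensional prefactor is needed in your version.
\item $N_v\ge\delta^2>0$ combined with $I_{nv}>0$ gives $Z^{(v)}>0$ already for $\varepsilon<1/(1+2\sqrt D)$. Keep that argument and drop the leading-order one with the hand-waved $\order{\varepsilon^2}$ term.
\item The $\varepsilon<1/D$ condition is needed in the paper's proof only because of its looser numerator estimate $|\delta N|\le D\varepsilon$.
\end{itemize}
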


\begin{proof}
    We start by splitting $Z^{(v)}$ into its numerator $N^{(v)}$ and denominator $D^{(v)}$ components. By definition, the numerator $N^{(v)}$ evaluates to:

    \[
    N^{(v)} :=
        \left[\bigotimes_{n \in \NN(v)}{\mu_{(n,v)}}\right] \star T_v = 1 + \Tr\left[\mu_{(n_*,v)} \Delta\Phi_{(v,n_*)}\left(\bigotimes_{m\in\NN(v)\setminus\{n_*\}} \mu_{(m,v)}\right)\right]
    \]
    We can bound the trace term using Hölder's inequality for Schatten norms ($\|AB\|_1 \leq \|A\|_\infty \|B\|_1$) and the fact that the infinity norm is upper bounded by the 1-norm ($\|\mu_{(n_*,v)}\|_\infty \leq \|\mu_{(n_*,v)}\|_1 = 1$):
    \[
        \Big|\Tr\left[\mu_{(n_*,v)} \Delta\Phi_{(v,n_*)}\left(\bigotimes_{m\in\NN(v)\setminus\{n_*\}} \mu_{(m,v)}\right)\right]\Big| \leq 1 \cdot \left\|\Delta\Phi_{(v,n_*)}\left(\bigotimes_{m\in\NN(v)\setminus\{n_*\}} \mu_{(m,v)}\right)\right\|_1
    \]
    By \autoref{lem:depolperturbation}, substituting $\dim(\HH_{L^c}) = D$ and $1-\delta^2 = \varepsilon$, we know that for any operator $A$:
    \[
        \|\Delta\Phi(A)\|_1 \leq D\varepsilon \|A\|_1
    \]
    Applying this to our state, and using the fact that the trace norm of a product of valid normalized messages is $1$ (i.e., $\|\bigotimes \mu_{(m,v)}\|_1 = 1$), the deviation of the numerator from $1$, $\delta N = N^{(v)}-1$ is bounded by:
    \[
        |\delta N| \leq D\varepsilon
    \]
    For the denominator $D^{(v)}$, we are given the bound on the local normalizations $I_{nv}$ by \autoref{lem:Ivwconditioned}. Substituting $\delta^4 = (1-\varepsilon)^2$:
    \[
        \Big| I_{nv}  - \frac{1}{D} \Big| \leq 4\frac{\varepsilon^2}{(1-\varepsilon)^2} = \order{\varepsilon^2}
    \]
    This implies we can write $I_{nv} = \frac{1}{D} + \delta I_{nv}$ where $|\delta I_{nv}| = \order{\varepsilon^2}$. Expanding the product of the denominators over the $\Delta$ neighbors yields:
    \[
    [D^{v}]^{-1} = 
        \prod_{n \in \NN(v)} I_{nv}^{-1/2} = D^{\Delta/2} \prod_{n \in \NN(v)} (1 + D \delta I_{nv})^{-1/2} = D^{\Delta/2}(1 + \order{\varepsilon^2})
    \]
    Assembling the numerator and the denominator, we get:
    \[
        Z^{(v)} = D^{\Delta/2}(1 + \delta N)(1 + \order{\varepsilon^2}) = D^{\Delta/2}(1 + \delta N + \order{\varepsilon^2})
    \]
    Subtracting the unperturbed expectation $D^{\Delta/2}$ and applying our bound for $|\delta N|$ yields the desired result:
    \[
        \left| Z^{(v)} - D^{\Delta/2} \right| \leq D^{(\Delta/2)+1}\varepsilon + \order{\varepsilon^2}
    \]
    Now $\varepsilon < 1/D$, guarantees that the numerator is positive (this can also be seen by the fact that messages must be full rank by \autoref{lem:fullrank}). From \autoref{lem:Ivwconditioned}, we have that $\varepsilon < 1/(1+2\sqrt{D})$ guarantess that the denominator is positive. Thus, we have, 
    \begin{equation}
      \varepsilon < \min\{\frac{1}{D}, \frac{1}{1+2\sqrt{D}}\} \implies Z^{(v)} > 0
    \end{equation}
\end{proof}

One also notes that, positivity of the numerator of $Z^{(v)}$,
$$ N^{(v)} := \left[\bigotimes_{n \in \NN(v)}{\mu_{(n,v)}}\right] \star T_v =\Tr\left[\mu_{(n_*,v)} \Phi_{(v,n_*)}\left(\bigotimes_{m\in\NN(v)\setminus\{n_*\}} \mu_{(m,v)}\right) = \Tr\left[\mu_{(n_*,v)}\sum_a \lambda_a^2 K_a \mu_{m\neq n_*}K_a^\dagger\right]\right] $$ where we denoted $\mu_{m\neq n_*} := \left(\bigotimes_{m\in\NN(v)\setminus\{n_*\}} \mu_{(m,v)}\right)$ can also be obtained by a full-rank condition on the messages. To see this, we first recall that $\varepsilon < 1/D$ ensures that the message tensors are full rank. Second, for each term in the sum, note, $\mu_{m\neq n_*} \succ 0 \implies  K_a \mu_{m\neq n_*}K_a^\dagger \succ 0$ for $K_a \neq 0$, and thus
$$
\Tr[\mu_{(n_*,v)} K_a \mu_{m\neq n_*}K_a^\dagger] > 0
$$
since $\Tr(AB) > 0$ for two positive operators $A,B\succ 0$.

Now, we will use the fact that the fixed-point messages are $\order{\varepsilon}$ close to the maximally mixed state to argue closeness of the projector to the orthogonal-projector onto the space of trace-less matrices. 

\begin{lemma}[Projection difference bound]
\label{lem:projection_difference}
Given a fixed point $\bm{\mu}_\star$, denote its component on both directions of an edge $e$ as $\mu_{\star,\vec{e}}, \mu_{\star,\overleftarrow{e}}$. Then, for any $Y$ with $\|Y\|_1\le 1$,
$$
\|\Pi_{\mu_{\star,\vec{e}},\mu_{\star,\overleftarrow{e}}}^\perp(Y)-\Pi_{I_1,I_1}^\perp(Y)\|_1 \le g(\varepsilon, D, \Delta):= 2(D+1)\varepsilon + \order{\varepsilon^2}
$$
as long as 
$$ 
\varepsilon<\frac{1}{1+2\sqrt{D}}
$$

\end{lemma}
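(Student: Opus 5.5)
The plan is to write the difference of the two projectors explicitly and bound it term by term. Since $\Pi^\perp_{I_1,I_1}(Y) = Y - \frac{\Tr(I_1 Y)}{\Tr(I_1^2)} I_1 = Y - \Tr(Y)\, I_1$, and writing $\mu := \mu_{\star,\vec{e}}$, $\mu' := \mu_{\star,\overleftarrow{e}}$ and $I := \Tr(\mu\mu') = I_{vw}$, the difference is
\[
\Pi^\perp_{I_1,I_1}(Y) - \Pi^\perp_{\mu,\mu'}(Y) = \frac{\Tr(\mu Y)}{I}\,\mu' - \Tr(Y)\, I_1 .
\]
I will insert and subtract intermediate terms. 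This splits the difference into three pieces: (a) $\frac{\Tr(\mu Y)}{I}(\mu' - I_1)$, (b) $\Big(\frac{1}{I} - D\Big)\Tr(\mu Y)\, I_1$, and (c) $D\,\Tr\big((\mu - I_1)Y\big)\, I_1$. Here I use $D\,\Tr(I_1 Y) = \Tr(Y)$, so (c) together with $D\,\Tr(\mu Y) I_1 - \Tr(Y) I_1$ recombine correctly.

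Each piece is then bounded using \autoref{prop:fp_near_identity} and \autoref{lem:Ivwconditioned}, together with Hölder's inequality $|\Tr(AY)| \le \|A\|_\infty \|Y\|_1 \le \|A\|_\infty$.

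For (a), $\|\mu\|_\infty \le \|I_1\|_\infty + \|\mu - I_1\|_1 = 1/D + O(\varepsilon)$. Also $1/I = D + O(\varepsilon^2)$, which is well defined and positive precisely when $\varepsilon < 1/(1+2\sqrt{D})$ by \autoref{lem:Ivwconditioned}; this is where the hypothesis enters. Combined with $\|\mu' - I_1\|_1 \le 2\varepsilon + O(\varepsilon^2)$, piece (a) contributes $(1/D)\cdot D\cdot 2\varepsilon + O(\varepsilon^2) = 2\varepsilon + O(\varepsilon^2)$.

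For (b), $|1/I - D| = D^2 |I - 1/D|/I = O(\varepsilon^2)$, and the remaining factors are $O(1)$, so (b) is $O(\varepsilon^2)$.

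For (c), $|\Tr((\mu - I_1)Y)| \le \|\mu - I_1\|_\infty \le \|\mu - I_1\|_1 \le 2\varepsilon + O(\varepsilon^2)$ and $\|D I_1\|_1 = D$, so (c) contributes $2D\varepsilon + O(\varepsilon^2)$.

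Summing the three pieces gives $2(D+1)\varepsilon + O(\varepsilon^2)$, matching $g$. The main care point is keeping the leading constants exact, in particular pairing the factor $1/I \approx D$ with $\|\mu\|_\infty \approx 1/D$ in (a), while relegating all cross terms to $O(\varepsilon^2)$ under the standing convention that $D$ and $\Delta$ are $O(1)$. I also need to check that the positivity of $I$ is strict and quantitatively bounded away from zero at fixed $\varepsilon$ below the threshold, so that the expansion of $1/I$ is legitimate.
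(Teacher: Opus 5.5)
Your proposal is correct and follows essentially the paper's route: the paper also splits the difference into three add-and-subtract pieces and bounds them with Hölder, \autoref{prop:fp_near_identity} and \autoref{lem:Ivwconditioned}, obtaining the same $2\varepsilon$, $O(\varepsilon^2)$ and $2D\varepsilon$ contributions. The only difference is the order of the telescoping: the paper inserts $\Tr(Y)\,\mu_{\star,\overleftarrow{e}}$ and splits $I_e\mathbbm{1}-\mu_{\star,\vec{e}}=(I_e-1/D)\mathbbm{1}+(I_1-\mu_{\star,\vec{e}})$. There is one harmless slip: $|1/I-D| = D\,|I-1/D|/I$, not $D^2|I-1/D|/I$, which does not affect the $O(\varepsilon^2)$ conclusion.
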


\begin{proof}
We will use the following:   
\begin{enumerate}
  \item $\Big| I_e  - \frac{1}{D} \Big| \leq 4\frac{(1-\delta^2)^2}{\delta^4}$ [by \autoref{lem:Ivwconditioned}] 
  \item $\|\mu_{\star,\vec{e}} - I_1\|_1 \le \frac{2(1-\delta^2)}{\delta^2}$ [by \autoref{prop:fp_near_identity}]
\end{enumerate}

 We can write
\begin{align*}
\Pi_{\mu_{\star,\vec{e}},\mu_{\star,\overleftarrow{e}}}^\perp(Y)-\Pi_{I_1,I_1}^\perp(Y)
&=\Big(Y-\frac{\Tr(\mu_{\star,\vec{e}}Y)}{I_e} \mu_{\star,\overleftarrow{e}}\Big)
 -\Big(Y-\frac{\Tr(I_1 Y)}{\Tr(I_1^2)} I_1\Big) \\
&=\Tr(Y) I_1-\frac{\Tr(\mu_{\star,\vec{e}}Y)}{I_e} \mu_{\star,\overleftarrow{e}},
\end{align*}

adding and subtracting $\mu_{\star,\overleftarrow{e}}\Tr(Y)$ gives, 

\begin{align*}
  &\Tr(YI_1) -\frac{\Tr(\mu_{\star,\vec{e}}Y)}{I_e} \mu_{\star,\overleftarrow{e}} \\ 
  &\quad= \Tr(Y)[I_1 - \mu_{\star,\overleftarrow{e}}] + \mu_{\star,\overleftarrow{e}}\left(\Tr(Y) - \frac{\Tr(\mu_{\star,\vec{e}}Y))}{I_e}\right) \\
 &\quad= \Tr(Y)[I_1 - \mu_{\star,\overleftarrow{e}}] + \mu_{\star,\overleftarrow{e}} \left(\frac{\Tr(Y[I_e \mathbbm{1} - \mu_{\star,\vec{e}}])}{I_e}\right)  \\ 
 &\quad= \Tr(Y)[I_1 - \mu_{\star,\overleftarrow{e}}] + \mu_{\star,\overleftarrow{e}} \left(\frac{\Tr(Y[I_e \mathbbm{1} - I_1 + I_1 - \mu_{\star,\vec{e}}])}{I_e}\right) 
\end{align*}
We perform $|\Tr(Y[I_e \mathbbm{1} - I_1 + I_1 - \mu_{\star,\vec{e}}])| \leq |\Tr(Y[I_e \mathbbm{1} - I_1])| + |\Tr(Y[I_1 - \mu_{\star,\vec{e}}])| \leq \|Y\|_1 \|(I_e - 1/D) \mathbbm{1}\|_\infty + \|Y\|_1 \|[I_1 - \mu_{\star,\vec{e}}]\|_\infty \leq |(I_e - 1/D)| +\|[I_1 - \mu_{\star,\vec{e}}]\|_1 $ through a sequence of H\"{o}lder and Schatten norm inequalities. Thus, 
\begin{align*}
  \Big\| \Tr(YI_1) -\frac{\Tr(\mu_{\star,\vec{e}}Y)}{I_e} \mu_{\star,\overleftarrow{e}} \Big\|_1 \leq \|I_1 - \mu_{\star,\overleftarrow{e}}\|_1 + \frac{1}{1/D - 4\frac{(1-\delta^2)^2}{\delta^4}} \left(4\frac{(1-\delta^2)^2}{\delta^4} + \|I_1 - \mu_{\star,\vec{e}}\|_1\right)
\end{align*}

Plugging in the bounds from our previous propositions, we obtain:
\begin{align*}
  \Big\| \Pi_{\mu_{\star,\vec{e}},\mu_{\star,\overleftarrow{e}}}^\perp(Y)-\Pi_{I_1,I_1}^\perp(Y) \Big\|_1 &\leq \frac{2(1-\delta^2)}{\delta^2} + \frac{4\frac{(1-\delta^2)^2}{\delta^4} + \frac{2(1-\delta^2)}{\delta^2}}{1/D - 4\frac{(1-\delta^2)^2}{\delta^4}}.
\end{align*}

Substituting the perturbation parameter $\varepsilon = 1-\delta^2$, which implies $\frac{1-\delta^2}{\delta^2} = \frac{\varepsilon}{1-\varepsilon}$, yields:
\begin{align*}
  \Big\| \Pi_{\mu_{\star,\vec{e}},\mu_{\star,\overleftarrow{e}}}^\perp(Y)-\Pi_{I_1,I_1}^\perp(Y) \Big\|_1 &\leq \frac{2\varepsilon}{1-\varepsilon} + \frac{4\frac{\varepsilon^2}{(1-\varepsilon)^2} + \frac{2\varepsilon}{1-\varepsilon}}{1/D - 4\frac{\varepsilon^2}{(1-\varepsilon)^2}} \\
  &= \frac{2\varepsilon}{1-\varepsilon} + \frac{4D\varepsilon^2 + 2D\varepsilon(1-\varepsilon)}{(1-\varepsilon)^2 - 4D\varepsilon^2} \\
  &= \frac{2\varepsilon}{1-\varepsilon} + \frac{2D\varepsilon + 2D\varepsilon^2}{1 - 2\varepsilon + (1-4D)\varepsilon^2}.
\end{align*}

Finally, expanding this expression to first order in $\varepsilon$, we use $\frac{1}{1-\varepsilon} = 1 + \varepsilon + \order{\varepsilon^2}$ and $\frac{1}{1 - 2\varepsilon + \order{\varepsilon^2}} = 1 + 2\varepsilon + \order{\varepsilon^2}$:
\begin{align*}
  \Big\| \Pi_{\mu_{\star,\vec{e}},\mu_{\star,\overleftarrow{e}}}^\perp(Y)-\Pi_{I_1,I_1}^\perp(Y) \Big\|_1 &\leq 2\varepsilon(1 + \varepsilon + \order{\varepsilon^2}) + (2D\varepsilon + \order{\varepsilon^2})(1 + 2\varepsilon + \order{\varepsilon^2}) \\
  &= 2\varepsilon + 2D\varepsilon + \order{\varepsilon^2} \\
  &= 2(D+1)\varepsilon + \order{\varepsilon^2}.
\end{align*}
The condition on 
$$ 
\varepsilon<\frac{1}{1+2\sqrt{D}}
$$
ensures that the normalization is positive.
\end{proof}

Uptil now, to require meaningful normalization with BP fixed point and closeness to the trace-less projector, we require, 
\begin{equation}
    \varepsilon< \min\{ \frac{1}{D}, \frac{1}{1+2\sqrt{D}}\}
\end{equation}
With this, we will establish that loop excitations decay with the size of the loop. To do that, we will first bound the `building block' of a loop excitation in the following. 

\begin{prop}[Excitation decay bound]
\label{prop:excitation_decay} 
For any $v\in V$, the normalized building block with one edge excited satisfies, 

\begin{center}
  $\Big\|$
\begin{tikzpicture}[
  scale=0.6,
  baseline={([yshift=-0.65ex] current bounding box.center)},
  tensor/.style={draw, line width=0.9pt, fill=tensorcolor, rounded corners=2pt, minimum size=8mm},
  redbond/.style={line width=1.6pt, draw=red},
  blackbond/.style={line width=1.6pt, draw=black}
]
  \def\dx{2.5}
  \def\dy{2.5}

  \node[tensor, fill=tensorcolor!50] (D) at (0,-\dy) {};
  \draw[blackbond] (D.east) -- ++(0.6,0);
  \draw[redbond] (D.west) -- ++(-0.6,0);
  \draw[blackbond] (D.south) -- ++(0,-0.6);
  \draw[blackbond] (D.north) -- ++(0,0.6);
\end{tikzpicture} $\Big\|_2 \leq \eta(\varepsilon, D, \Delta) := 2D^{2-\Delta/2} (D+2)\varepsilon + \order{\varepsilon^2}.$
\end{center}

where the red-line on the edge $\vec{v}=(v,n)$ denotes the excitation projector $\Pi_{\mu_{\star,\vec{e}},\mu_{\star,\overleftarrow{e}}}^\perp$ on that edge, and the vertex is normalized by the square root of BP expectation value [c.f.~\autoref{def:BPnormalized}].
\end{prop}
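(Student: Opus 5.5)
We read the building block as the vertex norm-tensor $T_v\star\bar T_v$, viewed via the superoperator $\Phi_{(v,n)}$ from the legs $\NN(v)\setminus\{n\}$ to leg $n$. The excited edge carries the excitation projector $\Pi^\perp_{\mu_{\star,\vec e},\mu_{\star,\overleftarrow e}}$, and the whole object is multiplied by the BP normalization $1/\sqrt{Z^{(v)}}$ coming from \autoref{def:BPnormalized}. The remaining unexcited legs stay open, as in the figure. The plan is to split both the superoperator and the projector into their "ideal" parts plus corrections, $\Phi=\Phi_0+\Delta\Phi$ and $\Pi^\perp_{\mu,\mu'}=\Pi^\perp_{I_1,I_1}+(\Pi^\perp_{\mu,\mu'}-\Pi^\perp_{I_1,I_1})$. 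The product of the two ideal parts should vanish, and every surviving term should carry at least one factor of $\varepsilon$.

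\textbf{Step 1 (cancellation).} I will show that $\Pi^\perp_{I_1,I_1}\circ\Phi_0=0$. The map $\Phi_0$ sends every input $X$ to $\Tr(X)\,\mathbbm 1$, by the bistochastic identities \eqref{eq:Kraus_bistochastic}. The traceless projector $\Pi^\perp_{I_1,I_1}(Y)=Y-\Tr(Y)I_1$ annihilates multiples of the identity. Hence the zeroth-order term drops out identically. The same cancellation holds if the excitation sits on an input leg, since $\Phi_0$ applied to a traceless input gives zero.

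\textbf{Step 2 (first-order terms).} Two terms remain.

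The first is $\Pi^\perp_{I_1,I_1}\circ\Delta\Phi$. Here I use \autoref{lem:depolperturbation}, $\|\Delta\Phi(A)\|_1\le D\varepsilon\|A\|_1$, together with $\|\Pi^\perp_{I_1,I_1}(Y)\|_1\le 2\|Y\|_1$. This gives a contribution of order $2D\varepsilon$.

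The second is $(\Pi^\perp_{\mu,\mu'}-\Pi^\perp_{I_1,I_1})\circ\Phi$. I apply \autoref{lem:projection_difference} after rescaling the input, using $\|\Phi(A)\|_1\le D\|A\|_1$. This gives a contribution of order $D\cdot 2(D+1)\varepsilon$.

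Adding the two, the operator acting on inputs of unit trace norm has trace-norm gain at most $2D(D+2)\varepsilon+\order{\varepsilon^2}$. This matches the factor $2(D+2)\varepsilon$ in $\eta$ up to powers of $D$.

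\textbf{Step 3 (converting to the $2$-norm of the tensor and normalizing).} The quantity to bound is the Frobenius ($=$ Schatten-2) norm of the reshaped tensor. I bound it by its operator norm times $\sqrt{\mathrm{rank}}$, and compare the operator norm with the $1\to1$ bound of Step 2 using the Schatten inequalities of the preliminaries. This conversion costs a factor polynomial in $D$, which I expect accounts for the $D^{2}$ prefactor. Finally I divide by $Z^{(v)}=D^{\Delta/2}(1+\order{\varepsilon})$, which \autoref{prop:bpapproximationbound} guarantees is positive for $\varepsilon<\min\{1/D,1/(1+2\sqrt D)\}$. This supplies the $D^{-\Delta/2}$ factor, and the correction $D^{(\Delta/2)+1}\varepsilon$ in $Z^{(v)}$ only enters at $\order{\varepsilon^2}$.

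\textbf{Main obstacle.} I expect Step 3 to be the hardest part: passing from trace-norm channel bounds to a Frobenius norm of the open tensor with the claimed dimension factors. To get the advertised exponent I will try Hölder in the form $\|AB\|_2\le\|A\|_\infty\|B\|_2$, and estimate the open-leg operator $\Delta\Phi$ directly through its Kraus form, with weights $\lambda_a^2-1\in[-\varepsilon,0]$ and $\|U\|=1$. If that fails to give the precise power $D^{2-\Delta/2}$, I would settle for $\eta=\order{\varepsilon\,\mathrm{poly}(D)D^{-\Delta/2}}$, which is all the later loop-decay argument uses.
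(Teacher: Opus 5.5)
Your Steps 1--2 reproduce the paper's argument exactly: the cancellation $\Pi^\perp_{I_1,I_1}\circ\Phi_0=0$, then the $\Delta\Phi$ lemma and the projection-difference lemma, giving $2D(D+2)\varepsilon$ per unit trace-norm input. The paper's Step 3 is also $\sqrt{\mathrm{rank}}\le D$ followed by division by $Z^{(v)}\approx D^{\Delta/2}$. The gap is the conversion you flag as the main obstacle, and it cannot be closed the way you expect. The input space $\mathcal{B}(\mathcal{H})^{\otimes(\Delta-1)}$ consists of $D^{\Delta-1}\times D^{\Delta-1}$ matrices, so the Schatten inequality gives only $\|Y\|_1\le D^{(\Delta-1)/2}\|Y\|_2$. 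The correct chain is therefore $\|L\|_2\le D\,\|L\|_{2\to 2}\le D\cdot D^{(\Delta-1)/2}\sup_{\|Y\|_1\le 1}\|L(Y)\|_1$. That factor is exponential in $\Delta$ and cancels the $D^{-\Delta/2}$ you get from $1/Z^{(v)}$, leaving $\eta\lesssim 2D^{3/2}(D+2)\varepsilon$. The paper reaches $D^{2-\Delta/2}$ only by writing $\max_{\|Y\|_2=1}\|L(Y)\|_2\le\max_{\|Y\|_1=1}\|L(Y)\|_2$. That inequality is backwards, since the trace-norm unit ball sits inside the Frobenius unit ball.

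This is not an artifact of the method. Take $D=2$ and, at every vertex, $T_v^\dagger T_v=(1-\tfrac{\varepsilon}{2})\mathbbm{1}-\tfrac{\varepsilon}{2}Z_nZ_m$, with Pauli $Z$ on two of its legs; its eigenvalues are $1$ and $1-\varepsilon$. The $Z_nZ_m$ term never contributes at $\mu\equiv\mathbbm{1}/2$, so this is a fixed point, hence the unique one for $\varepsilon<\varepsilon_*$. Then $\Pi^\perp_{\mu,\mu'}=\Pi^\perp_{I_1,I_1}$, and the block excited on leg $n$ is $Y\mapsto-\tfrac{\varepsilon}{2}\Tr(Z_mY)\,Z$, with Frobenius norm $\tfrac{\varepsilon}{2}2^{\Delta/2}$. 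Meanwhile $Z^{(v)}=(1-\tfrac{\varepsilon}{2})2^{\Delta/2}$. The normalized block therefore equals $\varepsilon/(2-\varepsilon)$ for every $\Delta$, which exceeds $2D^{2-\Delta/2}(D+2)\varepsilon=2^{5-\Delta/2}\varepsilon$ once $\Delta\ge 13$.

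So neither the stated exponent nor your fallback $\mathcal{O}(\varepsilon\,\mathrm{poly}(D)D^{-\Delta/2})$ is provable. What does hold is $\eta=\mathcal{O}(\mathrm{poly}(D)\,\varepsilon)$ uniformly in $\Delta$. You can also see this directly: realignment preserves the Frobenius norm, so $\|\Delta\Phi\|_2=\|T_v^\dagger T_v-\mathbbm{1}\|_2\le\varepsilon D^{\Delta/2}$. Within the paper's downstream argument this still suffices, because loop decay only needs $\eta\le e^{-c\Delta/2}$. However, the threshold becomes $\varepsilon_{**}=\mathcal{O}(\mathrm{poly}(D)^{-1}e^{-c_0\Delta/2})$, losing the extra $D^{\Delta/2}$ enhancement claimed in the paper.
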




\begin{proof}
Fix a pair of conjugate edges $\vec{e}, \overleftarrow{e} \in \vec{E}$. Now, $\Pi_{\mu_{\star,\vec{e}},\mu_{\star,\overleftarrow{e}}}^\perp \circ f_{\vec{e}}$ is a superoperator from $\BB(\HH)^{\otimes{(\Delta-1)}} \to \BB(\HH)$. We will first aim to bound the numerator $\|\Pi_{\mu_{\star,\vec{e}},\mu_{\star,\overleftarrow{e}}}^\perp \circ f_{\vec{e}}(\bm Y)\|_1$ for arbitrary bounded $\bm{Y}$, and then add back in the bound on the BP normalization from \autoref{prop:bpapproximationbound}.

{For any matrix $A \in \C^{m\times n}$, we have $\|A\|_2 \leq \sqrt{\text{rank}(A)} \max_{v:\|v\|_2=1}\|Av\|_2$ and $\|A\|_2 \leq \|A\|_1$. Now, the superoperator from $\BB(\HH)^{\otimes{(\Delta-1)}} \to \BB(\HH)$ viewed has a matrix has dimensions $D^2$ and $D^{2(\Delta-1)}$. Thus, the rank of this matrix is $\leq D^2$. This gives us, 
}

$$\|\Pi_{\mu_{\star,\vec{e}},\mu_{\star,\overleftarrow{e}}}^\perp \circ f_{\vec{e}}\|_2 \leq D \max_{\bm{Y}: \|\bm{Y}\|_2 = 1}\|\Pi_{\mu_{\star,\vec{e}},\mu_{\star,\overleftarrow{e}}}^\perp \circ f_{\vec{e}}(\bm Y)\|_2 \leq D \max_{\bm{Y}: \|\bm{Y}\|_1 = 1}\|\Pi_{\mu_{\star,\vec{e}},\mu_{\star,\overleftarrow{e}}}^\perp \circ f_{\vec{e}}(\bm Y)\|_2$$ 
Since $\Pi_{I_1,I_1}^\perp\circ \Phi_0 = 0$ (as $\Phi_0(\cdot)$ is proportional to $\mathbbm 1$), we have
\begin{align*}
\|\Pi_{\mu_{\star,\vec{e}},\mu_{\star,\overleftarrow{e}}}^\perp \circ f_{\vec{e}}(\bm{Y})\|_2 
&=  \|(\Pi_{\mu_{\star,\vec{e}},\mu_{\star,\overleftarrow{e}}}^\perp - \Pi_{I_1,I_1}^\perp + \Pi_{I_1,I_1}^\perp) \circ f_{\vec{e}}(\bm{Y})\|_2 \\ 
&\le \|(\Pi_{\mu_{\star,\vec{e}},\mu_{\star,\overleftarrow{e}}}^\perp - \Pi_{I_1,I_1}^\perp) \circ f_{\vec{e}}(\bm{Y})\|_2 + \| \Pi_{I_1,I_1}^\perp \circ f_{\vec{e}}(\bm{Y})\|_2  \\ 
&= \|(\Pi_{\mu_{\star,\vec{e}},\mu_{\star,\overleftarrow{e}}}^\perp - \Pi_{I_1,I_1}^\perp) \circ f_{\vec{e}}(\bm{Y})\|_2 + \| \Pi_{I_1,I_1}^\perp \circ \Delta\Phi(\bm{Y})\|_2.
\end{align*}

We upper bound the first term as, 

\begin{align}
  \|(\Pi_{\mu_{\star,\vec{e}},\mu_{\star,\overleftarrow{e}}}^\perp - \Pi_{I_1,I_1}^\perp) \circ f_{\vec{e}}(\bm{Y})\|_2 &\leq \|(\Pi_{\mu_{\star,\vec{e}},\mu_{\star,\overleftarrow{e}}}^\perp - \Pi_{I_1,I_1}^\perp) \circ f_{\vec{e}}(\bm{Y})\|_1 \\ 
  &\le g(\varepsilon, D, \Delta) \|f_{\vec{e}}(\bm{Y})\|_1  \\ 
  &\leq D g(\varepsilon, D, \Delta)
\end{align}

where we used $\|AB\|_1 \leq \|A\|_1 \|B\|_\infty \leq \|A\|_1 \|B\|_1$ and \autoref{lem:projection_difference}. For the upper bound on $\|f_{\vec{e}}(\bm{Y})\|_1$, we use $\|f_{\vec{e}}(\bm{Y})\|_1 = \Tr f_{\vec{e}}(\bm{Y}) \leq D \Tr\bm{Y} \leq D $.

For the second term, from \autoref{lem:depolperturbation} by $\big\|\bm{Y}\big\|_1 =1$, we get, 
$$
\big\|\Delta\Phi(\bm{Y})\big\|_1 \le D(1-\delta^2).
$$

and thus (note $\Pi_{I_1,I_1}^\perp[Y] = Y - \Tr Y / D$), 
$$\| \Pi_{I_1,I_1}^\perp \circ \Delta\Phi(\bm{Y})\|_2. \le \| \Pi_{I_1,I_1}^\perp \circ \Delta\Phi(\bm{Y})\|_1 \le 2D(1-\delta^2)$$
Combining these estimates,
\begin{align*}
    \|\Pi_{\mu_{\star,\vec{e}},\mu_{\star,\overleftarrow{e}}}^\perp \circ f_{\vec{e}}(\bm{Y})\|_2
    &\le D\left(D\, g(\varepsilon, D, \Delta) + 2D(1-\delta^2)\right) \\
    &= D^{2}\left(2(1-\delta^2) + g(\varepsilon, D, \Delta)\right) \\
    &= D^{2}\left(2\varepsilon + g(\varepsilon, D, \Delta)\right).
\end{align*}

Now, we add back the BP normalization. To upper bound the entire expression, we require a lower bound on the denominator $Z^{(v)}$. From our previous bound in \autoref{prop:bpapproximationbound}, $Z^{(v)} \ge D^{\Delta/2}(1 - D\varepsilon + \order{\varepsilon^2})$, which gives the upper bound:
\begin{equation}
    \frac{1}{Z^{(v)}} \leq \frac{1}{D^{\Delta/2}\left(1 - D\varepsilon\right)} + \order{\varepsilon^2}.
\end{equation}
We define our combined upper bound $\eta(\varepsilon, D, \Delta)$ as:
\begin{equation}
  \eta(\varepsilon, D, \Delta)
  := \frac{ D^2\left[2\varepsilon + g(\varepsilon, D, \Delta)\right]}{D^{\Delta/2}\left(1 - D\varepsilon\right)},
\end{equation}
where, as derived previously,
\begin{equation}
  g(\varepsilon, D, \Delta) = 2(D+1)\varepsilon + \order{\varepsilon^2}.
\end{equation}

Substituting this directly into the numerator bound:
\begin{align*}
  D^2\left[2\varepsilon + g(\varepsilon, D, \Delta)\right] 
  &= D^2\left[2\varepsilon + 2(D+1)\varepsilon + \order{\varepsilon^2}\right] \\
  &= 2D^2(D + 2)\varepsilon + \order{\varepsilon^2}.
\end{align*}

Next, we expand the geometric series from our inverse normalization bound for small $\varepsilon$:
\begin{equation}
    \frac{1}{1 - D\varepsilon} = 1 + D\varepsilon + \order{\varepsilon^2}.
\end{equation}

Finally, assembling $\eta(\varepsilon, D, \Delta)$ and keeping only up to the linear term in $\varepsilon$:
\begin{align*}
  \eta(\varepsilon, D, \Delta)
  &= \frac{D^2}{D^{\Delta/2}} \Big( 2(D+2)\varepsilon + \order{\varepsilon^2} \Big) \Big( 1 + D\varepsilon + \order{\varepsilon^2} \Big) \\
  &= D^{2-\Delta/2} 2(D+2)\varepsilon + \order{\varepsilon^2}.
\end{align*}

\end{proof}

We now extend this to show that if the single-excitation vertex is $\order{\varepsilon}$, so is the multi-excitation vertex. This is in general non-trivial, but can be guaranteed by strong-injectivity as we show below.

\begin{lemma}\label{lem:excitationprojectorbound}
  We have that an excitation block with any number of excited edges satisfies,
  \begin{center}
  $\Big\|$
\begin{tikzpicture}[
  scale=0.6,
  baseline={([yshift=-0.65ex] current bounding box.center)},
  tensor/.style={draw, line width=0.9pt, fill=tensorcolor, rounded corners=2pt, minimum size=8mm},
  redbond/.style={line width=1.6pt, draw=red},
  blackbond/.style={line width=1.6pt, draw=black}
]
  \def\dx{2.5}
  \def\dy{2.5}

  \node[tensor, fill=tensorcolor!50] (D) at (0,-\dy) {};
  \draw[blackbond] (D.east) -- ++(0.6,0);
  \draw[redbond] (D.west) -- ++(-0.6,0);
  \draw[blackbond] (D.south) -- ++(0,-0.6);
  \draw[blackbond] (D.north) -- ++(0,0.6);
\end{tikzpicture} $\Big\|_2 \leq \eta(\varepsilon, D, \Delta) \implies ~~ \Big\|$ \begin{tikzpicture}[
  scale=0.6,
  baseline={([yshift=-0.65ex] current bounding box.center)},
  tensor/.style={draw, line width=0.9pt, fill=tensorcolor, rounded corners=2pt, minimum size=8mm},
  redbond/.style={line width=1.6pt, draw=red},
  blackbond/.style={line width=1.6pt, draw=black}
]
  \def\dx{2.5}
  \def\dy{2.5}

  \node[tensor, fill=tensorcolor!50] (D) at (0,-\dy) {};
  \draw[redbond] (D.east) -- ++(0.6,0);
  \draw[redbond] (D.west) -- ++(-0.6,0);
  \draw[blackbond] (D.south) -- ++(0,-0.6);
  \draw[blackbond] (D.north) -- ++(0,0.6);
\end{tikzpicture}$\Big\|_2 \leq \eta(\varepsilon, D, \Delta) \cdot (1+ \order{\varepsilon})$
\end{center}
\end{lemma}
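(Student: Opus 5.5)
The plan is to treat the block with a set $S$ of excited legs at $v$ ($|S|\ge 2$) as the single-excitation block of \autoref{prop:excitation_decay} with extra projectors applied. Bound it directly in the Frobenius norm $\|\cdot\|_2$ by the same two-part split used there, and show that each additional excited leg costs at most a factor $1+\order{\varepsilon}$.

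\textbf{Step 1 (split on every excited leg).} On each leg $e\in S$ write
\[
\Pi^\perp_{\mu_{\star,\vec e},\mu_{\star,\overleftarrow e}} = \Pi^\perp_{I_1,I_1} + \Xi_e, \qquad \Xi_e := \Pi^\perp_{\mu_{\star,\vec e},\mu_{\star,\overleftarrow e}} - \Pi^\perp_{I_1,I_1}.
\]
By \autoref{lem:projection_difference}, $\|\Xi_e(Y)\|_1\le g\,\|Y\|_1$ with $g=2(D+1)\varepsilon+\order{\varepsilon^2}$. Since $\Xi_e$ has rank one, $\Xi_e(Y)=-\frac{\Tr((\mu_{\star,\vec e}-I_eI_1 D)Y)}{I_e}\mu_{\star,\overleftarrow e}+\dots$. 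Hence its Hilbert--Schmidt operator norm is also $\order{\varepsilon}$, with a constant depending only on $D$. This uses Lemmas~\ref{lem:Ivwconditioned} and \ref{lem:fullrank} to keep $I_e$ bounded away from zero. The operator $\Pi^\perp_{I_1,I_1}$ is the Hilbert--Schmidt orthogonal projector onto traceless matrices, so its $2\to2$ norm is exactly $1$.

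\textbf{Step 2 (decompose the vertex superoperator).} Write the norm tensor at $v$ as $\Phi=\Phi_0+\Delta\Phi$. Here $\Phi_0$ is, up to normalization, the tensor product of identities on all $\Delta$ bond legs. Consequently $\Pi^\perp_{I_1,I_1}$ applied on any single leg annihilates $\Phi_0$. Expanding $\prod_{e\in S}(\Pi^\perp_{I_1,I_1}+\Xi_e)$ gives two kinds of terms.
\begin{itemize}
\item[(a)] The $\Phi_0$ terms. Only the term with $\Xi_e$ on every $e\in S$ survives, and its size is $\order{g^{|S|}}=\order{\varepsilon^2}$ because $|S|\ge2$.
\item[(b)] The $\Delta\Phi$ terms. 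Pick one distinguished excited leg $e_0$. Contracting the remaining excited legs with $\Pi^\perp_{I_1,I_1}$ (norm $1$) or with $\Xi_e$ (norm $\order{\varepsilon}$) multiplies the Frobenius norm by at most $\prod_{e\ne e_0}(1+\order{\varepsilon})=1+\order{\varepsilon}$.
\end{itemize}
What remains in (b) is precisely the single-excitation block on $e_0$. In the proof of \autoref{prop:excitation_decay} it was bounded by $D^{2-\Delta/2}(2\varepsilon+g)(1+D\varepsilon)$. The $2\varepsilon$ part came from $\Pi^\perp_{I_1,I_1}\circ\Delta\Phi$ and the $g$ part from $\Xi_{e_0}\circ\Phi$. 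Collecting terms, the multi-excitation block is bounded by $\eta(1+\order{\varepsilon})+\order{\varepsilon^2}$, and the additive $\order{\varepsilon^2}$ is absorbed into $\eta\cdot\order{\varepsilon}$. The BP normalization $1/Z^{(v)}$ from \autoref{prop:bpapproximationbound} is the same scalar for all $S$, so it carries over unchanged.

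\textbf{Main obstacle.} The delicate step is the norm bookkeeping. \autoref{prop:excitation_decay} bounds a superoperator viewed as a matrix in $\|\cdot\|_2$, using a rank factor $D$ and trace-norm estimates. When extra projectors act on \emph{input} legs of that matrix, one needs submultiplicativity $\|A\circ(P\otimes\mathbbm 1)\|_2\le\|A\|_2\|P\|_\infty$, and hence operator-norm (not trace-norm) control of $\Xi_e$. I would first try to obtain this via the rank-one structure of $\Xi_e$, using $\|\mu_{\star,\vec e}-I_1\|_2\le\|\mu_{\star,\vec e}-I_1\|_1=\order{\varepsilon}$ from \autoref{prop:fp_near_identity}.

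The second point to check is that the $\Phi_0$ cancellation really holds leg-wise for the full norm tensor and not only for the particular bipartition $\NN(v)\setminus\{n\}\to n$. This follows from the bistochastic identities in \autoref{eq:Kraus_bistochastic}: for $\delta=1$ every bipartition gives a map proportional to the trace, so the tensor is $\propto\bigotimes\mathbbm 1$. All dimension-dependent constants are $\order{1}$ since $D,\Delta=\order{1}$, and are hidden in the $\order{\varepsilon}$.
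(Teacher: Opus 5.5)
Your argument is correct, but it is organized differently from the paper's.

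\textbf{The paper's route.} The paper never splits the vertex tensor. It takes the single-excitation block as a black box, which is the hypothesis of the lemma, and writes $\Pi^\perp_{\mu_{\star,\vec e},\mu_{\star,\overleftarrow e}}=\Pi^\perp_{I_1,I_1}+\Xi_e$ on one additional leg. The $\Pi^\perp_{I_1,I_1}$ term is bounded by the single block, since an orthogonal projector cannot increase the Frobenius norm. The $\Xi_e$ term is bounded by $\order{\varepsilon}$ times the single block. It then inducts on the number of excited legs. This is exactly the submultiplicativity estimate $\|A\circ(P\otimes\mathbbm 1)\|_2\le\|A\|_2\|P\|_\infty$ that you identify in your last paragraph, applied directly to the full single-excitation block. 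Once you have that estimate, your $\Phi=\Phi_0+\Delta\Phi$ split and the expansion over all of $S$ are unnecessary.

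\textbf{What your route buys.} You get a slightly sharper statement. For $|S|\ge 2$ the $\Phi_0$ contribution is $\order{\varepsilon^{|S|}}$, so at first order only the $\Pi^\perp_{I_1,I_1}\circ\Delta\Phi$ part of $\eta$ survives and the $g$-dependent part drops out. The price is that you no longer prove the stated implication from its hypothesis alone: you re-enter the proof of \autoref{prop:excitation_decay}. You can make part (b) self-contained instead. The tensor of $\Delta\Phi$ is a reshaping of $U(\Sigma^2-\mathbbm 1)U^\dagger$, whose Frobenius norm is at most $\varepsilon D^{\Delta/2}$. After dividing by $Z^{(v)}=D^{\Delta/2}(1+\order{\varepsilon})$, the whole $\Delta\Phi$ part with every $\Pi^\perp$ applied is at most $\varepsilon(1+\order{\varepsilon})^{|S|}$, with no distinguished leg needed.

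\textbf{Minor corrections.}
\begin{itemize}
\item $\Xi_e(Y)=\Tr(Y)\,I_1-\Tr(\mu_{\star,\vec e}Y)\,\mu_{\star,\overleftarrow e}/I_e$ has rank at most two, not one.
\item The $2\to2$ control you need follows directly from \autoref{lem:projection_difference}: $\|\Xi_e(Y)\|_2\le\|\Xi_e(Y)\|_1\le g\|Y\|_1\le g\sqrt{D}\,\|Y\|_2$. This is the factor of $D$ the paper hides in its $\order{\varepsilon}$.
\item What remains in (b) is $(\Pi^\perp_{I_1,I_1}+\Xi_{e_0})\circ\Delta\Phi$, not the single-excitation block itself. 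This is harmless, since $\Xi_{e_0}\circ\Delta\Phi=\order{\varepsilon^2}$.
\item \autoref{lem:fullrank} is not needed to keep $I_e$ away from zero; \autoref{lem:Ivwconditioned} suffices.
\end{itemize}
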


\begin{proof}
  Note that inserting a projector $B$ (satisfying $BB^\dagger \preceq \mathbbm{1}$) does not increase the Frobenius norm:
$$
\|AB\|_2^2 = \Tr(B^\dagger A^\dagger A B) = \Tr(BB^\dagger A^\dagger A) \leq \Tr(A^\dagger A) = \|A\|_2^2
$$

{However, note that $\Pi_{\mu_{\star,\vec{e}},\mu_{\star,\overleftarrow{e}}}^\perp(Y)$ is not necessarily a projector. But, by \autoref{lem:projection_difference} we have for any bounded $Y$, 
}
$$\|\Pi_{\mu_{\star,\vec{e}},\mu_{\star,\overleftarrow{e}}}^\perp(Y)-\Pi_{I_1,I_1}^\perp(Y)\|_1 \le \order{\varepsilon}$$

and thus, by a triangle inequality on the second excited leg,
\tikzset{
  diffbond/.style={
    line width=1.6pt,
    draw=red!70!black,
    dashed
  }
}
\begin{center}
    $\Big\|$
\begin{tikzpicture}[
  scale=0.6,
  baseline={([yshift=-0.65ex] current bounding box.center)},
  tensor/.style={draw, line width=0.9pt, fill=tensorcolor!50, rounded corners=2pt, minimum size=8mm},
  redbond/.style={line width=1.6pt, draw=red},
  blackbond/.style={line width=1.6pt, draw=black},
  diffbond/.style={line width=1.6pt, draw=red!70!black, dashed}
]
  \node[tensor] (D) at (0,0) {};
  \draw[redbond]   (D.west)  -- ++(-0.6,0);
  \draw[redbond]   (D.east)  -- ++(0.6,0);
  \draw[blackbond] (D.north) -- ++(0,0.6);
  \draw[blackbond] (D.south) -- ++(0,-0.6);
\end{tikzpicture}
$\Big\|_2
\le
\Big\|$
\begin{tikzpicture}[
  scale=0.6,
  baseline={([yshift=-0.65ex] current bounding box.center)},
  tensor/.style={draw, line width=0.9pt, fill=tensorcolor!50, rounded corners=2pt, minimum size=8mm},
  redbond/.style={line width=1.6pt, draw=red},
  blackbond/.style={line width=1.6pt, draw=black},
  diffbond/.style={line width=1.6pt, draw=red!70!black, dashed}
]
  \node[tensor] (D) at (0,0) {};
  \draw[redbond]   (D.west)  -- ++(-0.6,0);
  \draw[blackbond] (D.east)  -- ++(0.6,0);
  \draw[blackbond] (D.north) -- ++(0,0.6);
  \draw[blackbond] (D.south) -- ++(0,-0.6);
\end{tikzpicture}
$\Big\|_2
+
\Big\|$
\begin{tikzpicture}[
  scale=0.6,
  baseline={([yshift=-0.65ex] current bounding box.center)},
  tensor/.style={draw, line width=0.9pt, fill=tensorcolor!50, rounded corners=2pt, minimum size=8mm},
  redbond/.style={line width=1.6pt, draw=red},
  blackbond/.style={line width=1.6pt, draw=black},
  diffbond/.style={line width=1.6pt, draw=red!70!black, dashed}
]
  \node[tensor] (D) at (0,0) {};
  \draw[redbond]   (D.west)  -- ++(-0.6,0);
  \draw[diffbond]  (D.east)  -- ++(0.6,0);
  \draw[blackbond] (D.north) -- ++(0,0.6);
  \draw[blackbond] (D.south) -- ++(0,-0.6);
\end{tikzpicture}
$\Big\|_2$.
\end{center}
where we used the fact that $\Pi_{I_1,I_1}^\perp$ is a projector to upper bound the first term already, and, the dashed line denotes the difference $\Pi_{\mu_{\star,\vec{e}},\mu_{\star,\overleftarrow{e}}}^\perp(Y)-\Pi_{I_1,I_1}^\perp(Y)$.
For the first term by \autoref{prop:excitation_decay},
\[
\Big\|
\begin{tikzpicture}[
  scale=0.6,
  baseline={([yshift=-0.65ex] current bounding box.center)},
  tensor/.style={draw, line width=0.9pt, fill=tensorcolor!50, rounded corners=2pt, minimum size=8mm},
  redbond/.style={line width=1.6pt, draw=red},
  blackbond/.style={line width=1.6pt, draw=black}
]
  \node[tensor] (D) at (0,0) {};
  \draw[redbond]   (D.west)  -- ++(-0.6,0);
  \draw[blackbond] (D.east)  -- ++(0.6,0);
  \draw[blackbond] (D.north) -- ++(0,0.6);
  \draw[blackbond] (D.south) -- ++(0,-0.6);
\end{tikzpicture}
\Big\|_2
\le \eta(\varepsilon,D,\Delta),
\]
and for the second term,
\[
\Big\|
\begin{tikzpicture}[
  scale=0.6,
  baseline={([yshift=-0.65ex] current bounding box.center)},
  tensor/.style={draw, line width=0.9pt, fill=tensorcolor!50, rounded corners=2pt, minimum size=8mm},
  redbond/.style={line width=1.6pt, draw=red},
  blackbond/.style={line width=1.6pt, draw=black},
  diffbond/.style={line width=1.6pt, draw=red!70!black, dashed}
]
  \node[tensor] (D) at (0,0) {};
  \draw[redbond]   (D.west)  -- ++(-0.6,0);
  \draw[diffbond]  (D.east)  -- ++(0.6,0);
  \draw[blackbond] (D.north) -- ++(0,0.6);
  \draw[blackbond] (D.south) -- ++(0,-0.6);
\end{tikzpicture}
\Big\|_2
\le \order{\varepsilon}\,
\Big\|
\begin{tikzpicture}[
  scale=0.6,
  baseline={([yshift=-0.65ex] current bounding box.center)},
  tensor/.style={draw, line width=0.9pt, fill=tensorcolor!50, rounded corners=2pt, minimum size=8mm},
  redbond/.style={line width=1.6pt, draw=red},
  blackbond/.style={line width=1.6pt, draw=black}
]
  \node[tensor] (D) at (0,0) {};
  \draw[redbond]   (D.west)  -- ++(-0.6,0);
  \draw[blackbond] (D.east)  -- ++(0.6,0);
  \draw[blackbond] (D.north) -- ++(0,0.6);
  \draw[blackbond] (D.south) -- ++(0,-0.6);
\end{tikzpicture}
\Big\|_2.
\]

by \autoref{lem:projection_difference}. The constant hides factors of $D$ in converting the $1-$norm of \autoref{lem:projection_difference} into the $2-$norm required here, which we do not keep track of as this correction is second order in $\varepsilon$. Hence since $\eta(\varepsilon, D, \Delta) = \order{\varepsilon}$, we establish that for a local tensor with any (atmost $\Delta = \order{1}$) number of excited legs, we have that 
\begin{center}
  $\Big\|$
\begin{tikzpicture}[
  scale=0.6,
  baseline={([yshift=-0.65ex] current bounding box.center)},
  tensor/.style={draw, line width=0.9pt, fill=tensorcolor, rounded corners=2pt, minimum size=8mm},
  redbond/.style={line width=1.6pt, draw=red},
  blackbond/.style={line width=1.6pt, draw=black}
]
  \def\dx{2.5}
  \def\dy{2.5}

  \node[tensor, fill=tensorcolor!50] (D) at (0,-\dy) {};
  \draw[redbond] (D.east) -- ++(0.6,0);
  \draw[redbond] (D.west) -- ++(-0.6,0);
  \draw[blackbond] (D.south) -- ++(0,-0.6);
  \draw[blackbond] (D.north) -- ++(0,0.6);
\end{tikzpicture} $\Big\|_2 \leq \eta(\varepsilon, D, \Delta) + \order{\varepsilon^2}$ \end{center}

\begin{center}
  $\Big\|$
\begin{tikzpicture}[
  scale=0.6,
  baseline={([yshift=-0.65ex] current bounding box.center)},
  tensor/.style={draw, line width=0.9pt, fill=tensorcolor, rounded corners=2pt, minimum size=8mm},
  redbond/.style={line width=1.6pt, draw=red},
  blackbond/.style={line width=1.6pt, draw=black}
]
  \def\dx{2.5}
  \def\dy{2.5}

  \node[tensor, fill=tensorcolor!50] (D) at (0,-\dy) {};
  \draw[redbond] (D.east) -- ++(0.6,0);
  \draw[redbond] (D.west) -- ++(-0.6,0);
  \draw[blackbond] (D.south) -- ++(0,-0.6);
  \draw[blackbond] (D.north) -- ++(0,0.6);
\end{tikzpicture} $\Big\|_2 \leq (1 + \order{\varepsilon})\cdot \Big\|$
\begin{tikzpicture}[
  scale=0.6,
  baseline={([yshift=-0.65ex] current bounding box.center)},
  tensor/.style={draw, line width=0.9pt, fill=tensorcolor, rounded corners=2pt, minimum size=8mm},
  redbond/.style={line width=1.6pt, draw=red},
  blackbond/.style={line width=1.6pt, draw=black}
]
  \def\dx{2.5}
  \def\dy{2.5}

  \node[tensor, fill=tensorcolor!50] (D) at (0,-\dy) {};
  \draw[blackbond] (D.east) -- ++(0.6,0);
  \draw[redbond] (D.west) -- ++(-0.6,0);
  \draw[blackbond] (D.south) -- ++(0,-0.6);
  \draw[blackbond] (D.north) -- ++(0,0.6);
\end{tikzpicture} $\Big\|\leq \eta(\varepsilon, D, \Delta) \cdot (1+ \order{\varepsilon})$ 
\end{center}
By induction on the number of excited legs, we conclude the proof.
\end{proof}

We emphasize that this Lemma gives a rather loose bound, since we would expect that inserting an excitation lifts the loop contribution to an higher-order. Here, we only show that inserting an excitation keeps the loop contribution to the same order. This will only affect the value of our constant threshold, and will not have any physical consequences.

\smsubsection{Decay of loop activities}

We now establish exponential decay of loop activities using the excitation decay bounds from \autoref{prop:excitation_decay}.

\begin{lemma}[Vertex-edge relation]
\label{lem:vertex_edge_bound}
For a loop $\ell = (W,F)$ with $|W|=n$, $|F|=m$ on a graph of maximum degree $\Delta$, we have
$$
n \ge \frac{2m}{\Delta}.
$$
\end{lemma}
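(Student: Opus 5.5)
This is the handshake lemma restricted to the loop subgraph, combined with the degree bound inherited from $G$. I will count the pairs $(w,e)$ with $w\in W$, $e\in F$ and $w$ an endpoint of $e$, in two ways.

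Counting by edges, each edge of $F$ has exactly two endpoints. Both lie in $W$, since $\ell=(W,F)$ is a subgraph. Assuming $G$ has no self-loops, as for PEPS graphs, the number of pairs is $2|F|=2m$.

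Counting by vertices, the number of pairs is $\sum_{w\in W}\deg_\ell(w)$. Here $\deg_\ell(w)$ is the degree of $w$ inside $\ell$. Because $F\subseteq E$, we have $\deg_\ell(w)\le |\NN(w)|\le\Delta$. This gives
\begin{equation*}
2m=\sum_{w\in W}\deg_\ell(w)\le \Delta\,|W| = \Delta n ,
\end{equation*}
and rearranging yields $n\ge 2m/\Delta$.

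No step is a real obstacle. The only point to check is that the degree-$\ge 2$ condition in the definition of a loop is not needed here; it would give the reverse bound $n\le m$. The argument uses only the maximum degree of the ambient graph.

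If multi-edges between a pair of vertices were allowed, the same count still works. Each parallel edge contributes to $\deg_\ell$, and $\Delta$ is then understood as the maximal number of incident edges, which is the number of virtual legs of $T_v$. This inequality lets us lower-bound the number of vertices carrying excitation blocks by $2|\ell|/\Delta$.
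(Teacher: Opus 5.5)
Your proof is correct and is essentially the paper's argument: both use the handshake identity $2m=\sum_{w\in W}\deg_\ell(w)$ and then bound each $\deg_\ell(w)\le\Delta$. You are also right that the degree-$\geq 2$ condition is not needed for this direction; the paper's proof mentions it, but the inequality holds without it.
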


\begin{proof}
Since each vertex has degree at least $2$ within the loop,
$$
2m = \sum_{e \in F} 2 = \sum_{w \in W} \deg_\ell(w) \le \sum_{w \in W} \Delta = \Delta n,
$$
where $\deg_\ell(w)$ denotes the degree of vertex $w$ within the subgraph $\ell$. Thus $n \ge 2m/\Delta$.
\end{proof}

The loop activity $Z_\ell$ measures correlations that deviate from the fixed point configuration. Our goal is to show that $|Z_\ell|$ decays exponentially in the weight $m$ of the loop.

\begin{prop}[Exponential decay of loop activities]
\label{prop:loop_decay}
Let $\bm{\mu}_\star$ be the PEPS fixed point from \autoref{thm:nonTI_uniqueness}. Consider a loop $\ell = (W,F)$ with weight $m = |F|$ on a graph of maximum degree $\Delta$. With the excitation bound $
\eta \equiv \eta(\varepsilon, D, \Delta) $ from \autoref{prop:excitation_decay}, the loop activity satisfies
$$
|Z_\ell| \le \order{\eta^{2m/\Delta}}
$$

\end{prop}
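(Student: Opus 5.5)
Because the BP normalization is built into each vertex, the loop activity $Z_\ell$ is the contraction of the BP-normalized tensors $\tilde T_w := T_w/\sqrt{Z^{(w)}}$ over $\ell$. At every vertex outside $W$, all legs are capped by fixed-point messages, and each such vertex contributes exactly one. The plan is to collapse the network to the vertices of $W$ and then bound a contraction of excitation blocks, one per vertex of the loop.

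Fix $w\in W$. Its legs split into the excited legs (edges of $F$ incident to $w$, at least two of them) and the legs leaving the loop, which are capped by the incoming messages $\mu_{\star,(n,w)}$ (suitably normalized by $\sqrt{I_{nw}}$). Contracting these caps gives a block $B_w$ whose only open legs are the excited ones, each carrying $\Pi^\perp_{\mu_{\star,\vec e},\mu_{\star,\overleftarrow e}}$. This is exactly the object bounded in \autoref{prop:excitation_decay} and \autoref{lem:excitationprojectorbound}, so $\|B_w\|_2\le \eta(1+\order{\varepsilon})$. Then $Z_\ell$ is the full contraction of $\{B_w\}_{w\in W}$ along the edges of $F$, where the excitation projector on each edge is assigned to one of its endpoints.

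To bound this contraction by the product of Frobenius norms, I would cut the loop one vertex at a time. Pick $w_1\in W$ and view $B_{w_1}$ as a matrix from its $F$-legs to nothing. The rest of the network, $R_1$, is a tensor on the same legs, and Cauchy--Schwarz gives $|Z_\ell|\le \|B_{w_1}\|_2\|R_1\|_2$. For the remainder, $\|R_1\|_2$ is a contraction of the remaining blocks with some legs left open. I would apply submultiplicativity $\|XY\|_2\le\|X\|_2\|Y\|_2$, grouping the open legs of $R_1$ with the output side of a chosen next block, and iterate. The result is
\[
|Z_\ell|\le \prod_{w\in W}\|B_w\|_2\le \big(\eta(1+\order{\varepsilon})\big)^{|W|}.
\]
Finally \autoref{lem:vertex_edge_bound} gives $|W|\ge 2m/\Delta$. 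Once $\eta<1$, which holds for small $\varepsilon$, this yields $|Z_\ell|\le\order{\eta^{2m/\Delta}}$, with the factor $(1+\order{\varepsilon})^{|W|}$ absorbed by slightly decreasing the base.

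The main obstacle is the cutting step. A block with $k$ excited legs must be split into an input group and an output group so that each internal edge is counted once and Frobenius norms are genuinely submultiplicative. I would handle this by ordering $W$ (for example along a spanning tree of $\ell$) and splitting each $B_w$'s legs into those toward earlier and those toward later vertices. Cycle-closing edges are dealt with by the initial Cauchy--Schwarz, which uses the trace pairing, rather than by an operator product.

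A second subtlety is that the excitation projectors are not orthogonal. Their deviation from $\Pi^\perp_{I_1,I_1}$ is $\order{\varepsilon}$ by \autoref{lem:projection_difference}, and it is already accounted for in the $(1+\order{\varepsilon})$ factors of \autoref{lem:excitationprojectorbound}. So each projector should be absorbed into exactly one adjacent block rather than duplicated.
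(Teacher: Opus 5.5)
Your proposal is correct and follows essentially the same route as the paper. Both arguments decorate each $w\in W$ with excitation projectors on its $F$-legs and fixed-point caps elsewhere, peel vertices off one at a time via Cauchy--Schwarz and Frobenius submultiplicativity to get $|Z_\ell|\le\prod_{w\in W}\|T_w\|_2$, bound each factor by $\eta+\order{\varepsilon^2}$ using \autoref{prop:excitation_decay} and \autoref{lem:excitationprojectorbound}, and finish with \autoref{lem:vertex_edge_bound}. One bookkeeping point needs tightening: because $\Pi^\perp$ is idempotent, you can keep a projector on both endpoints of every loop edge (with its transpose on one side), as the paper implicitly does, which guarantees every block carries an excitation; if you instead assign each projector to a single endpoint, you must check that every vertex receives one, which is possible because $|F|\ge|W|$ for a connected subgraph of minimum degree two.
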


\begin{proof}
We decompose the loop by successively cutting vertices and applying Cauchy--Schwarz and submultiplicativity of the trace norm.

Order the vertices $W = \{w_1, \dots, w_n\}$ such that the subgraph remains connected after removing vertices in order (this is always possible for a connected loop). For each vertex $w_i$, let $T_{w_i}$ denote the local tensor at that vertex, decorated with:
\begin{itemize}
\item Antiprojectors $\Pi_{\mu_{\star,\vec{e}}}^\perp$ on edges $e \in F$ incident to $w_i$;
\item Fixed point insertions $\mu_{\star,\vec{e}}$ on edges $e \notin F$ incident to $w_i$.
\end{itemize}

Let $\ell_{\setminus S}$ denote the subgraph obtained by removing vertices in $S \subset W$.

We illustrate the decomposition on a simple square loop. The initial configuration has all edges excited (antiprojectors inserted):
\begin{center}
  $Z = $
\begin{tikzpicture}[
  scale=0.6,
  baseline={([yshift=-0.65ex] current bounding box.center)},
  tensor/.style={draw, line width=0.9pt, fill=tensorcolor, rounded corners=2pt, minimum size=8mm},
  redbond/.style={line width=1.6pt, draw=red}
]
  \def\dx{2.5}
  \def\dy{2.5}

  \node[tensor] (A) at (0,0) {};
  \node[tensor] (B) at (\dx,0) {};
  \node[tensor] (C) at (\dx,-\dy) {};
  \node[tensor] (D) at (0,-\dy) {};

  \draw[redbond] (A.east) -- (B.west);
  \draw[redbond] (B.south) -- (C.north);
  \draw[redbond] (D.east) -- (C.west);
  \draw[redbond] (A.south) -- (D.north);
\end{tikzpicture}
\end{center}

We apply Cauchy--Schwarz to separate the bottom-left vertex from the rest. Viewing $Z_\ell$ as an inner product between the tensor at $w_1$ to the rest of the network, we apply Cauchy--Schwarz on this, to get the bound,
$$
|Z_\ell| \le \|T_{w_1}\|_2 \cdot \|Z_{\ell_{\setminus\{w_1\}}}\|_2.
$$
For the single loop network, we get, 
\begin{center}
  $|Z| \leq \Big\|$
\begin{tikzpicture}[
  scale=0.6,
  baseline={([yshift=-0.65ex] current bounding box.center)},
  tensor/.style={draw, line width=0.9pt, fill=tensorcolor, rounded corners=2pt, minimum size=8mm},
  redbond/.style={line width=1.6pt, draw=red},
  blackbond/.style={line width=1.6pt, draw=black}
]
  \def\dx{2.5}
  \def\dy{2.5}

  \node[tensor, fill=tensorcolor!50] (D) at (0,-\dy) {};
  \draw[blackbond] (D.east) -- ++(0.6,0);
  \draw[redbond] (D.north) -- ++(0,0.6);
\end{tikzpicture} $\Big\|_2$
\quad $\times$ \quad $\Big\|$
\begin{tikzpicture}[
  scale=0.6,
  baseline={([yshift=-0.65ex] current bounding box.center)},
  tensor/.style={draw, line width=0.9pt, fill=tensorcolor, rounded corners=2pt, minimum size=8mm},
  redbond/.style={line width=1.6pt, draw=red},
  blackbond/.style={line width=1.6pt, draw=black}
]
  \def\dx{2.5}
  \def\dy{2.5}

  \node[tensor] (A) at (0,0) {};
  \node[tensor] (B) at (\dx,0) {};
  \node[tensor] (C) at (\dx,-\dy) {};

  \draw[redbond] (A.east) -- (B.west);
  \draw[redbond] (B.south) -- (C.north);
  \draw[redbond] (C.west) -- ++(-0.6,0);
  \draw[blackbond] (A.south) -- ++(0,-0.6);
\end{tikzpicture}$\Big\|_2$ 
\end{center}

Next, we decompose $Z_{\ell_{\setminus\{w_1\}}}$ as a matrix product: the tensor at $w_2$ (with input from $w_1$ and outputs to other neighbors) times the remaining contracted tensor. By submultiplicativity of the Frobenius norm, 
$$
\|Z_{\ell_{\setminus\{w_1\}}}\|_2 \le \|T_{w_2}\|_2 \cdot \|Z_{\ell_{\setminus\{w_1,w_2\}}}\|_2.
$$

\begin{center}
  $\Big\|$
\begin{tikzpicture}[
  scale=0.6,
  baseline={([yshift=-0.65ex] current bounding box.center)},
  tensor/.style={draw, line width=0.9pt, fill=tensorcolor, rounded corners=2pt, minimum size=8mm},
  redbond/.style={line width=1.6pt, draw=red},
  blackbond/.style={line width=1.6pt, draw=black}
]
  \def\dx{2.5}
  \def\dy{2.5}

  \node[tensor] (A) at (0,0) {};
  \node[tensor] (B) at (\dx,0) {};
  \node[tensor] (C) at (\dx,-\dy) {};

  \draw[redbond] (A.east) -- (B.west);
  \draw[redbond] (B.south) -- (C.north);
  \draw[redbond] (C.west) -- ++(-0.6,0);
  \draw[blackbond] (A.south) -- ++(0,-0.6);
\end{tikzpicture}$\Big\|_2\leq~~$\quad  $\Big\|$
\begin{tikzpicture}[
  scale=0.6,
  baseline={([yshift=-0.65ex] current bounding box.center)},
  tensor/.style={draw, line width=0.9pt, fill=tensorcolor, rounded corners=2pt, minimum size=8mm},
  redbond/.style={line width=1.6pt, draw=red},
  blackbond/.style={line width=1.6pt, draw=black}
]
  \def\dx{2.5}
  \def\dy{2.5}

  \node[tensor, fill=tensorcolor!50] (A) at (0,0) {};
  \draw[redbond] (A.east) -- ++(0.6,0);
  \draw[blackbond] (A.south) -- ++(0,-0.6);
\end{tikzpicture}$\Big\|_2$
\quad $\times$ \quad $\Big\|$
\begin{tikzpicture}[
  scale=0.6,
  baseline={([yshift=-0.65ex] current bounding box.center)},
  tensor/.style={draw, line width=0.9pt, fill=tensorcolor, rounded corners=2pt, minimum size=8mm},
  redbond/.style={line width=1.6pt, draw=red},
  blackbond/.style={line width=1.6pt, draw=black}
]
  \def\dx{2.5}
  \def\dy{2.5}

  \node[tensor] (B) at (\dx,0) {};
  \node[tensor] (C) at (\dx,-\dy) {};

  \draw[redbond] (B.south) -- (C.north);
  \draw[redbond] (C.west) -- ++(-0.6,0);
  \draw[blackbond] (B.west) -- ++(-0.6,0);
\end{tikzpicture}$\Big\|_2$
\end{center}
This gives us, 
\begin{center}
  $|Z| \leq \Big\|$
\begin{tikzpicture}[
  scale=0.6,
  baseline={([yshift=-0.65ex] current bounding box.center)},
  tensor/.style={draw, line width=0.9pt, fill=tensorcolor, rounded corners=2pt, minimum size=8mm},
  redbond/.style={line width=1.6pt, draw=red},
  blackbond/.style={line width=1.6pt, draw=black}
]
  \def\dx{2.5}
  \def\dy{2.5}

  \node[tensor, fill=tensorcolor!50] (D) at (0,-\dy) {};
  \draw[blackbond] (D.east) -- ++(0.6,0);
  \draw[redbond] (D.north) -- ++(0,0.6);
\end{tikzpicture} $\Big\|_2$
\quad $\times$ \quad $\Big\|$
\begin{tikzpicture}[
  scale=0.6,
  baseline={([yshift=-0.65ex] current bounding box.center)},
  tensor/.style={draw, line width=0.9pt, fill=tensorcolor, rounded corners=2pt, minimum size=8mm},
  redbond/.style={line width=1.6pt, draw=red},
  blackbond/.style={line width=1.6pt, draw=black}
]
  \def\dx{2.5}
  \def\dy{2.5}

  \node[tensor, fill=tensorcolor!50] (A) at (0,0) {};
  \draw[redbond] (A.east) -- ++(0.6,0);
  \draw[blackbond] (A.south) -- ++(0,-0.6);
\end{tikzpicture}$\Big\|_2$
\quad $\times$ \quad $\Big\|$
\begin{tikzpicture}[
  scale=0.6,
  baseline={([yshift=-0.65ex] current bounding box.center)},
  tensor/.style={draw, line width=0.9pt, fill=tensorcolor, rounded corners=2pt, minimum size=8mm},
  redbond/.style={line width=1.6pt, draw=red},
  blackbond/.style={line width=1.6pt, draw=black}
]
  \def\dx{2.5}
  \def\dy{2.5}

  \node[tensor] (B) at (\dx,0) {};
  \node[tensor] (C) at (\dx,-\dy) {};

  \draw[redbond] (B.south) -- (C.north);
  \draw[redbond] (C.west) -- ++(-0.6,0);
  \draw[blackbond] (B.west) -- ++(-0.6,0);
\end{tikzpicture}$\Big\|_2$
\end{center}

Proceeding iteratively for all $n$ vertices,
$$
|Z_\ell| \le \prod_{i=1}^n \|T_{w_i}\|_2.
$$

This gives us, 
\begin{center}
  $|Z| \leq \Big\|$
\begin{tikzpicture}[
  scale=0.6,
  baseline={([yshift=-0.65ex] current bounding box.center)},
  tensor/.style={draw, line width=0.9pt, fill=tensorcolor, rounded corners=2pt, minimum size=8mm},
  redbond/.style={line width=1.6pt, draw=red},
  blackbond/.style={line width=1.6pt, draw=black}
]
  \def\dx{2.5}
  \def\dy{2.5}

  \node[tensor, fill=tensorcolor!50] (D) at (0,-\dy) {};
  \draw[blackbond] (D.east) -- ++(0.6,0);
  \draw[redbond] (D.north) -- ++(0,0.6);
\end{tikzpicture} $\Big\|_2$
\quad $\times$ \quad $\Big\|$
\begin{tikzpicture}[
  scale=0.6,
  baseline={([yshift=-0.65ex] current bounding box.center)},
  tensor/.style={draw, line width=0.9pt, fill=tensorcolor, rounded corners=2pt, minimum size=8mm},
  redbond/.style={line width=1.6pt, draw=red},
  blackbond/.style={line width=1.6pt, draw=black}
]
  \def\dx{2.5}
  \def\dy{2.5}

  \node[tensor, fill=tensorcolor!50] (A) at (0,0) {};
  \draw[redbond] (A.east) -- ++(0.6,0);
  \draw[blackbond] (A.south) -- ++(0,-0.6);
\end{tikzpicture}$\Big\|$
\quad $\times$ \quad $\Big\|$
\begin{tikzpicture}[
  scale=0.6,
  baseline={([yshift=-0.65ex] current bounding box.center)},
  tensor/.style={draw, line width=0.9pt, fill=tensorcolor, rounded corners=2pt, minimum size=8mm},
  redbond/.style={line width=1.6pt, draw=red},
  blackbond/.style={line width=1.6pt, draw=black}
]
  \def\dx{2.5}
  \def\dy{2.5}

  \node[tensor, fill=tensorcolor!50] (B) at (\dx,0) {};
  \draw[redbond] (B.south) -- ++(0,-0.6);
  \draw[blackbond] (B.west) -- ++(-0.6,0);
\end{tikzpicture}$~\Big\|$
\quad $\times$ \quad $\Big\|$
\begin{tikzpicture}[
  scale=0.6,
  baseline={([yshift=-0.65ex] current bounding box.center)},
  tensor/.style={draw, line width=0.9pt, fill=tensorcolor, rounded corners=2pt, minimum size=8mm},
  redbond/.style={line width=1.6pt, draw=red},
  blackbond/.style={line width=1.6pt, draw=black}
]
  \def\dx{2.5}
  \def\dy{2.5}

  \node[tensor, fill=tensorcolor!50] (C) at (\dx,-\dy) {};

  \draw[redbond] (C.west) -- ++(-0.6,0);
  \draw[blackbond] (C.north) -- ++(0,0.6);
\end{tikzpicture}$~\Big\|$
\end{center}

For each vertex $w_i$, after absorbing at least one antiprojector from an incident edge in $F$, the decorated tensor $T_{w_i}$ contains an excitation. By \autoref{prop:excitation_decay}, the contribution from the excitation satisfies
$$
\|T_{w_i}\|_2 \le \eta(\varepsilon, D, \Delta) + \order{\varepsilon^2}
$$
Further, inserting fixed-points cannot increase the Frobenius norm. By \autoref{lem:excitationprojectorbound}, the presence of more than one excitation can only increase the norm to a subleading value in $\varepsilon$. 


Hence, finally using \autoref{lem:vertex_edge_bound}, we obtain
$$
|Z_\ell| \le \order{\eta^{2m/\Delta}}
$$
\end{proof}

\begin{prop}[Beating exponential growth] \label{prop:loop_decay_threshold}
There exists a threshold $\varepsilon_{**}$ that is $\Theta(1)$ when $\Delta$ and $D$ are $\order{1}$, and with the scaling behavior bounded by
\begin{equation}\label{eq:loop_decay_threshold}
{
 \varepsilon_{**}= \order{
\min\left\{
\frac{D^{\Delta/2-2}}{2(D+2)}\,\frac{e^{-\Delta/4}}{(2e\Delta)^{\Delta/2}},
\frac{1}{1+2\sqrt D},
\frac{1}{D}
\right\}}.
}
\end{equation}
such that $\varepsilon < \varepsilon_{**}$ ensures loop decay with $|Z_\ell| \le e^{-c|\ell|}$ for some
\[
c>c_0=\log(2e\Delta)+\frac12,
\]
\end{prop}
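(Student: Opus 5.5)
We combine \autoref{prop:loop_decay} with the explicit form of $\eta$ from \autoref{prop:excitation_decay}, and then choose $\varepsilon$ small enough that the per-edge decay rate beats $c_0$. First, we collect the standing conditions. The bound on $\eta$ and the loop bound $|Z_\ell| \le \order{\eta^{2m/\Delta}}$ were derived under $\varepsilon < \min\{1/D, 1/(1+2\sqrt D)\}$. This condition ensures positivity of $I_{vw}$ (\autoref{lem:Ivwconditioned}) and of $Z^{(v)}$ (\autoref{prop:bpapproximationbound}), and it is also what makes \autoref{lem:projection_difference} valid. These two terms therefore enter the minimum in \eqref{eq:loop_decay_threshold} directly. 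They are $\Theta(1)$ for $D=\order{1}$.

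Second, we rewrite the loop bound as an exponential in $|\ell|=m$:
\[
|Z_\ell| \le C\,\eta^{2m/\Delta} = C\exp\!\Big(-\tfrac{2}{\Delta}\log(1/\eta)\, m\Big),
\]
where $C$ absorbs the $\order{\cdot}$ constant coming from the $(1+\order{\varepsilon})$ factors in \autoref{lem:excitationprojectorbound}. We handle $C$ by attaching the factor $(1+\order{\varepsilon})$ to each vertex rather than globally. This replaces $\eta$ by $\tilde\eta := \eta(1+K\varepsilon)$ for a constant $K=K(D,\Delta)$, so that $|Z_\ell|\le \tilde\eta^{\,2m/\Delta}$ holds with no prefactor (using $n\ge 2m/\Delta$ from \autoref{lem:vertex_edge_bound} and $\tilde\eta<1$). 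The effective rate is then $c = \frac{2}{\Delta}\log(1/\tilde\eta)$. The requirement $c > c_0 = \log(2e\Delta)+\tfrac12$ becomes
\[
\tilde\eta < e^{-\Delta c_0/2} = \frac{e^{-\Delta/4}}{(2e\Delta)^{\Delta/2}}.
\]

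Third, we insert the leading term $\eta = 2D^{2-\Delta/2}(D+2)\varepsilon + \order{\varepsilon^2}$. To leading order the condition becomes
\[
\varepsilon < \frac{D^{\Delta/2-2}}{2(D+2)}\,\frac{e^{-\Delta/4}}{(2e\Delta)^{\Delta/2}},
\]
which is the first entry of the minimum. To make this rigorous despite the $\order{\varepsilon^2}$ terms, note that for $\varepsilon$ below a fixed $\Theta(1)$ constant (depending only on $D,\Delta$) we have $\tilde\eta \le 2\cdot 2D^{2-\Delta/2}(D+2)\varepsilon$. Shrinking the threshold by this constant factor is absorbed into the $\order{\cdot}$ in \eqref{eq:loop_decay_threshold}. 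Taking $\varepsilon_{**}$ to be the minimum of the three quantities, times a suitable $\Theta(1)$ factor, gives a strict inequality $\tilde\eta < e^{-\Delta c_0/2}$, hence some $c>c_0$. Since $D,\Delta=\order{1}$, each entry is $\Theta(1)$, so $\varepsilon_{**}=\Theta(1)$. Finally, we check $\varepsilon_{**}<\varepsilon_*=1/(2\Delta-1)$, which is needed because the fixed point itself comes from \autoref{thm:nonTI_uniqueness}. The first entry is already far smaller than $1/(2\Delta-1)$ because of the factor $(2e\Delta)^{-\Delta/2}$, and if necessary we include $\varepsilon_*$ in the minimum.

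The main obstacle is controlling the hidden $\order{\varepsilon^2}$ and $(1+\order{\varepsilon})$ corrections uniformly. Specifically, the prefactor in $|Z_\ell|\le\order{\eta^{2m/\Delta}}$ must not grow with $m$, since a per-vertex constant raised to the power $n$ would otherwise spoil the rate. We address this by carrying every correction multiplicatively per vertex into $\tilde\eta$, as above. We also use that the number of excited legs per vertex is at most $\Delta$, so the induction in \autoref{lem:excitationprojectorbound} contributes at most $(1+K'\varepsilon)^{\Delta}$ per vertex, which is again a $\Theta(1)$ adjustment of the threshold.
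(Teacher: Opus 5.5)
Your proposal follows essentially the same route as the paper: you combine $|Z_\ell|\le\order{\eta^{2m/\Delta}}$ with the leading-order form of $\eta$, require $\eta<e^{-c_0\Delta/2}=e^{-\Delta/4}(2e\Delta)^{-\Delta/2}$ strictly at $c_0$, absorb the $\order{\varepsilon^2}$ terms into a $\Theta(1)$ constant, and adjoin the side conditions $\varepsilon<1/D$ and $\varepsilon<1/(1+2\sqrt D)$. Your per-vertex absorption of the $(1+\order{\varepsilon})$ factors is a cleaner treatment of the prefactor that the paper leaves implicit. One small slip is in your optional check against $\varepsilon_*$: the first entry equals $\frac{1}{2D^2(D+2)}\bigl(\frac{D}{2e^{3/2}\Delta}\bigr)^{\Delta/2}$, so the factor $(2e\Delta)^{-\Delta/2}$ alone does not make it small, and for $D\gg 2e^{3/2}\Delta$ with large $\Delta$ it can exceed $\varepsilon_*$; in that regime it is the $1/D$ entry (as the paper remarks after its proof) that guarantees $\varepsilon_{**}<\varepsilon_*$.
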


\begin{proof}
To guarantee decay as $e^{-cm}$ for some $c>0$, \autoref{prop:loop_decay}
requires
\begin{equation}
\eta \le e^{-c\Delta/2}.
\end{equation}
We have the leading-order estimate from Proposition \ref{prop:excitation_decay} and Lemma \ref{lem:excitationprojectorbound},
\begin{equation}
\eta(\varepsilon,D,\Delta)=2D^{2-\Delta/2}(D+2)\varepsilon+\mathcal O(\varepsilon^2),
\end{equation}
The $O(\varepsilon^2)$ term has coefficient controlled by $\Delta$ and $D$. We did not keep track of it in order to simplify the calculations. However, since $D$ and $\Delta$ are both $\order{1}$, the coefficient is $\order{1}$. Therefore, we can set $\varepsilon$ to be smaller than a constant threshold to have
\begin{equation}
\eta(\varepsilon,D,\Delta) \le c' 2D^{2-\Delta/2}(D+2)\varepsilon ,
\end{equation}
Where $c'$ is some $\Theta(1)$ number that depends on $D$ and $\Delta$ in general and encapsulates the higher-order coefficients we do not keep track of. It is therefore sufficient, for small enough $\varepsilon$, that
\begin{equation}\label{eq:remove_epsilon_squared}
c' 2D^{2-\Delta/2}(D+2)\varepsilon \le e^{-c\Delta/2}.
\end{equation}
Solving for $\varepsilon$ gives
\begin{equation}
\varepsilon \le \frac{e^{-c\Delta/2}}{c' 2D^{2-\Delta/2}(D+2)}
=\order{ \frac{D^{\Delta/2-2}}{2(D+2)}\,e^{-c\Delta/2}}.
\end{equation}
Hence, to ensure loop decay for some $c>c_0$, it is enough to impose the
strict inequality obtained by evaluating the right-hand side at $c_0$:
\begin{equation}
\varepsilon <
\order{\frac{D^{\Delta/2-2}}{2(D+2)}\,e^{-c_0\Delta/2}}.
\end{equation}
Substituting
\[
c_0=\log(2e\Delta)+\frac12
\]
yields
\begin{align}
e^{-c_0\Delta/2}
&= \exp\!\left[-\frac{\Delta}{2}\left(\log(2e\Delta)+\frac12\right)\right] \\
&= e^{-\Delta/4}(2e\Delta)^{-\Delta/2} \\ 
&= e^{-3\Delta/4} (2\Delta)^{-\Delta/2}.
\end{align}
Therefore,
\begin{equation}
\varepsilon <
\frac{D^{\Delta/2-2}}{2(D+2)}\,\frac{e^{-3\Delta/4}}{(2\Delta)^{\Delta/2}} =\order{ \frac{1}{2D^2(D+2)}\left(\frac{D}{(2\Delta e^{3/2})}\right)^{\Delta/2}}
\end{equation}
Combining this with the auxiliary constraints from \autoref{prop:bpapproximationbound} and \autoref{lem:projection_difference},
\[
\varepsilon<\frac{1}{1+2\sqrt D},
\qquad
\varepsilon<\frac1D,
\]
gives the stated bound. We note that Eq.~\ref{eq:loop_decay_threshold} only gives a bound on the scaling behavior of the threshold because we already take a threshold in Eq.~\ref{eq:remove_epsilon_squared}. This threshold could have a different scaling behavior that overwhelms the scaling in Eq.~\ref{eq:loop_decay_threshold}. However, we are always guaranteed that the threshold is $\Theta(1)$ as long as $D$ and $\Delta$ are $\order{1}$.
\end{proof}

\emph{Implication of thresholds.} We have, 
\begin{enumerate}
    \item[(i)] The threshold for uniqueness, 
    $$
    \varepsilon_* = \frac{1}{2\Delta-1}
    $$
    \item[(ii)] The threshold for loop decay, 
    $$
    \varepsilon_{**} = \min\left\{
\order{\left(\frac{D}{2e^{3/2}\Delta }\right)^{\Delta/2}},
\order{\frac{1}{D}}
\right\}.
    $$
\end{enumerate}
Now, for $\varepsilon_{**}$ we note that whenever $D < 2e^{3/2}\Delta$ the factor
$$\order{\left(\frac{D}{2e^{3/2}\Delta }\right)^{\Delta/2}}$$
dominates owing to the exponential in $\Delta/2$. Thus, given $D < 2e^{3/2}\Delta$ we then have $\varepsilon_{**} < \varepsilon_*$ generically owing to the combinatorial dependence of $\Delta$. On the other hand, $D > 2e^{3/2}\Delta$, then this bound becomes vacuous and the $\order{1/D}$ factor dominates. Even still, we then have  $\varepsilon_{**} \sim 1/D < \varepsilon_* \sim 1/\Delta $ since $D > 2e^{3/2}\Delta$. In practice, we expect the cluster expansion to converge for a wider range than $\varepsilon_{**}$ owing to additional structure in the loops (e.g., see~\cite{midha2026}), as well as the fact that the proof methods are rather conservative.
\newpage    
 \smsection{Algorithmic Locality}

We first establish that the message-passing dynamics exhibits finite propagation speed: perturbations in a local region affect only nearby messages after finite time evolution.

\smsubsection{Message-passing dynamics}

Consider the global synchronous message-passing dynamics. Starting from an initial configuration $\bm{\mu}^{(0)} = (\mu_{\vec{e}}^{(0)})_{\vec{e}\in\vec{E}}$, the state at time $t$ is given by
\begin{equation}
\label{eq:message_dynamics}
\bm{\mu}^{(t)} := \bm{F}^t[\bm{\mu}^{(0)}],
\end{equation}
where $\bm{F}: \KK_G \to \KK_G$ is the global normalized message-passing map defined component-wise: for each directed edge $\vec{e}=(v,n)\in\vec{E}$,
\begin{equation}\label{eq:normalized_update_rule}
[\bm{F}(\bm{\mu})]_{\vec{e}} := F_{\vec{e}}(\bm{\mu}) := \frac{\Phi_{(v,n)}\!\left(
\bigotimes_{m \in \NN(v)\setminus\{n\}} \mu_{(m,v)}
\right)}{\Tr\left[\Phi_{(v,n)}\!\left(
\bigotimes_{m \in \NN(v)\setminus\{n\}} \mu_{(m,v)}
\right)\right]},
\end{equation}
where the normalization ensures each message remains in $\KK(\HH_{\vec{e}})$ (trace-normalized positive operators).

\smsubsection{Finite speed of propagation}
We first highlight a useful fact about message-passing dynamics, in that it has a \emph{lightcone}. That is, local information on the graph propagates at most ballistically (with graph distance). 

\begin{defn}[Support of perturbation]
Let $\bm{\mu}^{(0)}$ and $\tilde{\bm{\mu}}^{(0)}$ be two initial configurations. The \emph{support} of their difference is the set of vertices
\[
A := \{v \in V: \exists\, n\in\NN(v) \text{ such that } \mu_{(v,n)}^{(0)} \neq \tilde{\mu}_{(v,n)}^{(0)}\}.
\]
We say the initial perturbation is \emph{supported on $A$}.
\end{defn}

\begin{lemma}[Finite propagation speed]
\label{lem:finite_propagation}
Let $\bm{\mu}^{(0)}$ and $\tilde{\bm{\mu}}^{(0)}$ be two initial configurations such that their difference is supported on $A \subseteq V$. Let $\bm{\mu}^{(t)} = \bm{F}^t[\bm{\mu}^{(0)}]$ and $\tilde{\bm{\mu}}^{(t)} = \bm{F}^t[\tilde{\bm{\mu}}^{(0)}]$ be the corresponding time-evolved states.

Then for any region $B \subseteq V$ satisfying
\[
d(A,B) > t,
\]
we have
\[
\mu_e^{(t)} = \tilde{\mu}_e^{(t)}
\qquad \text{for all } {\vec{e}}=(v,n) \text{ with } v \in B.
\]
\end{lemma}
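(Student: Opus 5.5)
The plan is an induction on $t$. The key structural fact is that each component of the update rule \eqref{eq:normalized_update_rule} depends only on messages at the tail vertex: $F_{(v,n)}(\bm{\mu})$ is a function of the incoming messages $\{\mu_{(m,v)}: m\in\NN(v)\setminus\{n\}\}$. Each such message sits on a directed edge whose tail $m$ is at graph distance one from $v$. So one step of $\bm{F}$ spreads any discrepancy by at most one unit of graph distance.

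Concretely, I would define the discrepancy set at time $t$ as
\[
A_t := \{u\in V:\ \exists\, n\in\NN(u),\ \mu^{(t)}_{(u,n)}\neq\tilde{\mu}^{(t)}_{(u,n)}\},
\]
so that $A_0 = A$ by the definition of the support of the perturbation. The inductive claim is
\[
A_t \subseteq \{u\in V:\ d(u,A)\le t\}.
\]
The base case $t=0$ is immediate.

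For the inductive step, take $u$ with $d(u,A)>t+1$ and any $n\in\NN(u)$. Every $m\in\NN(u)\setminus\{n\}$ satisfies $d(m,A)\ge d(u,A)-1>t$ by the triangle inequality for graph distance. By the inductive hypothesis $m\notin A_t$, and hence $\mu^{(t)}_{(m,u)}=\tilde\mu^{(t)}_{(m,u)}$. It follows that
\[
\bigotimes_{m\in\NN(u)\setminus\{n\}}\mu^{(t)}_{(m,u)} = \bigotimes_{m\in\NN(u)\setminus\{n\}}\tilde\mu^{(t)}_{(m,u)}.
\]
Both configurations therefore receive the same input to $\Phi_{(u,n)}$, and hence the same numerator and the same trace normalization. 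That trace is nonzero, by \autoref{prop:nonTI_nonvanishing}, so $F_{(u,n)}$ is well defined. This gives $\mu^{(t+1)}_{(u,n)}=\tilde\mu^{(t+1)}_{(u,n)}$, i.e.\ $u\notin A_{t+1}$, which closes the induction.

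To conclude, take $B$ with $d(A,B)>t$. Any $v\in B$ has $d(v,A)>t$, so $v\notin A_t$, and all outgoing messages $\mu^{(t)}_{(v,n)}$ agree between the two runs, as the lemma states.

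There is no real obstacle here. The only points needing care are the direction convention (messages on $(v,n)$ are indexed by their tail $v$, matching the definition of support) and checking that the normalization cannot couple to distant edges. It cannot, because the trace in \eqref{eq:normalized_update_rule} is taken edge-locally.
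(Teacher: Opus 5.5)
Your proof is correct and is essentially the paper's argument: an induction on $t$ showing that if $d(v,A)>t+1$, then every incoming message $\mu^{(t)}_{(m,v)}$ has its tail at distance $>t$ from $A$ and so agrees between the two runs, which makes the edge-local normalized update give $\mu^{(t+1)}_{(v,n)}=\tilde\mu^{(t+1)}_{(v,n)}$. Phrasing this through the discrepancy sets $A_t$, and noting that the trace normalization is edge-local (and nonzero by \autoref{prop:nonTI_nonvanishing}), are cosmetic additions to the same proof.
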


\begin{proof}
We prove by induction on $t$ that perturbations propagate at most one edge per time step. 
\[
\begin{tikzpicture}[
  scale=0.8,
  baseline={([yshift=-0.65ex] current bounding box.center)},
  tensor/.style={
    draw, line width=0.9pt, fill=black!15, rounded corners=2pt, minimum size=6mm
  },
  tensorA/.style={
    draw, line width=0.9pt, fill=red!40, rounded corners=2pt, minimum size=6mm
  },
  bond/.style={line width=1.5pt, draw=black!70},
  arr/.style={line width=1.8pt, draw=red, ->}
]

\def\dx{1.4}
\def\dy{1.4}
\def\shear{0.0}

\foreach \t/\xoff in {0/0, 1/6.5, 2/13} {
  \begin{scope}[xshift=\xoff cm, xslant=\shear, yscale=0.98]

    \node[font=\large\bfseries] at (2*\dx, 0.8*\dy) {$t=\t$};

    \foreach \r in {0,...,4}{
      \foreach \c in {0,...,4}{
        \node[tensor] (T-\t-\c-\r) at (\c*\dx, -\r*\dy) {};
      }
    }
    \node[tensorA] (T-\t-2-2) at (2*\dx, -2*\dy) {$A$};

    \foreach \r in {0,...,4}{
      \foreach \c in {0,...,3}{
        \pgfmathtruncatemacro{\cp}{\c+1}
        \draw[bond] (T-\t-\c-\r) -- (T-\t-\cp-\r);
      }
    }
    \foreach \c in {0,...,4}{
      \foreach \r in {0,...,3}{
        \pgfmathtruncatemacro{\rp}{\r+1}
        \draw[bond] (T-\t-\c-\r) -- (T-\t-\c-\rp);
      }
    }

    \foreach \r in {0,...,4}{
      \foreach \c in {0,...,4}{
        \pgfmathtruncatemacro{\du}{abs(\c-2) + abs(\r-2)}
        
        \pgfmathtruncatemacro{\isactive}{\du <= \t ? 1 : 0}
        
        \ifnum\isactive=1
          \ifnum\r>0
            \pgfmathtruncatemacro{\rp}{\r-1}
            \pgfmathtruncatemacro{\dv}{abs(\c-2) + abs(\rp-2)}
            \ifnum\du<\dv 
              \draw[arr] (T-\t-\c-\r) -- (T-\t-\c-\rp);
            \fi
          \fi
          \ifnum\r<4
            \pgfmathtruncatemacro{\rp}{\r+1}
            \pgfmathtruncatemacro{\dv}{abs(\c-2) + abs(\rp-2)}
            \ifnum\du<\dv
              \draw[arr] (T-\t-\c-\r) -- (T-\t-\c-\rp);
            \fi
          \fi
          \ifnum\c>0
            \pgfmathtruncatemacro{\cp}{\c-1}
            \pgfmathtruncatemacro{\dv}{abs(\cp-2) + abs(\r-2)}
            \ifnum\du<\dv
              \draw[arr] (T-\t-\c-\r) -- (T-\t-\cp-\r);
            \fi
          \fi
          \ifnum\c<4
            \pgfmathtruncatemacro{\cp}{\c+1}
            \pgfmathtruncatemacro{\dv}{abs(\cp-2) + abs(\r-2)}
            \ifnum\du<\dv
              \draw[arr] (T-\t-\c-\r) -- (T-\t-\cp-\r);
            \fi
          \fi
        \fi
      }
    }
  \end{scope}
}
\end{tikzpicture}
\]
For the base case: by assumption, $\mu_{\vec{e}}^{(0)} = \tilde{\mu}_{\vec{e}}^{(0)}$ for all edges ${\vec{e}}$ with source vertex outside $A$. Assume now that the claim holds for time $t$: if $d(v,A) > t$, then $\mu_{\vec{e}}^{(t)} = \tilde{\mu}_{\vec{e}}^{(t)}$ for all edges $e$ with source vertex $v$.

Now consider time $t+1$. For any directed edge ${\vec{e}}=(v,n)$, the message at time $t+1$ is
\[
\mu_{\vec{e}}^{(t+1)} = F_{\vec{e}}(\bm{\mu}^{(t)}) = \frac{1}{\Tr(\cdots)}\Phi_{(v,n)}\!\left(
\bigotimes_{m \in \NN(v)\setminus\{n\}} \mu_{(m,v)}^{(t)}
\right).
\]

If $d(v,A) > t+1$, then all neighbors $m \in \NN(v)$ satisfy $d(m,A) \ge d(v,A) - 1 > t$. By the inductive hypothesis, $\mu_{(m,v)}^{(t)} = \tilde{\mu}_{(m,v)}^{(t)}$ for all $m \in \NN(v)$.

Therefore,
\[
\mu_e^{(t+1)} =\frac{1}{\Tr(\cdots)} \Phi_{(v,n)}\!\left(
\bigotimes_{m \in \NN(v)\setminus\{n\}} \mu_{(m,v)}^{(t)}
\right)
= \frac{1}{\Tr(\cdots)}\Phi_{(v,n)}\!\left(
\bigotimes_{m \in \NN(v)\setminus\{n\}} \tilde{\mu}_{(m,v)}^{(t)}
\right)
= \tilde{\mu}_e^{(t+1)}.
\]

This completes the induction.
\end{proof}

\begin{corollary}[Light cone structure]
The message-passing dynamics $\bm{F}$ exhibits a causal light cone: information propagating from a local region $A$ at time $0$ can only affect messages at edges ${\vec{e}}=(v,n)$ with $d(v,A) \le t$ at time $t$.
\end{corollary}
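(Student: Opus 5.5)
The corollary is a restatement of \autoref{lem:finite_propagation} in light-cone language, so I will derive it from that lemma. Fix a time $t$ and two initial configurations $\bm{\mu}^{(0)}$ and $\tilde{\bm{\mu}}^{(0)}$ whose difference is supported on $A$. This is what ``information originating from $A$'' means. The lemma is stated for a region $B$ with $d(A,B)>t$. I will apply it to the single-vertex region $B=\{v\}$ for each vertex $v$ with $d(v,A)>t$. It then gives $\mu^{(t)}_{(v,n)}=\tilde{\mu}^{(t)}_{(v,n)}$ for every $n\in\NN(v)$.

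Taking the contrapositive, if some outgoing message $\mu^{(t)}_{(v,n)}$ differs between the two evolutions, then $d(v,A)\le t$. So the set of directed edges whose messages can be affected at time $t$ is contained in $\{(v,n): d(v,A)\le t\}$, which is exactly the causal light cone in the statement. I will also note the monotonicity: as $t$ increases by one, the cone grows by at most one graph step. This is the inductive step of the lemma, where $d(m,A)\ge d(v,A)-1$ for each $m\in\NN(v)$.

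The only point that needs care is the convention for the support. In the definition of the support of a perturbation, $A$ records the \emph{source} vertices of the perturbed messages, and distance is measured from the source vertex $v$ of $\vec{e}=(v,n)$. I will keep this convention consistent with how the corollary is stated, rather than using $d(e,A)$, which takes a minimum over both endpoints and would be off by one.

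For the later use in \autoref{thm:BPlocality_maintext}, I will also remark on the case where the perturbation is in the \emph{tensors} $T_A\to T'_A$ rather than in the initial messages, with both evolutions started from $\bm{\mu}_\star$. The same induction applies there. The update map $F_{\vec{e}}$ for a source vertex $v\notin A$ is identical in both dynamics. After one step, the messages differ only on edges whose source lies in $A$. From then on, the lemma applies with the time shifted by one, which gives a lightcone radius of $t$ up to an additive constant. That shift is harmless for the exponential bound.
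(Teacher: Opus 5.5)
Your argument is correct and matches how the paper obtains this corollary: the paper states it without a separate proof, as an immediate consequence of the finite-propagation lemma. Your derivation (apply the lemma to $B=\{v\}$ whenever $d(v,A)>t$, take the contrapositive, and measure distance from the source vertex as in the definition of the support) is exactly that step. One small refinement to your side remark on tensor perturbations: the lemma compares a single map on two initial conditions, so it does not literally apply to $\bm{F}'$ versus $\bm{F}$. Rerunning its induction, using only that $F'_{\vec e}=F_{\vec e}$ whenever the source of $\vec e$ lies outside $A$, gives agreement with $\bm{\mu}_\star$ at time $t$ on all edges with $d(v,A)\ge t$, which is what the locality theorem needs at $t=r-1$.
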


\smsubsection{Weak perturbations preserve injectivity}

We will review a technical argument that shows weak perturbations on tensors can only change their injectivity parameters in a controlled manner. This follows from standard results in perturbation theory of singular values, following Weyl.

\begin{theorem}[Weyl's perturbation \cite{weyl1912asymptotische}] \label{thm:weylperturbation}
    Let $A \in \C^{m \times n}$. Let $\tilde{A} = A+ E$ be a perturbation of $A$. Then the ordered set of singular values of $A$ and $\tilde{A}$ satisfy,
    \begin{equation}
        |\tilde{\sigma}_i - \sigma_i| \leq \|E\|_\infty
    \end{equation}
\end{theorem}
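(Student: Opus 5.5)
The statement is Weyl's inequality for singular values, $|\tilde\sigma_i-\sigma_i|\le\|E\|_\infty$ for $\tilde A=A+E$. I would prove it through the Courant--Fischer min-max characterization of singular values, so the whole argument reduces to a one-line comparison of variational quantities.

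\emph{Step 1 (min-max formula).} Order the singular values decreasingly. For $A\in\C^{m\times n}$ and $1\le i\le\min(m,n)$ I will use
\[
\sigma_i(A)=\max_{\substack{S\subseteq\C^n\\ \dim S=i}}\ \min_{\substack{x\in S\\ \|x\|_2=1}}\|Ax\|_2 .
\]
This follows from Courant--Fischer applied to the Hermitian matrix $A^\dagger A$, whose eigenvalues are $\sigma_i(A)^2$, together with $\langle x,A^\dagger Ax\rangle=\|Ax\|_2^2$ and monotonicity of the square root. If $m\neq n$, the extra zero singular values are handled by the same formula, or by padding $A$ with zero rows or columns to make it square; padding changes neither the nonzero singular values nor $\|E\|_\infty$.

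\emph{Step 2 (pointwise comparison).} For every unit vector $x$, the triangle inequality and the definition of the operator norm give
\[
\|\tilde Ax\|_2\le\|Ax\|_2+\|Ex\|_2\le\|Ax\|_2+\|E\|_\infty .
\]
Fix a subspace $S$ of dimension $i$ and take the minimum over unit $x\in S$ on both sides; then take the maximum over all such $S$. Both operations preserve an additive constant, so the result is $\tilde\sigma_i\le\sigma_i+\|E\|_\infty$.

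\emph{Step 3 (symmetry).} Write $A=\tilde A+(-E)$ and note $\|-E\|_\infty=\|E\|_\infty$. Applying Step 2 with the roles of $A$ and $\tilde A$ exchanged gives $\sigma_i\le\tilde\sigma_i+\|E\|_\infty$. Combining the two inequalities yields $|\tilde\sigma_i-\sigma_i|\le\|E\|_\infty$ for every $i$.

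The only real content is Step 1, the min-max characterization. I would cite it as standard via Courant--Fischer rather than reprove it. Once it is in place, Steps 2 and 3 are immediate.

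For the later use in the paper, this is applied to $T_A$ and $T'_A$ viewed as maps from the virtual to the physical space. It gives $\lambda'_{\min}\ge\delta-\|T_A-T'_A\|_\infty$ and $\lambda'_{\max}\le 1+\|T_A-T'_A\|_\infty$. Hence, after renormalizing so that $\lambda'_{\max}=1$, the perturbed injectivity parameter changes by $\order{\|T_A-T'_A\|_\infty}$. This is what keeps $\varepsilon'<\varepsilon_*$.
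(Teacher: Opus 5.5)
Your proof is correct. The min--max formula for $\sigma_i$ (Courant--Fischer applied to $A^\dagger A$), the pointwise bound $\|\tilde A x\|_2\le\|Ax\|_2+\|E\|_\infty$ carried through the inner minimum and outer maximum, and the symmetric swap $A=\tilde A+(-E)$ together give exactly $|\tilde\sigma_i-\sigma_i|\le\|E\|_\infty$. Padding is not even needed, since the formula already holds for rectangular $A$ with $S\subseteq\C^n$.

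The paper does not prove this statement. It quotes it as a standard result with a citation to Weyl, so your argument serves as a self-contained replacement rather than an alternative route. Your closing remark matches how the paper then uses it in Lemma~\ref{lem:injectivity_stability}. There, $\tilde\sigma_1\le 1+\|E\|_\infty$ and $\tilde\sigma_r\ge\delta-\|E\|_\infty$ give $\tilde\delta\ge(\delta-\|E\|_\infty)/(1+\|E\|_\infty)$.
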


Using Weyl's stability of singular values under weak perturbations, we show that for an injective tensor with $\varepsilon < \varepsilon_\text{th}$ for some threshold $\varepsilon_\text{th} > 0$, all perturbations of the tensor with operator norm $\order{\varepsilon_\text{th} - \varepsilon}$ preserve the injectivity threshold. 

\begin{lemma}[Stability of the injectivity parameter under perturbations]\label{lem:injectivity_stability}
Let $A \in \C^{m\times n}$ have ordered singular values $\{1 \geq \dots \geq \delta > 0\}$. Let $\varepsilon := 1-\delta^2$ and assume $\varepsilon < \varepsilon_{\mathrm{th}}$. Now let $\widetilde A = A+E$ and let,
\begin{equation}
    \widetilde \sigma_1 \ge \widetilde \sigma_2 \ge \cdots \ge \widetilde \sigma_r
\end{equation}
be the singular values of $\widetilde A$. Define the perturbed ratio
\begin{equation}
    \widetilde\delta := \frac{\widetilde \sigma_r}{\widetilde \sigma_1},
\qquad
\widetilde\varepsilon := 1-\widetilde\delta^2.
\end{equation}
If
\begin{equation}
\|E\|_\infty
<
\frac{\varepsilon_{\mathrm{th}}-\varepsilon}
{2\sqrt{1-\varepsilon_{\mathrm{th}}}\,(1+\sqrt{1-\varepsilon_{\mathrm{th}}})}
+ \order{(\varepsilon_{\mathrm{th}}-\varepsilon)^2}
\end{equation}then the perturbed matrix satisfies $\widetilde{\varepsilon} < \varepsilon_{\mathrm{th}} .
$.
\end{lemma}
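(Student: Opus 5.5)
The plan is to apply Weyl's bound (\autoref{thm:weylperturbation}) to the largest and smallest singular values of $\widetilde A$. Then I will translate the resulting lower bound on the ratio $\widetilde\delta$ into an upper bound on $\widetilde\varepsilon$.

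Set $\eta := \|E\|_\infty$. Since the singular values of $A$ lie in $[\delta,1]$ with $\sigma_1 = 1$ and $\sigma_r = \delta$, Weyl gives
\begin{equation}
\widetilde\sigma_1 \le 1+\eta, \qquad \widetilde\sigma_r \ge \delta - \eta .
\end{equation}
We will have $\eta < \delta$ in the regime of interest, so $\widetilde A$ is still injective. It follows that
\begin{equation}
\widetilde\delta \ge \frac{\delta-\eta}{1+\eta}.
\end{equation}
The requirement $\widetilde\varepsilon < \varepsilon_{\mathrm{th}}$ is equivalent to $\widetilde\delta > \delta_{\mathrm{th}} := \sqrt{1-\varepsilon_{\mathrm{th}}}$. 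It therefore suffices to impose $(\delta-\eta)/(1+\eta) > \delta_{\mathrm{th}}$, which rearranges to
\begin{equation}
\eta < \frac{\delta - \delta_{\mathrm{th}}}{1+\delta_{\mathrm{th}}}.
\end{equation}
This is an exact sufficient condition, and all that remains is to rewrite it in terms of $\varepsilon_{\mathrm{th}} - \varepsilon$.

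The only delicate step is the conversion $\delta - \delta_{\mathrm{th}} = \sqrt{1-\varepsilon} - \sqrt{1-\varepsilon_{\mathrm{th}}}$. One option is to rationalize:
\begin{equation}
\delta - \delta_{\mathrm{th}} = \frac{\varepsilon_{\mathrm{th}}-\varepsilon}{\sqrt{1-\varepsilon}+\sqrt{1-\varepsilon_{\mathrm{th}}}}.
\end{equation}
Alternatively, a first-order Taylor expansion of $x \mapsto \sqrt{1-x}$ about $\varepsilon_{\mathrm{th}}$ gives
\begin{equation}
\delta - \delta_{\mathrm{th}} = \frac{\varepsilon_{\mathrm{th}}-\varepsilon}{2\sqrt{1-\varepsilon_{\mathrm{th}}}} + \order{(\varepsilon_{\mathrm{th}}-\varepsilon)^2}.
\end{equation}
Substituting the Taylor form into the sufficient condition yields exactly the stated bound
\begin{equation}
\|E\|_\infty < \frac{\varepsilon_{\mathrm{th}}-\varepsilon}{2\sqrt{1-\varepsilon_{\mathrm{th}}}\,(1+\sqrt{1-\varepsilon_{\mathrm{th}}})} + \order{(\varepsilon_{\mathrm{th}}-\varepsilon)^2}.
\end{equation}
Because $\sqrt{1-x}$ is concave, the exact difference is at least the linear term. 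So the non-asymptotic condition with the linear term alone is also sufficient, and the $\order{\cdot}$ correction only loosens it.

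I expect no real obstacle. The one point to check is that the normalization $\max\sigma = 1$ is lost after perturbation, which is why $\widetilde\delta$ is defined as a ratio. This forces bounding both the top and bottom singular values, giving the denominator factor $1+\delta_{\mathrm{th}}$ rather than just $1$. Finally, $\eta < \delta$ holds automatically, since the bound on $\eta$ is at most $\delta - \delta_{\mathrm{th}} < \delta$.
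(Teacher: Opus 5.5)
Your main argument is the paper's proof step for step: Weyl's bound on $\widetilde\sigma_1$ and $\widetilde\sigma_r$, then $\widetilde\delta\ge(\delta-\eta)/(1+\eta)$, then the exact sufficient condition $\eta<(\delta-\delta_{\mathrm{th}})/(1+\delta_{\mathrm{th}})$, then a first-order Taylor expansion about $\varepsilon_{\mathrm{th}}$. That part is correct and proves the statement as written, with its $\order{\cdot}$ term.

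Your closing concavity remark, however, is false: the inequality goes the other way. A concave function lies below its tangent lines, so expanding about $\varepsilon_{\mathrm{th}}$ gives $\delta-\delta_{\mathrm{th}}\le(\varepsilon_{\mathrm{th}}-\varepsilon)/(2\delta_{\mathrm{th}})$. Your own rationalized formula shows this directly, since $\delta+\delta_{\mathrm{th}}>2\delta_{\mathrm{th}}$. The second-order term of $\delta-\delta_{\mathrm{th}}$ is $-(\varepsilon_{\mathrm{th}}-\varepsilon)^2/(8\delta_{\mathrm{th}}^3)$. So the $\order{\cdot}$ correction tightens the condition rather than loosening it, and the linear term alone is \emph{not} sufficient.

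Because Weyl's bound is attained, this is a real failure and not just a weak estimate. Take $A=\mathrm{diag}(1,1)$, so $\varepsilon=0$, and set $\varepsilon_{\mathrm{th}}=1/2$ and $E=\mathrm{diag}(0.19,-0.19)$. Then $\|E\|_\infty=0.19$ lies below the linear term $\frac{1/2}{2\delta_{\mathrm{th}}(1+\delta_{\mathrm{th}})}\approx 0.207$. Yet $\widetilde\delta=0.81/1.19\approx 0.681<\delta_{\mathrm{th}}\approx 0.707$, so $\widetilde\varepsilon\approx 0.537>\varepsilon_{\mathrm{th}}$.

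If you want a non-asymptotic statement, there are two valid options. One is the exact condition $\eta<(\varepsilon_{\mathrm{th}}-\varepsilon)/\bigl[(\delta+\delta_{\mathrm{th}})(1+\delta_{\mathrm{th}})\bigr]$. The other replaces $2\sqrt{1-\varepsilon_{\mathrm{th}}}$ with $2\sqrt{1-\varepsilon}$, which amounts to taking the tangent at $\varepsilon$ rather than at $\varepsilon_{\mathrm{th}}$. That version is sufficient because $\delta+\delta_{\mathrm{th}}<2\delta$.
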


\begin{proof}
By Weyl's perturbation bound~\autoref{thm:weylperturbation},

$$
|\widetilde \sigma_1-\sigma_1| \le \|E\|_\infty,
\qquad
|\widetilde \sigma_r-\sigma_r| \le \|E\|_\infty.
$$
Since $\sigma_1=1$ and $\sigma_r=\delta$, this gives
$$\widetilde \sigma_1 \le 1+\|E\|_\infty,
\qquad
\widetilde \sigma_r \ge \delta-\|E\|_\infty.
$$
Therefore

$$
\widetilde\delta
=
\frac{\widetilde \sigma_r}{\widetilde \sigma_1}
\ge
\frac{\delta-\|E\|_\infty}{1+\|E\|_\infty}.
$$
To guarantee $\widetilde\delta > \delta_{\mathrm{th}}$, it is enough to impose

$$
\frac{\delta-\|E\|_\infty}{1+\|E\|_\infty} > \delta_{\mathrm{th}}.
$$
Since $1+\|E\|_\infty > 0$, this is equivalent to
$$
\delta-\|E\|_\infty > \delta_{\mathrm{th}}(1+\|E\|_\infty),
$$
that is,
$$\delta-\delta_{\mathrm{th}} > (1+\delta_{\mathrm{th}})\|E\|_\infty.$$
Hence it suffices that
\begin{align*}
\|E\|_\infty
&<
\frac{\delta-\delta_{\mathrm{th}}}{1+\delta_{\mathrm{th}}} \\
&=
\frac{\sqrt{1-\varepsilon}-\sqrt{1-\varepsilon_{\mathrm{th}}}}
{1+\sqrt{1-\varepsilon_{\mathrm{th}}}} \\
&=
\frac{
\frac{\varepsilon_{\mathrm{th}}-\varepsilon}{2\sqrt{1-\varepsilon_{\mathrm{th}}}}
+ O\!\left((\varepsilon_{\mathrm{th}}-\varepsilon)^2\right)
}
{1+\sqrt{1-\varepsilon_{\mathrm{th}}}} \\
&=
\frac{\varepsilon_{\mathrm{th}}-\varepsilon}
{2\sqrt{1-\varepsilon_{\mathrm{th}}}\bigl(1+\sqrt{1-\varepsilon_{\mathrm{th}}}\bigr)}
+ \order{(\varepsilon_{\mathrm{th}}-\varepsilon)^2}.
\end{align*}
Under this condition, \(\widetilde\delta > \delta_{\mathrm{th}}\), and therefore
$$
\widetilde\varepsilon
=
1-\widetilde\delta^2
<
1-\delta_{\mathrm{th}}^2
=
\varepsilon_{\mathrm{th}}.
$$

\end{proof}

\smsubsection{Locality of fixed points}

We now establish that local perturbations to a strongly injective PEPS lead to exponentially decaying modifications of the fixed point messages. This result combines the finite propagation speed with the Banach contraction property to show that the effect of a local change diminishes exponentially with distance.

Consider a PEPS on graph $G=(V,E)$ that is $\varepsilon$-injective with $\varepsilon < \varepsilon_{*}$ satisfying the contraction condition from \autoref{prop:banach_contr}. Let $\bm{\mu}_\star = (\mu_{\star,\vec{e}})_{e\in\vec{E}}$ be the unique fixed point from \autoref{thm:nonTI_uniqueness}.

Now suppose we make a \emph{weak local perturbation} to the PEPS by modifying the tensors at vertices in a region $A \subseteq V$, obtaining a new PEPS where the tensors $T_A$, are replaced by perturbed tensors $T'_A$. The ``weakness" of the perturbation is used to simplify the proof, however we expect the principle of algorithmic locality to hold for generic injectivity-preserving perturbations as well.

Let $\bm{F}'$ denote the message-passing map for the perturbed PEPS, and let $\bm{\mu}'_\star$ be its unique fixed point. Our goal is to quantify how $\|\mu'_{\star,\vec{e}} - \mu_{\star,\vec{e}}\|_1$ depends on $d(e,A)$.

\begin{theorem}[Exponential decay of perturbations]
\label{thm:exponential_decay_perturbation}
Let $\bm{\mu}_\star$ and $\bm{\mu}'_\star$ be the fixed points of the original and perturbed PEPS as described above. Let $q < 1$ be the contraction constant from \autoref{prop:banach_contr} for the original PEPS. For weak perturbations satisfying, 
\begin{equation}
  \|T_A - T'_A\|_\infty = \order{(\varepsilon_* - \varepsilon)}
\end{equation}

where the operator norm is taken relative to the convention that virtual legs are input and the physical leg is output; then for any directed edge $\vec{e}=(v,n)$ with $d(v,A) = r$, the difference in fixed point messages satisfies
\[
\|\mu'_{\star,\vec{e}} - \mu_{\star,\vec{e}}\|_1 \le \order{q^{r-1}} \le \order{e^{-r/\xi_*}}
\]
with $1/\xi_* = \order{\log{\varepsilon_*/\varepsilon}}$.
\end{theorem}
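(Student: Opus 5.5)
Our plan is to run the \emph{perturbed} message-passing dynamics starting from the \emph{unperturbed} fixed point and to combine the light cone of \autoref{lem:finite_propagation} with the Banach contraction of \autoref{prop:banach_contr}.

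\emph{Step 1 (the perturbed map is a contraction).} By \autoref{lem:injectivity_stability} with $\varepsilon_{\mathrm{th}}$ chosen strictly between $\varepsilon$ and $\varepsilon_*$, for instance $\varepsilon_{\mathrm{th}}=(\varepsilon+\varepsilon_*)/2$, the hypothesis $\|T_A-T'_A\|_\infty=\order{\varepsilon_*-\varepsilon}$ (after renormalizing $T'_A$ so that its top singular value is one, which does not change $\bm F'$) guarantees that the perturbed PEPS is $\tilde\varepsilon$-injective with $\tilde\varepsilon<\varepsilon_{\mathrm{th}}<\varepsilon_*$. Hence \autoref{prop:banach_contr} applies to $\bm F'$, with contraction constant $q'=q_\Delta(\tilde\varepsilon)<1$, and \autoref{thm:nonTI_uniqueness} gives a unique fixed point $\bm\mu'_\star$. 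Since $q_\Delta$ is monotone in $\varepsilon$, we have $q'\le q_\Delta(\varepsilon_{\mathrm{th}})$. This bound is of the same order as $q$, namely $\order{\varepsilon/\varepsilon_*}$ up to constants. This is the sense in which the statement writes $q^{r-1}$; more precisely, we will prove the bound with $\max(q,q')$.

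\emph{Step 2 (light cone).} Set $\bm\mu^{(t)}:=(\bm F')^t[\bm\mu_\star]$. We compare this with the constant trajectory $\bm F^t[\bm\mu_\star]=\bm\mu_\star$. \autoref{lem:finite_propagation} is stated for one map and two initial conditions, whereas here we have one initial condition and two maps that differ only at the vertices in $A$. The same induction adapts directly. At $t=1$, $F'_{\vec e}=F_{\vec e}$ for every $\vec e=(v,n)$ with $v\notin A$, so $\mu^{(1)}_{\vec e}=\mu_{\star,\vec e}$ whenever $d(v,A)\ge 1$. The inductive step is then verbatim: if $d(v,A)>t$, all incoming messages to $v$ come from vertices at distance $>t-1$, which are unchanged by hypothesis, and the local update at $v$ is unperturbed. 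Therefore $\mu^{(t)}_{\vec e}=\mu_{\star,\vec e}$ for all $t\le r$ when $d(v,A)=r$. Since an edge with source at distance $r$ receives its inputs from distance $\ge r-1$, the safe choice is $t=r-1$, which gives a lightcone term exactly zero.

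\emph{Step 3 (convergence term).} By the contraction property of $\bm F'$ in $\|\cdot\|_{\max}$,
\[
\|\mu^{(t)}_{\vec e}-\mu'_{\star,\vec e}\|_1\le\|\bm\mu^{(t)}-\bm\mu'_\star\|_{\max}\le (q')^t\|\bm\mu_\star-\bm\mu'_\star\|_{\max}\le 2(q')^t,
\]
because trace-normalized states have trace distance at most $2$. We then take $t=r-1$ and apply the triangle inequality
\[
\|\mu'_{\star,\vec e}-\mu_{\star,\vec e}\|_1\le\|\mu^{(r-1)}_{\vec e}-\mu'_{\star,\vec e}\|_1+\|\mu_{\star,\vec e}-\mu^{(r-1)}_{\vec e}\|_1\le 2(q')^{r-1}+0 .
\]
This gives $\order{q^{r-1}}=\order{e^{-r/\xi_*}}$ with $1/\xi_*=\log(1/q')=\order{\log(\varepsilon_*/\varepsilon)}$, using $q_\Delta(\varepsilon)\sim\varepsilon/\varepsilon_*$ as in the proof of \autoref{thm:nonTI_uniqueness}. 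For $r=0$ the bound is trivial.

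\emph{Main obstacle.} The delicate point is Step 1. We must ensure that the perturbation size keeps the perturbed tensor inside the contraction regime uniformly, so that the rate $q'$ is controlled by $\varepsilon$ rather than degenerating to $1$. We would handle this by fixing $\varepsilon_{\mathrm{th}}$ at the midpoint as above and absorbing the resulting constant into the $\order{\cdot}$ of the perturbation hypothesis. The remaining steps are then bookkeeping, most importantly making sure the light cone is counted from the source vertex of $\vec e$, consistent with $d(v,A)=r$.
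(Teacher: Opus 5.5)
\textbf{Gap: the decay rate.} Your route is the paper's: run $\bm F'$ from $\bm\mu_\star$, make the light-cone term vanish, and control the rest by contraction. Your Step 2, which adapts \autoref{lem:finite_propagation} to two maps and one initial condition, is correct and repairs a point the paper skips.

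Steps 1 and 3 actually prove $\|\mu'_{\star,\vec e}-\mu_{\star,\vec e}\|_1\le 2(q')^{r-1}$, where $q'$ is the \emph{global} contraction constant of $\bm F'$. Your assertion that $q'\le q_\Delta(\varepsilon_{\mathrm{th}})$ is ``of the same order as $q$'' is false. With $\varepsilon_{\mathrm{th}}=(\varepsilon+\varepsilon_*)/2$ and $\varepsilon\ll\varepsilon_*$, you get $q_\Delta(\varepsilon_{\mathrm{th}})\approx q_\Delta(\varepsilon_*/2)=2(\Delta-1)/(4\Delta-3)\approx 1/2$, independent of $\varepsilon$, while $q\approx 2(\Delta-1)\varepsilon$. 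The hypothesis $\|T_A-T'_A\|_\infty=\order{\varepsilon_*-\varepsilon}$ lets the injectivity parameter at $A$ drift close to $\varepsilon_{\mathrm{th}}$, so nothing in your argument keeps $q'$ small. Even $q'=Cq$ with a constant $C>1$ would not suffice, since $(q')^{r-1}=C^{r-1}q^{r-1}$ is not $\order{q^{r-1}}$.

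So you obtain exponential decay at a rate $\log(1/q')$ that may be $\Theta(1)$. You do not obtain the stated $\order{q^{r-1}}$ with $1/\xi_*\sim\log(\varepsilon_*/\varepsilon)$. The paper's proof writes $\epsilon q^t$ for the contraction of $\bm F'$ and hides the same issue; you exposed it but resolved it incorrectly.

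\textbf{The missing idea: contract only where the map is unperturbed.} Take $\vec e=(v,n)$ with $v\notin A$. Both fixed points satisfy the same local equation:
\[
\mu_{\star,\vec e}=F_{\vec e}(\bm\mu_\star),\qquad \mu'_{\star,\vec e}=F'_{\vec e}(\bm\mu'_\star)=F_{\vec e}(\bm\mu'_\star).
\]
Define $D_k:=\max\{\|\mu'_{\star,\vec e}-\mu_{\star,\vec e}\|_1: d(\mathrm{src}(\vec e),A)\ge k\}$, so $D_0\le 2$. The per-edge estimate in the proof of \autoref{prop:banach_contr} gives $D_k\le q\,D_{k-1}$ for $k\ge1$, because the inputs to an edge whose source is at distance $\ge k$ have sources at distance $\ge k-1$. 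Hence $\|\mu'_{\star,\vec e}-\mu_{\star,\vec e}\|_1\le 2q^{r}$ with the original $q$.

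In your dynamical language, the backward cone of depth $t\le r$ from $\vec e$ never enters $A$, so the Banach term is at most $2q^{t}$. As your own Step 2 shows, the light-cone term vanishes up to $t=r$, not just $t=r-1$. With this, Step 1 is needed only to guarantee that $\bm\mu'_\star$ is the unique fixed point of the perturbed PEPS, not to control the decay rate.
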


\begin{proof}




We aim to bound $\|\mu_{\star,\vec{e}} - \mu'_{\star,e}\|_1$, for a given $e$. We will use the following triangle inequality:
\begin{equation} \label{eq:triangleLPPLfp}
    \|\mu_{\star,\vec{e}} - \mu'_{\star,\vec{e}}\|_1 \leq  \underbrace{\|\mu_{\vec{e}}^{(t)} - \mu'_{\star,\vec{e}}\|_1}_{\text{Banach}} + \underbrace{\|\mu_{\star,\vec{e}} - \mu_{\vec{e}}^{(t)}\|_1}_{\text{Lightcone}}
\end{equation}
where $\mu_{\vec{e}}^{(t)}$ is the evolution under the perturbed update rule, starting from $\mu_{\star}$, i.e.,
\begin{equation}
    \bm{\mu}^{(t+1)} = \bm{F}'( \bm{\mu}^{(t)})
\end{equation}
with $\mu_{\vec{e}}^{(0)} = \mu_{\star,\vec{e}}$.

By the contractive property of $\bm{F}'$, we know $\bm{\mu}^{(t)}$ is exponentially close to the fixed point
\begin{equation}
    \underbrace{\|\mu_{\vec{e}}^{(t)} - \mu'_{\star,\vec{e}}\|_1}_{\text{Banach}}  \le \epsilon q^t
\end{equation}
Where $\epsilon = \max_e \|\mu'_{\star,\vec{e}} - \mu_{\star,\vec{e}} \|_1$. Next, by Lemma \ref{lem:finite_propagation}, $\mu_e^{(t)}$ is different from $\mu_{\star,\vec{e}}$ only when $e$ is within the lightcone at time $t$. That is, for $t < r$, 

\begin{equation}
    \underbrace{\|\mu_{\star,\vec{e}} - \mu_{\vec{e}}^{(t)}\|_1}_{\text{Lightcone}} = 0
\end{equation}

Therefore, we optimize the upper-bound in Eq.~\ref{eq:triangleLPPLfp} as a function of time and obtain the tightest bound at $t=r-1$, viz.,
\begin{equation}
    \|\mu'_{\star,\vec{e}} - \mu_{\vec{e}}^{(t)} \|_1  \le \epsilon q^{r-1}
\end{equation}
Therefore, we get,
\begin{equation}
    \|\mu'_{\star,\vec{e}} - \mu_{\vec{e}}^{(t)} \|_1  \le \order{q^{r-1}}
\end{equation}
Now note that $q = 2(\Delta-1)\varepsilon/(1-\varepsilon) = \order{\varepsilon/\varepsilon_*}$, leading to 
\begin{equation}
    \|\mu'_{\star,\vec{e}} - \mu_{\vec{e}}^{(t)} \|_1  \le \order{e^{-r/\xi_*}}
\end{equation}
with $1/\xi_* = \order{\log{\varepsilon_*/\varepsilon}}$.
\end{proof}

\smsubsection{Locality of expectation values}
We now establish that local perturbations to the PEPS lead to exponentially suppressed changes in observable measurements at distant regions. The key insight is that observables are computed via a cluster expansion built on the fixed point messages, combining:
\begin{enumerate}
\item \emph{Locality of fixed points} (\autoref{thm:exponential_decay_perturbation}): Perturbations in region $A$ affect fixed point messages exponentially in distance with a finite length scale $\xi_*$ [given $\varepsilon < \varepsilon_*$].
\item \emph{Loop activity decay}: Loop activities decay as $|Z_\ell| \le \order{\eta^{2m/\Delta}}$ where $m$ is the number of edges in the loop [given $\varepsilon < \varepsilon_{**}$].
\item \emph{Cluster expansion technology}: Observables are expressed as corrections built from loop activities in terms of connected clusters intersecting the observable region.
\end{enumerate}

\begin{theorem}[Algorithmic Locality]
\label{thm:observable_locality}
Consider a PEPS on graph $G=(V,E)$ with injectivity parameter $\varepsilon$ satisfying both:
\begin{enumerate}
\item The contraction condition for unique fixed points ensured by $\varepsilon < \varepsilon_*$ (\autoref{thm:nonTI_uniqueness}), and
\item Decay of loops ensured by $\varepsilon < \varepsilon_{**}$  (from \autoref{prop:loop_decay_threshold})
\end{enumerate}

Let $|\psi\rangle$ and $|\psi'\rangle$ be PEPS states differing only by local tensors in region $A \subseteq V$. Then, 
for weak perturbations satisfying, 
\begin{equation}
  \|T_A - T'_A\|_\infty = \order{\varepsilon_* - \varepsilon}
\end{equation}
For any local observable $O_B$ with support on region $B \subseteq V$ with $d(A,B) = R$ (graph distance), the change in expectation value satisfies
\[
\left|\langle O_B \rangle_{\psi'} - \langle O_B \rangle_{\psi}\right| \le \order{\|O_B\|_\infty\cdot e^{-R/\xi_{**}}},
\]
where $1/\xi_{**} = \order{\min\{1/\xi_*, c-c_0\}} > 0$, with $\xi_* = \order{\log{\varepsilon_*/\varepsilon}}$ from \autoref{thm:exponential_decay_perturbation}.
\end{theorem}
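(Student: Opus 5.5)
I will use the multiplicative cluster expansion of \autoref{prop:localobsexpansionformal} for both $\psi$ and $\psi'$, and bound the change in each ingredient separately. Write $S:=\sum_{\mathbf W}\phi(\mathbf W)Z^{O_B}_{\mathbf W}$ and $S'$ for its perturbed analogue, so that $\expval{O_B}=\expval{O_B}_{\rm BP}e^{S}$. Decay of loops (\autoref{prop:loop_decay_threshold}) holds for both networks. For $\psi'$ this uses \autoref{lem:injectivity_stability} with $\varepsilon_{\rm th}=\varepsilon_{**}$, since the perturbation is weak. Then \autoref{lem:cluster_tail} gives $|S|,|S'|=\order{|B|}$ uniformly. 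Since $|e^{x}-e^{y}|\le e^{\max(|x|,|y|)}|x-y|$, it suffices to bound $|\expval{O_B}_{\rm BP}'-\expval{O_B}_{\rm BP}|$ and $|S'-S|$, each by $\order{e^{-R/\xi_{**}}}$. If $\expval{O_B}=0$ is an issue, I would instead run the same argument with the additive expansion of \autoref{prop:localobsfreeformal}, which has the same structure.

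\emph{BP term.} $\expval{O_B}_{\rm BP}$ is a contraction of the tensors in $B$, which are unchanged because $A\cap B=\emptyset$ when $R\ge1$, together with the incoming messages on $\partial B$, divided by the same contraction without $O_B$. Every boundary edge $(v,n)$ has $d(v,A)\ge R-1$, so \autoref{thm:exponential_decay_perturbation} gives $\|\mu'_{\star,\vec e}-\mu_{\star,\vec e}\|_1=\order{q^{R-2}}$. I telescope over the $\order{|B|\Delta}$ boundary messages and use Hölder with $\|O_B\|_\infty$. The denominator is bounded below as in \autoref{prop:bpapproximationbound}. Together these give $\order{e^{-R/\xi_*}}$.

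\emph{Cluster terms.} Fix a cutoff $R_{\rm th}=\kappa R$ with $\kappa\in(0,1)$ and split the sum into $\mathcal W_{\rm near}$, the clusters whose support lies within distance $R_{\rm th}$ of $B$, and $\mathcal W_{\rm far}$. Every far cluster is connected and intersects $B$, so its weight exceeds $R_{\rm th}$. By \autoref{lem:cluster_tail}, applied to each network separately, the far parts of $S$ and $S'$ are each $\order{|B|e^{-(c-c_0)\kappa R}}$.

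For a near cluster, each loop activity $Z_\ell$ is a contraction of BP-normalized local blocks. These are vertex tensors dressed with messages and excitation projectors, and for $R\ge1+R_{\rm th}$ they involve only vertices outside $A$, so only the messages change. Every message has $d(v,A)\ge(1-\kappa)R-1$, so it moves by $\order{q^{(1-\kappa)R}}$. I would telescope $Z_{\mathbf W}'-Z_{\mathbf W}$ over the $\order{|\mathbf W|}$ blocks, replacing one block at a time. Each replaced block differs by $\order{q^{(1-\kappa)R}}$ in Frobenius norm: the excitation projector $\Pi^\perp$ and the normalizations $I_{vw}$ and $Z^{(v)}$ are Lipschitz in the messages, by \autoref{lem:Ivwconditioned} and \autoref{prop:bpapproximationbound}. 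The remaining blocks keep the bound $\eta(1+\order{\varepsilon})$ from \autoref{lem:excitationprojectorbound}, and the Cauchy--Schwarz cutting of \autoref{prop:loop_decay} applies unchanged. This yields
\[
|Z'_{\mathbf W}-Z_{\mathbf W}|\le \order{|\mathbf W|\,q^{(1-\kappa)R}}\,e^{-c'|\mathbf W|}
\]
for some $c'>c_0$ after slightly shrinking $\varepsilon_{**}$. Summing with the Kotecký--Preiss weights used in \autoref{lem:cluster_tail} absorbs the polynomial factor $|\mathbf W|$, so the near part is $\order{|B|e^{-(1-\kappa)R/\xi_*}}$.

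Choosing $\kappa=1/2$ gives $1/\xi_{**}=\order{\min\{1/\xi_*,c-c_0\}}$. The main obstacle is the near-cluster step. It requires a clean Lipschitz estimate for the normalized excitation blocks as functions of the messages, and a check that the summability of clusters survives with the extra $|\mathbf W|$ factor. I expect both to follow from the first-order-in-$\varepsilon$ structure already established, at the price of worse constants in $\varepsilon_{**}$.
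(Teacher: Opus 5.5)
Your proposal follows the paper's first (constructive) proof essentially step for step. Both use the multiplicative cluster expansion for both states, locality of the boundary fixed-point messages for the BP term, and a near/far split at $R_{\rm th}=\Theta(R)$, with a telescoping bound on near clusters and \autoref{lem:cluster_tail} on far ones. Your explicit treatment of how the excitation blocks depend on the messages, and of the case $\expval{O_B}=0$, is if anything more careful than the paper's.

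One small remark: invoking \autoref{lem:injectivity_stability} with threshold $\varepsilon_{**}$ requires $\|T_A-T'_A\|_\infty=\order{\varepsilon_{**}-\varepsilon}$, which is slightly stronger than the stated $\order{\varepsilon_*-\varepsilon}$. The paper tacitly makes the same assumption when it applies \autoref{lem:cluster_tail} to $\psi'$.
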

We provide two proofs of this theorem. The first proof of this theorem is constructive, mirroring how tensor network BP algorithm works in practice, and uses the locality of fixed-point. This requires running the message-passing equations again on the modified state, and then bounding the remaining difference via the cluster expansion. The second proof is more direct and less illustrative, and however not useful in practical tensor network algorithms.

\begin{proof}
We use the observable cluster expansion from \autoref{prop:localobsexpansionformal} for both states.

By the cluster expansion, we have the formal expressions,
\begin{align}
\langle O_B \rangle_{\psi} &= \langle O_B \rangle_{\text{BP}} \cdot \exp\left(\sum_{\substack{\text{connected} \, \mathbf{W} \\ \supp(\mathbf{W}) \cap B \neq\emptyset }} \phi(\mathbf{W}) Z^{O_B}_{\mathbf{W}}\right), \\
\langle O_B \rangle_{\psi'} &= \langle O_B \rangle_{\text{BP}}' \cdot \exp\left(\sum_{\substack{\text{connected} \, \mathbf{W} \\ \supp(\mathbf{W}) \cap B \neq\emptyset }} \phi(\mathbf{W}) Z^{O_B'}_{\mathbf{W}}\right).
\end{align}

We can write the difference in observable measurement before and after perturbation as,
\begin{align}
\langle O_B \rangle_{\psi'} - \langle O_B \rangle_{\psi} 
&= \left(\langle O_B \rangle_{\text{BP}}' - \langle O_B \rangle_{\text{BP}}\right) \exp\left(\sum \phi(\mathbf{W}) Z^{O_B'}_{\mathbf{W}}\right) \\
&\quad + \langle O_B \rangle_{\text{BP}} \left(\exp\left(\sum \phi(\mathbf{W}) Z^{O_B'}_{\mathbf{W}}\right) - \exp\left(\sum \phi(\mathbf{W}) Z^{O_B}_{\mathbf{W}}\right)\right).
\end{align}

By triangle inequality,
\begin{equation}
    \left|\langle O_B \rangle_{\psi'} - \langle O_B \rangle_{\psi}\right| 
\le \underbrace{\left|\langle O_B \rangle_{\text{BP}}' - \langle O_B \rangle_{\text{BP}}\right| \cdot \left|\exp\left(\sum \phi(\mathbf{W}) Z^{O_B'}_{\mathbf{W}}\right)\right|}_{\text{BP difference}}
+ \underbrace{\left|\langle O_B \rangle_{\text{BP}}\right| \cdot \left|\exp\left(\sum \phi(\mathbf{W}) Z^{O_B'}_{\mathbf{W}}\right) - \exp\left(\sum \phi(\mathbf{W}) Z^{O_B}_{\mathbf{W}}\right)\right|}_{\text{cluster difference}}.
\end{equation}

To bound the exponential factors, note that by the cluster expansion convergence (which requires $c > c_0$), the cluster sum is absolutely convergent,
\begin{equation}
    \left|\sum_{\substack{\text{connected} \, \mathbf{W} \\ \supp(\mathbf{W}) \cap B \neq\emptyset }} \phi(\mathbf{W}) Z^{O_B}_{\mathbf{W}}\right| \le \order{|B|e^{-(c-c_0)}}.
\end{equation}
Therefore, $\left|\exp\left(\sum \phi(\mathbf{W}) Z^{O_B}_{\mathbf{W}}\right)\right| \le e^{\order{|B|  \cdot e^{-(c-c_0)}}}$ and similarly $\left|\exp\left(\sum \phi(\mathbf{W}) Z^{O_B'}_{\mathbf{W}}\right)\right| \le e^{\order{|B|  \cdot e^{-(c-c_0)}}}$, both constants.

By \autoref{thm:exponential_decay_perturbation}, the fixed point messages at distance $d \ge R$ from $A$ satisfy
\begin{equation}
    \|\mu'_{\star,\vec{e}} - \mu_{\star,\vec{e}}\|_2 \le e^{-R/\xi_*}.
\end{equation}
Since $\langle O_B \rangle_{\text{BP}}$ depends only on fixed point messages in a on the local tensors at the boundary of $B$, and $d(B,A) = R$, we obtain
\begin{equation}
    \left|\langle O_B \rangle_{\text{BP}}' - \langle O_B \rangle_{\text{BP}}\right| \le \order{ \|O_B\|_\infty\cdot e^{-R/\xi_*}}
\end{equation}

We now decompose clusters into two types based on the distance from the measurement region. Choose cutoff radius $R_\text{th}$. Partition clusters contributing to $O_B$ into:
\begin{itemize}
\item \textbf{Near clusters}: $\mathcal{W}_{\text{near}} = \{\mathbf{W}: \supp(\mathbf{W}) \cap B \neq \emptyset, \, |\mathbf{W}| \le R_\text{th}\}$
\item \textbf{Far clusters}: $\mathcal{W}_{\text{far}} = \{\mathbf{W}: \supp(\mathbf{W}) \cap B \neq \emptyset, \, |\mathbf{W}| > R_\text{th}\}$
\end{itemize}

For clusters with $|\mathbf{W}| \le R_\text{th}$ loops, each loop activity $Z_\ell$ depends on fixed point messages along the loop. Since the cluster has support overlapping $B$ and $|\mathbf{W}| \le R_\text{th} $, the cluster extends at most distance $R_\text{th}$ from $B$, so it is at distance at least $(R - R_\text{th})$ from $A$.

Each loop $\ell$ with weight $|\ell|$ (number of excited edges) involves message tensors along the loop. The loop activity depends on the fixed point messages at each vertex along the loop, with at most $\Delta$ (the maximum degree) insertions per vertex. Also, we have the baseline loop decay $|Z_\ell| \le e^{-c|\ell|}$. Now, we can bound the change in loop activity upon making the perturbation through a sequence of triangle inequalities, one upon each excited edge and each fixed-point insertion. There are $\order{|\ell|}$ number of excitations and fixed point insertions. We illustrate this for a simple loop,
\newcommand{\drawloop}[4]{%
  \begin{tikzpicture}[
    scale=0.6,
    baseline={([yshift=-0.5ex] current bounding box.center)},
    tensor/.style={draw, line width=0.8pt, fill=black!15, rounded corners=1.5pt, minimum size=4mm},
    b/.style={line width=1.5pt, draw=blue!60},
    r/.style={line width=1.5pt, draw=red!60},
    k/.style={line width=1.5pt, draw=black}
  ]
    \node[tensor] (T1) at (0, 1) {};
    \node[tensor] (T2) at (1, 1) {};
    \node[tensor] (T3) at (1, 0) {};
    \node[tensor] (T4) at (0, 0) {};
    
    \draw[#1] (T1) -- (T2); 
    \draw[#2] (T2) -- (T3); 
    \draw[#3] (T3) -- (T4); 
    \draw[#4] (T4) -- (T1); 
  \end{tikzpicture}%
}

\begin{align}
&\left| \, \drawloop{b}{b}{b}{b} - \drawloop{r}{r}{r}{r} \, \right| \\[1em]
&\quad = \left| 
\begin{aligned}
  &\quad\, \left( \drawloop{b}{b}{b}{b} - \drawloop{r}{b}{b}{b} \right) \\
  & + \left( \drawloop{r}{b}{b}{b} - \drawloop{r}{r}{b}{b} \right) \\
  & + \left( \drawloop{r}{r}{b}{b} - \drawloop{r}{r}{r}{b} \right) \\
  & + \left( \drawloop{r}{r}{r}{b} - \drawloop{r}{r}{r}{r} \right)
\end{aligned}
\right| \\[1em]
&\quad \le \max_{e \in \ell} \|\mu'_{\star,\vec{e}} - \mu_{\star,\vec{e}}\|_2\cdot\quad  \left(\left| \, \drawloop{k}{b}{b}{b} \, \right| 
+ \left| \, \drawloop{r}{k}{b}{b} \, \right| 
+ \left| \, \drawloop{r}{r}{k}{b} \, \right| 
+ \left| \, \drawloop{r}{r}{r}{k} \, \right|\right) 
\\[1em]
&\quad \le \max_{e \in \ell} \|\mu'_{\star,\vec{e}} - \mu_{\star,\vec{e}}\|_2\cdot \order{\Delta |\ell| e^{-c|\ell|}}
\end{align}
where the last step follows from (i) decay of loops, and (ii) number of total edges/fixed-point insertions in a loop being $\order{\Delta|\ell|}$. Thus we have,
\begin{equation}
    |Z'_\ell - Z_\ell| \le   \order{\Delta |\ell| \cdot \max_{e \in \ell} \|\mu'_{\star,\vec{e}} - \mu_{\star,\vec{e}}\|_2} \cdot e^{-c|\ell|} \le \order{\Delta \cdot q^{(R - R_\text{th})} \cdot|\ell| \cdot  e^{-c|\ell|}}.
\end{equation}

Now, for a cluster $\mathbf{W} = \{(\ell_i, \eta_i)\}$ with $|\mathbf{W}| = \sum_i \eta_i |\ell_i|$ loops, by a triangle inequality on each loop and the bound on the individual loop difference, we get,
\begin{align}
    |Z^{O_B'}_{\mathbf{W}} - Z^{O_B}_{\mathbf{W}}| &= \left| \prod_{\ell_i} Z'_{\ell_i} - \prod_{\ell_i} Z_{\ell_i} \right| \nonumber \\
    &= \left| \sum_{\ell_i} \left( \prod_{\ell_1 < \ell_i} Z'_{\ell_1} \right) (Z'_{\ell_i} - Z_{\ell_i}) \left( \prod_{\ell_2 > \ell_i} Z_{\ell_2} \right) \right| \nonumber \\
    &\le \sum_{\ell_i} \left( \prod_{\ell_1 < \ell_i} |Z'_{\ell_1}| \right) |Z'_{\ell_i} - Z_{\ell_i}| \left( \prod_{\ell_2 > \ell_i} |Z_{\ell_2}| \right) \nonumber \\
    &\le \sum_{\ell_i} \left( \prod_{\ell_1 < \ell_i} e^{-c|\ell_1|} \right) |Z'_{\ell_i} - Z_{\ell_i}| \left( \prod_{\ell_2 > \ell_i} e^{-c|\ell_2|} \right) \nonumber \\
    &= \sum_{\ell_i} \exp\left( -c \sum_{\ell_j \neq \ell_i} |\ell_j| \right) |Z'_{\ell_i} - Z_{\ell_i}| \nonumber \\
    &\le \sum_{\ell_i} \exp\left( -c \sum_{\ell_j \neq \ell_i} |\ell_j| \right) \order{\Delta |\ell_i| \cdot \max_{e \in \ell_i} \|\mu'_{\star,\vec{e}} - \mu_{\star,\vec{e}}\|_2} e^{-c|\ell_i|} \nonumber \\
    &\le \sum_{\ell_i} \order{\Delta \cdot q^{(R - R_\text{th})} \cdot|\ell_i|} \exp\left( -c \sum_{\ell_j} |\ell_j| \right).
\end{align}

Summing over all the near clusters:
\begin{align}
\left|\sum_{\mathbf{W} \in \mathcal{W}_{\text{near}}} \phi(\mathbf{W}) (Z^{O_B'}_{\mathbf{W}} - Z^{O_B}_{\mathbf{W}})\right| 
&\le \sum_{\mathbf{W} \in \mathcal{W}_{\text{near}}} |\phi(\mathbf{W})| \cdot |Z^{O_B'}_{\mathbf{W}} - Z^{O_B}_{\mathbf{W}}| \\
&\le  \Delta \cdot q^{(R - R_\text{th})} \sum_{\mathbf{W} \in \mathcal{W}_{\text{near}}} |\phi(\mathbf{W})| e^{-c|\mathbf{W}|}\sum_i \eta_i|\ell_i|  \\
&\le  \order{\Delta q^{(R - R_\text{th})}} \sum_{n=1}^{R_\text{th}} \sum_{\substack{\mathbf{W}: |\mathbf{W}|=n \\ \supp(\mathbf{W}) \cap B \neq \emptyset}} |\phi(\mathbf{W})| n e^{-cn}  \\ 
&\le   \order{\Delta q^{(R - R_\text{th})}} \sum_{n=1}^{R_\text{th}} n\sum_{\substack{\mathbf{W}: |\mathbf{W}|=n \\ \supp(\mathbf{W}) \cap B \neq \emptyset}} |\phi(\mathbf{W})| e^{-cn}  \\ 
&\le   \order{\Delta q^{(R - R_\text{th})}} \sum_{n=1}^{R_\text{th}} n e^{-(c-c_0)n} \\ 
\end{align}

Thus, 
\begin{equation}
    \left|\sum_{\mathbf{W} \in \mathcal{W}_{\text{near}}} \phi(\mathbf{W}) (Z^{O_B'}_{\mathbf{W}} - Z^{O_B}_{\mathbf{W}})\right|  \leq \order{\Delta q^{(R - R_\text{th})}}
\end{equation}

For clusters with $|\mathbf{W}| > R_\text{th}$, we will not bound the difference $|Z^{O_B'}_{\mathbf{W}} - Z^{O_B}_{\mathbf{W}}|$. Instead, we bound the total contribution from large clusters directly using \autoref{lem:cluster_tail} (and choose a threshold $R_\text{th}$ accordingly):
\begin{equation}
    \left|\sum_{\mathbf{W} \in \mathcal{W}_{\text{far}}} \phi(\mathbf{W}) Z^{O_B}_{\mathbf{W}}\right| \le \order{|B| e^{-(c-c_0)(R_\text{th}+1)} }= \order{|B| e^{-(c-c_0)(R_\text{th}+1)} }.
\end{equation}

Similarly for the perturbed state, applying \autoref{lem:cluster_tail} to region $B$ in the perturbed PEPS:
\begin{equation}
    \left|\sum_{\mathbf{W} \in \mathcal{W}_{\text{far}}} \phi(\mathbf{W}) Z^{O_B'}_{\mathbf{W}}\right| \le \order{|B|e^{-(c-c_0)(R_\text{th} +1)}}.
\end{equation}

Therefore,
\begin{equation}
    \left|\sum_{\mathbf{W} \in \mathcal{W}_{\text{far}}} \phi(\mathbf{W}) (Z^{O_B'}_{\mathbf{W}} - Z^{O_B}_{\mathbf{W}})\right| \le \order{|B| \cdot e^{-(c-c_0)R_\text{th}}}.
\end{equation}

Since $|\exp(x) - \exp(y)| \le e^{\max(|x|,|y|)} |x-y|$ for bounded $x,y$, and both cluster sums are bounded by $\order{|B|  \cdot e^{-(c-c_0)}}$, we have
\begin{equation}
    \left|\exp\left(\sum \phi(\mathbf{W}) Z^{O_B'}_{\mathbf{W}}\right) - \exp\left(\sum \phi(\mathbf{W}) Z^{O_B}_{\mathbf{W}}\right)\right| 
\le \order{\left|\sum_{\mathbf{W}} \phi(\mathbf{W}) (Z^{O_B'}_{\mathbf{W}} - Z^{O_B}_{\mathbf{W}})\right|}.
\end{equation}
Combining the BP and cluster contributions, we have:
\begin{equation}
    \left|\langle O_B \rangle_{\psi'} - \langle O_B \rangle_{\psi}\right| 
\le   \|O_B\|_\infty e^{-R/\xi_*} +  |\langle O_B \rangle_{\text{BP}}| \left( C_1 e^{-(R - R_\text{th})/(\xi_*)} + C_2 |B| e^{-(c-c_0)R_\text{th}}\right).
\end{equation}
where $C_1, C_2$ are constants which depend on $R_\text{th}$. The bound can be optimized over choice of the cutoff radius, here we only care about the scaling behavior, so we simply set $R_{\text{th}} = \Theta(R)$. Since $|\langle O_B \rangle_{\text{BP}}| \le \|O_B\|_\infty$ and taking $1/\xi_{**} = \min\{\xi_*, c-c_0\}/2$, we obtain
\begin{equation}
    \left|\langle O_B \rangle_{\psi'} - \langle O_B \rangle_{\psi}\right| \le \order{\|O_B\|_\infty\cdot e^{-R / \xi_{**}}}
\end{equation}

\end{proof}

We provide an alternate proof of the theorem. This is accomplished by using the \emph{same} message background of the unperturbed state on the perturbed state, as done in \cite{midha2026}. The proof follows solely through the convergence of cluster expansion, and does not explicitly invoke the algorithmic locality of the fixed points. 

\begin{proof}[Alternative Proof]
  By the cluster expansion, we have the formal expressions,
\begin{align}
\langle O_B \rangle_{\psi} &= \langle O_B \rangle_{\text{BP}} \cdot \exp\left(\sum_{\substack{\text{conn.} \, \mathbf{W} \leftarrow \LL_{AB} \\ \supp(\mathbf{W}) \cap B \neq\emptyset }} \phi(\mathbf{W}) Z^{O_B}_{\mathbf{W}}\right), \\
\langle O_B \rangle_{\psi'} &= \langle O_B \rangle_{\text{BP}}  \cdot \exp\left(\sum_{\substack{\text{conn.} \, \mathbf{W} \leftarrow \LL_{AB}  \\ \supp(\mathbf{W}) \cap B \neq\emptyset }} \phi(\mathbf{W}) Z^{O_B'}_{\mathbf{W}}\right).
\end{align}

where we use the \emph{same} set of messages for the expansion of the perturbed network as well, resulting in modifying the set of excitations to $\LL_A \to \LL_{AB}$. This requires the perturbation to preserve $\varepsilon < 1/D$ to ensure the positivity of the BP normalization, which is implied by the fact that the messages are full rank for $\varepsilon < 2/D$ [by \autoref{lem:fullrank}].

Now, since the messages are identical in both the networks, all BP estimates and clusters not intersecting the region of perturbation $A$ evaluate identically in both the expansions. Hence, we have the difference involves clusters intersecting both $A$ and $B$, that is,

\begin{align}
  \langle O_B \rangle_{\psi'} - \langle O_B \rangle_{\psi} &= \langle O_B \rangle_{\psi} \left(\exp\left(\sum_{\substack{\text{conn.} \, \mathbf{W} \leftarrow \LL_{AB}  \\ \supp(\mathbf{W}) \cap B \neq\emptyset \\ \supp(\mathbf{W}) \cap A \neq\emptyset }} \phi(\mathbf{W}) (Z^{O_B'}_{\mathbf{W}} - Z^{O_B}_{\mathbf{W}})\right) - 1\right)
\end{align}
This difference involves \emph{connected} clusters intersecting both $A$ and $B$. The weight of any such cluster must satisfy $|\mathbf{W}| \ge R$. Hence, we bound this difference as follows:
\begin{align}
  |\langle O_B \rangle_{\psi'} - \langle O_B \rangle_{\psi}| &\le |\langle O_B \rangle_{\psi}| \left(\exp\left(\sum_{\substack{\text{conn.} \, \mathbf{W} \leftarrow \LL_{AB}  \\ \supp(\mathbf{W}) \cap B \neq\emptyset \\ \supp(\mathbf{W}) \cap A \neq\emptyset }} |\phi(\mathbf{W}) (Z^{O_B'}_{\mathbf{W}} - Z^{O_B}_{\mathbf{W}})|\right) - 1\right) \\  
  &\le |\langle O_B \rangle_{\psi}|  \prod_{X\in \{\emptyset, \text{'A'}\}} \exp\left(\sum_{\substack{\text{conn.} \, \mathbf{W} \leftarrow \LL_{AB} \\ \supp(\mathbf{W})\ni A \\ \supp(\mathbf{W})\ni B }} |\phi_{\mathbf{W}} Z^{O_B^X}_{\mathbf{W}}|\right) - 1 \\ 
  &\le |\langle O_B \rangle_{\psi}|  \exp\left(C|AB| e^{-(c-c_0)(R+1)}\right) - 1 \\ 
  &\le |\langle O_B \rangle_{\psi}| C|AB| e^{-(c-c_0)(R+1)} \exp{C|AB| e^{-(c-c_0)(R+1)}} 
\end{align}

where $C=\order{1}$ and we used \autoref{lem:cluster_tail} to bound the tail. We also used the fact that $e^x - 1 \leq xe^x$. Since $|\langle O_B \rangle| \le \order{\|O_B\|}$ and taking $1/\xi = (c-c_0)$ we obtain
\begin{equation}
    \left|\langle O_B \rangle_{\psi'} - \langle O_B \rangle_{\psi}\right| \le \order{\|O_B\|_\infty\cdot e^{-R / \xi}}
\end{equation}
\end{proof}

\end{document}